\DeclareFontFamily{OT1}{pzc}{}
\DeclareFontShape{OT1}{pzc}{m}{it}{<-> s * [1.15] pzcmi7t}{}
\DeclareSymbolFont{UPM}{U}{eur}{m}{n}
\DeclareMathAlphabet{\mathpzc}{OT1}{pzc}{m}{it}   
\DeclareMathSymbol{\partialup}{0}{UPM}{"40}       
\definecolor{halfgray}{gray}{0.55}
\setlist[enumerate,1]{label=(\roman*), ref=(\roman*)}	
\setlist[enumerate,2]{label=\alph*), ref=(\theenumi.\alph*)} 
\setlist[enumerate,3]{label=\arabic*., ref=\theenumii.\roman*} 	
    \RenewDocumentCommand \autoref {m} {\cref{#1}}
    \RenewDocumentCommand \Autoref {m} {\Cref{#1}}
\newfont{\chapterNumber}{eurb10 scaled 5000}
\titleformat{\chapter}[display]
  {\bfseries\Large}
  {\filleft \textcolor{halfgray}{\chapterNumber\thechapter}}
  {3ex}
  {\titlerule\vspace{2ex}\filright}
  [\vspace{2ex}\titlerule]
\crefname{chapter}{Chapter}{Chapters}
\crefname{section}{Section}{Sections}
\crefname{subsection}{Subsection}{Subsections}
\crefname{subsubsection}{Subsubsection}{Subsubsections}
\crefname{figure}{Figure}{Figures}
\crefname{table}{Table}{Tables}
\crefname{appendix}{Appendix}{Appendices}
\def\thmt@refnamewithcomma #1#2#3,#4,#5\@nil{%
  \@xa\def\csname\thmt@envname #1utorefname\endcsname{#3}%
  \ifcsname #2refname\endcsname
    \csname #2refname\expandafter\endcsname\expandafter{\thmt@envname}{#3}{#4}%
  \fi
}
\def\endmathdisplay@a{%
  \if@eqnsw \gdef\df@tag{\tagform@\theequation}\fi
  \if@fleqn \@xp\endmathdisplay@fleqn
  \else 
    \ifx\df@tag\alt@tag \else \veqno \alt@tag \df@tag \fi 
    \ifx\df@label\@empty \else \@xp\ltx@label\@xp{\df@label}\fi
  \fi
  \ifnum\dspbrk@lvl>\m@ne
    \postdisplaypenalty -\@getpen\dspbrk@lvl
    \global\dspbrk@lvl\m@ne
  \fi
}
\def\equation@qed{%
    \iftagsleft@
      \hbox{\phantom{\quad\qedsymbol}}%
      \gdef\alt@tag{%
        \rlap{\hbox to\displaywidth{\hfil\qedsymbol}}%
        \global\let\alt@tag\@empty
      }%
    \else
      \gdef\alt@tag{%
        \global\let\alt@tag\@empty
        \vtop{\ialign{\hfil####\cr
                \print@eqnum\cr       
                \qedsymbol\cr}}%
        \setbox\z@
      }%
    \fi
  }
\def\csxdefaux#1{%
    \protected@write\@mainaux{}{%
        \string\expandafter\string\xdef
          \string\csname\string\detokenize{#1}\string\endcsname{}%
    }%
}
\declaretheoremstyle[
        spacebelow=\topsep,
        spaceabove=\topsep,
        headfont=\normalfont\bfseries,
        bodyfont=\itshape,
        postheadspace=\newline,
        qed=$\diamondsuit$,
        headpunct=
]{myTheorem}
\declaretheoremstyle[
        spacebelow=\topsep,
        spaceabove=\topsep,
        headfont=\normalfont\bfseries,
        bodyfont=\normalfont,
        postheadspace=\newline,
        qed=$\diamondsuit$,
        headpunct=
]{myDefinition}
\declaretheorem[style=myTheorem, numberwithin=section, name=Theorem, refname=Theorem, Refname=Theorem]{theorem}
\declaretheorem[style=myTheorem, sibling=theorem, name=Proposition, refname=Proposition, Refname=Proposition]{proposition}
\declaretheorem[style=myDefinition, sibling=theorem, name=Lemma, refname=Lemma, Refname=Lemma]{lemma}
\declaretheorem[style=myDefinition, sibling=theorem, name=Corollary, refname=Corollary, Refname=Corollary]{corollary}
\declaretheorem[style=myDefinition, sibling=theorem, name=Definition, refname=Definition, Refname=Definition]{defn}
\declaretheorem[style=myDefinition, sibling=theorem, name=Remark, refname=Remark, Refname=Remark]{remark}
\declaretheorem[style=myDefinition, sibling=theorem, name=Remarks, refname=Remark, Refname=Remark]{remarksinternal}
\declaretheorem[style=myDefinition, sibling=theorem, name=Example, refname={Example, Examples}, Refname={Example, Examples}]{example}
\declaretheorem[style=myDefinition, sibling=theorem, name=Counterexample, refname=Counterexample, Refname=Counterexample]{counterexample}
\NewDocumentEnvironment{remarks}{s}{\begin{remarksinternal}\begin{thmenumerate}*}{\IfBooleanTF{#1}{}{\qedhere}\end{thmenumerate}\end{remarksinternal}}
\newcommand\EnumPrefix{}
\numberwithin{equation}{section}        
\renewenvironment{proof}[1][\proofname] {\par\pushQED{\qed}\normalfont\topsep0\p@\@plus0\p@\relax\trivlist\item[\hskip\labelsep\bfseries#1\@addpunct{:} ]\ignorespaces ~\\}{\popQED\endtrivlist\@endpefalse} 
\newlist{thmenumerateIntern}{enumerate}{1}
\setlist[thmenumerateIntern,1]{label=(\roman*), ref=\EnumPrefix~(\roman*)} 
\NewDocumentEnvironment{thmenumerate}{s o}{\IfBooleanTF{#1}{~\vspace*{-\baselineskip}}{} \IfNoValueTF{#2}{\begin{thmenumerateIntern}}{\begin{thmenumerateIntern}[#2]}}{\end{thmenumerateIntern}}
\DeclareMathOperator{\sgn}{sgn}
\DeclareMathOperator{\pr}{pr}												
\DeclareMathOperator{\trace}{Tr}
\DeclareMathOperator{\divergence}{div}
\let\ker\relax                          
\DeclareMathOperator{\ker}{Ker}         
\DeclareMathOperator{\img}{Im}          
\DeclareMathOperator{\id}{id}
\DeclareMathOperator{\ev}{ev}
\DeclareMathOperator{\Ev}{Ev}
\DeclareMathOperator{\vol}{vol}         
\DeclareMathOperator{\inv}{inv}         
\DeclareMathOperator{\mult}{mult}       
\DeclareMathOperator{\comp}{comp}       
\DeclareMathOperator{\triv}{triv}       
\DeclareMathOperator{\AdjAction}{Ad}       
\DeclareMathOperator{\conjAction}{conj}       
\DeclareMathOperator{\difLie}{\mathpzc{Lie}}                              
\DeclareMathOperator{\flow}{FL}         
\DeclareMathOperator{\co}{co}                   
\newcommand{\cco}{\overline \co\,}                          
\newcommand{\liminv}{\varprojlim}
\DeclareMathOperator{\La}{L}    
\DeclareMathOperator{\Ra}{R}    
\NewDocumentCommand\dif{d()}{ \IfNoValueTF{#1}{\difIntern \!}{\difIntern(#1)} }         
\NewDocumentCommand\Dif{d()}{ \IfNoValueTF{#1}{\DifIntern \!}{\DifIntern(#1)} }         
\NewDocumentCommand\diF{d()}{ \IfNoValueTF{#1}{\diFIntern \!}{\diFIntern(#1)} }         
\NewDocumentCommand\dip{d()}{ \IfNoValueTF{#1}{\difpIntern \!}{\difpIntern(#1)} }       
\DeclareMathOperator{\difIntern}{d}             
\DeclareMathOperator{\DifIntern}{D}           
\DeclareMathOperator{\diFIntern}{\updelta} 
\DeclareMathOperator{\difpIntern}{\partialup}              
\newcommand{\diff}[2]{\frac{\dif #1}{\dif #2}}		                                
\newcommand{\difff}[3]{\frac{\dif^{#3} #1}{\dif #2^{#3}}}                               
\newcommand{\diffAt}[3]{\frac{\dif #1}{\dif #2}                                   
        \mathchoice
                {\bigg|_{\raisebox{1.8pt}{{$\scriptstyle #3$}}}}   
                {\Big|_{\raisebox{1.5pt}{{$\scriptscriptstyle #3$}}}}
                {UNREFENCEND-C}
                {UNREFENCEND-D}
        }
\newcommand{\diFF}[2]{\frac{\diF #1}{\diF #2}}                                          
\newcommand{\difp}[2]{\frac{\dip #1}{\dip #2}}		                                
\newcommand{\difpp}[3]{\frac{\dip^{#3} #1}{\dip #2^{#3}}}                            
\newcommand{\difpm}[3]{\frac{\dip^{2} #1}{\dip #2 \dip #3}}                          
\newcommand{\emphDef}[1]{\textbf{\boldmath #1}}			
\newcommand{\field}[1]{\mathbb{#1}}			
\newcommand{\R}{\field{R}}                                     
\newcommand{\C}{\field{C}}
\newcommand{\N}{\field{N}}
\newcommand{\I}{\mathrm{i}}
\DeclarePairedDelimiter\abs{\lvert}{\rvert}						
\DeclarePairedDelimiter\norm{\lVert}{\rVert}					
\def\@tvsp{\mathchoice{{}\mkern-4.5mu}{{}\mkern-4.5mu}{{}\mkern-2.5mu}{}}
\def\ltrivert{\left|\@tvsp\left|\@tvsp\left|}
\def\rtrivert{\right|\@tvsp\right|\@tvsp\right|}
\DeclareMathSymbol{:}{\mathpunct}{operators}{"3A}
\let\colon\relax
\DeclareMathSymbol{\colon}{\mathrel}{operators}{"3A}
\newcommand\normdot{\lVert \,\cdot\, \rVert}                  
\DeclarePairedDelimiterX\setc[2]{\{}{\}}{\, #1 \colon \, #2 \,}      
\DeclarePairedDelimiterX\set[1]{\{}{\}}{\, #1 \,}      
\newcommand{\defeq}{\vcentcolon=}								
\newcommand{\contr}{\mathbin{\rule[.1ex]{.4em}{.04em}\rule[.1ex]{.04em}{.4em}}\,}       
\newcommand{\vectorf}[1]{\mathfrak{X}(#1)}
\NewDocumentCommand\diffform{m g g}{ \Omega^{#1} \IfNoValueTF{#3}{\IfNoValueTF{#2}{}{(#2)}}{(#2, #3)} }
\NewDocumentCommand\diffformBi{m m g g}{ \Omega^{#1, #2} \IfNoValueTF{#4}{\IfNoValueTF{#3}{}{(#3)}}{(#3, #4)} }
\newcommand{\liea}[1]{\mathfrak{#1}}
\newcommand{\lieaE}[1]{\mathop{Lie}(#1)}
\NewDocumentCommand \DiffGr{}{\mathcal{Diff}}    
\newcommand{\secspaceEx}[1]{\Gamma^\infty(#1)}    
\newcommand{\secspace}[1]{\mathcal{#1}}    
\newcommand{\secmap}[1]{\mathcal{#1}}          
\newcommand{\Gau}{\secspace{Gau}}
\newcommand{\Conn}{\secspace{Conn}}
\newcommand{\transf}{\secmap{transf}}
\DeclarePairedDelimiterX\scalarprod[2]{\langle}{\rangle}{#1,#2}	
\DeclarePairedDelimiterX\scalarprodr[2]{(}{)}{#1,#2}			
\DeclarePairedDelimiterX\poisson[2]{\{}{\}}{#1,#2}				
\DeclarePairedDelimiterX\commutator[2]{[}{]}{#1,#2}				
\DeclarePairedDelimiterX\dualpair[2]{\langle}{\rangle}{#1,#2}	
\DeclarePairedDelimiterX\pcommutator[2]{\{}{\}}{#1,#2}                              
\DeclarePairedDelimiterX\wedgeLie[2]{[}{]}{#1 \wedge #2}                             
\DeclarePairedDelimiter\equivClass{[}{]}                               
\newcommand\bcdot{\ensuremath{%
  \mathchoice%
   {\mskip\thinmuskip\lower0.1ex\hbox{\scalebox{0.8}{$\bullet$}}\mskip\thinmuskip}%
   {\mskip\thinmuskip\lower0.1ex\hbox{\scalebox{0.8}{$\bullet$}}\mskip\thinmuskip}%
   {\lower-0.1ex\hbox{\scalebox{0.6}{$\bullet$}}}%
   {\lower0.2ex\hbox{\scalebox{0.6}{$\bullet$}}}%
}}
\NewDocumentCommand\AtiyahBundle{s m}
{
        \opbundle{#1}{\AtiyahBundleOp}{#2}
}
\NewDocumentCommand\AtiyahBundleInt{s m o}
{
        \AtiyahBundleOp \IfNoValueTF{#3}{ #1 }{ _{#2} #1 }       
}
\DeclareMathOperator\AtiyahBundleOp{At}
\def\AtiyahBundleAssignSubArg_#1{
}
\NewDocumentCommand\AdjBundle{m} {\AdjBundleOp #1}
\DeclareMathOperator\AdjBundleOp{Ad}
\NewDocumentCommand\opbundle{m m m}{ 
        \def\opbundle@{ #2 }
        \def\opbundle@@{ \IfBooleanTF #1 {(#3)}{#3}}
        \@ifnextchar_{\@opbundleSub}{\opbundle@ \opbundle@@}}
\def\@opbundleSub_#1{\opbundle@_{#1} \opbundle@@}
\newcommand{\verteq}{\rotatebox{90}{$\,=$}}
\title{Slice theorem for Fréchet group actions \\[0.4cm] and covariant symplectic field theory}
\author{Tobias Diez}
\date{October 2013}
\newcommand{\thethesistype}{Master’s Thesis}
\newcommand{\theassessor}{Prof. Dr. G. Rudolph \\ Prof. Dr. R. Verch}
\newcommand{\theuniversity}{Universität Leipzig}
\newcommand{\thedepartment}{Fakultät für Physik und Geowissenschaften}
\newcommand{\theplace}{Leipzig}
\let\thetitle\@title
\let\theauthor\@author
\let\thedate\@date
\begin{document}
\pagenumbering{roman}

\begin{spacing}{1} 		
	\begin{titlepage}
\newcommand{\HRule}{\rule{\linewidth}{0.5mm}}

\centering

\textsc{ {\LARGE \theuniversity{}} \\[0.5cm] {\Large \thedepartment} }\\[3.5cm] 
\textsc{\Large \thethesistype{}}\\[0.5cm]

\HRule \\[0.4cm] 
{\huge \bfseries \thetitle{}}\\[0.4cm] 
\HRule \\[2.5cm]
 
{\large \emph{Presented by}\\[0.3cm] {\LARGE \bfseries \theauthor{}} }
\\[1.5cm]
{\large \emph{Assessors:} \\[0.2cm]
	\renewcommand{\arraystretch}{1.5}
	\begin{tabular}{l} 
  		\theassessor{}
	\end{tabular}} 
 
\vfill

{\large \theplace{}, \thedate{}} \vspace{-2cm}

\end{titlepage}
\end{spacing}

\thispagestyle{empty}
\mbox{} 
\clearpage

\addtocontents{toc}{\protect\thispagestyle{empty}}	
\thispagestyle{empty}
\addtocontents{toc}{\vspace{-\baselineskip}} 
\tableofcontents
\thispagestyle{empty}
\clearpage

\setcounter{page}{-10}
\listoftodos
\clearpage

\pagenumbering{arabic}

\chapter{Introduction}
\sectionmark{Introduction}

Symmetries play a decisive role throughout all areas of physics as they can be exploited to reduce the number of necessary variables and thus simplify the problem at hand. On the mathematical side, this `grand theme' of classical mechanics is implemented by the rich interaction of Lie group theory with symplectic geometry. This fruitful symbiosis finally culminates in the modern formulation of Noether's theorem and the powerful technique of symplectic reduction.

The aim of this thesis is to carry over these concepts and methods to classical field theory in a mathematically sound framework. In this context, configuration spaces as well as symmetry groups are infinite-dimensional and hence one has to cope with functional analytic caveats. Usually one finds refuge in the well-documented recipe of Sobolev techniques. The spaces under consideration are completed with respect to Sobolev norms and thereby yield Banach spaces, where the classical theory is then applied to construct a weak solution. Given enough regularity one can infer on the smooth case. Not only is the physical significance of weak solutions hardly verifiable but also the Sobolev method presents its own particular challenges. For example, the completion of the diffeomorphism group is no Lie group since the left composition fails to be smooth due to a `loss of derivatives'. In conclusion, the finite-dimensional theory generalises relatively straightforward to the Banach setting but often is not directly applicable to concrete problems.

The situation is reversed when smoothness is incorporated right from the beginning. Then the group of diffeomorphism is a genuine Fréchet Lie group, but the general theory faces serious difficulties: The inverse function theorem is not applicable and, closely related, differential equations may have no or infinitely many solutions. Nevertheless, the theorem of Nash and Moser provides remedy for a large subcategory of Fréchet spaces. Based on this celebrated result, the present thesis investigates how the ordinary theory of Lie groups and their actions transfer to the Fréchet setting. The main, original contribution to the long-term aim of a symplectic reduction theory for Fréchet Lie group actions consists in the following general slice theorem: 
\begin{theorem}[Slice theorem]
	Let $G$ be a tame Fréchet Lie group and $M$ a tame Fréchet manifold. A tame smooth proper $G$-action $\Upsilon$ has a slice at some point $m \in M$ if the following conditions are fulfilled:
	\begin{enumerate}
		\item The stabilizer $G_m$ is a tame Fréchet Lie subgroup of $G$ such that the natural projection $G \to G / G_m$ is a principal $G_m$- bundle.
		\item The image of $(\Upsilon_{m})'_e$ is topologically complemented in $T_mM$.
		\item Locally, for $g \in G$ near $e$, the derivative $(\Upsilon_{m})'_g$ is invertible.
		\item $M$ carries a $G$-invariant Riemannian metric $g$ such that the exponential map exists and the restriction to the normal bundle, $\exp: NO \to M$, is a local diffeomorphism at every point of the zero section.\qedhere
	\end{enumerate}
\end{theorem}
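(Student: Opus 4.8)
The plan is to transplant Palais' classical slice construction into the tame Fréchet category, invoking the Nash--Moser inverse function theorem wherever one would ordinarily use the inverse function theorem, and reading conditions (i)--(iv) as precisely the data that makes each classical step go through. \emph{First, the orbit.} By (i) the orbit map $\Upsilon_m\colon G\to M$, $g\mapsto g\cdot m$, descends to an injective map $\iota_m\colon G/G_m\to M$, which is a topological embedding onto $O\defeq G\cdot m$ by properness. The kernel of $(\Upsilon_m)'_e$ is the Lie algebra of $G_m$, so $(\iota_m)'_{[e]}$ is injective, and by (ii) its image $T_mO$ is topologically complemented, $T_mM = T_mO\oplus W$. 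To straighten $O$ near $m$ I would study, in a chart, the map $([g],w)\mapsto g\cdot\exp_m w$: its derivative at $([e],0)$ is the isomorphism induced by this splitting, and by (iii) it stays invertible for $g$ near $e$ and $w$ small, so the tameness hypotheses let Nash--Moser promote it to a local diffeomorphism. Propagating this by the $G$-action and using that $\iota_m$ is a topological embedding, $O$ becomes a closed embedded tame Fréchet submanifold, $G$-equivariantly identified with $G/G_m$.

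\emph{Next, the normal bundle and the tube.} Using the $G$-invariant metric $g$ of (iv) set $N_m\defeq(T_mO)^{\perp}$; it is a topological complement of $T_mO$ (matching $W$) and, because $G_m$ fixes $m$ and preserves $g_m$ and $T_mO$, it is invariant under the linear isotropy action of $G_m$. Transporting $N_m$ around $O$ by $G$ realises the normal bundle $NO\to O$ as a tame $G$-vector bundle $G$-equivariantly isomorphic to $G\times_{G_m}N_m\to G/G_m$. Since $g$ is $G$-invariant, isometries permute geodesics, so $\exp\colon NO\to M$ is $G$-equivariant — it is the map $[g,v]\mapsto g\cdot\exp_m v$, well defined exactly because $N_m$ is $G_m$-invariant — and by (iv) it is a local diffeomorphism at every point of the zero section. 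Replacing the domain by a $G$-invariant neighbourhood $U$ of the zero section (the $G$-saturation of such a neighbourhood), $\exp|_U$ is a $G$-equivariant local diffeomorphism.

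\emph{From local to global, and the slice.} It remains to shrink $U$ so that $\exp|_U$ becomes injective, and this is the one step where properness of $\Upsilon$ is essential: the classical argument rules out two nearby small normal vectors with equal image by a compactness-plus-local-injectivity argument, and I would adapt it by translating the offending base points into a controlled neighbourhood of $m$ via $G$-equivariance and then invoking properness to preclude the collision. The upshot is that $\chi\defeq\exp|_U\colon U\to\chi(U)$ is a $G$-equivariant diffeomorphism onto an open $G$-invariant neighbourhood of $O$. Now set $D\defeq U\cap N_m$, the fibre of $U$ over $m$; it is an open $G_m$-invariant neighbourhood of $0$ in $N_m$ since $U$ is $G$-invariant and $G_m$ preserves the fibre over $m$. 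Let $S\defeq\exp_m(D)$: it is a tame submanifold through $m$ with $T_mS=N_m$, it is $G_m$-invariant because $\exp_m$ is $G_m$-equivariant, and $\chi$ restricts to a $G$-equivariant diffeomorphism $G\times_{G_m}S\to\chi(U)$. The defining properties of a slice then follow formally: $T_mM=T_mO\oplus T_mS$ by (ii); $g\cdot S\cap S\neq\emptyset$ forces the relevant class in $G\times_{G_m}S$ to lie over the identity coset, i.e.\ $g\in G_m$; and $G\cdot S=\chi(U)$ is open. Hence $S$ is a slice at $m$.

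\emph{Where the difficulty lies.} I expect the real work to sit in two places. The first is making the Nash--Moser step genuinely run in Step~1: one must exhibit tame families of inverses and tame splittings, and conditions (ii)--(iii) are presumably shaped exactly so that these estimates are available, so the burden is the careful bookkeeping of tameness rather than a new idea. The second, and the step most likely to require an argument different from the finite-dimensional one, is the global injectivity of the tube map: bounded sets in a Fréchet manifold are not relatively compact, so the properness hypothesis must be leveraged with care. A subtler background point is to ensure that the (generally weak) Riemannian metric of (iv) is compatible with (ii), i.e.\ that $(T_mO)^{\perp}$ is actually a topological complement — condition (ii) is included to guarantee this, but the compatibility is worth checking explicitly.
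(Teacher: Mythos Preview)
Your outline is essentially the paper's proof: orbit as a closed splitting submanifold via Nash--Moser, normal bundle transported by $G$, equivariant tubular neighbourhood via the exponential map, slice as the normal fibre over $m$. Two refinements are worth noting. First, the paper does \emph{not} build the normal bundle as $(T_mO)^{\perp}$; it uses the given topological complement $F_m$ from hypothesis~(ii) and transports it by $(\Upsilon_g)'_m$, which cleanly sidesteps the weak-metric compatibility worry you raise at the end. Second, the global-injectivity step is driven not by properness but by the $G$-invariant path metric $d$ induced from the graded Riemannian structure: given two hypothetical sequences $(p_i,X_i)\neq(q_i,Y_i)$ in $NO$ with equal images and shrinking fibre norms, one picks $g_i$ with $g_i\cdot p_i=m$, uses $G$-invariance of the fibre metric and of $d$ together with equivariance of $\exp$ to deduce $d(p_i,\exp X_i)\to 0$ and hence $d(p_i,q_i)\to 0$, so eventually both pairs lie in a single local-injectivity patch $U_\delta(p_i)$ --- a contradiction. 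Properness is spent earlier, to make the orbit a topological embedding and to set up the metric framework; the injectivity argument itself is metric-geometric.
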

Please note that the theorem in the stated form is not entirely correct since the assumptions partly lack the required mathematical precision, see \cref{prop::liegroup:sliceTheorem} for a rigorous formulation. Nevertheless, the presented theorem visualizes the conditions which are additionally necessary to transfer the finite-dimensional result of Palais to the Fréchet setting. 

The thesis is organised as follows. The first two chapters lay the foundations of this work by presenting the differential geometric calculus for locally convex spaces and manifolds. It is here where the reader finds the discussion of the Nash-Moser theorem. Subsequently, the exposition works towards the above mentioned slice theorem and finally culminates in its proof. To illustrate how the general statement is applied to concrete problems, the case of gauge theory is exemplarily investigated in \cref{sec::gaugeTheory}. Thereby, the existence of slices for the action of the gauge transformation group on the space of connections is established.

Interposed in this Lie group-focused material, the second pillar of symplectic reduction is studied, namely symplectic geometry and the notion of the momentum map. Usually, the symplectic formulation of field theory is approached using the instantaneous formalism, where covariance is manifestly broken by decomposing spacetime $M = T \times \Sigma$ in temporal and spatial directions. Instead, \cref{sec:cft} presents a novel covariant ansatz, which is based on the following fundamental concept: the field degrees of freedom~$\secspace{F}$ and spacetime $M$ are incorporated into the space $\secspace{F} \times M$ and all objects under considerations live directly on this product. In particular, the introduced formalism places all spacetime symmetries on an equal footing regardless whether they are spatial or not. 

\chapter{Locally convex spaces}\label{cha::locallyconvexspaces:locally_convex_spaces}

This chapter develops the fundamental concepts needed for the study of differential geometric aspects in infinite dimensions. First, the notion of a locally convex space is introduced and its topological properties are investigated. The emphasis here lies on similarities and differences to the well-known category of Banach spaces. Then the focus shifts to differential and integral calculus, where one realizes that many known facts can be transferred to this new setting with only slight adaptations. Finally, the failure of the inverse function theorem and its replacement by the seminal theorem of Nash and Moser in the category of tame Fréchet spaces is discussed.

\section{Topological preliminaries}\label{section::locallyconvexspaces:topological_preliminaries}
This section is not meant to aggregate all known facts about locally convex spaces but instead is supposed to give a rather broad overview and to lay a conceptual and notational basis. Details and further information are covered in \parencite{MeiseVogt1992, Jarchow1981, Rudin1973}.

The core of functional analysis is formed by the study of function spaces and the maps between them. A reasonable topology on $C^\infty(M)$ should reflect the fact that two functions are `near' to each other if not only their image is comparable but also the higher derivatives show a similar behaviour. Thus $C^\infty(M)$ is topologized by infinitely many norms and not by a single one. After recognizing that these norms are actually not positive definite, one is led to the following generalization.

\begin{defn} \label{defn::locallyconvexspaces:seminorm}
Let $X$ be a vector space over $\R$. A non-negative function $\normdot: X \to \R$ is called a \emphDef{seminorm} if the following conditions hold:
\begin{enumerate}
\NumTabs{10}
\item absolute homogeneity: \tab $\norm{ \lambda x} = \abs{\lambda} \, \norm{x},$ \tab for $\lambda \in \R, x \in X$
\item triangle inequality: 	\tab $\norm{ x + y} \leq \norm{x} + \norm{y},$ \tab for $x, y \in X$
\end{enumerate}
In the following, all Greek indices are understood to be elements of some fixed, but not necessarily countable `index set'. A family of seminorms $\normdot_\alpha$ is a \emphDef{directed fundamental system} if every two seminorms can be estimated by a third one, that is, for all $\alpha$ and $\beta$ there exist $\gamma$ and some constant $C \in \R$ such that $\normdot_\alpha + \normdot_\beta \leq C \normdot_\gamma$. Moreover, an \emphDef{increasing fundamental system} is a countable family fulfilling $\normdot_n \leq \normdot_{n+1}$.
\end{defn}   
Note that, for a seminorm, $\norm{x} = 0$ does not necessarily imply $x = 0$, but when this condition holds a seminorm is a norm. A directed fundamental system $\normdot_\alpha$ of seminorms defines a topology on $X$ by declaring the tubes\footnote{As $\normdot_\alpha$ is not necessarily positive definite, the usual $\varepsilon$-ball is merely a tube extending to infinity in the direction of the kernel.} 
\begin{equation}
U_{\alpha, \varepsilon} \defeq \setc{x \in X}{ \norm{x}_\alpha < \varepsilon}
\end{equation}
to form a $0$-neighbourhood base. Equivalently, $X$ carries the natural initial topology with respect to the family of seminorms. 
\begin{example}[Smooth functions] \label{ex::locallyConvexSpaces:SmoothFunctionOnOpenSubset}
	Let $U \subseteq \R$ be an open subset. On the space $C^\infty(U)$ of smooth real-valued functions a convenient topology is induced by seminorms
	\begin{equation}
		\norm{\phi}_{K, k} \defeq \sup_{\substack{x \in K \\ i \leq k}}{\abs*{\difff{\phi}{x}{i}(x)}} 	\qquad \text{for } \phi \in C^\infty(U), k \in \N, K \subseteq U \text{ compact.}
	\end{equation}  
	This example will be extended in \autoref{sec::sectionSpace} to sections of fibre bundles.
\end{example} 
A real\footnote{In the following only vector spaces over the real numbers are considered, though almost all results extend also to the complex case.} vector space together with a Hausdorff topology generated by seminorms is a topological vector space and is called a \emphDef{locally convex space}. The name arises from the convexity property of $U_{\alpha, \varepsilon}$, which is a direct consequence of the triangle inequality. 

The following proposition collects additional properties of $U_{\alpha, \varepsilon}$ and gives a more geometrically characterization of locally convex spaces by their $0$-neighbourhoods.
\begin{proposition} \label{prop::lcs:equivalenceSeminormsAndConvexBalencedAbsorbingSets}
	Let $(X, \normdot_\alpha)$ be a locally convex space. There exists a $0$-neighbourhood base $\mathcal{U}$ such that its members $U \in \mathcal{U}$  possess the following properties:  
	\begin{thmenumerate}
		\item Convex: The straight line connecting every two points in $U$ lies completely in $U$. 
		\item Absorbing: For every $x \in X$ there exists $\lambda > 0$ with $\lambda \cdot x \in U$.
		\item Balanced: $\eta\cdot U \subseteq U$ for every $\abs{\eta} \leq 1$.  
	\end{thmenumerate}
	Conversely, for every subset $V \subseteq X$, $0 \in V$, with these properties the Minkowski functional
	\begin{equation}
		\norm{x}_V = \inf\setc{\kappa \in \R}{x \in \kappa \cdot V}
	\end{equation}
	is a seminorm. Thus a topological vector space with a $0$-neighbourhood base $\mathcal{V}$ consisting of convex, absorbing and balanced subsets $V$ is a locally convex space $(X, \normdot_V)$. The seminorms $\normdot_V$ induce a topology, which is equivalent to the original one.   
\end{proposition}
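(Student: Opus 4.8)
The plan is to prove the two implications separately: for the first I exhibit the required base directly from the tubes $U_{\alpha,\varepsilon}$, and for the second I verify by hand that the Minkowski functionals are seminorms and then compare the two topologies.

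\textbf{Forward direction.} I would take $\mathcal U$ to consist of all finite intersections $U_{\alpha_1,\varepsilon_1}\cap\dots\cap U_{\alpha_n,\varepsilon_n}$ of tubes; by construction of the topology generated by the seminorms these form a $0$-neighbourhood base (and single tubes already suffice when the given family is a directed fundamental system). Every tube is convex and balanced as an immediate consequence of the triangle inequality and absolute homogeneity of $\normdot_\alpha$, and it is absorbing because $\norm{x}_\alpha<\infty$ for each $x$. It then only remains to observe that these three properties pass to finite intersections; the sole mildly delicate point is that a finite intersection of absorbing \emph{balanced} sets is again absorbing, which follows by picking the smallest of the individual scaling factors and using balancedness to squeeze $x$ simultaneously into every factor.

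\textbf{Converse direction.} Fix $V\in\mathcal V$. Since $V$ is absorbing, the set over which the infimum defining $\norm{x}_V$ runs is non-empty, so $\norm{\cdot}_V\colon X\to[0,\infty)$ is well defined, and $0\in V$ gives $\norm{0}_V=0$. For absolute homogeneity I would first record that a balanced set obeys $\eta V=V$ whenever $\abs\eta=1$ and $\eta V\subseteq V$ whenever $\abs\eta\le 1$; writing a nonzero scalar as $\lambda=\abs\lambda\,\sigma$ with $\abs\sigma=1$ then yields the equivalence $\lambda x\in\kappa V\iff x\in(\kappa/\abs\lambda)V$, hence $\norm{\lambda x}_V=\abs\lambda\,\norm{x}_V$. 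For subadditivity, if $x\in sV$ and $y\in tV$ with $s,t>0$, then convexity applied to $x/s,\,y/t\in V$ with weights $s/(s+t)$ and $t/(s+t)$ shows $(x+y)/(s+t)\in V$, so $\norm{x+y}_V\le s+t$; taking infima gives $\norm{x+y}_V\le\norm{x}_V+\norm{y}_V$. Thus each $\normdot_V$ is a seminorm.

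\textbf{Equivalence of the topologies and main obstacle.} The crux is the chain of inclusions
\begin{equation}
	\tfrac{\varepsilon}{2}\,V \;\subseteq\; \setc{x\in X}{\norm{x}_V<\varepsilon} \;\subseteq\; \varepsilon\,V,
\end{equation}
valid for all $\varepsilon>0$ and both immediate from the definition of $\norm{\cdot}_V$ together with $\kappa V\subseteq V$ for $0<\kappa\le 1$. The left inclusion shows that every $\normdot_V$-tube contains the original $0$-neighbourhood $\tfrac\varepsilon2 V$ (multiplication by the nonzero scalar $\varepsilon/2$ being a homeomorphism of a topological vector space), so the seminorm topology is coarser than the given one; the right inclusion with $\varepsilon=1$ shows that $V$ contains the $\normdot_V$-tube $\setc{x}{\norm{x}_V<1}$, so, since $\mathcal V$ is a base for the original topology, every original $0$-neighbourhood is a seminorm $0$-neighbourhood and the seminorm topology is also finer; by translation invariance of both topologies they then coincide. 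Hausdorffness, required to legitimately call the outcome a locally convex space, is inherited from the ambient topological vector space. I expect the only real obstacle to be the careful bookkeeping around these inclusions — in particular invoking balancedness, not merely convexity, at exactly the spots where the sign or phase of a scalar intervenes; everything else is a routine check.
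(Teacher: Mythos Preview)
Your proposal is correct and follows essentially the same route as the paper: verify that the tubes are convex, absorbing and balanced, check the seminorm axioms for the Minkowski functional via absorbing/balanced/convex in that order, and then compare topologies through a sandwich of scaled copies of $V$. The only cosmetic difference is that the paper phrases the last step as $\setc{x}{\norm{x}_V<1}=\mathring V\subseteq V\subseteq\bar V=\setc{x}{\norm{x}_V\le1}$, whereas your chain $\tfrac{\varepsilon}{2}V\subseteq\setc{x}{\norm{x}_V<\varepsilon}\subseteq\varepsilon V$ achieves the same comparison more elementarily without invoking interiors and closures.
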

\begin{proof}
	The above constructed $\varepsilon$-tubes $U_{\alpha, \varepsilon}$ are clearly absorbing and balanced due to the absolute homogeneity of seminorms. Convexity was already noted. Conversely, for an absorbing $V$ containing $0$ the Minkowski functional is well-defined and finite. As $\normdot_V$ is homogeneous for positive scalars it is enough to show $\norm{- x}_V = \norm{x}_V$. But this obviously holds because $V$ is balanced. In order to prove the triangle inequality fix $x, y \in X$ and let $\varepsilon  > 0$ be arbitrary. There exists $\lambda, \eta > 0$ with $\sfrac{x}{\lambda}, \sfrac{y}{\eta} \in V$ and $\lambda < \norm{x}_V + \varepsilon$, $\eta < \norm{y}_V + \varepsilon$. Since $V$ is convex
	\begin{equation}
		\frac{x+y}{\lambda + \eta} = \frac{\lambda}{\lambda + \eta}\, \frac{x}{\lambda} + \frac{\eta}{\lambda + \eta}\, \frac{y}{\eta} \: \in V
	\end{equation}
	and thus $\norm{x+y}_V \leq \lambda + \eta < \norm{x}_V + \norm{y}_V + 2 \varepsilon$. Now the triangle inequality follows by arbitrariness of $\varepsilon$. By 
	\begin{equation}
		\setc{x \in X}{\norm{x}_V < 1} = \mathring{V} \subseteq V \subseteq \bar{V} = \setc{x \in X}{\norm{x}_V \leq 1}	
	\end{equation}
	the topology generated by the seminorms is equivalent to the original one. 
\end{proof}

Despite this one-to-one relation, seminorms are often easier to work with because many well-known topological concepts can be conveniently expressed in terms of them. 
\begin{proposition} 
	Let $(X, \normdot_\alpha)$ and $(Y, \normdot_\beta)$ be locally convex spaces. The following holds:
	\begin{thmenumerate}
		\item \label{prop::locallyConvexSpaces:seminormCharateristEquivalentTopology}
			Another directed fundamental system $\normdot_\gamma$ of seminorms on $X$ generates an equivalent topology if and only if
			\begin{itemize}[label={}, noitemsep]
				\item for every $\alpha$ there exist some $\gamma$ and $C>0$ such that $\normdot_\alpha \leq C \normdot_\gamma$ and
				\item for every $\gamma$ there exist some $\alpha$ and $D>0$ such that $\normdot_\gamma \leq D \normdot_\alpha$.
			\end{itemize}
		\item $X$ is a Hausdorff space if and only if $\norm{x}_\alpha = 0$ for all $\alpha$ implies $x=0$. \label{prop::locallyconvexspaces:HausdorffBySeminorms}
		\item A net $(x_\iota)$ in $X$ converges to $x \in X$ if and only if $\norm{x_\iota - x}_\alpha \overset{\iota}{\longrightarrow} 0$ for all $\alpha$.
		\item A net $(x_\iota)$ in $X$ is a Cauchy net if and only if it is Cauchy in each seminorm. That is, for every $\varepsilon > 0$ and each $\alpha$ there exists an index $\iota_0$ such that for $\mu, \nu > \iota_0$: $\norm{x_\mu - x_\nu}_\alpha < \varepsilon$.
		\item A linear map $T: X \to Y$ is continuous if and only if for every $\beta$ there exist $\alpha$ and $C>0$ such that $\norm{Tx}_\beta \leq C \norm{x}_\alpha$ for all $x \in X$. \label{prop::lcs:linearMapContiniousBySeminorms} \qedhere
	\end{thmenumerate}
\end{proposition}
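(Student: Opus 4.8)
The plan is to reduce every item to the single fact that a $0$-neighbourhood base of $(X,\normdot_\alpha)$ is given by the finite intersections of the tubes $U_{\alpha,\varepsilon} = \setc{x \in X}{\norm{x}_\alpha < \varepsilon}$, and --- because the fundamental system is directed --- every such finite intersection already contains one tube $U_{\gamma,\delta}$ (bound the maximum of finitely many seminorms by a single one up to a constant). I would establish this reduction first; afterwards four of the five items are a direct translation of a statement about tubes into an inequality between seminorms.

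For the characterisation of equivalent fundamental systems I would show that the $\normdot_\gamma$-topology is finer than the $\normdot_\alpha$-topology if and only if every tube $U_{\alpha,\varepsilon}$ contains some $U_{\gamma,\delta}$; the rescaling argument already used in \cref{prop::lcs:equivalenceSeminormsAndConvexBalencedAbsorbingSets} turns this inclusion into the estimate $\normdot_\alpha \le (\varepsilon/\delta)\,\normdot_\gamma$, and conversely such an estimate gives the inclusion. Applying this in both directions yields the asserted pair of conditions. For the Hausdorff criterion: if $x \neq 0$ has $\norm{x}_\alpha = 0$ for all $\alpha$, then $x$ lies in every tube, hence in every $0$-neighbourhood, so $0$ and $x$ cannot be separated; conversely, if the seminorms separate points and $x \neq y$, then picking $\alpha$ with $r \defeq \norm{x-y}_\alpha > 0$ the triangle inequality makes $x + U_{\alpha, r/2}$ and $y + U_{\alpha, r/2}$ disjoint. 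The convergence and Cauchy statements are then pure bookkeeping: ``$x_\iota$ is eventually in every $0$-neighbourhood of $x$'' unwinds, via the tube base, to ``$\norm{x_\iota - x}_\alpha < \varepsilon$ eventually, for every $\alpha$ and $\varepsilon$'', which is the stated convergence condition, and the same with $x_\mu - x_\nu$ in place of $x_\iota - x$ gives the Cauchy criterion.

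The remaining item, continuity of a linear map $T \colon X \to Y$, is where the only genuinely fiddly point appears. By linearity $T$ is continuous iff continuous at $0$, i.e.\ iff for every $\beta$ and $\varepsilon$ the set $T^{-1}(U_{\beta,\varepsilon})$ contains a tube $U_{\alpha,\delta}$, equivalently: $\norm{x}_\alpha < \delta$ implies $\norm{Tx}_\beta < \varepsilon$. That the estimate $\norm{Tx}_\beta \le C\,\norm{x}_\alpha$ implies this (with $\delta \defeq \varepsilon/C$) is immediate. The hard part --- such as it is --- will be the converse implication, because the kernel of $\normdot_\alpha$ need not be trivial: for $x$ with $\norm{x}_\alpha > 0$ one applies the $\varepsilon$--$\delta$ implication to the rescaled vector $\tfrac{\delta'}{\norm{x}_\alpha}\,x$ and lets $\delta' \nearrow \delta$, obtaining $\norm{Tx}_\beta \le (\varepsilon/\delta)\,\norm{x}_\alpha$; for $x$ with $\norm{x}_\alpha = 0$ one applies it to $tx$ for every $t > 0$, forcing $\norm{Tx}_\beta = 0$ so that the inequality holds trivially. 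Setting $C \defeq \varepsilon/\delta$ completes the argument. Beyond this bit of care with the seminorm kernels, nothing in the proof goes past \cref{prop::lcs:equivalenceSeminormsAndConvexBalencedAbsorbingSets} and the construction of the tube neighbourhood base.
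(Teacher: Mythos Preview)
Your proposal is correct and simply fleshes out what the paper leaves implicit: the paper's entire proof reads ``The proof is immediate from the definitions.'' Your tube-base reduction and the care you take with the kernel of $\normdot_\alpha$ in item~(v) are exactly the details one would supply if asked to expand that sentence.
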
 
The proof is immediate from the definitions. The proposition uses the notion of nets instead of the more handy concept of sequences since countability of the index set of sequences is too weak to characterize topological properties in spaces which are not first-countable. The following theorem expands on this remark.\newpage
\begin{theorem}\label{thm::locallyconvexspaces:countableSeminormsEqualMetrizable}
For every locally convex space $X$ the following are equivalent:
	\begin{enumerate}
		\item $X$ has a countable $0$-neighbourhood, equivalently, it is first-countable. \label{thm::locallyconvexspaces:countableSeminormsEqualMetrizable_firstcountable}
		\item The topology on $X$ can be generated by a countable family of seminorms. \label{thm::locallyconvexspaces:countableSeminormsEqualMetrizable_countableSeminorms}
		\item X is metrizable. The metric $d:X \times X \to \R$ can be chosen to fulfil: \label{thm::locallyconvexspaces:countableSeminormsEqualMetrizable_metrizable}
			\begin{itemize}
				\item $d$ is invariant under translations: $d(x+z, y+z) = d(x,y)$.
				\item $d$ has the same Cauchy-sequences as the original topology, thus completeness as a topological vector space coincides with completeness as a metric space.
			\end{itemize}
	\end{enumerate}
\end{theorem}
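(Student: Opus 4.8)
The plan is to run the cycle (i) $\Rightarrow$ (ii) $\Rightarrow$ (iii) $\Rightarrow$ (i); the step (ii) $\Rightarrow$ (iii) is the one with genuine content, the others being essentially formal.

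First I would note that, since translation by a fixed vector is a homeomorphism of $X$, a countable neighbourhood base at $0$ yields one at every point, so the two formulations in (i) coincide; this is the only place where continuity of the vector space operations enters. Granting a countable $0$-neighbourhood base $\set{V_n}$, I invoke \cref{prop::lcs:equivalenceSeminormsAndConvexBalencedAbsorbingSets} to replace each $V_n$ by a convex, balanced, absorbing $0$-neighbourhood $U_n \subseteq V_n$, and then replace $U_n$ by $U_1 \cap \dots \cap U_n$ so that the sequence is decreasing; the Minkowski functionals $p_n \defeq \normdot_{U_n}$ are seminorms by the same proposition. Since $\setc{\kappa U_n}{\kappa > 0,\ n \in \N}$ is again a $0$-neighbourhood base and each tube $\setc{x \in X}{p_n(x) < \varepsilon}$ is trapped between two scalar multiples of $U_n$, the family $\set{p_n}$ generates the original topology; it is increasing, hence directed, which is exactly what \cref{defn::locallyconvexspaces:seminorm} requires. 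This gives (ii).

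For (ii) $\Rightarrow$ (iii), starting from a countable generating family and passing to the partial maxima I may assume the seminorms $\normdot_n$ form an increasing fundamental system, and I set
\[
	d(x,y) \defeq \sum_{n=1}^{\infty} \frac{1}{2^{n}}\, \frac{\norm{x-y}_n}{1 + \norm{x-y}_n}.
\]
Translation invariance is visible from the formula. Symmetry is clear, definiteness follows from \cref{prop::locallyconvexspaces:HausdorffBySeminorms}, and the triangle inequality reduces, term by term via $\norm{x-z}_n \leq \norm{x-y}_n + \norm{y-z}_n$, to the elementary fact that $t \mapsto t/(1+t)$ is nondecreasing and subadditive on $[0,\infty)$. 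The decisive point is that $d$ induces the original topology: on the one hand, given $n$ and $\varepsilon$, discarding the tail $\sum_{k>N} 2^{-k}$ once it is below $\varepsilon/2$ shows that a sufficiently small $d$-ball about $0$ is contained in $\setc{x \in X}{\norm{x}_n < \varepsilon}$; on the other hand, any $d$-ball $B_d(0,\delta)$ contains a finite intersection $\bigcap_{n \le N}\setc{x \in X}{\norm{x}_n < \rho}$ for suitable $\rho$ and $N$. Since both families are $0$-neighbourhood bases, the topologies agree, and the very same cofinality argument shows that a sequence is $d$-Cauchy precisely when it is Cauchy in every $\normdot_n$, i.e. Cauchy in the sense of the topological vector space; hence metric and topological completeness coincide.

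Finally (iii) $\Rightarrow$ (i) is immediate, as the balls of radius $1/k$ about a point form a countable neighbourhood base in any metric space, which closes the cycle. The main obstacle is the equivalence-of-topologies verification in the previous paragraph: one must genuinely exploit the summable weights $2^{-n}$ and the monotonicity of the $\normdot_n$ to split off an arbitrarily small tail and thereby reduce an estimate involving all seminorms to one involving only finitely many. Everything else is routine bookkeeping with the Minkowski-functional correspondence already established.
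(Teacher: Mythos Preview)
Your proof is correct and follows essentially the same approach as the paper: the same Minkowski-functional correspondence via \cref{prop::lcs:equivalenceSeminormsAndConvexBalencedAbsorbingSets} for (i) $\Leftrightarrow$ (ii), the identical metric formula $d(x,y) = \sum_{n} 2^{-n}\,\norm{x-y}_n/(1+\norm{x-y}_n)$ for (ii) $\Rightarrow$ (iii), and the $1/k$-balls for (iii) $\Rightarrow$ (i). You supply considerably more detail than the paper, which simply declares that the metric ``has all the claimed properties'' and leaves the verifications to the reader.
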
  
\begin{proof}
	The equivalence of \ref{thm::locallyconvexspaces:countableSeminormsEqualMetrizable_firstcountable} and \ref{thm::locallyconvexspaces:countableSeminormsEqualMetrizable_countableSeminorms} follows from the above discussed bijective correspondence between seminorms and convex, absorbing, balanced $0$-neighbourhoods through the Minkowski functional (see \cref{prop::lcs:equivalenceSeminormsAndConvexBalencedAbsorbingSets}).
	
	For countably many seminorms it is easily observed that the metric
	\begin{equation}
		d(x, y) \defeq \sum_{n=1}^\infty 2^{-n} \frac{\norm{x-y}_\alpha}{1 + \norm{x-y}_\alpha}
	\end{equation}
	has all the claimed properties. Conversely, the balls $B_n \defeq \setc{x \in X}{d(0, x) < \sfrac{1}{n}}$ form a $0$-neighbourhood base indexed by the natural numbers. 
\end{proof}

\begin{defn}
	A complete, metrizable, locally convex space is called a \emphDef{Fréchet space}. Furthermore, a \emphDef{graded Fréchet space} is a Fréchet space together with a fixed increasing fundamental system of seminorms.
\end{defn}
\begin{example}[Smooth functions] \label{ex::lcs:smoothFunctionOnOpenSubsetFrechet}
The space of smooth functions $C^\infty(U)$ on an open subset $U \subseteq \R$ is the prime example of a Fréchet space.
Recall from \cref{ex::locallyConvexSpaces:SmoothFunctionOnOpenSubset} that the seminorms
\begin{equation}
	\norm{\phi}_{K, k} \defeq \sup_{\substack{x \in K \\ i \leq k}}{\abs*{\difff{\phi}{x}{i}(x)}} 	\qquad \text{for } \phi \in C^\infty(U), k \in \N, \text{compact } K \subseteq U
\end{equation}  
define a locally convex topology on $C^\infty(U)$. All defining Fréchet properties are easily seen, except metrizability which needs a few explanatory words. The decisive technical ingredient is delivered by a countable exhaustion of $U$ with compact sets. This is a countable family $(K_n)$ of compact subsets $K_n \subseteq U$ such that every $K_n$ lies in the interior of its successor $K_{n+1}$ and the union of all $K_n$ yields $U$. Such a system always exists (for finite-dimensional manifolds) 
and has the property that every other compact subset $K \subseteq U$ is contained in some $K_n$. Thus the collection $\norm{\cdot}_{K_n, k}$ is a countable collection of seminorms equivalent to the original one (since every $\norm{\cdot}_{K, k}$ is dominated by some $\norm{\cdot}_{K_n, k}$).
\end{example}

The fact that the topology on a Fréchet space, or more generally a locally convex space, is defined by a family of seminorms suggests to think of such spaces as a collection of semi-normed spaces. In fact, one can mathematically substantiate this intuition by considering limits of topological spaces.
\begin{defn}
	\parpic(3.9cm, 5\baselineskip)[r]{
	\begin{tikzcd}[column sep=small, row sep=small, ampersand replacement=\&]
			X_\alpha \& \& X_\gamma \arrow[swap]{ll}{\Psi_{\alpha\gamma}} \arrow{dl}{\Psi_{\beta\gamma}}\\
			\& 	X_\beta \arrow{ul}{\Psi_{\alpha\beta}} \&
	\end{tikzcd}
	}
	Let $X_\alpha$ be a family of topological vector spaces indexed by some directed set. For $\alpha < \beta$ let $\Psi_{\alpha\beta}:X_\beta \to X_\alpha$ be continuous linear maps, which will be named bonding maps. If the diagram on the right commutes for every $\alpha < \beta < \gamma$, then the tuple $(X_\alpha, \Psi_{\alpha\beta})$ is called an \emphDef{inverse directed system}.

	The \emphDef{inverse limit} of an inverse system is defined to be the set 
	\begin{equation} 
	\liminv X_\alpha \defeq \setc*{(x_\alpha) \in \prod X_\alpha}{\Psi_{\alpha\beta}(x_\beta) = x_\alpha \quad \text{for } \alpha < \beta}
	\end{equation}
	 equipped with the initial topology with respect to the natural projections $\Psi_\alpha: \liminv X_\alpha \hookrightarrow \prod X_\alpha \overset{\pr_\alpha\;}{\longrightarrow} X_\alpha$.	
\end{defn}
\begin{corollary}[Universal property] \label{cor::locallyconvexspaces:inverseLimitUniversalProperty}
	\parpic[r]{
	\begin{tikzcd}[column sep=small, ampersand replacement=\&]
			          \& Y \arrow[swap]{ldd}{f_\alpha} \arrow{d}{f} \arrow{rdd}{f_\beta} 
									               	\&  \\[0.8cm]
			          \& \liminv X_\alpha \arrow{dl}{\Psi_\alpha} \arrow[swap]{rd}{\Psi_\beta}
			       								   	\&  \\
			X_\alpha  \&                  			\& X_\beta \arrow{ll}{\Psi_{\alpha\beta}}
	\end{tikzcd}
	}
	Let $Y$ be a topological space and $f_\alpha: Y \to X_\alpha$ continuous maps which are compatible with a given inverse system $(X_\alpha, \Psi_{\alpha\beta})$, that is, $f_\alpha = \Psi_{\alpha \beta} \, \circ \, f_\beta$ for all $\alpha < \beta$. There exists a unique continuous map $f: Y \to \liminv X_\alpha$ making the diagram on the right commutative for all $\alpha < \beta$.
	Moreover, the resulting map $f$ inherits a possible linearity from the defining maps.
\end{corollary}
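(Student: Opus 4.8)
The plan is to exhibit $f$ explicitly and then verify existence, continuity, uniqueness, and the linearity addendum directly from the definition of the inverse limit. There is no deep obstacle here: the argument is essentially bookkeeping about the initial topology, and the only point deserving a moment's care is the topological one, spelled out at the end.

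First I would define $f \colon Y \to \prod X_\alpha$ by $f(y) \defeq (f_\alpha(y))_\alpha$ and check that its image actually lies in $\liminv X_\alpha$. For $\alpha < \beta$, the compatibility hypothesis $f_\alpha = \Psi_{\alpha\beta} \circ f_\beta$ gives $\Psi_{\alpha\beta}(f_\beta(y)) = f_\alpha(y)$, which is precisely the defining relation of the limit, so $f(y) \in \liminv X_\alpha$. By construction $\Psi_\alpha \circ f = \pr_\alpha \circ f = f_\alpha$ for every $\alpha$, so the diagram commutes.

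Next, continuity. Since $\liminv X_\alpha$ carries, by definition, the initial topology with respect to the projections $\Psi_\alpha$, a map into it is continuous if and only if its composition with each $\Psi_\alpha$ is continuous; here these compositions are the $f_\alpha$, continuous by assumption, whence $f$ is continuous. For uniqueness, suppose $g \colon Y \to \liminv X_\alpha$ also satisfies $\Psi_\alpha \circ g = f_\alpha$ for all $\alpha$. Then for each $y$ the tuples $g(y)$ and $f(y)$ have the same $\alpha$-component $f_\alpha(y)$ for every $\alpha$; since a point of the product $\prod X_\alpha$ is determined by its components, $g = f$.

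Finally, the linearity addendum: if $Y$ is a topological vector space and all $f_\alpha$ are linear, then $\liminv X_\alpha$ is a linear subspace of the product (the defining relations $\Psi_{\alpha\beta}(x_\beta) = x_\alpha$ are linear) equipped with the componentwise operations, and one computes $f(\lambda y + y') = (f_\alpha(\lambda y + y'))_\alpha = (\lambda f_\alpha(y) + f_\alpha(y'))_\alpha = \lambda f(y) + f(y')$, so $f$ is linear. The one step I would be careful about is invoking the universal property of the initial topology for a map whose codomain is $\liminv X_\alpha$ rather than the ambient product; this is legitimate because $\Psi_\alpha = \pr_\alpha \circ \iota$ with $\iota$ the inclusion, and the subspace/initial topology on the limit is by definition the one generated by the $\Psi_\alpha$.
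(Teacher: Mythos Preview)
Your proof is correct and follows the same approach as the paper. The paper's proof is a single sentence invoking the universal property of the product $\prod X_\alpha$ together with the compatibility condition; you have simply unpacked that sentence into its constituent verifications (landing in the limit, continuity via the initial topology, uniqueness, linearity).
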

\begin{proof}
	The universal property of the product $\prod X_\alpha$ directly translates to the inverse limit due to the compatibility condition on $f_\alpha$.
\end{proof}

If all spaces $X_\alpha$ in an inverse system are Hausdorff, so is their product and hence the inverse limit is a closed subspace of $\prod X_\alpha$. This directly implies that the inverse limit of complete Hausdorff spaces is again complete. Furthermore, in the category of locally convex spaces the limit space carries compatible seminorms explicitly given by composing seminorms on the building blocks with the corresponding projections. In particular, the limit of countably many metrizable spaces is metrizable as well.  

\begin{theorem}[Locally convex spaces as inverse limits]
	The following holds:
	\begin{itemize}
		\item Every locally convex space is topologically isomorphic to a dense subset of an inverse limit of Banach spaces.
		\item Every metrizable locally convex space is topologically isomorphic to a dense subset of an inverse limit of countably many Banach spaces.
		\item Every complete locally convex space is topologically isomorphic to an inverse limit of Banach spaces.		
		\item Every Fréchet space is topologically isomorphic to an inverse limit of countably many Banach spaces.		
	\end{itemize}
	The building blocks of these inverse systems are often called \emphDef{local Banach spaces}.
\end{theorem}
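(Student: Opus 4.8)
The plan is to realise $X$ inside the inverse limit of the so-called \emph{local Banach spaces}, obtained by dividing out the kernels of the defining seminorms and completing. Fix a directed fundamental system $\normdot_\alpha$ for $X$; in the metrizable case choose it countable (even increasing), which is possible by \cref{thm::locallyconvexspaces:countableSeminormsEqualMetrizable}. For each index $\alpha$ set $N_\alpha \defeq \setc{x \in X}{\norm{x}_\alpha = 0}$, a closed subspace on whose quotient $X / N_\alpha$ the seminorm $\normdot_\alpha$ induces a genuine norm; let $X_\alpha$ be the Banach-space completion of $(X/N_\alpha, \normdot_\alpha)$. Preorder the index set by $\alpha \le \beta :\Leftrightarrow \normdot_\alpha \le C \normdot_\beta$ for some $C > 0$; directedness of the fundamental system makes this directed. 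When $\alpha \le \beta$ one has $N_\beta \subseteq N_\alpha$, so $\id_X$ descends to a linear map $X/N_\beta \to X/N_\alpha$ of operator norm at most $C$, which extends uniquely to a continuous linear bonding map $\Psi_{\alpha\beta}: X_\beta \to X_\alpha$. The cocycle identity $\Psi_{\alpha\gamma} = \Psi_{\alpha\beta} \circ \Psi_{\beta\gamma}$ holds on the dense subspace $X/N_\gamma$ and hence everywhere, so $(X_\alpha, \Psi_{\alpha\beta})$ is an inverse directed system of Banach spaces, countable when the fundamental system was chosen countable.

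Next I would build the comparison map. The quotient maps $q_\alpha : X \to X/N_\alpha \hookrightarrow X_\alpha$ are continuous and satisfy $\Psi_{\alpha\beta} \circ q_\beta = q_\alpha$, so by the universal property (\cref{cor::locallyconvexspaces:inverseLimitUniversalProperty}) they assemble into a continuous linear map $\iota : X \to \liminv X_\alpha$, $x \mapsto (q_\alpha x)_\alpha$. I claim $\iota$ is a topological embedding. For the topology, recall that $\liminv X_\alpha$ carries the defining seminorms $\normdot_{X_\alpha} \circ \Psi_\alpha$; pulling these back along $\iota$ yields $\norm{\iota(x)}_\alpha = \norm{q_\alpha x}_{X_\alpha} = \norm{x}_\alpha$, so $\iota$ transports the defining seminorm system of $\liminv X_\alpha$ exactly onto that of $X$ and is therefore a homeomorphism onto its image. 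Injectivity is just the Hausdorff axiom: $\iota(x) = 0$ forces $\norm{x}_\alpha = 0$ for all $\alpha$, whence $x = 0$ by \cref{prop::locallyconvexspaces:HausdorffBySeminorms}.

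It then remains to see that $\iota(X)$ is dense. A basic neighbourhood of a point $(x_\alpha) \in \liminv X_\alpha$ is cut out by finitely many indices $\alpha_1, \dots, \alpha_n$ and some $\varepsilon > 0$; by directedness pick $\gamma$ above all the $\alpha_i$. Since $X/N_\gamma$ is dense in its completion $X_\gamma$, choose $x \in X$ with $\norm{q_\gamma x - x_\gamma}_{X_\gamma}$ as small as desired; applying the bonded maps $\Psi_{\alpha_i\gamma}$ and using the compatibility $\Psi_{\alpha_i\gamma}(x_\gamma) = x_{\alpha_i}$ of the inverse limit gives $\norm{q_{\alpha_i} x - x_{\alpha_i}}_{X_{\alpha_i}} < \varepsilon$ for every $i$, so $\iota(x)$ lies in the prescribed neighbourhood. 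This settles the first bullet, and the second by taking the system countable. For the complete case, recall from the discussion preceding the theorem that an inverse limit of complete Hausdorff spaces is again complete and Hausdorff; hence the complete subspace $\iota(X)$ is closed in $\liminv X_\alpha$, and being dense it exhausts $\liminv X_\alpha$, so $\iota$ is a topological isomorphism $X \cong \liminv X_\alpha$. Specialising once more to a countable fundamental system yields the Fréchet statement.

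The step I expect to require the most care is the density argument, since it genuinely exploits directedness to collapse a finite family of constraints to a single index $\gamma$, together with the elementary but essential fact that $X/N_\gamma$ is dense in its completion; by contrast the well-definedness and boundedness of the bonding maps, the cocycle relation, and the seminorm computation identifying the subspace topology are routine once the universal property is in hand.
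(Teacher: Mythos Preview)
Your proof is correct and follows essentially the same construction as the paper's: quotient by the seminorm kernels, complete to obtain the local Banach spaces, order the indices by domination of seminorms, and map into the inverse limit via the universal property. You are in fact more careful than the paper on two points it leaves implicit---explicitly verifying that $\iota$ is a topological embedding via the seminorm identity $\norm{\iota(x)}_\alpha = \norm{x}_\alpha$, and spelling out the complete case (dense plus complete-hence-closed gives surjective)---so nothing needs to be added.
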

\begin{proof}
	Let $(X, \normdot_\alpha)$ be a locally convex space. Due to the absolute homogeneity and the triangle inequality the kernel $\ker_\alpha$ of $\normdot_\alpha$ is a linear subspace of $X$ and $p_\alpha(\equivClass{x}_\alpha) \equiv p_\alpha(x + \ker_\alpha) \defeq \norm{x}_\alpha$ is a well-defined norm on $X / \ker_\alpha$. Hence the completion $X_\alpha$ of this quotient space is a Banach space. The projection $X \to X / \ker_\alpha$ induces a linear, continuous map $\Psi_\alpha: X \to X_\alpha$. 
	To complete the construction of the inverse system, define the partial order $\alpha \leq \beta$ if and only if $\normdot_\alpha \leq \normdot_\beta$. Then $\ker_\alpha \supseteq \ker_\beta$, so that the assignment $\equivClass{x}_\beta \mapsto \equivClass{x}_\alpha$ defines a linear, continuous map $\Psi_{\alpha\beta}: X_\beta \to X_\alpha$. Now it is easy to see that $(X_\alpha, \Psi_{\alpha\beta})$ is indeed an inverse system, which is countable for metrizable spaces $X$. 

	By construction the family $\Psi_\alpha$ is compatible with the inverse system and thus by \autoref{cor::locallyconvexspaces:inverseLimitUniversalProperty} defines a linear, continuous map $\Psi: X \to \liminv X_\alpha$. Moreover, $\Psi$ is injective because $X$ is a Hausdorff space and so the only point lying in the kernel of all defining seminorms is $0$ (see \cref{prop::locallyconvexspaces:HausdorffBySeminorms}). Thus it is left to show that $\Psi(X)$ is dense in $\liminv X_\alpha$. But this follows from the fact that for every $0$-neighbourhood $U \subseteq \liminv X_\alpha$ there exists a $0$-neighbourhood $V \subseteq X_\alpha$ (for some $\alpha$) such that $\Psi_\alpha^{-1}(V) \subseteq U$ and since $\Psi_\alpha(X) = \sfrac{X}{\ker_\alpha}$ is dense in $X_\alpha$ (by the definition of completion).
\end{proof}

\begin{example}[Smooth functions]
	Consider the space of smooth functions on an open subset $U \subseteq \R$ equipped with the Fréchet topology discussed in \cref{ex::locallyConvexSpaces:SmoothFunctionOnOpenSubset}. The previous theorem renders $C^\infty(U)$ as an inverse limit of the spaces $C^k(U)$ with their usual $C^k$-norms. 

	Due to the Sobolev inequalities an equivalent topology is induced by the usual Sobolev seminorms (see \cref{sec::sectionSpace} for an explicit formula). This observation results in the Sobolev chain $C^\infty(U) = \liminv H^k(U)$. 
\end{example}

The representation of Fréchet spaces as an inverse limit of countably many Banach spaces is a valuable tool for transferring well-known theorems to this new context (the prime example being the Nash-Moser-Theorem discussed in \autoref{sec:locallyconvexspaces:nashMoserInverseTheorem}). In a related approach one first studies the problem at hand on suitable Banach spaces and then tries to glue them together. For example, regularisation of a Sobolev-solution of a partial differential equation equates roughly speaking to proving compatibility with the inverse system $H^k(U)$. 

Unfortunately, the category of Fréchet spaces has the big drawback of not being closed under dualizing (this observation is probably due to \textcite{Grothendieck1954}). Therefore dual spaces, and more generally spaces of linear maps, will be avoided in the development of differential calculus. 
\begin{theorem}[{\parencite[Proposition 29.1.7]{Koethe1969}}] \label{prop::locallyConvexSpaces:dualOfFrechetNotMetrizable}
	The strong\footnote{The strong topology on the dual space $F'$ is the topology $b^*$ of uniform convergence on bounded subsets generated by the seminorms $p_M(\xi) \defeq \sup_{x \in M} \abs{\xi(x)}$. Here, $M \subseteq F$ runs through the bounded subset, that is, each generating seminorm is bounded on $M$.} dual $(F', b^*)$ of a Fréchet space $F$ is metrizable if and only if $F$ is normable. In particular, duals of non-Banach Fréchet spaces are not Fréchet.
\end{theorem}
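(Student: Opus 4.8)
The plan is to reduce the statement to an intrinsic property of $F$ and then settle it by a diagonal argument. The easy implication first: if $F$ is normable then, being complete, it is a Banach space, and its dual normed by the supremum over the closed unit ball $B$ is again a Banach space; since $B$ is a fundamental bounded set, this norm topology is precisely $b^*$, so $(F',b^*)$ is metrizable. The ``in particular'' clause then follows by contraposition of the non-trivial implication: a non-Banach Fréchet space is never normable (a complete normed space is Banach), hence its strong dual is not metrizable, and a fortiori not a Fréchet space.

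For the substantial implication, assume $(F',b^*)$ is metrizable and aim to show $F$ is normable. The heart of the matter is the equivalence
\[ (F',b^*)\ \text{metrizable} \iff F\ \text{admits a countable fundamental system of bounded subsets}, \]
by which I mean a countable family $(B_n)$ of bounded subsets of $F$ such that every bounded subset lies in $\lambda B_n$ for suitable $n \in \N$, $\lambda > 0$. The implication ``$\Leftarrow$'' is immediate: the seminorms $p_{B_n}$ then form a countable generating family for $b^*$ (any $p_B$ being dominated by some $p_{B_n}$), so $b^*$ is metrizable by \cref{thm::locallyconvexspaces:countableSeminormsEqualMetrizable}. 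For ``$\Rightarrow$'', metrizability yields a countable generating family of seminorms $q_k$ for $b^*$; each $q_k$ is $b^*$-continuous, hence by \cref{prop::lcs:linearMapContiniousBySeminorms} dominated by $C_k\,p_{D_k}$ for a single bounded set $D_k$, namely the union of the finitely many bounded sets entering the estimate. Passing to the increasing bounded sets $B_k \defeq$ closed absolutely convex hull of $D_1 \cup \dots \cup D_k$, one obtains an increasing fundamental system $(p_{B_k})$ generating $b^*$, so every $b^*$-continuous $p_B$ ($B$ bounded) satisfies $p_B \leq C\,p_{B_k}$ for some $k$; the bipolar theorem then upgrades this seminorm estimate to the genuine inclusion $B \subseteq C\,B_k$. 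Thus $(B_k)$ is the desired countable fundamental system of bounded sets.

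It remains to show that a metrizable locally convex space $F$ possessing such a system is normable; we may take the system $(B_n)$ increasing and balanced. Suppose $F$ were not normable; by Kolmogorov's normability criterion (a Hausdorff locally convex space is normable iff it has a bounded $0$-neighbourhood) no member of a fixed decreasing neighbourhood base $(U_n)$ at $0$ is bounded. Since $B_n$, hence $nB_n$, is bounded, $U_n \not\subseteq nB_n$, and we may pick $x_n \in U_n \setminus nB_n$. As $(U_n)$ is a decreasing base, $x_n \to 0$, so $A \defeq \{x_n : n \in \N\}$ is bounded; therefore $A \subseteq \lambda B_m$ for some $m \in \N$, $\lambda > 0$. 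Choosing an integer $k \geq \max(m,\lambda)$ gives $x_k \in \lambda B_m \subseteq kB_m \subseteq kB_k$ (using that $B_m$ is balanced and $(B_n)$ is increasing), contradicting $x_k \notin kB_k$. Hence $F$ is normable, which completes the argument.

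The main obstacle is the second equivalence, and within it the direction ``$\Rightarrow$'': the metric on $F'$ delivers only a seminorm-level description of $b^*$, and converting the resulting domination estimates $p_B \leq C\,p_{B_k}$ into honest inclusions $B \subseteq C\,B_k$ of bounded sets is exactly where the bipolar theorem — and the bulk of the work in \textcite{Koethe1969} — comes in. Once this dictionary between metrizability of the strong dual and countable systems of bounded sets is available, the normability of $F$ drops out of the short diagonal argument above.
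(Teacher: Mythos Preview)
Your argument is correct and follows a genuinely different route from the paper's. The paper works in the bidual: it shows, via the Banach--Mackey theorem, that the polars $\mathring{U}_k$ of a countable neighbourhood base of $F$ form a fundamental system of bounded sets in $F'$; metrizability of $(F',b^*)$ then allows one to assemble a single bounded set $B \subseteq F'$ absorbing every $\mathring{U}_k$, which forces the seminorms defining $b^{**}$ on $F''$ to collapse to a single norm; since $F$ sits in $(F'',b^{**})$ with its original topology, $F$ is normable. Your route instead stays on the $F$ side: metrizability of $(F',b^*)$ is translated, via the bipolar theorem, into a countable fundamental system of bounded sets in $F$ itself, and then Kolmogorov's criterion plus a short diagonal argument finishes. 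Your approach is arguably more direct and avoids the embedding $F \hookrightarrow F''$ and the result that the induced topology agrees with the original; the paper's approach, on the other hand, makes the duality picture (and the role of the polars $\mathring{U}_k$) more explicit. One small point: the reference you cite for ``$q_k$ continuous implies $q_k \leq C_k\, p_{D_k}$'' should really be \cref{prop::locallyConvexSpaces:seminormCharateristEquivalentTopology} rather than \cref{prop::lcs:linearMapContiniousBySeminorms}, which concerns linear maps; the substance of the step is unaffected.
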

\begin{proof}
	The idea is to prove that the strong double dual $(F'', b^{**})$ is normable. From this the claim follows, since $F \subseteq F''$ and the subspace topology induced from $b^{**}$ coincides with the original topology on $F$ \parencite[Theorem 23.23]{MeiseVogt1992}.

	Let $\normdot_k$ be a countable directed fundamental system of seminorms with associated unit tubes $U_k$. The polars $\mathring{U}_k \defeq \setc{\xi \in F'}{\abs{\xi(x)} \leq 1 \quad \text{for } x \in U_k}$ constitute a family of bounded subsets of $F'$ absorbing all other bounded subsets, i.e. for every bounded subset $\tilde B \subset F'$ there exist a $k$ and $\eta > 0$ with $\tilde B \subseteq \eta \cdot \mathring{U}_k$. This fact is proven in \parencite[Lemma 25.5]{MeiseVogt1992} by appealing to the Banach-Mackey theorem.

	Assume that $F'$ is metrizable, then there exist a countable $0$-neighbourhood basis of convex and absorbing subsets $V_k$. For each $k$, choose $\lambda_k > 0$ such that $\lambda_k \cdot \mathring{U}_k \subseteq V_k$, which is possible because the polars are bounded. Now the set $B \defeq \cup_k \,\lambda_k \cdot \mathring{U}_k$ is a bounded set absorbing all $\mathring{U}_k$ and thus all other bounded subsets. This has the consequence that the seminorms $\norm{y}_k \defeq \sup_{\xi \in \mathring{U}_k} \abs{y(\xi)}$ defining the strong topology $b^{**}$ on the double dual $F''$ degenerate to a single norm. This completes the proof.
\end{proof}
An intuitively appealing way to make the previous theorem plausible resorts to the inverse limit representation of Fréchet spaces. Taking the dual of the inverse system of local Banach spaces results in a direct system (roughly speaking, an inverse system with reversed arrows). Thus a complete, metrizable dual is a countable inverse and direct limit at the same time, which is only possible if both families are finite and hence Banach spaces. However this argument is only completely valid for a subclass of Fréchet spaces. It implicitly requires that the topology induced from the direct system equals the strong topology, which only holds for so called distinguished Fréchet spaces (see \parencite[Remark 25.13]{MeiseVogt1992}).

\section{Integral and differential calculus} \label{sec::lcs:integralDifferentalCalculus}
Closely following the spirit of variational calculus, this section develops the integral and differential calculus for mappings between arbitrary locally convex spaces. An extensive discussion was given by \textcite{Hamilton1982} in the case of Fréchet spaces and later generalized to sequentially complete locally convex spaces \parencite{Milnor1984}. Recently, \textcite{Gloeckner2002} observed that the same argumentation holds without any completeness condition. The following presentation is based on the just mentioned references.

The fundamental theorem of calculus is the pillar on which the remaining theory rests and thus the theory of integration in locally convex spaces is developed first. The weak integral generalizes the idea to integrate vector-valued functions component-wise, just that now linear, continuous functionals play the role of components.
\begin{defn}
	Let $\gamma: \R \supseteq [a,b] \to X$ be a continuous curve in a locally convex space $X$. If there exists an element of $X$, denoted by $\int_a^b \gamma(t) \dif t$, such that
	\begin{equation}
		\dualpair*{\xi}{\int_a^b \gamma(t) \dif t} = \int_a^b \dualpair{\xi}{\gamma(t)} \dif t \qquad \text{for all } \xi \in X',
	\end{equation}
	then it is called the \emphDef{Gelfand-Pettis} or \emphDef{weak integral} of $\gamma$. In this case, $\gamma$ is called \emphDef{weakly integrable}. Here, $\dualpair{\,\cdot\,}{\,\cdot\,}$ denotes the dual pairing.
\end{defn}
If the integral exists, then it is uniquely defined since the linear continuous functionals separate points in $X$ by the Hahn-Banach theorem \parencite[Theorem 22.12]{MeiseVogt1992}. Existence is ensured for sequentially complete spaces, since in that case every weak integral can be obtained as the limit of Riemann partial sums (see \parencite[Proposition 1.1.12]{Gloeckner2005}).
\newpage
\begin{defn}
 	A continuous curve $\gamma: \R \supseteq I \to X$ defined on a (non-empty) open subset $I$ is called \emphDef{continuously differentiable} or \emphDef{$C^1$} if the limit
 	\begin{equation}
 		\gamma'(t) \defeq \lim_{h \rightarrow 0} \frac{\gamma(t+h) - \gamma(t)}{h}
 	\end{equation}
 	exists for all $t \in I$ and the assignment $\gamma': t \mapsto \gamma'(t)$ is continuous on $I$.  
\end{defn} 

\begin{theorem}[Fundamental theorem of calculus] \label{prop::lcs:fundamentalTheoremPath}
 	Let $X$ be a locally convex space. The following holds:
 	\begin{enumerate}
 		\item For every continuously differentiable curve $\gamma: I \to X$ with $[a,b] \subseteq I$:
 			\begin{equation}
 				\gamma(b) - \gamma(a) = \int_a^b \gamma'(t) \dif t.
 			\end{equation}
 			In particular, the weak integral $\int_a^b \gamma'(t) \dif t$ exists.
 		\item Let $\gamma: [a,b] \to X$ be a continuous curve. If the weak integral $\Gamma(t) \defeq \int_a^t \gamma(s) \dif s$ exists for every $t \in [a,b]$, then $\Gamma$ is a continuously differentiable curve and fulfils $\Gamma'(t) = \gamma(t)$. \qedhere 
 	\end{enumerate}
\end{theorem}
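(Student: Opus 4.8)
The plan is to bootstrap the one-dimensional fundamental theorem of calculus through the defining equation of the weak integral, using that a continuous linear functional $\xi \in X'$ commutes both with the difference quotient defining $\gamma'$ and with weak integration, together with the uniqueness of weak integrals that was already deduced from the Hahn--Banach theorem.

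\textbf{Part (i).} Fix $\xi \in X'$. Since $\xi$ is linear and continuous it commutes with the limit defining $\gamma'$, so $\xi \circ \gamma \colon I \to \R$ is continuously differentiable with $(\xi \circ \gamma)' = \xi \circ \gamma'$, which is a continuous real-valued function on $I$. The classical fundamental theorem of calculus yields $\xi(\gamma(b)) - \xi(\gamma(a)) = \int_a^b \xi(\gamma'(t)) \dif t$. As this holds for every $\xi \in X'$, the element $\gamma(b) - \gamma(a) \in X$ satisfies the relation characterizing the weak integral $\int_a^b \gamma'(t) \dif t$; by uniqueness it \emph{is} that weak integral, and in particular the latter exists.

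\textbf{Part (ii).} First I would record two elementary consequences of the defining property and of uniqueness. (a) \emph{Additivity over adjacent intervals:} whenever the three integrals below make sense, $\int_a^c = \int_a^b + \int_b^c$, because the right-hand side satisfies the defining relation by additivity of the scalar integral; in particular, setting $\int_t^{t+h}\gamma(s)\dif s := \Gamma(t+h) - \Gamma(t)$ is legitimate and this quantity exists. (b) \emph{The seminorm estimate} $\norm{\int_a^b \delta(s)\dif s}_\alpha \le \int_a^b \norm{\delta(s)}_\alpha \dif s$ for every continuous seminorm $\normdot_\alpha$ and every weakly integrable curve $\delta$; this follows from the Hahn--Banach dominated-extension formula $\norm{v}_\alpha = \sup\set{\abs{\xi(v)} : \xi \in X',\ \abs{\xi(\,\cdot\,)} \le \norm{\,\cdot\,}_\alpha}$ combined with $\xi(\int \delta) = \int \xi(\delta)$ and the triangle inequality for real integrals. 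Now, since the weak integral of the constant curve $s \mapsto \gamma(t)$ over $[t,t+h]$ equals $h\,\gamma(t)$, I subtract it off and apply (b):
\[
	\norm*{\frac{\Gamma(t+h) - \Gamma(t)}{h} - \gamma(t)}_\alpha = \norm*{\frac{1}{h}\int_t^{t+h} \bigl(\gamma(s) - \gamma(t)\bigr) \dif s}_\alpha \le \sup_{\abs{s - t} \le \abs{h}} \norm{\gamma(s) - \gamma(t)}_\alpha,
\]
and the right-hand side tends to $0$ as $h \to 0$ by continuity of $\gamma$, for each $\alpha$. Hence $\Gamma'(t)$ exists and equals $\gamma(t)$; since $t \mapsto \gamma(t)$ is continuous by hypothesis, $\Gamma$ is $C^1$.

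The one non-formal ingredient — and thus the only real obstacle — is the seminorm estimate in step (b): it requires the dominated-extension form of Hahn--Banach (to represent $\normdot_\alpha$ as a supremum of functionals it dominates), rather than the mere point-separation already invoked for uniqueness. Everything else is a faithful transcription of the scalar statements along the functionals in $X'$, and no completeness of $X$ is needed because in (i) the integral is produced explicitly as $\gamma(b) - \gamma(a)$ and in (ii) its existence is assumed.
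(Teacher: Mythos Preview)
Your proof of part (i) is essentially identical to the paper's: push the difference quotient through an arbitrary $\xi \in X'$, invoke the scalar fundamental theorem, and recognise $\gamma(b)-\gamma(a)$ as the weak integral via its defining relation.

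For part (ii) the paper does \emph{not} give a proof at all; it defers to \parencite[Proposition 1.1.9]{Gloeckner2005}, remarking that ``its proof requires additional preparation''. That preparation is precisely a seminorm bound on weak integrals, which the paper only establishes \emph{after} this theorem (as the consequence \eqref{eq::locallyconvexspaces:seminormIntegralInequality} of the mean value theorem, via Hahn--Banach separation). You short-circuit this by deriving the seminorm estimate directly from the dominated-extension form of Hahn--Banach, which is perfectly valid and keeps the argument self-contained. Your additivity step (a) and the subtraction of the constant curve are correct uses of linearity and uniqueness of the weak integral; the resulting difference-quotient estimate is the standard one, and continuity of $\gamma$ finishes it. So your treatment of (ii) is in fact more complete than the paper's.
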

\begin{proof}
 	\begin{thmenumerate}*
 		\item Let $\xi \in X'$ be arbitrary. Then a simple calculation using linearity and continuity of $\xi$ shows that $\xi \circ \gamma$ is a $C^1$-curve in $\R$,
 			\begin{align}
 				(\xi \circ \gamma)' &= \lim_{h \rightarrow 0} \frac{\dualpair{\xi}{\gamma(t+h)} - \dualpair{\xi}{\gamma(t)}}{h}	\\
 									&= \dualpair*{\xi}{\lim_{h \rightarrow 0} \frac{\gamma(t+h) - \gamma(t)}{h}} = \xi \circ \gamma'.
 			\end{align}	
 			By resorting back to the `ordinary' fundamental theorem of calculus one has
 			\begin{equation}
 			 	\dualpair{\xi}{\gamma(b) - \gamma(a)} = (\xi \circ \gamma)(b) - (\xi \circ \gamma)(a) =  \int_a^b (\xi \circ \gamma)'(t) \dif t =  \int_a^b \dualpair{\xi}{\gamma'(t)} \dif t.
 			 \end{equation} 
 			 Hence $\gamma(b) - \gamma(a)$ fulfils the defining relation of the Pettis integral $\int_a^b \gamma'(t) \dif t$.
  		\item As this result is not needed in the following, but its proof requires additional preparation, the reader is referred to \parencite[Proposition 1.1.9]{Gloeckner2005}. \qedhere %
 	\end{thmenumerate}
\end{proof}

Before proceeding to differential calculus, a few properties of integration are collected for later use.
\begin{proposition}
	Let $X$ be a locally convex space. Suppose the continuous curves $\gamma$ and $\eta$ in $X$ are weakly integrable on $[a,b]$ with $b>a$. Then,
	\begin{thmenumerate}
		\item (Linearity)\\
			$\gamma + c \cdot \eta$ is weakly integrable for all $c \in \R$ and
			\begin{equation}
				\int_a^b (\gamma(t) + c \cdot \eta(t)) \dif t = \int_a^b \gamma(t) \dif t + c \int_a^b \eta(t) \dif t.
			\end{equation}

		\item \label{prop:locallyconvexspaces:integrationMeanValueTheorem} 
			(Mean value theorem)\\
			\begin{equation}
				\frac{1}{b-a} \, \int_a^b \gamma(t) \dif t \: \in \: \cco (\img \gamma).
			\end{equation}
			The closed convex hull $\cco A$ of a subset $A \subseteq X$ is the smallest closed convex subset of $X$ containing $A$. 

		\item for every continuous seminorm $\normdot$ on $X$,
			\begin{equation}\label{eq::locallyconvexspaces:seminormIntegralInequality}
				\norm*{\int_a^b \gamma(t) \dif t} \leq (b-a) \cdot \max_{t \in [a,b]} \norm{\gamma(t)}.  
			\end{equation}
	\end{thmenumerate}
\end{proposition}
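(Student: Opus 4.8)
The plan is to derive all three items from the defining property of the weak integral together with the Hahn--Banach theorem, in the same spirit as the proof of \cref{prop::lcs:fundamentalTheoremPath}; since weak integrability is part of the hypothesis, no completeness of $X$ is needed. For item (i), I would simply verify that the element $\int_a^b \gamma(t)\dif t + c\int_a^b \eta(t)\dif t$ satisfies the relation defining the weak integral of $\gamma + c\,\eta$: for every $\xi \in X'$, linearity of $\xi$ and of the scalar-valued integral give $\dualpair*{\xi}{\int_a^b \gamma(t)\dif t + c\int_a^b \eta(t)\dif t} = \int_a^b \dualpair{\xi}{\gamma(t)}\dif t + c\int_a^b\dualpair{\xi}{\eta(t)}\dif t = \int_a^b \dualpair{\xi}{\gamma(t)+c\,\eta(t)}\dif t$. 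Because the continuous functionals separate points (Hahn--Banach), this shows at once that $\gamma + c\,\eta$ is weakly integrable and that its integral is the asserted sum.

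For item (ii), I would argue by contradiction using the geometric form of Hahn--Banach. Put $v \defeq \frac{1}{b-a}\int_a^b\gamma(t)\dif t$ and $K \defeq \cco(\img\gamma)$; note $K \neq \emptyset$ since $b>a$. If $v \notin K$, then, $K$ being closed and convex and $\{v\}$ compact in the Hausdorff locally convex space $X$, there is $\xi \in X'$ strictly separating them, which after a sign change reads $\dualpair{\xi}{v} > \sup_{w\in K}\dualpair{\xi}{w}$. But $\gamma(t)\in K$ for all $t$, so pulling $\xi$ through the weak integral gives $\dualpair{\xi}{v} = \frac{1}{b-a}\int_a^b\dualpair{\xi}{\gamma(t)}\dif t \leq \sup_{t\in[a,b]}\dualpair{\xi}{\gamma(t)} \leq \sup_{w\in K}\dualpair{\xi}{w}$, a contradiction. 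Hence $v \in K$.

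Item (iii) is then a corollary of (ii). Given a continuous seminorm $\normdot$, set $M \defeq \max_{t\in[a,b]}\norm{\gamma(t)}$, which is attained because $t \mapsto \norm{\gamma(t)}$ is continuous on the compact interval $[a,b]$. The sublevel set $\setc{x\in X}{\norm{x}\leq M}$ is convex (by homogeneity and the triangle inequality) and closed (by continuity of $\normdot$), and it contains $\img\gamma$, hence also $\cco(\img\gamma)$; so by (ii) we get $\norm{v}\leq M$, which is exactly \eqref{eq::locallyconvexspaces:seminormIntegralInequality}. (Equivalently, one may run the Hahn--Banach argument directly, choosing $\xi\in X'$ with $\abs{\dualpair{\xi}{y}}\leq\norm{y}$ for all $y$ and $\dualpair{\xi}{\int_a^b\gamma\dif t} = \norm*{\int_a^b\gamma\dif t}$, and then estimating $\norm*{\int_a^b\gamma\dif t} = \int_a^b\dualpair{\xi}{\gamma(t)}\dif t \leq (b-a)M$.) The only delicate points are invoking the correct separation version of Hahn--Banach in a possibly non-normable locally convex space and observing that the maximum in (iii) is genuinely attained; beyond this bookkeeping there is no real analytic obstacle.
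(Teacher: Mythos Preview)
Your proof is correct and follows essentially the same route as the paper: linearity is checked directly against the defining relation of the weak integral, the mean value theorem is obtained by contradiction via the Hahn--Banach separation theorem, and the seminorm estimate is deduced from (ii) by observing that the sublevel set $\setc{x\in X}{\norm{x}\leq M}$ is closed, convex, and contains $\img\gamma$. Your write-up is slightly more careful in noting that the maximum in (iii) is actually attained, and your parenthetical direct Hahn--Banach argument for (iii) is a nice alternative, but there is no substantive difference in approach.
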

\begin{proof}
	\begin{thmenumerate}*
		\item Follows directly from the definition, since $\xi$ and the one-dimensional Riemann integration are linear.
		\item The proof is by contradiction, so assume $\frac{1}{b-a} \, \int_a^b \gamma(t) \dif t \notin \cco (\img \gamma)$. 
			Then by the Hahn-Banach separation theorem \parencite[Theorem B.5.8]{Gloeckner2005} there exist $\xi \in X'$ and $\varepsilon > 0$ such that $\dualpair{\xi}{\cco (\img \gamma)} + \varepsilon \leq \dualpair{\xi}{\frac{1}{b-a} \, \int_a^b \gamma(t) \dif t}$. In particular, the image of~$\gamma$ can be separated from the integral in an analogous way. Using this inequality, integration yields
			\begin{equation}
				\dualpair*{\xi}{\int_a^b \gamma(t) \dif t} = \int_a^b \dualpair{\xi}{\gamma(t)} \dif t \leq \dualpair*{\xi}{\int_a^b \gamma(t) \dif t} - (b-a)\cdot \varepsilon
			\end{equation} 
			and thus the desired contradiction by recalling $b>a$.
		\item Note that $\setc{x \in X}{\norm x \leq \max_{t \in [a,b]} \norm{\gamma(t)}}$ is a closed convex set containing $\img \gamma$. The claim is now a consequence of the above mean value theorem. \qedhere
	\end{thmenumerate}
\end{proof}
\begin{proposition}[Continuity of parameter-dependent integrals] %
	\label{prop::locallyconvexspaces:continuityParameterIntegral}
	Let $X$ be a locally convex space and $\gamma: I \times [a,b] \rightarrow X$ a continuous map, which describes a family of curves $\gamma(s, \cdot)$ parametrized by an interval $I \subseteq \R$. Furthermore, suppose that $\gamma(s, \cdot)$ is weakly integrable for all $s \in I$. Then the map
	\begin{equation}
		I \ni s \mapsto \int_a^b \gamma(s, t) \dif t \in X
	\end{equation}
	is continuous.
\end{proposition}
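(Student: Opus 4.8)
The plan is to reduce continuity of the $X$-valued map $s \mapsto \int_a^b \gamma(s,t)\dif t$ to a uniform-continuity estimate for $\gamma$ on a compact rectangle, then control the difference of two integrals seminorm by seminorm using the inequality~\eqref{eq::locallyconvexspaces:seminormIntegralInequality}. First I would fix $s_0 \in I$ and a continuous seminorm $\normdot$ on $X$; it suffices to show that $\norm*{\int_a^b \gamma(s,t)\dif t - \int_a^b \gamma(s_0,t)\dif t}$ becomes small as $s \to s_0$, since the family of all continuous seminorms generates the topology (this is the characterization of convergence in \cref{prop::locallyconvexspaces:seminormCharateristEquivalentTopology} and the convergence criterion for nets). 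By linearity of the weak integral, the difference equals $\int_a^b \bigl(\gamma(s,t) - \gamma(s_0,t)\bigr)\dif t$, and applying~\eqref{eq::locallyconvexspaces:seminormIntegralInequality} bounds its $\normdot$-value by $(b-a)\cdot \max_{t\in[a,b]} \norm{\gamma(s,t) - \gamma(s_0,t)}$.

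Next I would make that maximum small. Pick a compact neighbourhood $J \subseteq I$ of $s_0$; then $\gamma$ restricted to the compact set $J \times [a,b]$ is continuous into the topological vector space $X$, hence $\normdot \circ \gamma$ is a continuous real-valued function on a compact metric space and therefore uniformly continuous. Thus, given $\varepsilon > 0$, there is $\delta > 0$ such that $\abs{s - s_0} < \delta$ (with $s \in J$) forces $\norm{\gamma(s,t) - \gamma(s_0,t)} < \varepsilon$ for \emph{all} $t \in [a,b]$ simultaneously. Combining with the previous paragraph gives $\norm*{\int_a^b \gamma(s,t)\dif t - \int_a^b \gamma(s_0,t)\dif t} \leq (b-a)\varepsilon$ for $\abs{s-s_0}<\delta$, which is exactly continuity at $s_0$. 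Since $s_0$ was arbitrary, the map is continuous on $I$.

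The only mild subtlety — and the step I would be most careful about — is the passage from ``continuous on a topological vector space'' to ``uniformly continuous in the seminorm'': one should observe that $\normdot \circ \gamma$ is continuous because $\normdot : X \to \R$ is continuous (seminorms defining the topology are continuous, and a continuous seminorm is continuous by definition), and then invoke the Heine–Cantor theorem on the compact metric space $J\times[a,b]$; the uniformity over $t$ is automatic from compactness of the whole rectangle, not just of each slice. Everything else is a direct application of already-established facts: existence and linearity of the weak integral, and the seminorm estimate~\eqref{eq::locallyconvexspaces:seminormIntegralInequality}. No completeness hypothesis is needed beyond what is already assumed to make the integrals exist, since weak integrability of each $\gamma(s,\cdot)$ is part of the hypothesis.
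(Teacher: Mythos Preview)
Your proof is correct and follows the same overall strategy as the paper: reduce to the seminorm estimate~\eqref{eq::locallyconvexspaces:seminormIntegralInequality} and then make $\max_{t\in[a,b]}\norm{\gamma(s,t)-\gamma(s_0,t)}$ small. The one genuine difference is in how that maximum is controlled. You invoke Heine--Cantor on a compact rectangle $J\times[a,b]$ to obtain a $\delta$ that works uniformly in $t$; the paper instead lets $t_0$ be the point where the maximum is attained and appeals to continuity of $\gamma(\cdot,t_0)$ at $s_0$. The paper's shortcut is actually slightly circular, since $t_0$ depends on $s$ and hence the resulting $\delta$ need not serve for all nearby $s$. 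Your uniform-continuity argument closes this gap cleanly, so in this instance your version is the more careful one.
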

\begin{proof}
	Let $\normdot$ be a defining seminorm on $X$ and $\varepsilon > 0$. By the above inequality \eqref{eq::locallyconvexspaces:seminormIntegralInequality} one estimates
	\begin{equation}
	  	\norm*{\int_a^b \gamma(s, t) \dif t - \int_a^b \gamma(s_0, t) \dif t} \leq (b-a) \max_{t \in [a,b]} \norm{\gamma(s,t) - \gamma(s_0, t)}
	\end{equation}
	Hence it is enough to find an upper bound for the second term to conclude continuity of the parametrized integral at $s_0 \in I$. Denote by $t_0 \in [a,b]$ the point where the maximum is attained. Since $\gamma(\cdot, t_0)$ is continuous at $s_0$, there exists for every $\tilde \varepsilon > 0$ a $\delta > 0$ such that $\norm{\gamma(s,t_0) - \gamma(s_0, t_0)} < \tilde \varepsilon$ holds for $\abs{s-s_0} < \delta$. Now the choice $\tilde \varepsilon = \sfrac{\varepsilon}{(b-a)}$ completes the proof.
\end{proof}

As already indicated, the definition of the derivative is roughly along the lines of variational calculus.
\begin{defn}
 	Let $X$ and $Y$ be locally convex spaces and $U \subseteq X$ a open subset. For a continuous map $f: U \to Y$ the \emphDef{Gâteaux derivative} at $x \in U$ in the direction $h \in X$ is defined to be
 	\begin{equation}
 		(\dif f)_x (h) \defeq \lim_{t \rightarrow 0} \frac{f(x+th)-f(x)}{t}\, \in Y.
 	\end{equation}
 	If the limit exists for all $h \in X$, the function $f$ is differentiable at $x \in U$. A function differentiable at every point of $U$ is simply called \emphDef{differentiable}. 

 	A \emphDef{continuously differentiable} or \emphDef{$C^1$} function $f$ is by definition differentiable in $U$ and the derivative as a map $\dif f: U \times X \to Y$ is continuous. The higher derivatives are obtained recursively
 	\begin{equation}
 		(\dif^{n} f)_x (h_1, \dotsc ,h_n) \defeq \lim_{t \rightarrow 0} \frac{(\dif^{n-1} f)_{x+th_n}(h_1, \dotsc, h_{n-1})-(\dif^{n-1} f)_{x}(h_1, \dotsc, h_{n-1})}{t}\, \in Y.
 	\end{equation}
 	and $f$ is called of class $C^n$, if all limits exists and $\dif f:U \times X \to Y$ is a $C^{n-1}$ map. A map is \emphDef{smooth} if it is of class $C^n$ for all $n \in \N$.
\end{defn}
\begin{remarks}
 	\item In the above definition the function is required to be continuous, which in contrast to the Banach case does not immediately follow from the existence of the differential quotient. This is seen by noticing the existence of discontinuous linear maps and on the other hand that all linear maps are differentiable. 

 	Nevertheless, continuity of the derivative $\dif f: U\times X \to Y$ is strong enough to ensure continuity of $f$, see \parencite[Lemma II.2.3 (iii)]{Neeb2006}. %
 	\item In the context of Banach spaces this notion of continuous differentiability is strictly weaker than the standard $C^1_{\text{Banach}}$ property, where the map $U \ni x \mapsto (\dif f)_x \in (L(X,Y), \normdot_{\text{op}})$ is required to be continuous. See \textcite[Example~I.6]{Neeb2006} for a $C^1$ function violating the latter condition. However, $C^2$ implies $C^1_{\text{Banach}}$ and thus the notion of smoothness coincides in both approaches \parencite[Remark~I.2.2]{Neeb2006}. 
 	\item Once one leaves the secure realm of Banach spaces, the number of admissible and in some sense natural definitions of continuous differentiability are increased by many orders of magnitude. \textcite{Keller1974} reviews 9 different variations of $C^1$-maps which at least collapse to three inequivalent notions of smoothness in Fréchet spaces (6 for arbitrary locally convex spaces). Let alone the 25 different formulations of the first derivative discussed in \parencite{AverbukhSmolyanov1968}.

	In particular, the `convenient setting' developed in \parencite{KrieglMichor1997} is to be pointed out, since it has the advantage to form a cartesian closed category. But the technical avail comes at the cost that convenient smooth maps are not necessarily continuous. Thus, for example, insights from topological group theory are not available in this context.

 	The approach to differential calculus given in this work was first established by \textcite{Michal1938, Bastiani1964} and has since then proven itself reliable for global analytic problems. It is also the calculus used by \textcite{Hamilton1982, Milnor1984, Neeb2006}. \qedhere		     

\end{remarks}

 \begin{proposition}[Properties of the derivative]
 	Let $f: X \supseteq U \to Y$ be a continuously differentiable map defined on an open subset $U$.
 	\begin{thmenumerate}
 		\item $(\dif f)_x: X \to Y$ is linear and continuous for every $x \in U$. \label{prop::locallycovnexspaces:derivativeLinearContInSecondArgument}
 		\item \label{prop::locallycovnexspaces:fundamentalTheorem}
 			(Fundamental theorem of calculus)\\
 			If the straight path from $x$ to $x+h$ lies in $U$, then 
 			\begin{equation}
 				f(x+h) - f(x) = \int_0^1 (\dif f)_{x+th} (h) \dif t.
 			\end{equation}
 		\item (Chain rule)\\
 			Let $V \subseteq Y$ be an open subset containing the image of $f$ and $g:Y \supseteq V \to Z$ a $C^1$-map. Then $g \circ f: X \supseteq U \to Z$ is continuously differentiable and
 			\begin{equation}
 				\dif (g\circ f)_x = (\dif g)_{f(x)} \circ (\dif f)_x\,.
 			\end{equation}
 		\newpage
 		\item (Taylor series)\\
 			If $f$ is of class $C^n$ and the straight path from $x$ to $x+h$ lies in $U$, then
 			\begin{multline}
 				f(x+h) = f(x) + (\dif f)_x(h) + \dotsb + \frac{1}{(n-1)!} (\dif^{n} f)_x(h, \dotsc, h) \\
 				+ \frac{1}{(n-1)!} \int_0^1 (1-t)^{n-1} (\dif^{n} f)_{x+th}(h, \dotsc, h) \dif t 
 			\end{multline} 
 			holds. \qedhere
 	\end{thmenumerate}
 \end{proposition}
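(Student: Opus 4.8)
The plan is to establish the four items in the order (ii), (i), (iii), (iv), because the fundamental theorem of calculus is the tool from which the (a priori non-obvious) linearity of the Gâteaux derivative and the chain rule both flow. \textbf{Item (ii).} Fix $x$ and $h$ with the segment $t \mapsto x+th$, $t \in [0,1]$, contained in $U$ (so by openness it lies in $U$ for $t$ in an open interval containing $[0,1]$), and put $\gamma(t) \defeq f(x+th)$. Computing the difference quotient of $\gamma$ at $t_0$ recognises it as the Gâteaux derivative of $f$ at $x+t_0h$ in direction $h$, so $\gamma'(t_0) = (\dif f)_{x+t_0 h}(h)$; since $\dif f\colon U \times X \to Y$ is continuous and $t \mapsto (x+th,h)$ is continuous, $\gamma$ is a $C^1$-curve, and the fundamental theorem of calculus for curves (\autoref{prop::lcs:fundamentalTheoremPath}) gives $f(x+h)-f(x) = \gamma(1)-\gamma(0) = \int_0^1 (\dif f)_{x+th}(h)\,\dif t$.

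\textbf{Item (i).} Homogeneity $(\dif f)_x(\lambda h) = \lambda\,(\dif f)_x(h)$ is immediate from the definition by rescaling the limit parameter (the case $\lambda = 0$ being trivial). For additivity, decompose
\[
  \frac{f(x+th_1+th_2)-f(x)}{t} = \frac{f(x+th_1+th_2)-f(x+th_1)}{t} + \frac{f(x+th_1)-f(x)}{t};
\]
the last term tends to $(\dif f)_x(h_1)$, and applying item (ii) to the segment from $x+th_1$ to $x+th_1+th_2$ together with the homogeneity just proved rewrites the middle term as $\int_0^1 (\dif f)_{x+th_1+sth_2}(h_2)\,\dif s$. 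The integrand is jointly continuous in $(t,s)$ and equals $(\dif f)_x(h_2)$ at $t=0$, so by continuity of parameter-dependent integrals (\autoref{prop::locallyconvexspaces:continuityParameterIntegral}) the middle term converges to $(\dif f)_x(h_2)$ as $t\to 0$. Hence $(\dif f)_x$ is additive, and it is continuous because it is the restriction of the jointly continuous map $\dif f$ to $\{x\}\times X$.

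\textbf{Items (iii) and (iv).} For the chain rule, apply item (ii) to $g$ along the segment from $f(x)$ to $f(x+th)$ (inside $V$ for small $t$ by continuity of $f$) and divide by $t$:
\[
  \frac{g(f(x+th))-g(f(x))}{t} = \int_0^1 (\dif g)_{f(x)+s(f(x+th)-f(x))}\!\left(\frac{f(x+th)-f(x)}{t}\right)\dif s.
\]
Viewed as a function of $(t,s)$, the integrand extends continuously to $t=0$ with value $(\dif g)_{f(x)}\big((\dif f)_x(h)\big)$ — the base point tends to $f(x)$, the inserted vector tends to $(\dif f)_x(h)$, and $\dif g$ is jointly continuous — so \autoref{prop::locallyconvexspaces:continuityParameterIntegral} yields $\dif(g\circ f)_x(h) = (\dif g)_{f(x)}\big((\dif f)_x(h)\big)$; this expression is jointly continuous in $(x,h)$ as a composite of continuous maps, so $g\circ f$ is $C^1$. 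For the Taylor formula, induct on $n$ with base case item (ii): apply the curve fundamental theorem to $t \mapsto (\dif^{k} f)_{x+th}(h,\dots,h)$, whose derivative is $(\dif^{k+1} f)_{x+th}(h,\dots,h)$ by the recursive definition of the higher derivatives, insert this into the $k$-th stage remainder, and collapse the resulting iterated integral using the elementary identity
\[
  \int_0^1 \frac{(1-t)^{k-1}}{(k-1)!}\biggl(\int_0^t \psi(s)\,\dif s\biggr)\dif t = \int_0^1 \frac{(1-s)^{k}}{k!}\,\psi(s)\,\dif s
\]
(valid for continuous $\psi\colon[0,1]\to Y$ by scalarising with functionals and Fubini); this simultaneously produces the next explicit Taylor term and the integral remainder of the stated form.

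\textbf{Main obstacle.} The one genuinely non-routine point is the additivity in (i): in contrast to the Banach setting the Gâteaux derivative is not automatically linear, and the only way to recover it here is the fundamental-theorem-of-calculus manoeuvre combined with the continuity of parameter-dependent integrals. The same mechanism — interchanging a limit with a parameter integral whose integrand degenerates as the parameter goes to $0$ — underlies the chain rule, and the delicate bookkeeping there is verifying the joint continuity of the relevant integrand up to $t=0$ so that \autoref{prop::locallyconvexspaces:continuityParameterIntegral} actually applies.
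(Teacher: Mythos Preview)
Your argument is correct, and items (ii) and (i) match the paper almost verbatim (the paper even rewrites the second summand in the additivity proof as an integral too, but that is cosmetic).

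The genuine difference is in (iii). The paper does not use the integral representation for $g$; instead it records an auxiliary characterisation of $C^1$-maps: $f$ is $C^1$ iff the difference quotient $(x,h,t)\mapsto \tfrac{1}{t}(f(x+th)-f(x))$ extends to a continuous map $f^{[1]}$ on $U^{[1]}=\{(x,h,t):x+th\in U\}$, with $(\dif f)_x(h)=f^{[1]}(x,h,0)$. With this lemma the chain rule becomes the one-line algebraic identity
\[
\frac{(g\circ f)(x+th)-(g\circ f)(x)}{t}=g^{[1]}\bigl(f(x),\,f^{[1]}(x,h,t),\,t\bigr),
\]
and continuity (hence the $C^1$-property and the chain-rule formula at $t=0$) is immediate. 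Your route via $\int_0^1(\dif g)_{f(x)+s(f(x+th)-f(x))}(\tfrac{f(x+th)-f(x)}{t})\,\dif s$ works just as well; the only point to spell out is that the inserted vector extends \emph{continuously} to $t=0$ (not merely has a limit there), which you get from item (ii) itself, so the parameter-integral lemma applies. The $f^{[1]}$ formulation buys a cleaner bookkeeping and is reusable elsewhere in the paper; your version avoids introducing an extra lemma.

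For (iv) the paper simply says ``iterate (i)--(ii) and integrate by parts'' with a reference to Hamilton; your Fubini identity is an equivalent way of doing that integration by parts and is more explicit than what the paper writes.
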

 \begin{proof}
 	Note that the proof is in a slightly different order then the statements above.
 	\begin{enumerate}
 		\item[(ii)] %
 			Consider the map $[0,1] \ni t \mapsto f(x+th) \in X$, which is a $C^1$-path because its derivative is given by
 			\begin{equation}
 				\diffAt{}{t}{t}f(x+th) = \lim_{s \rightarrow 0} \frac{f(x+(t+s)h) - f(x+th)}{s} = (\dif f)_{x+th}(h).
 			\end{equation}
 			Due to this expression for the derivative, application of the fundamental theorem for paths (\cref{prop::lcs:fundamentalTheoremPath}) now yields
 		\begin{equation}
 			\int_0^1 (\dif f)_{x+th}(h) \dif t = \int_0^1 \diffAt{}{t}{t}f(x+th) \dif t = f(x+h) - f(x).
 		\end{equation}
 		\item[(i)] %
 			Scalar homogeneity $(\dif f)_x (\lambda h) = \lambda (\dif f)_x (h)$ follows immediately from the definition. %
 			To prove additivity the fundamental theorem is applied 
 			\begin{align}
 				\frac{f(x+t(h_1 + h_2)) - f(x)}{t} 	&= \frac{f(x+t(h_1 + h_2)) - f(x + t h_1)}{t} + \frac{f(x+t h_1) - f(x)}{t}\\
 													&= \int_0^1 (\dif f)_{x+th_1 + s t h_2}(h_2) \dif s + \int_0^1 (\dif f)_{x+sth_1}(h_1) \dif s.
 			\end{align} 
 			Continuity of parametrized integrals, \cref{prop::locallyconvexspaces:continuityParameterIntegral}, allows one to take the limit $t \rightarrow 0$ under the integral sign to conclude $(\dif f)_x (h_1 + h_2) = (\dif f)_x (h_1)  + (\dif f)_x(h_2)$. Finally, $(\dif f)_x$ is continuous because $\dif f$ is.
 		\item[(iii)] %
 			For the proof of the chain rule the following characterisation of $C^1$-maps is useful. It follows directly from the fundamental theorem, \cref{prop::locallycovnexspaces:fundamentalTheorem}.
			\begin{lemma} \label{prop::locallyConvexSpace:charaterisationDifferentialByDifferentialQuotient} %
			 	A function $f:U \to Y$ defined on an open subset $U \subseteq X$ is continuously differentiable if and only if there exists a continuous function $f^{[1]}:U^{[1]} \rightarrow Y$ defined on the open set $U^{[1]} \defeq \setc{(x,h,t) \in U \times X \times \R}{x+th \in U}$ and fulfilling $f^{[1]}(x,h,t) \defeq \frac{1}{t} (f(x+th) - f(x))$ for $t \neq 0$. In this case, $(\dif f)_x (h) = f^{[1]}(x, h, 0)$.
			\end{lemma}
			Equipped with this result, the chain rule is obtained by purely algebraic rearrangements
			\begin{align}
				\frac{(g \circ f) (x + th) - (g \circ f) (x)}{t} 	&= \frac{g(f(x) + t f^{[1]}(x,h,t)) - g (f(x))}{t} \\
																	&= g^{[1]}(f(x), f^{[1]}(x,h,t), t).
			\end{align}
			Since this expression is continuous on $U^{[1]}$, the previous lemma shows that $g \circ f$ is of class $C^1$. Moreover, evaluation at $t=0$ yields the desired formula.
		\item[(iv)]
			Taylor's theorem can be directly obtained by iterating the results from the first two points and integrating by parts \parencite[Theorem 3.5.2 and 3.5.6]{Hamilton1982}.\qedhere %
 	\end{enumerate}
 \end{proof}

\begin{proposition}[Partial derivatives]
	Let $X_1$ and $X_2$ be locally convex spaces and $f: X_1 \times X_2 \supseteq U \to Y$ a continuous map defined on an open subset $U$ of the product space $X_1 \times X_2$. The map $f$ is $C^1$ if and only if the \emphDef{partial derivatives} 
	\begin{align}
		(\dif_1 f)_{x_1, x_2} (h) &\defeq \lim_{t \rightarrow 0} \frac{f(x_1+th, x_2)-f(x_1, x_2)}{t}\, \in Y\\
		(\dif_2 f)_{x_1, x_2} (h) &\defeq \lim_{t \rightarrow 0} \frac{f(x_1, x_2+th)-f(x_1, x_2)}{t}\, \in Y
	\end{align}
	exists for all $(x_1, x_2) \in U$ and the resulting map $\dif_1 f: U \times X_1 \to Y$ as well as  $\dif_2 f: U \times X_2 \to Y$ is continuous. In this case the total derivative of $f$ is the sum of the partial derivatives:
	\begin{equation}
		(\dif f)_{x_1, x_2} (h_1, h_2) = (\dif_1 f)_{x_1, x_2} (h_1) + (\dif_2 f)_{x_1, x_2} (h_2). 
	\end{equation}
\end{proposition}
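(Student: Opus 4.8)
The plan is to prove the two implications separately; essentially all the work lies in the ``if'' direction, namely that continuity of the partial derivatives forces $f$ to be of class $C^1$.

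For the ``only if'' direction I would assume $f$ is $C^1$ and restrict to the affine curves $t \mapsto (x_1+th, x_2)$ and $t \mapsto (x_1, x_2+th)$. Since $(x_1+th, x_2)-(x_1,x_2) = t(h,0)$, this immediately gives that the partial derivatives exist and satisfy $(\dif_1 f)_{x_1,x_2}(h) = (\dif f)_{x_1,x_2}(h,0)$ and $(\dif_2 f)_{x_1,x_2}(h) = (\dif f)_{x_1,x_2}(0,h)$. Continuity of $\dif_1 f$ on $U \times X_1$ and of $\dif_2 f$ on $U \times X_2$ then follows by precomposing the continuous map $\dif f$ with the continuous inclusions $(x_1,x_2,h) \mapsto ((x_1,x_2),(h,0))$ and $(x_1,x_2,h) \mapsto ((x_1,x_2),(0,h))$, and the additivity formula drops out of the linearity of $(\dif f)_{x_1,x_2}$ from \cref{prop::locallycovnexspaces:derivativeLinearContInSecondArgument}, applied to the decomposition $(h_1,h_2) = (h_1,0)+(0,h_2)$.

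For the converse I would assume the partial derivatives exist and are continuous, and first note that, for each fixed $x_2$, the map $x_1 \mapsto f(x_1,x_2)$ is $C^1$ with derivative $\dif_1 f$ (differentiability holds by hypothesis, and $(x_1,h)\mapsto(\dif_1 f)_{x_1,x_2}(h)$ is continuous as a restriction of $\dif_1 f$), so $(\dif_1 f)_{x_1,x_2}$ is linear and continuous by \cref{prop::locallycovnexspaces:derivativeLinearContInSecondArgument}, and likewise for $(\dif_2 f)_{x_1,x_2}$. I would then fix $(x_1,x_2)\in U$ and $(h_1,h_2)\in X_1\times X_2$; because $U$ is open and the continuous map $(t,s)\mapsto(x_1+sth_1,x_2+th_2)$ carries $\{0\}\times[0,1]$ into $U$, compactness of $[0,1]$ yields a $t_0>0$ such that this map --- and also $(t,s)\mapsto(x_1,x_2+sth_2)$ --- sends $(-t_0,t_0)\times[0,1]$ into $U$. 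For $0<\abs{t}<t_0$ I would write the increment $f(x_1+th_1,x_2+th_2)-f(x_1,x_2)$ as $\bigl(f(x_1+th_1,x_2+th_2)-f(x_1,x_2+th_2)\bigr)+\bigl(f(x_1,x_2+th_2)-f(x_1,x_2)\bigr)$ and apply the fundamental theorem of calculus (\cref{prop::locallycovnexspaces:fundamentalTheorem}) to each summand in its active variable, using homogeneity of the partial derivatives to extract the factor $t$. Dividing by $t$ then gives
\[
\begin{aligned}
  \frac{f(x_1+th_1,x_2+th_2)-f(x_1,x_2)}{t}
    &= \int_0^1 (\dif_1 f)_{x_1+sth_1,\, x_2+th_2}(h_1)\,\dif s \\
    &\quad{}+ \int_0^1 (\dif_2 f)_{x_1,\, x_2+sth_2}(h_2)\,\dif s .
\end{aligned}
\]
Both integrands are continuous in $(t,s)$ and weakly integrable in $s$, so \cref{prop::locallyconvexspaces:continuityParameterIntegral} allows passing to the limit $t\to 0$ on the right-hand side; this shows that the Gâteaux derivative $(\dif f)_{x_1,x_2}(h_1,h_2)$ exists and equals $(\dif_1 f)_{x_1,x_2}(h_1)+(\dif_2 f)_{x_1,x_2}(h_2)$. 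Finally, that expression is continuous on $U\times(X_1\times X_2)$ --- being the sum of $\dif_1 f$ and $\dif_2 f$ precomposed with projections --- so $f$ is $C^1$, which settles both the equivalence and the claimed formula for $\dif f$.

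The hard part is the interchange of limit and integral in the converse: the increment has to be split so that, after the fundamental theorem of calculus, the integrands still depend continuously on the auxiliary parameter $t$ as $t\to 0$, which is precisely the hypothesis needed to invoke \cref{prop::locallyconvexspaces:continuityParameterIntegral}. The remaining ingredients --- keeping the relevant affine segments inside $U$ for small $t$, and pulling the scalar $t$ out of $(\dif_i f)$ via homogeneity before taking the limit --- are routine once these observations are in place.
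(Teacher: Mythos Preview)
Your proof is correct and follows essentially the same route as the paper: split the difference quotient via an intermediate point, apply the fundamental theorem of calculus in each variable, and pass to the limit using continuity of parameter-dependent integrals (\cref{prop::locallyconvexspaces:continuityParameterIntegral}). The only cosmetic difference is that the paper splits through $(x_1+th_1,x_2)$ instead of your $(x_1,x_2+th_2)$, which is immaterial.
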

\begin{proof}
	It is obvious that the partial derivatives exists and are continuous if $f$ is a $C^1$-map. Conversely, continuity of $\dif_j f: U \times X_j \to Y$ allows one to establish the identity
	\begin{equation}\begin{split}
		\MoveEqLeft \frac{f(x_1+th_1, x_2 + th_2)) - f(x_1, x_2)}{t}\\[2ex]
		&=\frac{f(x_1+th_1, x_2 + th_2)) - f(x_1+th_1, x_2)}{t} + \frac{f(x_1+th_1, x_2)) - f(x_1, x_2)}{t} \\
		&= \int_0^1 (\dif_2 f)_{x_1 + th_1, x_2+sth_2}(h_2) \dif s + \int_0^1 (\dif_1 f)_{x_1 + s t h_1, x_2}(h_1) \dif s
	\end{split}\end{equation}
	for sufficiently small $t$ (compare with the proof of \cref{prop::locallycovnexspaces:derivativeLinearContInSecondArgument}). By continuity of the partial derivatives $\dif_j$, \autoref{prop::locallyconvexspaces:continuityParameterIntegral} about parameter-dependent integrals is applicable. Hence the limit $t \rightarrow 0$ yields $(\dif f)_{x_1, x_2} (h_1, h_2) = (\dif_1 f)_{x_1, x_2} (h_1) + (\dif_2 f)_{x_1, x_2} (h_2)$. In particular, $f$ is $C^1$.
\end{proof}

\section{Nash-Moser inverse function theorem} \label{sec:locallyconvexspaces:nashMoserInverseTheorem}
Many problems and complications in the study of locally convex spaces originate from the absence of a general inverse function theorem in  categories more general than the Banach setting. Hence the search for an adequate replacement becomes all the more important. The theorem of Nash and Moser \parencites{Nash1956, Moser1966} provides such a suitable extension of the classical result to a subcategory of Fréchet spaces, but it requires stronger assumptions. This section discusses the relevant notion of the so called tame Fréchet spaces and points out why stronger assumptions are needed for inverse-function-like results. An extensive presentation of these ideas can be found in \parencite{Hamilton1982}, on which the discussion in this section is based.

The basic idea underlying the Nash-Moser theorem is to exploit the representation of a Fréchet space as a projective limit of local Banach spaces. Using the inverse function theorem on each building block separately, one gets a family of inverse functions, which then need to be glued together in a reasonable way. This requires an approximation of the behaviour of the function in one local Banach space by the knowledge of its properties in another one. Recall that a graded Fréchet space has a distinguished increasing fundamental system of seminorms $\normdot_k$. If $f: X \to Y$ is a map between graded spaces, then in general one has no knowledge about the series $\norm{f(x)}_k$ except that it is increasing in $k$. The concept of tame maps allows one to control the behaviour more explicitly while still leaving enough scope for applications.

\begin{defn} \label{defn::FrechetTame:TameNonLinearMap}
	A continuous map $f: X \supseteq U \to Y$ between graded Fréchet spaces defined on an open subset $U \subseteq X$ is called \emphDef{$r$-tame of base $b$} if every point $x \in U$ has a neighbourhood $V \subseteq U$ such that the following tame estimate holds: 
	\begin{equation} \label{eq::locallyconvexspaces:tameEstimate}
		\text{For all } n \geq b \text{ there exists } C_{n, V} \in \R \text{ such that } \norm{f(x')}_n \leq C_{n, V} (1 + \norm{x'}_{n+r}) \text{ for } x' \in V.
	\end{equation}
	Furthermore, a smooth map $f$ is called \emphDef{$r$-tame smooth of base $b$} if it is tame and all derivatives $\dif^{n} f: U \times X^n \to Y$ are also tame\footnote{The product $X \times Y$ of graded products carries the natural product grading $\norm{(x,y)}_k = \norm{x}_n + \norm{y}_n$.} with the same degree $r$ and base $b$.
\end{defn} %
To simplify the notation, the degree and the base are often not explicitly named. One easily checks that the composition of tame maps is tame again. Furthermore, if the domain space $X$ is finite-dimensional or the target $Y$ is a Banach space, then any continuous map $f: X \supseteq U \to Y$ is automatically tame \parencite[Example 2.1.4]{Hamilton1982}. In the case of linear maps, tameness is equivalent to an apparently stronger condition as the next lemma demonstrates.

\begin{lemma}
	A linear map $T: X \to Y$ between graded Fréchet spaces is $r$-tame if and only if for all $x \in X$ holds:
	\begin{equation} \label{eq::locallyconvexspaces:tameLinearEstimate}
		\text{For all } n \geq b \text{ there exists } C_n \in \R \text{ such that } \norm{T(x)}_n \leq C_n \norm{x}_{n+r}.
	\end{equation}
	The starting index $b$ in this inequality can be greater then the base of the tame map.
\end{lemma}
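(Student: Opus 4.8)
The plan is to treat the two implications separately. The direction ``linear estimate $\Rightarrow$ $r$-tame'' is essentially a formality, so I would dispatch it first and then concentrate on the converse, which is a rescaling argument powered by linearity.

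For the easy direction, assume \eqref{eq::locallyconvexspaces:tameLinearEstimate}. First I would check that $T$ is continuous: for $n \geq b$ the inequality $\norm{T(x)}_n \leq C_n \norm{x}_{n+r}$ is precisely the criterion of \cref{prop::lcs:linearMapContiniousBySeminorms}, and for $n < b$ one controls $\norm{T(x)}_n \leq \norm{T(x)}_b \leq C_b \norm{x}_{b+r}$ using that the fundamental system is increasing. Continuity in hand, the tame estimate \eqref{eq::locallyconvexspaces:tameEstimate} holds globally (take $V = X$) simply because $C_n \norm{x}_{n+r} \leq C_n (1 + \norm{x}_{n+r})$; hence $T$ is $r$-tame of base $b$.

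For the converse, suppose $T$ is $r$-tame. Since $U = X$ contains $0$, the tame estimate at the origin furnishes an open $0$-neighbourhood $V$ and constants $C_{n,V}$ with $\norm{T(x')}_n \leq C_{n,V}(1 + \norm{x'}_{n+r})$ for $x' \in V$ and $n \geq b$. Because the defining system is increasing, $V$ contains a basic neighbourhood of the form $\setc{x \in X}{\norm{x}_m < \delta}$ for suitable $m \in \N$, $\delta > 0$, and I may shrink $V$ to this set. The key step is then: given $x \in X$ with $\norm{x}_m \neq 0$, rescale by $t \defeq \delta/(2\norm{x}_m)$ so that $tx \in V$, and invoke linearity of $T$ to obtain $t\,\norm{T(x)}_n \leq C_{n,V}(1 + t\,\norm{x}_{n+r})$, that is
\begin{equation}
  \norm{T(x)}_n \leq C_{n,V}\bigl(\tfrac{2}{\delta}\,\norm{x}_m + \norm{x}_{n+r}\bigr).
\end{equation}
(When $\norm{x}_m = 0$ every dilate $tx$ lies in $V$, and letting $t \to \infty$ gives the same inequality with the first summand dropped.) Finally, for indices with $n + r \geq m$ the monotonicity of the seminorms absorbs $\norm{x}_m$ into $\norm{x}_{n+r}$, yielding \eqref{eq::locallyconvexspaces:tameLinearEstimate} with $C_n \defeq C_{n,V}(1 + 2/\delta)$ and with a starting index $b'$ chosen so that $b' \geq b$ and $b' + r \geq m$ — which is exactly the announced possibility of a larger base.

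I do not expect a genuine obstacle: the argument is elementary once one remembers that it suffices to test the tame estimate at a single point. The only things to watch are (i) the starting index may have to be raised so that $\norm{\cdot}_m$ can be compared with $\norm{\cdot}_{n+r}$, and (ii) the degenerate case $\norm{x}_m = 0$, which is handled by sending the dilation parameter to infinity rather than choosing it to touch the boundary of $V$.
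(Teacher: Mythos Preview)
Your proof is correct and follows essentially the same route as the paper: reduce to the tame estimate at the origin, take $V$ to be a basic seminorm tube, rescale an arbitrary $x$ into $V$ using linearity, and then raise the base so that monotonicity of the grading absorbs the extra $\norm{x}_m$ term. You are in fact slightly more careful than the paper in separating out the degenerate case $\norm{x}_m = 0$, which the paper's phrase ``every non-zero $x$'' glosses over.
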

\begin{proof}
	Sufficiency of \eqref{eq::locallyconvexspaces:tameLinearEstimate} for the tame estimate is clear and, moreover, implies continuity by \cref{prop::lcs:linearMapContiniousBySeminorms}. 

	Conversely, tameness of $T$ ensures the existence of a neighbourhood $V \subseteq X$ of $0$ such that $\norm{T(y)}_n \leq C_n (1 + \norm{y}_{n+r})$ holds on it for $n \geq b$. Since the $\varepsilon$-tubes of the family of seminorms form a $0$-neighbourhood base, the particular form $V = \setc{y \in X}{\norm{y}_k \leq \varepsilon}$ for some $k \geq 0$ and $\varepsilon > 0$ can be assumed. The original tame estimate \eqref{eq::locallyconvexspaces:tameEstimate} is amplified to the linear approximation \eqref{eq::locallyconvexspaces:tameLinearEstimate} by exploiting the homogeneity symmetry of the left side of the inequality. That is, replacing $y$ by $c y$ for some positive constant $c$ yields $\norm{T(y)}_n \leq C_n (\sfrac{1}{c} + \norm{y}_{n+r})$. 

	Every non-zero $x \in X$ is rescaled to $V$ by setting $y \defeq \sfrac{\varepsilon}{\norm{x}_k} x$ and so one obtains the inequality $\norm{T(x)}_n \leq C_n (\sfrac{\norm{x}_k}{\varepsilon} + \norm{x}_{n+r})$. Define the new base $\tilde b$ in such a way to fulfil $\tilde{b} + r \geq k$ and $\tilde b \geq b$. Now the monotony of the family of seminorms implies the claimed inequality $\norm{T(x)}_n \leq C_n (\sfrac{1}{\varepsilon} + 1)\norm{x}_{n+r}$ for $n \geq \tilde b$.
\end{proof}
If one thinks of $T: X \to Y$ as a linear differential operator between function spaces, then $r$-tameness of $T$ implies a `loss of maximally $r$ derivatives'.

Besides estimates on maps, the Nash-Moser theorem also requires the space under consideration to be `sweet-tempered' in the sense that it provides a sufficiently Banach-like environment. In some respects (which will be made exact in the following), the prototype is the space $\Sigma(B)$ of exponentially strong falling sequences in a Banach space $(B, \normdot_{\text{B}})$. The grading and thus the Fréchet topology on $\Sigma(B)$ is given by seminorms
\begin{equation}
	\norm{(x_i)}_k \defeq \sum_{i=0}^\infty e^{k\cdot i} \, \norm{x_i}_{\text{B}}.	
\end{equation} 
A sequence $(x_i)$ in $B$ is an element of $\Sigma(B)$ if and only if $\norm{(x_i)}_k < \infty$ for all $k$.

\begin{defn}
	A graded Fréchet space $X$ is called \emphDef{tame} if there exists a Banach space $B$ such that $X$ is a tame direct summand in $\Sigma(B)$, that is, there exist tame linear maps $X \overset{\iota}{\rightarrow} \Sigma(B) \overset{\pi}{\rightarrow} X$ with $\pi \circ \iota = \id_X$.
\end{defn}
\begin{remark}
The intuitive idea behind tame Fréchet spaces is that the growth of the sequence $k \mapsto \normdot_k$ is controlled by the exponential map. For concreteness, consider the space $C^\infty(\R)$ of smooth functions with seminorms as in \cref{ex::locallyConvexSpaces:SmoothFunctionOnOpenSubset}, that is
\begin{equation}
	\norm{\phi}_{n, k} \defeq \sup_{\substack{x \in [-n,n] \\ i \leq k}}{\abs*{\difff{\phi}{x}{i}(x)}} 	\qquad \text{for } n,k \in \N.
\end{equation} 
Tameness would imply the existence of constants $C_{n,k}$ such that $C_{n,k} \cdot \norm{\phi}_{n,k}$ is rapidly decreasing. There obviously exist functions with exponential growth which thereby serve as counterexamples for such a tame estimate. Hence, $C^\infty(\R)$ is not a tame Fréchet space. Nonetheless, the space of smooth functions on a \emph{compact} manifold is tame (see \cref{sec::sectionSpace}).
\end{remark}

\begin{theorem}[Nash-Moser inverse function theorem] \label{prop::locallyConvexSpace:NashMoserInverseTheorem} %
	Let $X$ and $Y$ be tame Fréchet spaces and $f: X \supseteq U \to Y$ a tame smooth map defined on an open subset $U \subseteq X$.
	Assume that the derivative $\dif f$ has a tame smooth family $\Psi^f$ of inverses, that is, $\Psi^f: U \times Y \to X$ is a tame smooth map and the family $\Psi^f_x: Y \to X$ is inverse to $(\dif f)_x$ for all $x \in U$. Then the map $f$ is locally bijective and the inverse is a tame smooth map.
\end{theorem}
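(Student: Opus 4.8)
The plan is to follow Hamilton's strategy: construct the local inverse of $f$ near a fixed point $x_0 \in U$ by solving $f(x) = y$, for $y$ close to $y_0 := f(x_0)$, via a \emph{smoothed} Newton iteration. After shrinking $U$ I may assume that the tame estimates for $f$, all its derivatives, and the family $\Psi^f$ hold on $U$ with uniform constants; no further normalisation is needed, since the hypothesis already supplies the inverses $\Psi^f_x = (\dif f)_x^{-1}$. The undamped scheme $x_{n+1} = x_n + \Psi^f_{x_n}(y - f(x_n))$ converges quadratically in any single Banach norm, but the quadratic Taylor remainder involves $\dif^2 f$, which is only $r$-tame, so each step costs a fixed number of derivatives and the limit would require infinitely many. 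The fix is to interpose smoothing operators that regain derivatives at the price of an error that can be made as small as one wishes.

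Concretely, on the model space $\Sigma(B)$ one has truncation operators $S_t$ (cut the sequence $(x_i)$ off at index $\approx \log t$), which obey the interpolation estimates $\norm{S_t x}_n \leq C\, t^{\,n-m}\norm{x}_m$ and $\norm{(\id - S_t)x}_m \leq C\, t^{-(n-m)}\norm{x}_n$ for all $m \leq n$. Composing these with the tame linear maps $\iota$ and $\pi$ from the definition of a tame Fréchet space produces operators $S_t \colon X \to X$ with the same estimates. One then iterates $x_{n+1} = x_n + S_{t_n}\Psi^f_{x_n}(y - f(x_n))$ along a rapidly growing schedule (say $t_{n+1} = t_n^{3/2}$); the new residual $f(x_{n+1}) - y$ splits into the quadratic Taylor term in $f$ and a smoothing defect $(\dif f)_{x_n}(\id - S_{t_n})\Psi^f_{x_n}(y - f(x_n))$, and the tame estimates bound both in terms of low and high seminorms of $x_n - x_0$ and $y - y_0$.

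The heart of the argument — and the step I expect to be the main obstacle — is the simultaneous a priori induction that makes the iteration close: one must show that in a fixed low norm $\normdot_\ell$ the increments $\norm{x_{n+1}-x_n}_\ell$ and residuals $\norm{f(x_n)-y}_\ell$ decay super-geometrically, while in a higher norm $\normdot_{\ell+s}$ the quantities $\norm{x_n - x_0}_{\ell+s}$ grow only polynomially in $t_n$, the two coupled through the interpolation inequalities above. Getting the exponents to match, so that Newton's quadratic gain strictly dominates the $r$-fold derivative loss uniformly along the schedule, is the delicate bookkeeping the whole theorem rests on (and where tameness of the spaces is genuinely used). Granting it, $(x_n)$ is Cauchy in every $\normdot_n$ and hence, $X$ being a complete Fréchet space, converges to some $x$ with $f(x) = y$; the same estimates show $x$ depends continuously on $y$, and applied to the difference of two solutions they give local uniqueness. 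This establishes that $f$ is a homeomorphism from a neighbourhood of $x_0$ onto a neighbourhood of $y_0$.

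Finally I would upgrade the continuous inverse $g := f^{-1}$ to a tame smooth map. The convergence estimates already yield a tame bound $\norm{g(y)}_n \lesssim 1 + \norm{y}_{n+r}$, so $g$ is tame. Differentiating, the difference quotient of $g$ together with the chain rule and the identity $(\dif f)_{g(y)} \circ \Psi^f_{g(y)} = \id$ shows $(\dif g)_y = \Psi^f_{g(y)}$, so $\dif g$ is the composition of the tame smooth map $\Psi^f$ with $(g, \id)$ and is therefore itself tame; hence $g$ is $C^1$ and tame. Smoothness then follows by induction: $\dif^{k} g$ is expressed through $\dif g, \dots, \dif^{k-1} g$, $g$, and the derivatives of $\Psi^f$ by differentiating $(\dif g)_y = \Psi^f_{g(y)}$, each step preserving tameness, exactly as in the Banach-space proof but carrying the tame estimates along.
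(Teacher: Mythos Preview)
The paper does not actually prove this theorem: immediately after the statement it writes ``The proof can be found in \parencite[section III.1]{Hamilton1982} or \parencite[pp.~562--574]{KrieglMichor1997}. A detailed repetition of the arguments goes beyond the scope of this thesis,'' and then only comments informally on why the extra tameness hypotheses are needed. Your outline is precisely Hamilton's smoothed-Newton-iteration argument that the paper defers to --- smoothing operators pulled back from $\Sigma(B)$ via the tame direct-summand maps, a super-geometric schedule, the coupled low-norm/high-norm induction, and the bootstrap to tame smoothness of the inverse via $(\dif g)_y = \Psi^f_{g(y)}$ --- so you go further than the paper itself while following exactly the route it points to.
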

The proof can be found in \parencite[section III.1]{Hamilton1982} or \parencite[pp. 562-574]{KrieglMichor1997}. A detailed repetition of the arguments goes beyond the scope of this thesis. Instead the discussion focuses on the differences between the Nash-Moser theorem and the classical inverse function theorem. In this way the reader is familiarized with this invaluable tool and aware of its specialities. 

\begin{remarks}
	\item As in the Banach setting, the Nash-Moser theorem is proved with the help of the Newton fixed point iteration procedure. But in Fréchet spaces, due to the loss of derivatives, the iteration step from $x_n$ in some local Banach space $B_k$ results in $x_{n+1}$ in some `lower' building block $B_{k-r}$. The tameness of the involved maps allows one to estimate how big this jump is. Moreover, the technical prerequisite of tameness of the Fréchet spaces ensures the existence of so called smoothing operators. These permit to compensate the loss of derivatives and to approximate $x_{n+1}$ by an element in the original Banach space $B_k$, which then is used for the next iteration step. 

	\item Based on the seminal work of Nash, Moser and Hamilton, further refinements of the inverse function theorem can be found in the literature. Mostly they aim at extending the realm of applicability to a larger class of Fréchet spaces, see for example \parencite{Poppenberg1996}. Recently, \textcite{Ekeland2012} followed a completely different proof strategy, namely he relies on his variational principle instead of the Newton iteration procedure. The resulting theorem encompasses the Nash-Moser theorem as a special case but requires weaker assumptions on the involved Fréchet spaces. However, the current version only yields local surjectivity. 

	Furthermore, \parencite{Gloeckner2006} is a feasible resource for implicit function theorems for maps from arbitrary locally convex spaces to Banach spaces.  
\end{remarks}

The existence of a tame smooth family of inverses basically requires the differential to be invertible in a whole neighbourhood, and not only in a single point as for Banach spaces. This condition can be traced back to the fact that invertible operators are no longer open in the space of all linear operators, see \cref{ex::diffGroup:expMap}. Nonetheless, the presence of finite-dimensional spaces simplifies the situation:
\begin{proposition}[{\parencite[pp. 47ff]{Subramaniam1984}}] \label{prop::tameFrechet:extendFiniteDimSplitting}
	Let $A$ and $X$ be tame Fréchet spaces, $Y \subseteq X$ a closed subspace and $E$ a finite dimensional vector space. Moreover, let $\Phi: A \times (E \times Y) \to X$ be a tame smooth family of linear maps which decomposes into
	\begin{equation}
		\Phi_a (e, y) = \varphi_a(e) + y,
	\end{equation}
	where $\varphi: A \times E \to X$ is a tame smooth family of injective, linear maps. If $\Phi_{a_0}: E \times Y \to X$ yields a linear and topological isomorphism $X = \img \varphi_{a_0} \oplus Y$ for some $a_0 \in A$, then there exists an open neighbourhood $U \subseteq A$ of $a_0$ such that $\Phi_a$ is bijective for all $a \in U$ and its inverses form a tame smooth map $U \times X \to E \times Y$.  
\end{proposition}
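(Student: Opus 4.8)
The plan is to use the isomorphism $\Phi_{a_0}$ to pull everything back to $E \times Y$, where the only genuine obstruction lives in the finite-dimensional factor $E$, and to exploit that $\mathrm{GL}(E)$ is open in $\mathrm{End}(E)$.

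First I would record the auxiliary maps coming from the topological splitting $X = \img \varphi_{a_0} \oplus Y$. Since $\img \varphi_{a_0}$ is finite dimensional the associated projections $P_0 \colon X \to \img \varphi_{a_0}$ and $Q_0 \defeq \id_X - P_0 \colon X \to Y$ are continuous, and $P_0$ has finite rank, so $P_0 = \sum_i \xi_i(\cdot)\, v_i$ for a basis $(v_i)$ of $\img \varphi_{a_0}$ and continuous functionals $\xi_i \in X'$. Each $\xi_i$ maps into the Banach space $\R$ and is therefore automatically tame, a linear map out of a finite-dimensional space is automatically tame, and compositions of tame maps are tame; hence $P_0$, $Q_0$, and thus $q \defeq Q_0 \colon X \to Y$ and $p \defeq \varphi_{a_0}^{-1} \circ P_0 \colon X \to E$ (legitimate since the finite-dimensional $\varphi_{a_0} \colon E \to \img \varphi_{a_0}$ is a homeomorphism) are tame. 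Being linear and continuous they are tame smooth, and by construction $\Phi_{a_0}^{-1} = (p, q)$, with $p|_Y = 0$ and $q|_Y = \id_Y$.

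Next I would set $\Psi_a \defeq \Phi_{a_0}^{-1} \circ \Phi_a \colon E \times Y \to E \times Y$ and compute $\Psi_a(e, y) = \bigl(\alpha_a(e),\, \beta_a(e) + y\bigr)$, where $\alpha_a \defeq p \circ \varphi_a \in \mathrm{End}(E)$ and $\beta_a \defeq q \circ \varphi_a \colon E \to Y$ are tame smooth families (compositions of $\varphi$ with the tame linear maps $p, q$), with $\alpha_{a_0} = \id_E$ and $\beta_{a_0} = 0$. Since $a \mapsto \alpha_a$ is continuous into the finite-dimensional algebra $\mathrm{End}(E)$ and $\mathrm{GL}(E)$ is open there, the set $U \defeq \setc{a \in A}{\alpha_a \in \mathrm{GL}(E)}$ is an open neighbourhood of $a_0$; moreover $a \mapsto \alpha_a^{-1}$ is smooth (inversion is rational on $\mathrm{GL}(E)$) and, having finite-dimensional target, tame. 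On $U$ the formula $\Psi_a^{-1}(e', y') = \bigl(\alpha_a^{-1}(e'),\; y' - \beta_a(\alpha_a^{-1}(e'))\bigr)$ is verified by direct substitution, and it is visibly tame smooth, being assembled by composition, subtraction and the tame smooth pieces above. Consequently $\Phi_a = \Phi_{a_0} \circ \Psi_a$ is bijective for every $a \in U$, and $\Phi_a^{-1} = \Psi_a^{-1} \circ \Phi_{a_0}^{-1}$ exhibits $(a, x) \mapsto \Phi_a^{-1}(x)$ as a composition of tame smooth maps $U \times X \to E \times Y$, as required.

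The main obstacle — the reason one cannot simply invoke a Neumann-series perturbation argument — is precisely that invertibility is not an open condition for operators on a Fréchet space; the point of the argument is that here the perturbation $\varphi_a$ of the reference splitting is finite-rank, so the entire invertibility question collapses onto $\mathrm{GL}(E)$, where classical linear algebra applies. The only real bookkeeping is checking that the ``change of chart'' $\Phi_{a_0}^{-1}$ is tame even though $\Phi_{a_0}$ is merely a topological isomorphism, and this is exactly where the finite-dimensionality of $\img \varphi_{a_0}$ (continuous functionals into $\R$ being automatically tame) enters.
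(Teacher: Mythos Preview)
Your proof is correct and follows essentially the same route as the paper: both arguments use the projection onto the finite-dimensional image $\img\varphi_{a_0}$ to reduce the invertibility of $\Phi_a$ to that of an endomorphism of $E$ (your $\alpha_a = p\circ\varphi_a$, the paper's transition map $\tau_{a_0,a}$), and then invoke that $\mathrm{GL}(E)$ is open in $\mathrm{End}(E)$. The only cosmetic difference is that you conjugate by $\Phi_{a_0}^{-1}$ to obtain the triangular form $\Psi_a(e,y)=(\alpha_a(e),\beta_a(e)+y)$ and read off the explicit inverse $\Psi_a^{-1}(e',y')=(\alpha_a^{-1}e',\,y'-\beta_a(\alpha_a^{-1}e'))$, whereas the paper separates the existence of the splitting $X=\img\varphi_a\oplus Y$ from the tame smoothness of the inverse and appeals to \parencite[Theorem~3.1.1]{Hamilton1982} for the latter; your packaging is slightly more direct but the content is the same.
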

The proof follows roughly the presentation in \parencite[pp. 47ff]{Subramaniam1984} and is carried out in several lemmas.  
\newpage
\begin{lemma} \label{prop::lcs:FiniteDimSubspaceTopComplemented} %
	Let $X$ be a Fréchet space and $E \subseteq X$ a finite-dimensional subspace. There exists a continuous, linear map $\epsilon: X \to E$ which splits the inclusion,
	\begin{equation}
		\begin{tikzcd}
			0 \arrow{r} & E \arrow{r}{\iota} & X. \arrow[bend  left=50]{l}{\epsilon}
		\end{tikzcd}
	\end{equation}
	In particular, $E$ is topologically complemented by $\ker \epsilon$. Conversely, every direct sum decomposition $X = E \oplus Y$ defines a natural projection
	\begin{equation}
		X  = E \oplus Y \to E, \qquad e + y \mapsto e,
	\end{equation}
	which possesses all the just described properties.
\end{lemma}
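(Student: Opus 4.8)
The plan is to build the splitting map $\epsilon$ by hand out of the Hahn--Banach theorem, using that $E$ is finite-dimensional so that every linear functional on it is continuous and extends to all of $X$.

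First I would fix an algebraic basis $e_1,\dots,e_n$ of $E$ and let $\lambda^1,\dots,\lambda^n$ be the dual basis of linear functionals on $E$, so $\lambda^i(e_j)=\delta^i_j$. A finite-dimensional Hausdorff topological vector space carries a unique vector space topology, so every linear functional on $E$ is automatically continuous; in particular each $\lambda^i\in E'$. Since $X$ is locally convex, the Hahn--Banach theorem \parencite{MeiseVogt1992} extends each $\lambda^i$ to some $\tilde\lambda^i\in X'$.

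Next I would set $\epsilon(x)\defeq\sum_{i=1}^n\tilde\lambda^i(x)\,e_i$. This map is linear, and being a finite sum of continuous scalar functionals times fixed vectors of $X$, it is continuous. Moreover $\epsilon(\iota(e_j))=\sum_i\tilde\lambda^i(e_j)\,e_i=\sum_i\lambda^i(e_j)\,e_i=e_j$, so $\epsilon\circ\iota=\id_E$. Consequently $\id_X-\iota\circ\epsilon$ is a continuous linear idempotent with image $\ker\epsilon$ and kernel $E$, whence $X=E\oplus\ker\epsilon$ is a topological direct sum and $\ker\epsilon$ is a closed topological complement of $E$. For the converse, given a (topological) direct sum $X=E\oplus Y$ — equivalently a closed algebraic complement $Y$ of $E$ — consider $p\colon X\to E$, $e+y\mapsto e$; linearity, $p\circ\iota=\id_E$ and $\ker p=Y$ are immediate. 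For continuity note that $Y$ closed makes $X/Y$ a Hausdorff topological vector space of dimension $n$, hence topologically isomorphic to $\R^n$; the quotient map $q\colon X\to X/Y$ is continuous and the induced map $E\to X/Y$ is a linear bijection of finite-dimensional Hausdorff spaces, hence a topological isomorphism, so $p=q$ followed by its inverse is continuous.

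\textbf{Main obstacle.} I expect no real difficulty: the only external input is Hahn--Banach extension, and the rest is the standard bookkeeping around uniqueness of the topology on finite-dimensional Hausdorff topological vector spaces. The one point worth flagging is the reading of ``direct sum decomposition'' in the converse — one must take the complement $Y$ to be closed (equivalently, the decomposition to be topological), since otherwise the projection need not be continuous.
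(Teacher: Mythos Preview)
Your proof is correct and follows essentially the same route as the paper: pick a basis of $E$, extend the dual basis functionals by Hahn--Banach, and assemble them into the projection $\epsilon(x)=\sum_i\tilde\lambda^i(x)\,e_i$. You supply more detail than the paper does (explicitly verifying $\epsilon\circ\iota=\id_E$, the idempotent argument, and the converse via the quotient), but the core idea is identical.
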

\begin{proof}
	Let $\set{e^i}$ be a basis in $E$ and let $\epsilon_i: E \to \R$ denote the dual basis. By the Hahn-Banach theorem \parencite[Theorem 3.6]{Rudin1973}, one can perceive $\epsilon_i$ as continuous functionals on the whole space. Now, 
	\begin{equation}
		\epsilon: X \to E, \qquad x \mapsto \epsilon_i(x) e^i
	\end{equation}
	is the desired projection.
\end{proof}
\begin{lemma}
	Let $A$ and $X$ be Fréchet spaces, $Y \subseteq X$ a closed subspace and $E$ a finite-dimensional vector space. Let $\varphi: A \times E \to X$ be a jointly continuous family of injective, linear maps. If $X = \img \varphi_{a_0} \oplus Y$ for some $a_0 \in A$, then there exists an open neighbourhood $U \subseteq A$ of $a_0$ such that $X = \img \varphi_{a} \oplus Y$ for every $a \in U$.
\end{lemma}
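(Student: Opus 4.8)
The plan is to reduce the persistence of the splitting to the openness of the set of invertible $d \times d$ matrices, where $d \defeq \dim E$. Since $X = \img\varphi_{a_0} \oplus Y$ is a topological direct sum and $\img\varphi_{a_0}$ is finite-dimensional, \cref{prop::lcs:FiniteDimSubspaceTopComplemented} supplies a continuous linear projection $\pi_0 \colon X \to \img\varphi_{a_0}$ with $\ker\pi_0 = Y$ which restricts to the identity on $\img\varphi_{a_0}$. Moreover $\varphi_{a_0}$ is injective, so it is a linear bijection $E \to \img\varphi_{a_0}$ between finite-dimensional Hausdorff topological vector spaces and hence a topological isomorphism; denote its inverse by $\psi_0$.

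Next I would introduce the comparison operators $R_a \defeq \psi_0 \circ \pi_0 \circ \varphi_a \in L(E,E)$. Since $\pi_0$ restricts to the identity on $\img\varphi_{a_0}$ one checks immediately that $R_{a_0} = \id_E$. The core claim is that \emph{for every $a \in A$ for which $R_a$ is invertible one has $X = \img\varphi_a \oplus Y$ as a topological direct sum}. For the algebraic part: $\varphi_a(e) \in Y = \ker\pi_0$ forces $R_a(e) = 0$, hence $e = 0$, so $\img\varphi_a \cap Y = \{0\}$; and for $x \in X$ the element $e \defeq R_a^{-1}\bigl(\psi_0(\pi_0(x))\bigr)$ satisfies $\pi_0\bigl(\varphi_a(e)\bigr) = \pi_0(x)$ (using injectivity of $\psi_0$), so $x - \varphi_a(e) \in Y$ and therefore $X = \img\varphi_a + Y$. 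That the splitting is topological follows because the candidate projection onto $Y$, namely $x \mapsto x - \varphi_a\bigl(R_a^{-1}(\psi_0(\pi_0(x)))\bigr)$, is continuous, and $\img\varphi_a$ is closed, being finite-dimensional.

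It then remains to observe that $a \mapsto R_a$ is continuous as a map $A \to L(E,E) \cong \R^{d\times d}$: fixing a basis $e^1,\dots,e^d$ of $E$, each $a \mapsto R_a(e^j) = \psi_0\bigl(\pi_0(\varphi(a,e^j))\bigr)$ is the composition of the continuous map $a \mapsto \varphi(a,e^j)$ with the continuous linear maps $\pi_0$ and $\psi_0$. Since the determinant is continuous and $\det R_{a_0} = 1 \neq 0$, the set $U \defeq \{\,a \in A : \det R_a \neq 0\,\}$ is an open neighbourhood of $a_0$, and the core claim gives $X = \img\varphi_a \oplus Y$ for all $a \in U$. (Equivalently, one can unpack $\pi_0$ through Hahn--Banach functionals $\lambda_1,\dots,\lambda_d$ dual to the basis $\varphi_{a_0}(e^1),\dots,\varphi_{a_0}(e^d)$ of $\img\varphi_{a_0}$ and run the same argument with the matrix $\bigl(\lambda_i(\varphi_a(e^j))\bigr)_{i,j}$.) I do not expect a genuine obstacle here; the only points needing care are that a finite-dimensional subspace of a Hausdorff topological vector space carries its unique linear topology, so that $\psi_0$ is automatically continuous, and that a continuous linear projection already yields a topological direct sum.
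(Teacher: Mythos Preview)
Your proof is correct and follows essentially the same strategy as the paper: reduce the persistence of the splitting to the invertibility of an endomorphism of the finite-dimensional space $E$, check that it equals $\id_E$ at $a_0$, and invoke the openness of $GL(E)$ in $L(E,E)$. The paper phrases the endomorphism as a transition function $\tau_{a_0,a} = P_a \circ \varphi_{a_0}$ (projection associated to $\img\varphi_a$ applied to the old embedding), whereas you use $R_a = \psi_0 \circ \pi_0 \circ \varphi_a$ (projection associated to the \emph{given} splitting applied to the new embedding); your choice is arguably cleaner since $\pi_0$ is already handed to you by the hypothesis, and you also spell out explicitly why invertibility of $R_a$ forces both $\img\varphi_a \cap Y = 0$ and $\img\varphi_a + Y = X$.
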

\newpage
\begin{proof}
	As a corollary to \cref{prop::lcs:FiniteDimSubspaceTopComplemented}, the space $X$ splits into $\img \varphi_{a_0} \oplus Y $ as well as into $ \img \varphi_{a} \oplus Y$ if and only if the projections $P_{a_0}$ and $P_{a}$ defined by\footnote{Note that $P_a$ is well-defined and continuous since $\varphi_{a}$ is a homeomorphism onto its image by the open mapping theorem.}
	\begin{equation}
		\begin{tikzcd}
			0 \arrow{r} & E \arrow{r}{\varphi_{a}} & \img \varphi_{a} \arrow{r}{\iota_{a}} & X \arrow[bend  right=50, swap]{ll}{P_{a} \defeq \varphi_{a}^{-1} \circ \epsilon_{a}} \arrow[bend  left=50]{l}{\epsilon_a}
		\end{tikzcd}
	\end{equation}
	have the same kernels, $\ker P_{a_0} = Y = \ker P_{a}$. Equivalently, the transition function
	\begin{equation}
		\begin{tikzcd}
			\tau_{a_0, a}: E \arrow{r}{\varphi_{a_0}} & \img \varphi_{a_0} \arrow{r}{\iota_{a_0}} 
				& X \arrow{r}{P_a} & E
		\end{tikzcd}
	\end{equation}
	is invertible. Since $GL(E)$ is open in $L(E,E)$ for the finite-dimensional space $E$ and the initial value at $a_0$ is the identity, $\tau_{a_0, a_0} = \id_E$, there exists an open subset around $a_0$ such that $\tau_{a_0, a}$ is bijective. 
\end{proof}
The previous lemma obviously implies that $\Phi(e, y) = \varphi(e) + y$ is a homeomorphism. Hence, one is left with showing tame smoothness of the inverse in order to complete the proof of \cref{prop::tameFrechet:extendFiniteDimSplitting}. By \parencite[Theorem 3.1.1]{Hamilton1982}, it actually suffices that the inverse is a tame map. But this directly follows by the remarks after \cref{defn::FrechetTame:TameNonLinearMap}, since $X \to E \oplus Y$ is the projection on a finite-dimensional space composed with maps between finite-dimensional spaces (in the $E$-factor). %

\chapter{Locally convex differential geometry} \label{cha::locallyconvexmanifolds:locally_convex_diffgeo}

This chapter explores to which degree the classical differential geometric methods extend to the infinite-dimensional realm. The presentation proceeds in the following order. After introducing manifolds and fibre bundles as locally trivial objects, the various possibilities of defining tangent vectors are analysed. The kinematic viewpoint turns out to be preferred, because every equivalence class of curves induces a derivation on the space of smooth functions but the inverse implication does not hold true in general. Vector fields are defined as smooth sections of the tangent bundle. However, the notion of a differential form does not generalize straightforwardly. As it will be shown, the cotangent bundle does not carry a natural smooth structure. Therefore, smoothness of differential forms is not automatically inherited from a smooth bundle structure and is introduced more directly. In later chapters product manifolds will play a major role and hence they are extensively discussed. Some basic facts of infinite-dimensional product manifolds can be found in \parencite{KrieglMichor1997} and, for matters beyond that, the finite-dimensional treatise \parencite{GreubHalperinEtAl1972} served as a guideline.

Armed with these basic notions, more advanced topics are addressed. In particular, the Nash-Moser theorem is carried over to manifolds and then applied in the study of submanifolds. The various concepts of submanifolds become even more elaborate in infinite dimensions, since the conditions on submanifold charts decouple from the notion of an immersion and embedding in the absence of a general inverse function theorem. At least, the category of tame Fréchet spaces provides some remedy. Subsequently, some ideas and methods of Riemannian geometry are studied. The main aim of this investigation consists in proving that every Riemannian manifold is topologically metrizable. Finally, the important example of function spaces serves as an illustration of the preceding concepts.  

The reader is assumed to be familiar with the theory of finite-dimensional manifolds. Therefore, the emphasis of the presentation does not lie in the motivation of the differential-geometric constructions. Instead, it will be analysed in which way and under which additional assumptions they can be lifted to this new setting. 

\section{Manifolds and fibre bundles}
The notion of manifolds and fibre bundles as locally trivial objects directly carries over from the finite-dimensional context, since the definition essentially relies only on the chain rule. Nevertheless, for the sake of completeness and to fix the notation, the statements are explicitly spelled out.

\begin{defn}
	A smooth \emphDef{(locally convex) manifold $M$} is a topological Hausdorff space that is locally homeomorphic to a locally convex space $E$. In more detail, every point $m \in M$ is contained in a homeomorphic chart $\kappa: M \supseteq U \to \kappa(U) \subseteq E$ from an open subset $U \subseteq M$ to an open subset $\kappa(U) \subseteq E$. Furthermore, all chart transitions $\kappa^{-1} \circ \tilde \kappa$ are required to be smooth as maps between open subsets of locally convex spaces. The collection of charts compatible in this sense constitutes an atlas on $M$.  

	A map $f: M \to N$ between manifolds is said to be \emphDef{smooth} if in charts $\rho$ and $\kappa$ the local representation $\rho \circ f \circ \kappa^{-1}$ is smooth. 
\end{defn} 
The constructions are independent from the particular chart due to the chain rule. In contrast to the finite-dimensional case, locally convex manifolds are not obliged to satisfy the second axiom of countability since this rigid requirement would confine the class of admissible manifolds too much. In particular, paracompactness and differentiable partitions of unity subordinate to a given open covering are not available in general. Thus attention has to be paid in using localisation arguments. 
\begin{remark}[A word about notation]
	In the following all chart representations are denoted simply by the same letter as the original object and without explicitly mentioning of the involved charts. For example, a map $f: M \to N$ between manifolds is locally a map $f: E \supseteq U \to F$ between locally convex spaces. The same remark later also applies to vector fields and differential forms.
\end{remark}

\begin{example}[Product manifolds] \label{ex::lcm:productManifold}
	Let $M$ and $N$ be manifolds with atlases $(U_\alpha, \kappa_\alpha: U_\alpha \to E)$ and $(V_\beta, \rho_\beta: V_\beta \to F)$, respectively. The direct product $M \times N$ is endowed with the product topology and the charts $\kappa_\alpha \times \rho_\beta$ model it on the product $E \times F$. It is easily seen that the chart transitions are indeed smooth. The manifold so constructed is called the product manifold of $M$ and $N$. The natural projections $\pr_M: M \times N \to M$ and $\pr_N: M \times N \to N$ are smooth by construction. 
\end{example}

\begin{defn}
	Let $P$ and $M$ be smooth manifolds, $\pi: P \to M$ a smooth submersion\footnote{A smooth submersion is a smooth map with surjective derivative at every point.} and $F$ a topological space. The tuple $(P, M, \pi, F)$ is called a smooth \emphDef{(locally convex) fibre bundle} with total space $P$, base $M$, projection $\pi$ and typical fibre $F$ if it is locally trivial. The latter condition means that for every point $p \in P$ there exists an open neighbourhood $U \subseteq M$ of the base point $\pi(p)$ such that $\pi^{-1}(U)$ is diffeomorphic to the product $U \times F$ in such a way that $\pi$ is identified with the projection on the first factor. Consequently, the following diagram is commutative.
	\begin{equation}
		\begin{tikzcd}[column sep=small, ampersand replacement=\&]
			\pi^{-1}(U) \arrow{rr}{\tau} \arrow[swap]{rd}{\pi} 	\&		\& U \times F \arrow{ld}{\pr_1} \\
																\& U 	\&
		\end{tikzcd}
	\end{equation}
	A \emphDef{smooth section} of $\pi: P \to M$ is a smooth map $\phi: M \to P$ fulfilling $\pi \circ \phi = \id_M$.

	A smooth map $\Psi: P_1 \to P_2$ between bundles $P_1 \to M$ and $P_2 \to M$ over the same base is called a \emphDef{vertical fibre bundle morphism} if it maps fibres onto themselves, or equivalently, induces the identity map on the base.

	A \emphDef{(locally convex) vector bundle} is a fibre bundle where a locally convex space constitutes the typical fibre and the local trivialisation respects the linear structure, that is, the induced map $\pi^{-1}(m) \to F$ is a vector space isomorphism for every base point $m \in M$. Morphisms in the category of vector bundles over a fixed base manifold are the vertical fibre bundle morphisms that induce linear maps between the fibres.  
\end{defn}

\section{Vector fields}
In finite dimensions a tangent vector at a point can be thought of in three equivalent ways. Kinematically, a tangent vector is understood as an equivalence class of curves. Upon representation in charts it yields an object that transforms in a special way under chart transitions. This transformation property can be used as an equivalent definition. Finally, an equivalence class $\equivClass{\gamma}$ of curves maps to a derivation on smooth functions,
\begin{equation}\label{eq::lcm:derivationFromCurve}%
	X^\gamma: C^\infty(M) \to \R, \qquad f \mapsto \diffAt{}{t}{t=0}(f \circ \gamma).
\end{equation}
Conversely, every derivation can be represented in such a way. 

The equivalence between the kinematic viewpoint and the definition by its behaviour under chart transitions only needs the existence of charts and thus carries over to the infinite-dimensional context. Equation \eqref{eq::lcm:derivationFromCurve} above still makes sense and results in a derivation, but the converse direction requires, among other things, the reflexivity of the model space\footnote{Every element $Y$ of the bidual $E''$ yields a derivation at $m \in M$ which maps $f \in C^\infty(M)$ to $\dualpair{Y}{(\dif f)_m}$. But $E \subseteq E''$, with equality only in the case of a reflexive space.} and hence does not hold in general. \Textcite{KrieglMichor1997} extensively discuss the usual constructions (exterior derivative, Lie derivative and so on) for differential forms in the context of kinematic tangent vectors as well as algebraic ones and come to the conclusion that only the former concept admits a full-fledged theory. Since the kinematic approach is also natural from a physicist's point of view, it is adopted for the subsequent discussion. 

\newpage
\begin{defn}
	A \emphDef{tangent vector} $X_m$ at a point $m \in M$ is an equivalence class of curves $X_m \equiv \equivClass{\gamma}_m$, where the two curves $\gamma$ and $\eta$ are called equivalent if $\gamma(0) = m = \eta(0)$ and
	\begin{equation} \label{eq::difGeo:tangentVectorEquivalentChartRep}
		\diffAt{}{t}{t=0}(\kappa \circ \gamma)(t) = \diffAt{}{t}{t=0}(\kappa \circ \eta)(t)
	\end{equation}
	holds in every chart $\kappa$ around $m$. The tangent space at $m$, denoted by $T_mM$, is defined as the set of all tangent vectors at $m$ with its natural vector space structure. 
\end{defn}
Independence of \eqref{eq::difGeo:tangentVectorEquivalentChartRep} from the chosen chart again traces back to the chain rule.
The union $TM \defeq \bigsqcup_{m \in M} T_mM$ is now endowed with a vector bundle structure.
Let $\pi_{TM}: TM \to M$ denote the canonical projection. $TM$ trivializes over an open chart domain $(U \subset M, \kappa)$ to $\kappa(U) \times E$ via the map
\begin{equation}
\tau_U: TM_{\restriction_U} \to \kappa(U) \times E, \qquad \equivClass{\gamma}_m \mapsto \left(\kappa(m), \diffAt{}{t}{t=0} (\kappa \circ \gamma)(t)\right),
\end{equation}
where $E$ is the model space of $M$. Changing the chart on $M$ to $(V, \rho)$ induces the transition map
\begin{equation} \label{eq::lcm:TangentBundleChartsTransitionMap}
	\tau_V \circ (\tau_U)^{-1} (x, X) = (\rho \circ \kappa^{-1} (x), (\rho \circ \kappa^{-1})_x (X)),
\end{equation}
which is smooth. Hence the maps $\tau_U$ can be used to define a differentiable structure on~$TM$. Furthermore, the local representative of $\pi_{TM}$ with respect to these bundle charts corresponds to the projection on the first factor and hence $\pi_{TM}$ is a smooth submersion. 
The transition maps \eqref{eq::lcm:TangentBundleChartsTransitionMap} are linear in the second component and therefore $TM$ is a vector bundle.

\begin{defn}
	The smooth vector bundle $TM$ is called the \emphDef{tangent bundle} of $M$. Its smooth sections are called $\emphDef{vector fields}$. The set of smooth vector fields is denoted by $\vectorf{M}$.

	Let $f:M \to N$ be a smooth map. The derivative $f': TM \to TN$ of $f$ is defined pointwise by
	\begin{equation}
	 	f'_m (\equivClass{\gamma}_m) \defeq \equivClass{f \circ \gamma}_{f(m)} \in T_{f(m)}N, \qquad \text{ for } \equivClass\gamma \in T_mM. 
	 \end{equation} 
\end{defn}
For a given vector field $X$ on $M$, the projection of the trivialisation on the $E$-factor yields a smooth map $\kappa(U) \to E$, which is called the local chart representation of $X$. Therefore, a vector field $X$ is locally identified with a smooth map $X: U \to E$. Note that in the absence of a solution theory for differential equations in locally convex spaces vector fields do not need to posses a (unique) local flow.

The next short-term objective is to equip the set of all vector fields with a Lie algebra structure. First, note that $\vectorf{M}$ becomes a real vector space with the natural, pointwise defined addition and scalar multiplication. Moreover, every vector field $X \in \vectorf{M}$ defines a derivation $C^\infty(M) \to C^\infty(M)$, which sends a smooth function $f: M \to \R$ to the composition $f' \circ X$. In fact, $\R$-linearity is inherited from the derivative operator $f \mapsto f'$ and the Leibniz rule is easily verified in a local chart. Since not every derivation can be seen as a tangent vector, the commutator of two vector fields defined in the usual way needs the additional proof to be a vector field again. This is accomplished by the following lemma.

\begin{lemma}[Lie algebra structure of $\vectorf{M}$]
	The commutator of two derivations $X$ and $Y$ defined by
	\begin{equation}
	 	\commutator{X}{Y}(f) = X(Y(f)) - Y(X(f)) \qquad \text{for } f \in C^\infty(M),
	\end{equation} 
	leaves the space of vector fields invariant and thus descends to a map $\commutator{\cdot}{\cdot}: \vectorf{M} \times \vectorf{M} \to \vectorf{M}$. The defined bracket is bilinear, anti-symmetric and fulfils the Jacobi identity, thus $(\vectorf{M}, \commutator{\cdot}{\cdot})$ is a Lie algebra.
\end{lemma}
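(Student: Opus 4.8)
The plan is to localise. The only genuine issue is that an abstract derivation of $C^\infty(M)$ need not arise from a vector field, so the substance of the lemma is to show that the operator $[X,Y]=XY-YX$ is again of the special shape $f\mapsto f'\circ Z$ for a smooth section $Z$ of $TM$; once that is established, the algebraic identities are essentially formal.

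First I would pass to a chart in which $M$ is modelled on an open set $U$ of its model space $E$, so that $X$ and $Y$ become smooth maps $X,Y\colon U\to E$ and the chart representative of a smooth $f\colon M\to\R$ satisfies $(Xf)(x)=(\dif f)_x(X(x))$. Viewing $Yf$ as the composite of the smooth map $x\mapsto(x,Y(x))$ with $\dif f\colon U\times E\to\R$, the chain rule together with the decomposition of a total derivative into its partial derivatives gives
\begin{equation}
 \dif(Yf)_x(h) = (\dif^{2} f)_x(Y(x),h) + (\dif f)_x\bigl((\dif Y)_x(h)\bigr),
\end{equation}
the second term arising because $(\dif f)_x$ is linear in its direction slot, so that its partial derivative in that slot is again $(\dif f)_x$. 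Evaluating at $h=X(x)$, subtracting the analogous expression with $X$ and $Y$ interchanged, and cancelling $(\dif^{2} f)_x(Y(x),X(x))$ against $(\dif^{2} f)_x(X(x),Y(x))$ by symmetry of the second derivative, one is left with
\begin{equation}
 [X,Y]f(x) = (\dif f)_x\bigl((\dif Y)_x(X(x)) - (\dif X)_x(Y(x))\bigr).
\end{equation}
So in this chart $[X,Y]$ is exactly the derivation attached to the smooth map $Z\defeq(\dif Y)(X)-(\dif X)(Y)\colon U\to E$.

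Next I would check that the local maps $Z$ glue into a global smooth section of $TM$, i.e.\ a vector field. The direct route is to verify that under a chart change $\psi=\rho\circ\kappa^{-1}$, under which vector fields transform by $\widetilde X(\psi(x))=(\dif\psi)_x(X(x))$ as recorded in the transition law for $TM$, the associated $\widetilde Z$ satisfies $\widetilde Z(\psi(x))=(\dif\psi)_x(Z(x))$; this is the classical computation, run with the chain rule and the symmetry of $(\dif^{2}\psi)_x$ standing in for the symmetry of mixed second partials. (More directly, in each chart $Z(x)$ is already pinned down by the action of $[X,Y]$ on the restrictions of the continuous functionals $\xi\in E'$, which are smooth with differential $\xi$, whence the scalars $\xi(Z(x))=[X,Y]\xi(x)$ determine $Z(x)$ by Hahn-Banach and force agreement on overlaps.) Either way, $m\mapsto[X,Y]_m$ is a well-defined smooth section of $TM$, and therefore $\commutator{\cdot}{\cdot}$ descends to a map $\vectorf M\times\vectorf M\to\vectorf M$.

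The remaining claims are immediate. Bilinearity follows from the $\R$-linearity of each vector field as a derivation and of $(A,B)\mapsto AB-BA$; antisymmetry is obvious; and the Jacobi identity is the purely formal identity $\commutator{\commutator{A}{B}}{C}+\commutator{\commutator{B}{C}}{A}+\commutator{\commutator{C}{A}}{B}=0$ for commutators of arbitrary linear operators on $C^\infty(M)$, which applies here precisely because the previous step guarantees the bracket never leaves $\vectorf M$. I expect the one delicate point in the whole argument to be the cancellation of the $(\dif^{2} f)_x$ terms: it rests on the symmetry of the second Gâteaux derivative, a standard feature of this calculus (cf.\ \parencite{Hamilton1982}) that is, however, not isolated as a lemma in the material above and so must be invoked explicitly.
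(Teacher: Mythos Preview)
Your proposal is correct and follows essentially the same approach as the paper: localise to a chart, compute $X(Y(f))_x = (\dif^{2} f)_x(Y(x),X(x)) + (\dif f)_x((\dif Y)_x X(x))$, cancel the second-derivative terms by symmetry, and read off the local representative $Z(x) = (\dif Y)_x(X(x)) - (\dif X)_x(Y(x))$. The paper is slightly terser about the gluing step (it simply declares that $Z$ ``defines the local representative of a vector field $Z$ on $M$''), whereas you make this explicit; your Hahn--Banach argument is a clean way to see it, though the transformation-law check is closer in spirit to the paper's implicit reasoning.
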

\begin{proof}
	Once it is proven that the space of vector fields is closed under the above commutator, the claimed Lie algebra property follows from a straightforward calculation analogous to the finite-dimensional case. Since closedness under $\commutator{\cdot}{\cdot}$ is a local problem, it is enough to consider the chart representation $X, Y: U \to E$ of the two vector fields $X$ and $Y$, respectively. Setting
	\begin{equation}
		Z(x) \defeq (\dif Y)_x (X(x)) - (\dif X)_x \, (Y(x)) \qquad \text{for } x \in U
	\end{equation}
	defines the local representative of a vector field $Z$ on $M$. Now due to the identity $X_x(f) = (\dif f)_x X(x)$ the evaluation 
	\begin{equation}
		X(Y(f))_x = \dif (Y(f))_x \, X(x) = (\dif^{2} f)_x (Y(x), X(x)) + (\dif f)_x \left((\dif Y)_x \, X(x) \right)
	\end{equation}
	holds. This identity together with the symmetry of the higher derivatives \parencite[Proposition 1.3.16]{Gloeckner2005} shows that $Z$ fulfils the defining equation of the commutator. %
\end{proof}

\newcommand{\inod}{\mathrel{\rotatebox[origin=c]{90}{$\in$}}}

\begin{example}[Vector fields on product manifolds] \label{ex::locallyconvexmanifolds:tangentProductManifold} %
	Let $M$ and $N$ be manifolds. Recall from \cref{ex::lcm:productManifold} that the product $M \times N$ carries a natural manifold structure such that the projections $\pr_M$ and $\pr_N$ are smooth. The tangent bundle $T(M \times N)$ is identified with $TM \oplus TN$ by means of the vertical bundle isomorphism $\pr'_M \times \pr'_N$. The inverse is given pointwise by 
	\begin{equation}
		T_mM \oplus T_nN \to T_{m,n}(M \times N), \qquad (X_m, Y_n) \mapsto \iota'_{M, n} X_m + \iota'_{N, m} Y_n
	\end{equation}
	where the smooth injection $\iota_{M,n}: M \to M \times N$ is defined by $m \mapsto (m, n)$ and analogously $\iota_{N, m}$. In the following, these inclusions are suppressed in the notation so that $T_mM$ is viewed as a subspace of $T_{m,n} (M \times N)$ under the above identification. Likewise, the canonical Lie algebra homomorphisms $\iota'_M: \vectorf{M} \to \vectorf{M \times N}$ and $\iota'_N: \vectorf{N} \to \vectorf{M \times N}$  
	are implicitly understood when viewing vector fields on one factor as vector fields on the product manifold. Observe that vector fields on $M$ commute with vector fields on $N$. Conversely, every vector field $Z$ on $M \times N$ decomposes in the sum of vector fields $Z^M$ and $Z^N$ such that $Z^M_{m,n} \in T_m M$ and $Z^N_{m,n} \in T_n N$. While $X \in \vectorf{M}$, viewed as a vector field on $M \times N$, has the chart representation $X: M \supseteq U \to E \times \set{0}$, the local expression for $Z^M$ is a map $Z^M: M \times N \supseteq U \times V  \to E \times \set{0}$. Here $E$ denotes the modelling vector space of $M$.

	The commutator of vector fields can be modified in order to reflect the present product structure by defining for $Z_1, Z_2 \in \vectorf{M \times N}$ the brackets
	\begin{align}
	  	\prescript{M}{}{\pcommutator{Z_1}{Z_2}} &= \commutator{Z_1^M}{Z_2} + \commutator{Z_1}{Z_2^M} - \commutator{Z_1}{Z_2}^M \\
	  	\prescript{N}{}{\pcommutator{Z_1}{Z_2}} &= \commutator{Z_1^N}{Z_2} + \commutator{Z_1}{Z_2^N} - \commutator{Z_1}{Z_2}^N.
	\end{align}  
	They will be useful later for defining `partial' exterior derivatives. The following properties are readily observed (only the $M$-version is stated, but also apply to $N$):
	\begin{align}
		\prescript{M}{}{\pcommutator{\cdot}{\cdot}} & \text{ is $\R$-linear and antisymmetric}\\
		\prescript{M}{}{\pcommutator{f Z_1}{Z_2}} &= f \cdot \prescript{M}{}{\pcommutator{Z_1}{Z_2}} - Z_2^M(f) Z_1 \quad \text{for }f \in C^\infty(M \times N)\\
		\prescript{M}{}{\pcommutator{Z_1^M}{Z_2^M}} &= \commutator{Z_1^M}{Z_2^M}\\
		\prescript{M}{}{\pcommutator{Z_1^N}{Z_2^N}} &= 0\\
		\prescript{M}{}{\pcommutator{Z_1^N}{Z_2}} &= \commutator{Z_1^N}{Z_2^M}^N\\
		\prescript{M}{}{\pcommutator{Y}{Z_2}} &= 0\quad \text{for } Y \in \vectorf{N}\\
		\commutator{Z_1}{Z_2} &= \prescript{M}{}{\pcommutator{Z_1}{Z_2}} + \prescript{N}{}{\pcommutator{Z_1}{Z_2}}
	\end{align}  
	In particular, they show that $\prescript{M}{}{\pcommutator{\cdot}{\cdot}}$ does not only have a $M$-component. Consequently, $\prescript{M}{}{\pcommutator{\cdot}{\cdot}}$ should not be mistaken for $\commutator{\cdot}{\cdot}^M$.
\end{example}

\section{Differential forms}
In the previous section, vector fields were defined as smooth sections of the tangent bundle and therefore it is only natural to try to construct differential forms in an analogous way, namely as smooth sections of the dual bundle. Unfortunately, the cotangent bundle cannot be equipped with a suitable smooth structure. Hence, one has to switch to a more direct definition of differential forms. Nevertheless, all known features such as the wedge product, the exterior derivative and the contraction with vector fields can be carried over to the infinite-dimensional context. The construction is guided by the general principle to use algebraic relationships (for example Cartan's formula) in lieu of local descriptions or dynamical properties which rely on the flow of vector fields. 

The cotangent bundle $T^*M$ of a manifold $M$ is by definition the fibrewise dual to the tangent bundle and hence it should carry an atlas dual to the one of the tangent bundle. Let $(U, \kappa)$ be a chart on $M$ with model space $E$ and define $T^*U \defeq \bigsqcup_{m \in M} (T_mM)'$. For the time being $(T_mM)'$ does not carry a topology. The associated chart $\kappa^*: T^*U \to \kappa(U) \times E'$ is given in the second factor by $\alpha_m \mapsto (\kappa^{-1})^*\alpha_m \defeq \alpha_m \circ (\kappa')^{-1}_{\kappa(m)}$. A diffeomorphism $\psi: E \supseteq U \to V \subseteq E$ induces a chart transition $E' \ni \alpha \mapsto (\psi^*\alpha)_x = \alpha \circ \psi'_x \in E'$ in the second component. 
The following example shows that in general the map $U \times E' \to E', (x, \alpha) \mapsto \alpha \circ \psi'_x$ is not smooth. 
In particular, there exists no vector space topology on $E'$ such that $E'$ could serve as the model space of the cotangent bundle. 
Certainly, Banach spaces represent the borderline case since in this setting the above map is smooth for the norm topology.
\begin{counterexample}[No smooth structure on the cotangent bundle, {\parencite[Remark 1.3.9]{Neeb2006}}]
	Let $E$ be a locally convex space which is not normable. Consider the translation $\psi: E \to E, x \mapsto x+ \beta(x)x$ by a non-zero $\beta \in E'$. Since its derivative is given by $\psi'_x(h) = h + \beta(x) h + \beta(h) x$ (and analogously for higher derivatives), $\psi$ is smooth. Furthermore, the inverse map
	\begin{equation}
		\psi^{-1}(y) \defeq \frac{y}{\sfrac{1}{2} + \sqrt{\beta(y) + \sfrac{1}{4}}}	
	\end{equation}
	is defined on $\setc{y \in E}{\beta(y) > \sfrac{- 1}{4}}$, implying that the restriction of $\psi$ to some open neighbourhood of $0$ is a diffeomorphism. On the other hand, the map $U \times E' \to E', (x, \alpha) \mapsto \alpha \circ \psi'_x$ evaluates to 
	\begin{equation}
		\alpha \circ \psi'_x = (1 + \beta(x)) \alpha + \alpha(x) \beta.
	\end{equation}
	This expression cannot be continuous since the evaluation map $E' \times E \to \R$ is discontinuous in $0$ for every vector space topology on $E'$ for non-normable $E$, see \parencite{Maissen1963}\footnote{This inconvenience was one of the motivating ideas for the `convenient differential calculus' of \parencite{KrieglMichor1997}. There the map $U \times E' \to E', (x, \alpha) \mapsto \alpha \circ \psi'_x$ is still discontinuous, but now smooth. Thus the standard construction of differential forms as sections of the cotangent bundle runs through without modifications.}.
\end{counterexample}
Summarizing, the cotangent bundle is only a set-theoretic vector bundle without a differentiable structure. Hence, smoothness of differential forms has to be specified explicitly.

\begin{defn}
	Let $M$ be a manifold and $V$ a locally convex space. Denote by $\Lambda^k (M, V) \defeq \bigsqcup_{m \in M} \Lambda^k ((T_mM)', V)$ the set-theoretic bundle of $V$-valued exterior $k$-forms, where $\Lambda^k (E', V)$ is the space of $k$-linear, antisymmetric, $V$-valued maps on the locally convex space $E$. 
	A set-theoretic section $\alpha$ of $\Lambda^k (M, V)$ is called a \emphDef{$V$-valued differential form} if for every chart $\kappa: M \supseteq U \rightleftarrows U \subseteq E$ the induced chart representation
	\begin{equation}
		\alpha: U \times E^k \to V, \qquad (x, X^{(1)}, \dotsc ,X^{(k)}) \mapsto \alpha_{\kappa^{-1}(x)} \left((\kappa^{-1})'_x X^{(1)}, \dotsc ,(\kappa^{-1})'_x X^{(k)}\right)
	\end{equation}
	is a smooth map. The set of all $V$-valued $k$-forms on $M$ is denoted by $\diffform{k}{M}{V}$. %
\end{defn}

The rest of this section is concerned with the question of how the well-known operations on differential forms carry over to the infinite-dimensional setting. First of all a word of warning is in order: 
\begin{remark}
	Let $\alpha$ be a $k$-form and $X^{(1)}, \dotsc, X^{(k)}$ vector fields. The composition $\alpha \circ (X^{(1)}, \dotsc, X^{(k)})$ is a smooth function $M \to \R$. Thereby, every differential form can be seen as an alternating, $C^\infty(M)$-multilinear map $\alpha: \vectorf{M} \times \dotsb \times \vectorf{M} \to C^\infty(M)$. In finite dimensions this property characterizes differential forms completely and is useful for defining operations on them. However, the proof of the converse direction needs a smooth partition of unity subordinate to a given open cover in order to extend tangent vectors to smooth vector fields on the entire manifold and hence is not applicable in the present case\footnote{Due to the absence of smooth partitions of unity, a tangent vector at $m \in M$ can only be extended to a smooth vector field on some open set around $m$.}. Summarizing, new differential forms cannot be constructed purely from their action on vector fields.
\end{remark}

In the following $\alpha$ and $\beta$ are differential forms and all occurring $X^{(i)}_m$ and $Y^{(j)}_n$ are tangent vectors at $m \in M$ and $n \in N$, respectively.
\begin{enumerate}
	\item 
		The wedge product of $\alpha \in \diffform{k}{M}{V_1}$ and $\beta \in \diffform{r}{M}{V_2}$ with respect to a jointly continuous, bilinear map $B: V_1 \times V_2 \to V_3$ is the $k+r$-form $\alpha \wedge \beta \in \diffform{k+r}{M}{V_3}$ defined pointwise by
		\begin{multline}
			(\alpha \wedge \beta)_m (X^{(1)}_m, \dotsc, X^{(k+r)}_m) \defeq \frac{1}{k!r!} \sum_{\sigma \in \operatorname{perm}} \sgn(\sigma) \cdot \\
			B\left(\alpha_m(X^{\sigma(1)}_m, \dotsc, X^{\sigma(k)}),\beta_m(X^{\sigma(k+1)}_m, \dotsc, X^{\sigma(k+r)})\right).
		\end{multline}
	\item 
		Let $X$ be a vector field on $M$ and $\alpha$ a $k$-form. The equation
		\begin{equation}
			(X \contr \alpha)_m (X^{(1)}_m, \dotsc ,X^{(k-1)}_m) \defeq \alpha_m (X_m, X^{(1)}_m, \dotsc ,X^{(k-1)}_m)
		\end{equation}
		defines a $k-1$-form $X \contr \alpha$, which is called the interior product. %
	\item 
		The pullback $\phi^* \alpha \in \diffform{k}{N}{V}$ of the differential form $\alpha \in \diffform{k}{M}{V}$ to the manifold $N$ by the smooth map $\phi: N \to M$ is given pointwise by 
		\begin{equation}
			(\phi^* \alpha)_n (Y^{(1)}_n, \dotsc, Y^{(k)}_n) \defeq \alpha_{\phi(n)} (\phi'_n Y^{(1)}_n, \dotsc, \phi'_n Y^{(k)}_n)
		\end{equation}
	\item
		Normally, the exterior derivative is defined by its local behaviour but these arguments do not carry over to non-paracompact manifolds. Thus, a more algebraic definition is given. Let $\alpha \in \diffform{k}{M}{V}$ be a $k$-form and $X^{(0)}_m, \dotsc, X^{(k)}_m$ be tangent vectors at $m \in M$ which are extended locally to smooth vector fields $X^{(0)}, \dotsc, X^{(k)}$ around $m$. The exterior derivative $\dif \alpha$ of $\alpha$ is the $k+1$-form determined by 
		\begin{equation}\label{eq::lcm:exteriorDerivativeDef}\begin{split}
			(\dif \alpha)_m(X^{(0)}_m, \dotsc, X^{(k)}_m) \defeq \sum_{i=0}^k (-1)^i X^{(i)}_m \left(\alpha(X^{(0)}, \dotsc, \widehat {(i)}, \dotsc, X^{(k)}) \right)\\
																 +\sum_{i<j}^k (-1)^{i+j} \alpha_m (\commutator{X^{(i)}}{X^{(j)}}, X^{(0)}, \dotsc, \widehat {(i)}, \dotsc, \widehat {(j)}, \dotsc,  X^{(k)}),
		\end{split}\end{equation}
		where the hat stands for omission of the corresponding entry. \Autoref{lemma::locallyconvexmanifolds:diffformWelldefined} below verifies that this equation indeed defines a $k+1$-form which does not depend on the choice of the extensions of the tangent vectors to vector fields. 
	\item
		Let $\alpha \in \diffform{k}{M}{V}$ be a $k$-form and $X^{(1)}_m, \dotsc, X^{(k)}_m$ be tangent vectors extended locally as above. The Lie derivative $\difLie_X \alpha \in \diffform{k}{M}{V}$ with respect to $X \in \vectorf{M}$ is given by
		\begin{multline}
			(\difLie_X \alpha)_m (X^{(1)}_m, \dotsc, X^{(k)}_m) \defeq X_m \left(\alpha(X^{(1)}, \dotsc X^{(k)}) \right) \\
						+ \sum_{i=1}^k (-1)^i \alpha_m (\commutator{X}{X^{(i)}}, X^{(1)}, \dotsc, \widehat {(i)}, \dotsc,  X^{(k)}).
		\end{multline}
		Well-definedness is easily verified once Cartan's formula is established, see \eqref{eq::locallyConvexManifolds:exteriorDiffProperties}. If the local flow of the vector field $X$ exists, then this definition is clearly equivalent to the usual one, $\difLie_X \alpha = \diffAt{}{t}{t=0} (\flow^X_{-t})^* \alpha$.
\end{enumerate}

The standard rules for calculations carry over with in parts lengthy but straightforward verifications, which are left to the reader.
\begin{subequations} \label{eq::locallyConvexManifolds:exteriorDiffProperties}
\begin{align+} 
  	\phi^*(\alpha \wedge \beta) &= \phi^* \alpha \wedge \phi^* \beta \\ 
  	\dif^{2} \alpha &= 0 \\ 
	\dif(\phi^* \alpha) &= \phi^* (\dif \alpha) \\  
	\difLie_X \alpha &= \dif (X \contr \alpha) + X \contr \dif \alpha
\end{align+}
\end{subequations}
Finally, well-definedness of the exterior derivative is demonstrated.
\begin{lemma}\label{lemma::locallyconvexmanifolds:diffformWelldefined} %
	Let $\alpha \in \diffform{k}{M}{V}$ be a $k$-form and $X^{(0)}_m, \dotsc, X^{(k)}_m$ be tangent vectors at $m \in M$ that are extended locally to smooth vector fields $X^{(0)}, \dotsc, X^{(k)}$. The exterior derivative $\dif \alpha$ defined by \cref{eq::lcm:exteriorDerivativeDef} is a $k+1$-form. The construction is independent from the specific choice of the extension of the tangent vectors to local vector fields.
\end{lemma}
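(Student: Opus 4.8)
The plan is to verify three things: that the right-hand side of \eqref{eq::lcm:exteriorDerivativeDef} depends only on the tangent vectors $X^{(0)}_m,\dotsc,X^{(k)}_m$ and not on the chosen local extensions; that the resulting assignment is $(k+1)$-linear and alternating in these vectors, so that it is a set-theoretic section of $\Lambda^{k+1}(M,V)$; and that its representation in every chart is smooth, so that this section is a genuine $V$-valued differential form. The delicate point is the first one: in finite dimensions one deduces pointwise dependence from $C^\infty(M)$-multilinearity by means of a partition of unity, but, as emphasised in the earlier remark that a $C^\infty(M)$-multilinear alternating map on $\vectorf{M}$ need not come from a differential form, that route is not available here, so independence of the extensions has to be extracted from an explicit chart computation.

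First I would dispose of the structural properties. The formula \eqref{eq::lcm:exteriorDerivativeDef} is visibly $\R$-multilinear in $X^{(0)},\dotsc,X^{(k)}$, since $X^{(i)}(g)$ is $\R$-linear in $X^{(i)}$ and in $g$ and $\commutator{\cdot}{\cdot}$ is $\R$-bilinear, and a routine sign count over the two sums shows that it is alternating under transposing two entries. One moreover checks $C^\infty(M)$-linearity in the zeroth slot: substituting $fX^{(0)}$ for $X^{(0)}$ produces, beyond $f$ times the original expression and from the Leibniz rule in the first sum, extra terms $(-1)^i X^{(i)}(f)\,\alpha(X^{(0)},\dotsc,\widehat{(i)},\dotsc)$ for $i\geq 1$, and, from $\commutator{fX^{(0)}}{X^{(j)}} = f\commutator{X^{(0)}}{X^{(j)}} - X^{(j)}(f)X^{(0)}$ in the second sum, extra terms $-(-1)^{j}X^{(j)}(f)\,\alpha(X^{(0)},\dotsc,\widehat{(j)},\dotsc)$, which cancel pairwise; by the alternating property $C^\infty(M)$-linearity then holds in every slot. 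This is precisely the part of the classical argument that survives, but on its own it does not yield pointwise dependence.

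The core step is therefore the computation in a chart. Fix a chart $\kappa\colon M\supseteq U\to E$ around $m$, put $x_0=\kappa(m)$, and pass to chart representations, in which $\alpha$ is a smooth map $U\times E^k\to V$ that is $k$-linear and alternating in the $E$-arguments, $X^{(i)}(g)(x)=(\dif g)_x\,X^{(i)}(x)$, and $\commutator{X^{(i)}}{X^{(j)}}(x)=(\dif X^{(j)})_x\,X^{(i)}(x)-(\dif X^{(i)})_x\,X^{(j)}(x)$ by the lemma on the Lie-algebra structure of $\vectorf{M}$. Expanding each summand $(-1)^i X^{(i)}_m\big(\alpha(X^{(0)},\dotsc,\widehat{(i)},\dotsc,X^{(k)})\big)$ of the first sum by the chain and product rules yields one contribution involving only the partial derivative $(\dif_1 \alpha)_{x_0}$ of the chart representation with respect to the base point, evaluated on the values $X^{(\cdot)}(x_0)$, together with contributions in which $\alpha_{x_0}$ is applied with one of its arguments replaced by some $(\dif X^{(l)})_{x_0}\,X^{(i)}(x_0)$. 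I expect the main obstacle to be the purely combinatorial bookkeeping showing that, after reordering the arguments of the alternating map $\alpha_{x_0}$, these extension-dependent contributions cancel exactly the $\commutator{X^{(i)}}{X^{(j)}}$-contributions of the second sum --- the same phenomenon that makes $\vectorf{M}$ closed under the commutator. What remains is the explicit formula $(\dif\alpha)_m(X^{(0)}_m,\dotsc,X^{(k)}_m)=\sum_{i=0}^{k}(-1)^i(\dif_1 \alpha)_{x_0}\big(X^{(i)}(x_0);X^{(0)}(x_0),\dotsc,\widehat{(i)},\dotsc,X^{(k)}(x_0)\big)$, which manifestly depends only on the $X^{(i)}_m$ and, since $\alpha$ and hence $\dif_1 \alpha$ are smooth, exhibits the chart representation $U\times E^{k+1}\to V$ of $\dif\alpha$ as a smooth map; multilinearity and the alternating property transfer to it from the abstract formula. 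Finally, because \eqref{eq::lcm:exteriorDerivativeDef} is phrased solely through the tangent vectors, the operation $X(f)$, and the chart-independent Lie bracket, these local expressions agree on overlaps and patch to a global form $\dif\alpha\in\diffform{k+1}{M}{V}$, which finishes the argument.
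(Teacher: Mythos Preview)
Your proposal is correct and follows essentially the same route as the paper: both arguments hinge on the chart computation that collapses \eqref{eq::lcm:exteriorDerivativeDef} to the explicit local formula $\sum_{i=0}^{k}(-1)^i(\dif_1\alpha)_m(X^{(i)}_m;X^{(0)}_m,\dotsc,\widehat{(i)},\dotsc,X^{(k)}_m)$, the extension-dependent terms from the product rule cancelling the commutator terms, from which independence of the extension, antisymmetry and smoothness are then read off. Your preliminary discussion of $C^\infty(M)$-linearity is a harmless detour (you correctly note it does not suffice and then proceed to the chart computation); the paper simply omits it and goes straight to the local formula.
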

\begin{proof}
	The proof consists in demonstrating the following local representation for the exterior derivative
	\begin{equation}\label{eq::locallyconvexmanifolds:localExteriorDifferential}
		(\dif \alpha)_m(X^{(0)}_m, \dotsc, X^{(k)}_m) = \sum_{i=0}^k (-1)^i (\dif_1 \alpha)_m(X^{(i)}_m; X^{(0)}_m, \dotsc, \widehat {(i)}, \dotsc, X^{(k)}_m).
	\end{equation}
	This expression immediately implies anti-symmetry, smoothness and independence from the choice of the extension. A note about notation: The differential form $\alpha$ is locally represented as a function $U \times E^k \to \R$ and thus $\dif_1 \alpha: (U \times E^k) \times E \to \R$. Hence, in \cref{eq::locallyconvexmanifolds:localExteriorDifferential} the last $k$ tangent vectors should actually be written as subscripts since they represent the `point' where the partial derivative is evaluated. 
	To establish the identity \eqref{eq::locallyconvexmanifolds:localExteriorDifferential} consider first the term $X^{(i)}_m \left(\alpha(X^{(0)}, \dotsc, \widehat {(i)}, \dotsc, X^{(k)}) \right)$, which by the product rule evaluates to
	\begin{align}
		\dif \left(\alpha(X^{(0)}, \dotsc, \widehat {(i)}, \dotsc, X^{(k)}) \right)_m & (X^{(i)}) 
			= (\dif_1 \alpha)_m(X^{(i)}_m; X^{(0)}_m, \dotsc, \widehat {(i)}, \dotsc, X^{(k)}_m)\\ 
			&+ \sum_{j=0}^{i-1} \alpha(X^{(0)}_m, \dotsc, (\dif X^{(j)})_m (X^{(i)}_m), \dotsc, \widehat {(i)}, \dotsc, X^{(k)}_m)\\
			&+ \sum_{j=i+1}^{k} \alpha(X^{(0)}_m, \dotsc, \widehat {(i)}, \dotsc, (\dif X^{(j)})_m (X^{(i)}_m), \dotsc, X^{(k)}_m).
	\end{align}
	Here, linearity of $\alpha$ in the $E$-components is utilised to identify the partial differential in these directions with $\alpha$ itself. Recalling the local expression $(\dif X^{(j)})_m (X^{(i)}_m) - (\dif X^{(j)})_m (X^{(i)}_m)$ for the commutator of $X^{(j)}$ and $X^{(i)}$, one easily sees that the last two sums together cancel the double sum in \cref{eq::lcm:exteriorDerivativeDef}. The remaining part is exactly \cref{eq::locallyconvexmanifolds:localExteriorDifferential}.
\end{proof}

\begin{example}[Differential forms on product manifolds] \label{ex::locallyconvexmanifolds:diffFormsProductManifold}
	Let $M$ and $N$ be manifolds modelled on locally convex spaces $E$ and $F$, respectively. Recall from \autoref{ex::locallyconvexmanifolds:tangentProductManifold} the identification $T(M \times N) = TM \oplus TN$ and the accompanying sum representation $Z = Z^{(M)} + Z^{(N)}$ of vector fields on $M \times N$. For the sake of simplicity, the dual decomposition of differential forms is first illustrated for $1$-forms. 

	For $\mu \in \diffform{1}{M \times N}$ set $\mu^{1,0}_{m,n}(Z_{m,n}) \defeq \mu(Z^{(M)}_{m,n})$ and $\mu^{0,1}_{m,n}(Z_{m,n}) \defeq \mu(Z^{(N)}_{m,n})$. This defines $1$-forms $\mu^{1,0}$ and $\mu^{0,1}$ on $M \times N$ because their local expression is the restriction of $\mu: U \times (E \times F) \to \R$ to $U \times (E \times \set{0})$ and $U \times (\set{0} \times F)$, respectively. Clearly, $\mu^{1,0}$ vanishes identically if restricted to $TN$ and this property is taken to be the defining condition for the subspace $\diffformBi{1}{0}{M \times N} \subseteq \diffform{1}{M\times N}$ (and analogously for $\diffformBi{0}{1}{M \times N}$). Now linearity implies $\mu = \mu^{1,0} + \mu^{0,1}$ and thus $\diffform{1}{M\times N} = \diffformBi{1}{0}{M \times N} \oplus \diffformBi{0}{1}{M \times N}$.

	The preceding remarks generalize straightforwardly to higher order differential forms. Let $\diffformBi{p}{q}{M \times N}$ denote the subspace of $(p+q)$-forms that vanishes whenever not evaluated on $p$ tangent vectors of $M$ and $q$ ones on $N$. Then $k$-forms decompose into the bigrading %
	\begin{equation}
		\diffform{k}{M\times N} = \bigoplus_{p+q=k} \diffformBi{p}{q}{M \times N}.
	\end{equation}
	The $(p,q)$-component of a form $\alpha \in \diffform{k}{M\times N}$ will be denoted by $\alpha^{p,q}$.	Note that the $(p,q)$-forms are sections of the bundle $\Lambda^p(M) \otimes \Lambda^q(N)$ in finite dimensions. As topological considerations of the space $\Lambda^k(M \times N)$ were avoided so far and, moreover, the tensor product of infinite-dimensional topological vector spaces is an involved theory in its own, this approach is not further investigated. However, it should generalize at least on the level of vector spaces without topology. 

	In analogy to vector fields, differential forms on $M$ are identified with their image under the pullback map $\pi^*_M: \diffform{p}{M} \to \diffformBi{p}{0}{M \times N}$. In fact, these are exactly those $(p,0)$-forms that do not depend on the base point $n \in N$. Conversely, if $\mu$ is a $(p,q)$-form on $M \times N$ and $X^{(1)}, \dotsc, X^{(p)}$ are vector fields on $M$, then the contraction $\mu(X^{(1)}, \dotsc, X^{(p)})$ is a family of $q$-forms on $N$, parameterized by the points on $M$. Similar statements hold if the roles of $M$ and $N$ are interchanged.

	The concept of partial derivatives is generalized to the case of product manifolds by defining the following \emphDef{partial exterior derivative}
	\begin{multline}
			(\dif^M \mu)_{m, n}(Z^{(0)}_{m, n}, \dotsc, Z^{(p+q)}_{m, n}) \defeq \sum_{i=0}^{p+q} (-1)^i Z^{(i), M}_{m,n} \left(\mu(Z^{(0)}, \dotsc, \widehat {(i)}, \dotsc, Z^{(p+q)}) \right)\\
																 +\sum_{i<j}^{p+q} (-1)^{i+j} \mu_{m,n} \left(\prescript{M}{}{\pcommutator{Z^{(i)}}{Z^{(j)}}}, Z^{(0)}, \dotsc, \widehat{(i)}, \dotsc, \widehat{(j)}, \dotsc,  Z^{(p+q)} \right),
	\end{multline}
	for $\mu \in \diffformBi{p}{q}{M \times N}$ and tangent vectors $Z_{m, n}^{(0)}, \dotsc, Z_{m, n}^{(p+q)}$ locally extended to vector fields on $M \times N$. An analogous equation holds for $\dif^N$. 

	Let $\Dif$ denote the total exterior derivative on $M \times N$. The following properties are easily verified (except for the cochain property $(d^M)^2 = 0$ which follows from a tedious but straightforward calculation).
	\begin{lemma}
		\begin{thmenumerate}*
			\item $\dif^M: \diffformBi{p}{q}{M \times N} \to \diffformBi{p+1}{q}{M \times N}$ \\ $\dif^N: \diffformBi{p}{q}{M \times N} \to \diffformBi{p}{q+1}{M \times N}$
			\item $\Dif = \dif^M + \dif^N$.
			\item $(\dif^M)^2 = 0 = (\dif^N)^2$ and so $\dif^M \dif^N + \dif^N \dif^M = 0$. \label{lem::lcm:AntiCommutativPartialExteriorDifferential}
			\item Let $X^{(1)}, \dotsc, X^{(p)}$ and $Y^{(0)}, \dotsc, Y^{(q)}$ be vector fields on $M$ and $N$, respectively. The partial derivative $\dif^N$ corresponds to the usual de Rham differential $\dif$ on $N$ in the following sense:
			\begin{equation}\begin{split}
				(\dif^N &\mu)(X^{(1)}, \dotsc, X^{(p)}, Y^{(0)}, \dotsc, Y^{(q)}) \\
						&= \sum_{v=0}^{q} (-1)^{p+v} Y^{(v)}\left(\mu(X^{(1)}, \dotsc, X^{(p)}, Y^{(0)}, \dotsc, \widehat{(v)}, \dotsc, Y^{(q)}) \right)\\
						&\quad+\sum_{v<w}^{q} (-1)^{p + v + w} \mu (X^{(1)}, \dotsc, X^{(p)}, \commutator{Y^{(v)}}{Y^{(w)}}, \dotsc, \widehat {(v)}, \dotsc, \widehat {(w)}, \dotsc,  Y^{(q)})\\
						&= (-1)^p \dif \left(\mu(X^{(1)}, \dotsc, X^{(p)}) \right),
			\end{split}\end{equation}
			where $\mu(X^{(1)}, \dotsc, X^{(p)})$ is viewed as a parameterized form on $N$ as described above. \qedhere %
		\end{thmenumerate}
	\end{lemma}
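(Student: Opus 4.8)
The plan is to reduce the lemma to a single local-coordinate formula for $\dif^M$ (and, by the symmetric role of the two factors, for $\dif^N$), the product analogue of the local expression \eqref{eq::locallyconvexmanifolds:localExteriorDifferential} for the exterior derivative. Part~\textup{(ii)} needs no coordinates at all and I would dispatch it first: substituting the decompositions $Z^{(i)} = Z^{(i),M} + Z^{(i),N}$ and $\commutator{Z^{(i)}}{Z^{(j)}} = \prescript{M}{}{\pcommutator{Z^{(i)}}{Z^{(j)}}} + \prescript{N}{}{\pcommutator{Z^{(i)}}{Z^{(j)}}}$ (both from \cref{ex::locallyconvexmanifolds:tangentProductManifold}) into the defining formula \eqref{eq::lcm:exteriorDerivativeDef} for $\Dif$ on $M \times N$ and using linearity of each directional derivative and of $\mu$ in its first slot, the ``$M$-halves'' of all summands collect to the defining formula for $\dif^M \mu$ and the ``$N$-halves'' to that for $\dif^N \mu$; hence $\Dif = \dif^M + \dif^N$.

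For the remaining parts I would fix a product chart $U \times V \subseteq E \times F$, so that $\mu \in \diffformBi{p}{q}{M \times N}$ is represented by a smooth map $\mu\colon (U \times V) \times (E \times F)^{p+q} \to \R$ and, by the proposition on partial derivatives, the base-point derivative splits as $\dif_1 \mu = \dif_1^{(M)} \mu + \dif_1^{(N)} \mu$ into its $E$- and $F$-directional parts. The target is
\[
	(\dif^M \mu)_{m,n}(Z^{(0)}, \dotsc, Z^{(p+q)}) = \sum_{i=0}^{p+q} (-1)^i (\dif_1^{(M)} \mu)\bigl(Z^{(i),M}; Z^{(0)}, \dotsc, \widehat{(i)}, \dotsc, Z^{(p+q)}\bigr),
\]
and its $N$-analogue. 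This is established by repeating the computation from the proof of \cref{lemma::locallyconvexmanifolds:diffformWelldefined} with product components tracked: expanding each $Z^{(i),M}(\mu(\dotsc))$ by the product rule yields the term $(\dif_1^{(M)} \mu)(Z^{(i),M}; \dotsc)$ — the base-point slot is differentiated only in the $E$-direction since $Z^{(i),M}$ has no $F$-component — together with terms $\mu(\dotsc, (\dif Z^{(j)})(Z^{(i),M}), \dotsc)$ from differentiating the vector-field coefficients; these last are cancelled by the bracket terms $\mu(\prescript{M}{}{\pcommutator{Z^{(i)}}{Z^{(j)}}}, \dotsc)$ once $\prescript{M}{}{\pcommutator{\cdot}{\cdot}}$ is rewritten via its identities in \cref{ex::locallyconvexmanifolds:tangentProductManifold}. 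With this local formula, part~\textup{(i)} is a bookkeeping count: the distinguished argument enters only through $Z^{(i),M}$, and feeding in a system with $a$ tangent vectors of $M$ and $b$ of $N$, the only nonzero contributions require $a = p+1$, $b = q$ (all others make $\mu$ see the wrong number of $M$- or $N$-directions, hence vanish); smoothness of the chart representative is inherited from that of $\dif_1^{(M)} \mu$, so $\dif^M \mu \in \diffformBi{p+1}{q}{M \times N}$, and symmetrically $\dif^N \mu \in \diffformBi{p}{q+1}{M \times N}$.

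Part~\textup{(iii)}: the cochain identities $(\dif^M)^2 = 0 = (\dif^N)^2$ are proved exactly as $\dif^2 = 0$ in \eqref{eq::locallyConvexManifolds:exteriorDiffProperties} — applying $\dif^M$ twice antisymmetrizes the iterated partial derivative $\dif_1^{(M)} \dif_1^{(M)} \mu$ in the two differentiation slots, which vanishes by symmetry of second derivatives — or, if one prefers, by the direct (tedious) calculation alluded to before the lemma. The mixed relation then drops out of \textup{(ii)} and $\Dif^2 = 0$: $0 = \Dif^2 = (\dif^M)^2 + (\dif^M \dif^N + \dif^N \dif^M) + (\dif^N)^2 = \dif^M \dif^N + \dif^N \dif^M$. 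For part~\textup{(iv)} I specialize the definition of $\dif^N \mu$ to arguments $X^{(1)}, \dotsc, X^{(p)} \in \vectorf{M}$ (already of $M$-type, so $X^{(u),N} = 0$) and $Y^{(0)}, \dotsc, Y^{(q)} \in \vectorf{N}$ (already of $N$-type): by the identities of \cref{ex::locallyconvexmanifolds:tangentProductManifold} every $\prescript{N}{}{\pcommutator{\cdot}{\cdot}}$ involving some $X^{(u)}$ is zero while $\prescript{N}{}{\pcommutator{Y^{(v)}}{Y^{(w)}}} = \commutator{Y^{(v)}}{Y^{(w)}}$, so the surviving terms are precisely the de Rham formula \eqref{eq::lcm:exteriorDerivativeDef} for $\dif$ on $N$ applied to the $N$-parameterized $q$-form $\mu(X^{(1)}, \dotsc, X^{(p)})$, the overall factor $(-1)^p$ coming from moving each $Y^{(v)}$ and each bracket past the $p$ entries $X^{(1)}, \dotsc, X^{(p)}$.

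The one genuine obstacle is the cancellation inside the local formula for $\dif^M$. The bracket $\prescript{M}{}{\pcommutator{\cdot}{\cdot}}$ is honestly mixed — it has an $N$-component as well as an $M$-component — so to see that inserting it into $\mu$ contributes exactly the coefficient-derivative terms (and nothing more) one must combine the full list of identities for $\prescript{M}{}{\pcommutator{\cdot}{\cdot}}$ from \cref{ex::locallyconvexmanifolds:tangentProductManifold} with the facts that $\dif Z^{(j),N}$ is $TN$-valued and that $\mu$, being of type $(p,q)$, kills the wrong tangent directions in each slot; this is where the $(p,q)$-structure really does the work. Everything else is the same lengthy-but-routine verification already met for $\dif$ itself.
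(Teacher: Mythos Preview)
Your proposal is correct and, in fact, supplies considerably more detail than the paper itself: the paper offers no proof beyond the remark that the properties ``are easily verified (except for the cochain property $(\dif^M)^2 = 0$ which follows from a tedious but straightforward calculation)''. Your strategy of first establishing the local product formula $(\dif^M \mu)(Z^{(0)},\dotsc,Z^{(p+q)}) = \sum_i (-1)^i (\dif_1^{(M)}\mu)(Z^{(i),M};\dotsc)$ by mimicking the argument of \cref{lemma::locallyconvexmanifolds:diffformWelldefined}, and then reading off \textup{(i)} and \textup{(iii)} from it, is exactly the natural way to flesh out that ``tedious but straightforward'' computation; your treatments of \textup{(ii)} and \textup{(iv)} via the algebraic identities for $\prescript{M}{}{\pcommutator{\cdot}{\cdot}}$ are likewise the intended route.
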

	In the same way the partial Lie derivatives are defined by
	\begin{multline}
		(\difLie^M_Z \mu)_{m,n} (Z^{(1)}_{m,n}, \dotsc, Z^{(p+q)}_{m,n}) \defeq Z^M_{m,n} \left(\mu(Z^{(1)}, \dotsc Z^{(p+q)}) \right) \\
					+ \sum_{i=1}^{p+q} (-1)^i \mu_{m,n} \left(\prescript{M}{}{\pcommutator{Z}{Z^{(i)}}}, Z^{(1)}, \dotsc, \widehat {(i)}, \dotsc,  Z^{(p+q)} \right)
	\end{multline}
	and analogously for $\difLie^N$. The following properties are immediate from the definition and the properties of the adapted commutator $\prescript{M}{}{\pcommutator{\cdot}{\cdot}}$: 
	\begin{enumerate}
		\item $\difLie^M_Z \mu = \dif^M(Z \contr \mu) + Z \contr \dif^M \mu$
		\item $\difLie_Z \mu = \difLie^M_Z \mu + \difLie^N_Z \mu$
		\item $(\difLie^M_{Z^N} \mu)(Z^{(1)}, \dotsc, Z^{(p+q)}) = \sum_{i=1}^{p+q} (-1)^i \mu (\commutator{Z^N}{Z^{(i), M}}^N, Z^{(1)}, \dotsc, \widehat {(i)}, \dotsc,  Z^{(p+q)})$\\ In particular, $\difLie^M_{Z^N} \mu$ has bigrading $(p+1, q-1)$ and in general vanishes only if evaluated for a vector field $Z = Y$ on $N$.
		\item The total Lie derivative decomposes into 
			\begin{equation}
				\difLie_Z \mu = \difLie^N_{Z^M} \mu + (\difLie^M_{Z^M} \mu + \difLie^N_{Z^N} \mu) + \difLie^M_{Z^N} \mu,
			\end{equation}
			where the summands have bigrading $(p-1, q+1)$, $(p, q)$ and $(p+1, q-1)$, respectively. \qedhere %
	\end{enumerate}	
\end{example}

\section{Nash-Moser inverse function theorem revisited}
The Nash-Moser theorem discussed in \autoref{sec:locallyconvexspaces:nashMoserInverseTheorem} generalizes to the global context once one restricts to an appropriate subcategory of Fréchet manifolds.
\begin{defn}
	A Fréchet manifold is called \emphDef{tame} if the modelling space is a tame Fréchet space and the chart transitions are tame maps. A map between tame Fréchet manifolds is said to be \emphDef{tame smooth} if its chart representation is a tame smooth map.
\end{defn}

\begin{theorem}[Nash-Moser inverse function theorem] %
	Let $f: M \supseteq U \to N$ be a tame smooth map between tame Fréchet manifolds defined on an open subset $U \subseteq M$. Assume that the derivative $f'_m: T_mM \to T_{f(m)}N$ is bijective for all points $m \in U$ and that the map $\Psi^f: f^*TN \to TM$ defined by the inverses is tame smooth. Then, the map $f$ is locally bijective and the inverse is a tame smooth map, that is, $f$ is a tame local diffeomorphism. 
\end{theorem}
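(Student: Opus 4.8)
The plan is to reduce the statement to the Nash--Moser theorem for maps between tame Fréchet spaces, \autoref{prop::locallyConvexSpace:NashMoserInverseTheorem}, by passing to charts. Since the whole argument takes place in a single pair of charts, no appeal to partitions of unity (which are unavailable) is required.

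First I would fix a point $m_0 \in U$ and choose a chart $\kappa : M \supseteq U_0 \to \tilde U \subseteq E$ around $m_0$ and a chart $\rho : N \supseteq V \to \tilde V \subseteq F$ around $f(m_0)$; here $E$ and $F$ are the modelling spaces, which are tame Fréchet spaces by hypothesis. After shrinking $U_0$ so that $f(U_0) \subseteq V$, the local representative $\hat f \defeq \rho \circ f \circ \kappa^{-1} : \tilde U \to F$ is a tame smooth map between open subsets of tame Fréchet spaces, directly by the definition of a tame smooth map between tame Fréchet manifolds.

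Next I would transport the two hypotheses into this chart. Because $\kappa$ and $\rho$ are diffeomorphisms onto open subsets of $E$ and $F$, their derivatives are linear topological isomorphisms; hence the chain rule identifies, for each $x \in \tilde U$, the operator $(\dif \hat f)_x : E \to F$ with a composite of $f'_{\kappa^{-1}(x)}$ with two isomorphisms, so $(\dif \hat f)_x$ is a continuous linear bijection for every $x$. Furthermore I would note that the bundles $TM$, $TN$ and $f^*TN$ are themselves tame Fréchet manifolds: they are modelled on $E \times E$, $F \times F$ and $E \times F$, and their transition functions are assembled from the tame chart transitions of $M$ and $N$ together with their derivatives, which are tame since the derivative of a tame smooth map is tame by definition. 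In the induced bundle charts the map $\Psi^f : f^*TN \to TM$ becomes a family $\tilde U \times F \to E$, $(x,w) \mapsto \Psi^{\hat f}_x(w)$, obtained from $\Psi^f$ by conjugating with the chart derivatives; this family is inverse to $(\dif \hat f)_x$ for each $x$, and it is tame smooth because $\Psi^f$ is and because tame smoothness of a map between tame Fréchet manifolds means, by definition, tame smoothness of its chart representative.

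At this point $\hat f$ satisfies all hypotheses of \autoref{prop::locallyConvexSpace:NashMoserInverseTheorem}, so there is an open neighbourhood $\tilde U_0 \ni \kappa(m_0)$ on which $\hat f$ is a bijection onto an open subset of $F$ with tame smooth inverse. Setting $W \defeq \kappa^{-1}(\tilde U_0)$, the restriction $f|_W$ is then a bijection onto the open set $\rho^{-1}(\hat f(\tilde U_0)) \subseteq N$ whose inverse has $\hat f^{-1}$ as its chart representative and is therefore tame smooth; as $m_0$ was arbitrary, $f$ is a tame local diffeomorphism. I expect the only real work to be the bookkeeping of the previous paragraph: verifying that $TM$, $TN$ and $f^*TN$ are tame Fréchet manifolds and that tameness of $f$ and of $\Psi^f$ genuinely localizes, so that the vector-space version of the theorem applies inside the charts. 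Everything else is a routine pullback.
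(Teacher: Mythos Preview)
Your proposal is correct and takes essentially the same approach as the paper: reduce to charts and invoke the vector-space Nash--Moser theorem (\autoref{prop::locallyConvexSpace:NashMoserInverseTheorem}). In fact the paper's proof is a single sentence to exactly this effect, so your version is simply a more detailed unpacking of the same argument.
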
  
\begin{proof}
	The assertion is of local nature and thus follows from the inverse function theorem \ref{prop::locallyConvexSpace:NashMoserInverseTheorem} applied to the chart representation of $f$. 
\end{proof}

\section{Submanifolds}
A submanifold is a subset of a manifold that carries itself a suitable differentiable structure. In finite dimensions there exist mainly two ways to define submanifolds: First, one can specify how the submanifold-charts are inherited from the ambient manifold. Second, the inclusion map and the variety of different properties one can demand on it are brought into the focus of consideration. Both characterisations are intimately linked due to the inverse function theorem (or, more precisely, the constant rank theorem). The situation becomes more diverse when one moves to the infinite-dimensional setting, because formerly equivalent definitions now separate into distinct ones. 

This section primarily focuses on what is often denoted as `embedded' submanifolds. These are subsets mapped under charts to a closed subspace of the modelling vector space. Since in finite dimensions every closed subspace is complemented, one equivalently gets an embedded submanifold by setting some chart-coordinates to zero. Clearly, this equivalence does not carry over.

\begin{defn} %
  	A \emphDef{submanifold} is a subset $S \subseteq M$ of a manifold $M$ such that there exist a covering of $S$ by charts $\kappa_\alpha: U_\alpha \to E$ of $M$ and a closed subspace $G \subseteq E$ fulfilling
  	\begin{equation} \label{eq::submanifold:definitionChart}
  		\kappa_\alpha(U_\alpha \cap S) = \kappa_\alpha(U_\alpha) \cap G.
  	\end{equation}
  	Hence the restrictions of the $\kappa_\alpha$'s to $U_\alpha \cap S$, shortly denoted by $\kappa_{\alpha {\restriction_S}}$, constitute an atlas of $S$ endowing it with a manifold structure\footnote{To see that the collection $\kappa_{\alpha {\restriction_{S}}}$ indeed defines a manifold structure, endow the subset $S \subseteq M$ with the relative topology. Thus it is a Hausdorff topological space and $U_\alpha \cap S$ are open sets by definition. Moreover, as restrictions of homeomorphisms, the maps $\kappa_{\alpha {\restriction_S}}$ are homeomorphisms themselves and by \eqref{eq::submanifold:definitionChart} their image is open. The chart transitions, being restrictions of smooth maps, are smooth themselves.}. Charts of this type are called \emphDef{submanifold charts}. If in addition $G$ is topologically complemented\footnote{A topological complement of a closed subspace $G$ in a topological vector space $E$ is a closed subspace $F$ such that $E = G \oplus F$. The existence of algebraic complements is always assured, but the natural isomorphism $G \oplus F \ni (x,y) \mapsto x+y \in E$ need not be a homeomorphism.}, then one speaks of a \emphDef{splitting submanifold}. In this case
  	\begin{equation}
  		\kappa_\alpha(U_\alpha \cap S) = \kappa_\alpha(U_\alpha) \cap (G \times \set{0}).
  	\end{equation}

  	Let $\pi: P \to M$ be a vector bundle. A (splitting) submanifold $Z \subseteq P$ is called a \emphDef{(splitting) subbundle} of $P$ if there exist a vector bundle $\rho: Q \to N$ and an injective morphism $\phi: Q \to P$ with image $\phi(Q) = Z$. 
\end{defn} 
As remarked above, the split condition is superfluous in finite dimensions (as well as for Hilbert manifolds). It is easy to see that the vector bundle  $\rho: Q \to N$ in the definition of a subbundle is uniquely determined up to a vector bundle isomorphism.  

For a submanifold $S \subseteq M$ the natural inclusion map $\iota: S \to M$ is smooth as one verifies by inspection in charts. In fact, let $\kappa: U \to E$ be an arbitrary chart of $M$ and as above denote by $\kappa_{\restriction_S}$ the induced chart on $S$. The chart representation $\kappa \circ \iota \circ \kappa^{-1}_{\restriction_S}: \kappa(U) \cap G \to \kappa(U)$ is smooth since it is the restriction of the linear, continuous inclusion $G \to E$ to an open subset. This argument also implies that the tangent map $\iota'_s$ is an injective, (closed) topological embedding for every point $s \in S$ (with complemented image if $S$ is a splitting submanifold). Furthermore, $\iota$ itself is an embedding since $S$ carries the subspace topology. 

In the finite-dimensional context, the converse is also well-known, that is: the image of an injective immersion $\iota$ is an embedded submanifold if $\iota$ is also a topological embedding. Since the proof of this fact relies on the constant rank theorem one cannot hope to generalize it to arbitrary locally convex manifolds. Nonetheless, the Nash-Moser theorem provides the right stage for a similar proposition.
The following definition captures the required notation.
\begin{defn} %
	Let $S$ and $M$ be tame Fréchet manifolds. A tame smooth map $\iota: S \to M$ is called a \emphDef{tame immersion} if 
	\begin{enumerate}
		\item the tangent map $\iota'_s: T_sS \to T_{\iota(s)}M$ is injective for all $s \in S$ and the image $T\iota \defeq \bigsqcup_{s \in S} \img (\iota'_s)$ is a subbundle of $\iota^* TM$,
		\item the bundle map of inverses $\Psi^\iota: T\iota \to TS$ of $\iota'$ is tame smooth. 
	\end{enumerate}
	If additionally $T\iota$ splits as a subbundle of $\iota^* TM$, then it is called a \emphDef{splitting tame immersion}.
\end{defn}
\begin{remark}
	The requirement of $T\iota$ being a subbundle seems to be redundant at first glance since in finite dimensions the injectivity of a vector bundle morphism already ensures the existence of a subbundle structure. This statement though relies on the extension of a linearly independent local frame to a local basis frame, which in turn goes back to the openness of the subset of invertible operators in the space of arbitrary linear operators. The latter result does not hold beyond the finite-dimensional regime, see \autoref{ex::diffGroup:expMap}.	%
\end{remark}
The following proposition expands a theorem of \parencite[p. 542]{AbbatiCirelliEtAl1989} slightly, where it is stated additionally without an explicit proof.  
\begin{proposition} \label{prop::submanifold:tameImmersionImpliesSplitSubmanifold}
	Let $S$ and $M$ be tame Fréchet manifolds. If $\iota: S \to M$ is a splitting tame, injective immersion, then for every $s \in S$ there exist an open neighbourhood $V \subseteq S$ of $s$, a chart $\kappa: U \to E$ of $M$ and a topological decomposition $E = G \oplus F$ such that:
	\begin{enumerate}
		\item The image $\iota(V)$ is contained in $U$.
		\item $\kappa(\iota(V))$ is open in $G$.
		\item The induced chart $\kappa \circ \iota_{\restriction_V}: V \to G$ is compatible with the atlas of $S$.
	\end{enumerate}
	Thus $\iota_{\restriction V}$ is a tame diffeomorphism onto a splitting submanifold of $M$.	If $\iota$ is in addition a topological embedding, then the image $\iota(S)$ is a splitting submanifold of $M$.
\end{proposition}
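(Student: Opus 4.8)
The plan is to separate the argument into a point--set reduction and an analytic core built on the Nash--Moser inverse function theorem.

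First I would dispose of the last sentence assuming items (i)--(iii) have been established for every $s\in S$, say with data $V=V_s$, a chart $\kappa_s\colon U_s\to E$ of $M$, and a topological splitting $E=G_s\oplus F_s$. If $\iota$ is in addition a topological embedding, then $\iota(V_s)$ is open in $\iota(S)$, hence $\iota(V_s)=\iota(S)\cap W_s$ for some open $W_s\subseteq M$, and by (ii) there is an open $\mathcal{O}_s\subseteq E$ with $\kappa_s(\iota(V_s))=\mathcal{O}_s\cap G_s$. Replacing $U_s$ by $U_s\cap W_s\cap\kappa_s^{-1}(\mathcal{O}_s)$, a short computation using (i) shows $\kappa_s(\iota(S)\cap U_s)=\kappa_s(U_s)\cap G_s$, so $\kappa_s$ is a submanifold chart of $M$ along $\iota(S)$; since $G_s$ is complemented by $F_s$, the charts $(\kappa_s)_{s\in S}$ exhibit $\iota(S)$ as a splitting submanifold, and (iii) makes $\iota$ a tame diffeomorphism onto it. Running the same computation for the single subset $\iota(V)$, without the embedding hypothesis, gives the intermediate assertion that $\iota_{\restriction_V}$ is a tame diffeomorphism onto a splitting submanifold of $M$.

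Next I would reduce (i)--(iii), for a fixed $s$, to a local diffeomorphism statement. Choose a chart $\rho\colon S\supseteq W\to\tilde E$ with $\rho(s)=0$ and a chart $\kappa_0\colon M\supseteq U_0\to E$ with $\kappa_0(\iota(s))=0$; after shrinking $W$ assume $\iota(W)\subseteq U_0$ and write $f\defeq\kappa_0\circ\iota\circ\rho^{-1}$ for the tame smooth local representative of $\iota$. Put $G\defeq\img f'_0$; since $T\iota$ splits as a subbundle of $\iota^*TM$, its complementary subbundle supplies a closed topological complement, so $E=G\oplus F$ with a continuous projection $P\colon E\to G$ along $F$. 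Set $g\defeq P\circ f$ and $h\defeq(\id-P)\circ f$, so $g$ is tame smooth with $g'_0=f'_0\colon\tilde E\to G$, a continuous linear bijection of Fréchet spaces and hence a topological isomorphism by the open mapping theorem. Granting that $g$ is a tame local diffeomorphism at $0$, I would shrink $W$ so that $g$ restricts to a tame diffeomorphism onto the open set $g(\rho(W))\subseteq G$, introduce the shearing diffeomorphism $\Theta(u,w)\defeq\bigl(u,\;w-h(g^{-1}(u))\bigr)$ on the neighbourhood $g(\rho(W))\times F$ of $0$ in $G\oplus F=E$ (which has the tame smooth inverse $(u,w)\mapsto(u,w+h(g^{-1}(u)))$), and set $\kappa\defeq\Theta\circ\kappa_0$, $V\defeq W$. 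Because $\Theta(f(x))=(g(x),0)$ for $x\in\rho(W)$, the set $\iota(V)$ lies in the domain of $\kappa$, is mapped onto $g(\rho(V))\times\{0\}$ — open in $G\times\{0\}\cong G$ — and $\kappa\circ\iota\circ\rho^{-1}=(g,0)$, so the induced chart on $S$ is $g\circ\rho$, compatible with the atlas of $S$. This is precisely (i)--(iii).

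The analytic core, and the main obstacle, is therefore the claim that $g=P\circ f$ is a tame local diffeomorphism at $0$. The natural tool is the Nash--Moser inverse function theorem \cref{prop::locallyConvexSpace:NashMoserInverseTheorem}, which requires $g'_x=P\circ f'_x\colon\tilde E\to G$ to be bijective for all $x$ in a neighbourhood of $0$ together with a tame smooth family of inverses. At $x=0$ the inverse is $(f'_0)^{-1}=\Psi^\iota_0$, supplied by the hypotheses; at a general $x$ a candidate inverse is $\Psi^\iota_x$ followed by the inverse of the restriction of $P$ to $\img f'_x$, which exists once $E=\img f'_x\oplus F$. Establishing this on a whole neighbourhood of $0$, with tame and smooth dependence, is where the splitting--subbundle hypothesis has to be used essentially: the local triviality of $T\iota$ (and of its chosen complement) inside $\iota^*TM$ produces, over a neighbourhood of $s$, a tame smooth family of topological isomorphisms of the model fibre — with tame smooth inverses — straightening $\img f'_x$ onto $G$, and, being the transition data of a genuine vector--bundle trivialisation, these are isomorphisms at \emph{every} point of the neighbourhood, not merely near $x=0$. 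This is exactly the point where the finite-dimensional argument, which would appeal to the openness of the invertible operators (cf.\ \cref{ex::diffGroup:expMap}), fails and must be replaced. I expect the two delicate steps to be (a) extracting this straightening family honestly from the definition of a splitting subbundle and combining it with the bundle of inverses $\Psi^\iota$ to form the inverse family of $g'_x$, and (b) carrying the tame estimates through these compositions so that the inverse family meets the hypothesis of \cref{prop::locallyConvexSpace:NashMoserInverseTheorem}; once that is in place, the theorem applies verbatim and yields the required tame local diffeomorphism.
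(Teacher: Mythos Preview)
Your proposal is correct and follows essentially the same route as the paper: reduce to a local statement, apply Nash--Moser to a suitable extension of the chart representative of $\iota$, and deduce the global embedding case by intersecting with opens coming from the relative topology. The only organisational difference is that the paper front-loads the splitting-subbundle hypothesis into the choice of charts---picking $\rho$ and $\kappa$ so that $\img\iota'_x=G$ for \emph{all} $x$ near $s$---and then applies Nash--Moser to the single extended map $I(x,y)=(g(x),y+f(x))$, whose inverse family is read off directly from $\Psi^\iota$; your version picks arbitrary charts first, applies Nash--Moser to $g=P\circ f$, and uses the shearing $\Theta$ afterwards, which forces you to reinvoke the subbundle trivialisation to manufacture the inverse family of $g'_x$ on a neighbourhood. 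Both work, but the paper's ordering makes step~(a) of your ``delicate steps'' trivial rather than something to be extracted.
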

\begin{proof}
	Fix a point $s \in S$. Since $T\iota$ is a splitting subbundle of $\iota^* TM$, there exist charts $\rho: S \supseteq V \to G$ around $s$ and $\kappa: M \supseteq U \to E$ containing $\iota(V)$ such that, relative to these charts, $\iota'$ is pointwise the inclusion of $G$ into $E$ (up to some neglected isomorphism). Moreover, $G$ is topologically complemented in $E$, say $E = G \oplus F$. Hence the chart representation of $\iota$ maps the open subset $W \equiv \rho(V) \subseteq G$ to the product $G \times F$. Denote by $g: W \to G$ and $f: W \to F$ the composition with the projection on the first and second factor, respectively. Furthermore, define the linear extension $I: W \times F \to G \times F$ of $\iota$ by $(x, y) \mapsto (g(x), y + f(x))$. Its derivative then clearly evaluates to
	\begin{equation}
		(\dif I)_{x,y} = \begin{pmatrix} (\dif g)_x & 0 \\ (\dif f)_x & id_F \end{pmatrix}.
	\end{equation}  
	The bundle map of inverses $\Psi^\iota: T\iota \to TS$ translates in the local setting to a tame family $\Psi^g: W \times G \to G$ inverse to $\dif g$. As can be easily verified, the linear operators
	\begin{equation}
		\Psi^I_{x,y} \defeq \begin{pmatrix} \Psi^g_x & 0 \\ - (\dif f)_x \circ \Psi^g_x & \id_F \end{pmatrix}
	\end{equation}
	give rise to a tame family $\Psi^I: (W \times F) \times (G \times F) \to G \times F$ of inverses of $(\dif I)$. By the inverse function theorem, $I$ is a local diffeomorphism around $(\rho(s), 0)$. Hence there exist open subsets $O_G \subseteq G$, $O_F \subseteq F$, $O_E \subseteq E$ such that $O_G \times O_F$ is diffeomorphic to $O_E$ (by $I$) and the following diagram commutes.
	\begin{equation}
		\begin{tikzcd}[column 2/.style={anchor=base west}]
			O_G \arrow{r}{\iota} \arrow[hook, end anchor=north west]{rd} &[1cm] |[anchor=base west]| \iota(O_G) \subseteq O_E\\
			 		{}										& |[anchor=base west]| O_G \times \set{0} \arrow[swap, end anchor={[xshift=4ex]south west}, start anchor={[xshift=4ex]north west}]{u}{I} \subseteq O_G \times O_F 
		\end{tikzcd}
	\end{equation}
	After shrinking $V$ in an appropriate way this argumentation shows that $\iota(V)$ is splitting submanifold of $M$, and thus the first claim of the proposition is proven. 

	Now assume that $\iota$ is also a topological embedding. Choose an open covering $V_\alpha$ of $S$ and, for every $\alpha$, a chart $\kappa_\alpha: U_\alpha \to E$ of $M$ with properties (i) - (iii) as ensured by the previous considerations. For every $\alpha$, there exists an open subset $O_\alpha \subseteq M$ with $\iota(V_\alpha) = O_\alpha \cap \iota(S)$ since $\iota(S) \subseteq M$ carries the relative topology. Now $\kappa_\alpha$ restricted to $U_\alpha \cap O_\alpha$ provides the desired submanifold chart.
\end{proof}
Note that the above proof relies on the topological sum decomposition of the model space and hence a similar result cannot be expected for non-splitting tame immersions. Submanifolds permit the following feasible statement about smoothness of maps. 
\begin{proposition}[Submanifolds are initial] \label{prop::lcm:SubmanifoldIsInitial}
	Let $S \subseteq M$ be a submanifold and let $\iota$ denote the inclusion $S \hookrightarrow M$. For every smooth map $f: N \to M$ with $f(N) \subseteq S$ the restriction in range $\iota^{-1} \circ f: N \to S$ is smooth. 
\end{proposition}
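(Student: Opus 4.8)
The plan is to check smoothness in charts and thereby reduce the statement to the elementary fact that a smooth map into a locally convex space whose image lies in a \emph{closed} linear subspace remains smooth when its codomain is cut down to that subspace. Note first that, since $\iota$ is the inclusion and $f(N)\subseteq S$, the map $\iota^{-1}\circ f\colon N\to S$ is simply $f$ regarded as a map into $S$, so it is well defined.

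First I would fix an arbitrary point $n\in N$, put $m\defeq f(n)\in S\subseteq M$, and pick a submanifold chart $\kappa\colon U\to E$ of $M$ around $m$, together with the closed subspace $G\subseteq E$ satisfying $\kappa(U\cap S)=\kappa(U)\cap G$; recall that the restriction $\kappa_{\restriction_S}\colon U\cap S\to\kappa(U)\cap G$ belongs to the atlas of $S$. Using continuity of $f$ I would then choose a chart $\lambda\colon W\to F$ of $N$ with $n\in W$ and $f(W)\subseteq U$. With respect to $\lambda$ and $\kappa_{\restriction_S}$, the chart representation of $\iota^{-1}\circ f$ is
\begin{equation}
	\kappa_{\restriction_S}\circ(\iota^{-1}\circ f)\circ\lambda^{-1}\colon\ \lambda(W)\longrightarrow \kappa(U)\cap G ,
\end{equation}
and as a map into $E$ this is nothing but $g\defeq\kappa\circ f\circ\lambda^{-1}$, the chart representation of $f$, which is smooth by hypothesis. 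Because $f(N)\subseteq S$, the map $g$ actually takes its values in $\kappa(U)\cap G\subseteq G$.

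It remains to see that $g$, being smooth as a map $\lambda(W)\to E$ with $\img g\subseteq G$, is also smooth as a map $\lambda(W)\to G$ equipped with the subspace topology. Here the hypothesis that $G$ is \emph{closed} enters: for $x\in\lambda(W)$ and $h\in F$ the difference quotients $\tfrac1t\big(g(x+th)-g(x)\big)$ all lie in the linear subspace $G$ and converge in $E$ to $(\dif g)_x(h)$; since $G$ is closed, the limit lies in $G$ as well, and convergence in the subspace topology of $G$ is exactly convergence in $E$ of a net staying in $G$. Thus $g\colon\lambda(W)\to G$ is $C^1$ with the same derivative, and $\dif g$ is again a $G$-valued map, to which the same reasoning applies; iterating shows that $g\colon\lambda(W)\to G$ is smooth. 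As $n\in N$ was arbitrary and smoothness is a local property, $\iota^{-1}\circ f$ is smooth. I do not anticipate a genuine obstacle here — the only step needing a word of care is the codomain-restriction just described, which is precisely where closedness of the model subspace of a submanifold chart is used; the remaining bookkeeping is the usual chart-independence argument, valid verbatim as in finite dimensions since it relies only on the chain rule.
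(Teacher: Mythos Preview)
Your argument is correct and is precisely the standard one: reduce to charts, observe that the local representative is smooth into $E$ with values in the closed subspace $G$, and use closedness of $G$ to conclude that all difference quotients and their limits remain in $G$, so smoothness into $G$ follows. The paper itself does not spell out a proof at all---it merely states that ``the proof proceeds as in finite dimensions'' and cites \parencite[Lemma~2.3.23]{Gloeckner2005}; your write-up is exactly the argument that citation points to.
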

The proof proceeds as in finite dimensions \parencite[Lemma~2.3.23]{Gloeckner2005}. %
This chapter is summarized in the following hierarchy, which starts with the most specific concept and ends at the most general level:
\begin{equation}
	\text{splitting submanifold } \leq \text{ submanifold } \leq \text{ initial submanifold } \leq \text{ injective immersion}.
\end{equation}

\section{Riemannian geometry} \label{cha::lcm:gradedRiemannianGeometry}
This section discusses how the concepts of finite-dimensional Riemannian geometry carry over to Fréchet manifolds. Since the results are only essential for the slice theorem~\labelcref{prop::liegroup:sliceTheorem}, the reader might prefer to skip this chapter upon first reading and come back to it by the time Riemannian metrics are needed. The work of \textcite[pp. 51ff]{Subramaniam1984} serves as the primary source, but the current presentation goes slightly beyond it by also discussing the exponential map.

In the Banach space setting, the theory of Riemannian geometry splits into two branches. Strong Riemannian metrics induce a topology on the tangent spaces equivalent to the original one, while for weak Riemannian metrics the induced topology is coarser. 
The well-consolidated building of finite-dimensional Riemannian geometry remains largely unchanged for strong metrics but is confronted with serious problems if weak metrics are considered.
The reason for this unsatisfactory state lies in the impossibility of defining Christoffel symbols by the Koszul formula in the case of a weak metric. 
Therefore, one often takes refuge in the strong case. However, Fréchet manifolds cannot be strong Riemannian in the classical sense as the discussion about dual spaces shows (cf. \autoref{prop::locallyConvexSpaces:dualOfFrechetNotMetrizable}). The idea is now to circumvent some of these problems by using not only one metric but a whole collection of them which are `strong' in the sense that they together induce an equivalent topology on the tangent space.

\begin{defn}
	Let $M$ be a Fréchet manifold. A \emphDef{graded Riemannian metric} is a family of functions $g^k_m: T_mM \times T_mM \to \R$ with the following properties (for all $m \in M$): 
	\begin{enumerate}
		\item The functions $g^k_m: T_mM \times T_mM \to \R$ are semi-inner products, that is symmetric, bilinear and positive semi-definite functions\footnote{Hence, a semi-inner product resembles an inner product except for $g^k_m(X, X)$ can be zero even for non-zero $X$.}.  
		\item The induced seminorms $\norm{\cdot}^k_m = \sqrt{g^k_m(\cdot, \cdot)}$ constitute a directed fundamental system generating a topology on $T_mM$ equivalent to the original one\footnote{In the following, no powers of seminorms are required and thus the notation $\norm{\cdot}^k$ should not lead to confusions.}.
		\item $g^k$ varies smoothly with $m$ in the sense that for every chart $\kappa: M \supseteq U \to E$ of $M$ the induced chart representation $g^k: \kappa(U) \times E \times E \to \R$ is a smooth map (where the usual identifications in charts are implied). \qedhere
	\end{enumerate}	 
\end{defn}

The definition demands only a weak relationship of the collection $g^k$ at two nearby points. To see this, consider the local representation $g^k: U \times E \times E \to \R$. For all $x, y \in U$, the families of semi-inner products $g^k_x$ and $g^k_y$ induce an equivalent topology on $E$. Hence for a given $k$ and $K > 1$ there exists $n$ such that $\normdot^n_y \leq K \normdot^k_x$ (cf. \cref{prop::locallyConvexSpaces:seminormCharateristEquivalentTopology}). Note that the index $n$ might depend on $k$ as well as on the chosen points. Often more control is necessary.
\begin{defn} \label{defn::Riemann:LocallyEquivalentMetric} %
	A graded Riemannian metric is called \emphDef{locally equivalent} if for every $m \in M$ and $K>1$ there exists a chart $\kappa: U \to E$ around $m$ such that locally 
	\begin{equation}
  		\frac{1}{K} \normdot^k_y \leq \normdot^k_x \leq K \normdot^k_y
	\end{equation}
	holds for all $k$ and every $x,y \in U$. If there exists a chart around every point in which all $g^k$ are constant, then the metric is called \emphDef{locally flat}. %
\end{defn}

In the usual approach to the theory of Riemannian geometry one now proves existence and uniqueness of the Levi-Civita connection by using the Koszul formula. The whole arsenal of techniques that comes along with a connection allows one to speak of geodesics and finally to define the exponential map. Equipped with these tools it is straightforward to prove that the path-length metric induces an equivalent topology on the manifold. Considering the lack of an inverse function theorem and solution theory for differential equations the ordinary route does not lead to the desired result in Fréchet spaces. Instead the last statement is directly investigated. %

Analogous to the finite-dimensional case, one defines the \emphDef{length of a smooth curve} $\gamma:\R \supseteq [a, b] \to M$ with respect to the $k$-th component as
\begin{equation}
	l^k (\gamma) \defeq \int_a^b \norm{\dot \gamma(t)}^k_{\gamma(t)} \dif t.
\end{equation}
If two points $p$ and $q$ lie in the same connected component of $M$, then their distance $d^k(p,q)$ results from taking the infimum of $l^k$ over all $C^1$-curves connecting $p$ and $q$. Finally, define the length metric as
\begin{equation}
	d(p,q) \defeq \sum_{k=1}^\infty 2^{-k} \frac{d^k(p,q)}{1 + d^k(p,q)}.  	
\end{equation}    
This is indeed a metric as the following proposition demonstrates.
\begin{proposition}[{\parencite[p. 52]{Subramaniam1984}}] \label{prop::riemannianGeometry:locEquivalentMetricInducesLengthMetric}
	Let $M$ be a Fréchet manifold and $g^k$ a locally equivalent, graded Riemannian metric. The map $d(p,q)$ is a metric on each connected component of $M$ which is compatible with the original manifold topology.
\end{proposition}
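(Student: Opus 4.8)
The plan is to extract everything from two length estimates carried out in a single chart together with one ``trapping lemma'' for short curves, and then to read off the metric axioms and the topology comparison. Fix $p\in M$; given $K>1$, \cref{defn::Riemann:LocallyEquivalentMetric} supplies a chart $\kappa\colon U\to E$ with $\kappa(p)=0$ on which $\tfrac1K\norm{v}^k_y\le\norm{v}^k_x\le K\norm{v}^k_y$ for all indices $k$, all $v\in E$ and all $x,y\in\kappa(U)$. Shrinking $U$ so that $\kappa(U)$ is convex, the straight segment from $p$ to a nearby $q\in U$ is a $C^1$ curve, which gives the \textbf{upper estimate} $d^k(p,q)\le\int_0^1\norm{\kappa(q)}^k_{t\kappa(q)}\,\dif t\le K\norm{\kappa(q)}^k_0$. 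For the \textbf{lower estimate}, if $\gamma$ is any $C^1$ curve in $U$ starting at $p$, then $\kappa\gamma$ is a $C^1$ curve in $E$ with $\kappa\gamma(0)=0$, so by the fundamental theorem of calculus for paths (\cref{prop::lcs:fundamentalTheoremPath}), a Riemann-sum refinement of the seminorm estimate \eqref{eq::locallyconvexspaces:seminormIntegralInequality}, and local equivalence, $\norm{\kappa\gamma(t)}^k_0=\norm{\int_0^t(\kappa\gamma)'(s)\,\dif s}^k_0\le\int_0^t\norm{(\kappa\gamma)'(s)}^k_0\,\dif s\le K\int_0^t\norm{\dot\gamma(s)}^k_{\gamma(s)}\,\dif s=K\,l^k(\gamma|_{[0,t]})$.

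The central step is the \textbf{trapping lemma}: choose an index $k_0$ and $r>0$ with $\setc{x\in E}{\norm{x}^{k_0}_0<2Kr}\subseteq\kappa(U)$ — possible since $\kappa(U)$ is a $0$-neighbourhood and the $\norm{\cdot}^k_0$ form a directed fundamental system — and set $\Omega\defeq\kappa^{-1}(\setc{x}{\norm{x}^{k_0}_0<2Kr})$, an open subset of $U$. The claim is that every $C^1$ curve $\gamma\colon[0,1]\to M$ with $\gamma(0)=p$ and $l^{k_0}(\gamma)<r$ satisfies $\gamma([0,1])\subseteq\Omega$. To prove it, let $A=\setc{t\in[0,1]}{\gamma([0,t])\subseteq\Omega}$; this is a nonempty initial segment, relatively open by a compactness argument, so $A=[0,t_1)$ or $A=[0,1]$. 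For $t<t_1$ the curve lies in $U$, so the lower estimate gives $\norm{\kappa\gamma(t)}^{k_0}_0\le K\,l^{k_0}(\gamma)<Kr$. The decisive observation is that $\kappa\gamma|_{[0,t_1)}$ extends continuously to $t_1$: the intrinsic map $s\mapsto\norm{\dot\gamma(s)}^k_{\gamma(s)}$ is continuous on the compact interval $[0,1]$, hence bounded, so by local equivalence $\norm{(\kappa\gamma)'(s)}^k_0\le K\norm{\dot\gamma(s)}^k_{\gamma(s)}$ is bounded on $[0,t_1)$ for \emph{every} $k$; consequently $(\kappa\gamma)'$ is integrable in each seminorm and $\kappa\gamma(s)$ is a Cauchy net in $E$ as $s\uparrow t_1$, so — here the completeness of the Fréchet model $E$ is crucial — it converges to some $x_*$ with $\norm{x_*}^{k_0}_0\le Kr<2Kr$, i.e. $x_*\in\kappa(U)$, whence $\gamma(t_1)=\kappa^{-1}(x_*)\in\Omega$. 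Since $\Omega$ is open, $t_1<1$ would force $t_1\in A$ and then, by relative openness, $A$ to extend past $t_1$ — impossible; hence $t_1=1$ and $\gamma([0,1])\subseteq\Omega\subseteq U$. I expect this lemma to be the main obstacle: making the first-exit-time argument rigorous is delicate precisely because the sub-level sets of a chart seminorm are unbounded, so the curve is pinned down not by compactness but by completeness of $E$ — which is why a merely locally convex model would not suffice.

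With these two tools the comparison of topologies is immediate. The manifold topology is finer than the $d$-topology: given $\varepsilon>0$, fix $N$ with $\sum_{k>N}2^{-k}<\tfrac\varepsilon2$, and by the upper estimate choose a manifold neighbourhood $V\subseteq U$ of $p$ with $\kappa(V)$ convex and so small that $\sum_{k\le N}2^{-k}\,K\norm{\kappa(q)}^k_0<\tfrac\varepsilon2$ for all $q\in V$; then $d(p,q)<\varepsilon$ on $V$. Conversely, given a manifold neighbourhood $W$ of $p$, shrink the chart so that $U\subseteq W$, take $k_0,r$ as in the trapping lemma, and set $\varepsilon=2^{-k_0}\tfrac{r/2}{1+r/2}$; if $d(p,q)<\varepsilon$ then $\tfrac{d^{k_0}(p,q)}{1+d^{k_0}(p,q)}<2^{k_0}\varepsilon=\tfrac{r/2}{1+r/2}$, so $d^{k_0}(p,q)<r$, a $C^1$ curve from $p$ to $q$ of $l^{k_0}$-length below $r$ exists, and the trapping lemma forces $q\in\Omega\subseteq W$. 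Hence the $d$-balls about $p$ form a neighbourhood basis, so the two topologies agree. It remains to note the metric axioms: symmetry of each $d^k$ is clear (reverse curves), the triangle inequality for $d^k$ follows by concatenating near-minimising curves reparametrised to have vanishing velocity at the junction so that the result is again $C^1$, and since $t\mapsto t/(1+t)$ is nondecreasing and subadditive on $[0,\infty)$ the triangle inequality descends to $d$; moreover $0\le d\le1$ always and each connected component of $M$ is $C^1$-path-connected (chart segments, smoothed, joined up), so $d$ is finite there. Finally, if $d(p,q)=0$ then $q$ lies in every $d$-ball about $p$, hence, by the comparison just established, in every manifold neighbourhood of $p$, so $q=p$ because $M$ is Hausdorff. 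Thus $d$ is a metric on each component, inducing the original topology.
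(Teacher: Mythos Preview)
Your proof is correct and follows the same overall strategy as the paper: chart estimates comparing $l^k$ with the seminorms $\norm{\cdot}^k_0$, a trapping argument controlling where short curves can go, and then the topology comparison. Two differences are worth noting. First, the paper proves positivity of $d$ directly and only afterwards compares topologies, whereas you reverse the order and obtain $d(p,q)>0$ from the topology comparison together with Hausdorffness of $M$; your route is slightly cleaner and avoids having to produce a uniform lower bound over all curves by hand. Second, the paper's trapping step is compressed to the sentence ``the points $q_n$ eventually lie in some chart $U$ around $p$, since otherwise all connecting curves \ldots\ have lengths greater or equal to $\varepsilon$''; you make this rigorous via the first-exit-time argument and the observation that $\kappa\gamma(s)$ is Cauchy in each seminorm as $s\uparrow t_1$, invoking completeness of the model space $E$ to produce the limit $x_*\in\kappa(U)$. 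That explicit appeal to completeness is exactly the point where the Fr\'echet hypothesis enters, and you are right to flag it.
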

\begin{proof}
	All properties of a metric are evident, except positivity. Fix some point $p \in M$ and consider an open subset $U$ around it. One can reduce the problem to the case where the other endpoint $q$ and the connecting curve $\gamma$ completely lie in $U$. To see this, assume $\img(\gamma) \nsubseteq U$. By continuity of $\gamma$, there exists $c \in [a,b]$ such that for times up to $c$ the curve takes values in $U$ and $\gamma(c) \neq \gamma(a)$. But $l^k(\gamma) \geq l^k(\gamma_{\restriction_{[a,c]}})$ implies that it is enough to show positivity for $\gamma_{\restriction_{[a,c]}}$.

	Thus, in particular, one can consider a chart $\kappa: M \supseteq U \rightleftarrows U \subseteq E$ around $p$. Under the usual identifications, $\gamma$ is a curve $\gamma: [a,b] \to U \subseteq E$ connecting $p$ with some different point $q$, and $g^k: U \times E \times E \to \R$. Due to the local equivalence property one can assume that
	\begin{equation}
  		\frac{1}{K} \normdot^k_p \leq \normdot^k_x \leq K \normdot^k_p
	\end{equation}
	holds for all $x \in U$. The length of $\gamma$ can be estimated from below by %
	\begin{equation}\label{eq::riemannianGeometry:lengthMetricEstimateLengthOfCurve} \begin{split} 
		l^k(\gamma) &= \int_a^b \norm{\dot \gamma(t)}^k_{\gamma(t)} \dif t \geq \frac{1}{K} \int_a^b \norm{\dot \gamma(t)}^k_p \dif t \\
					&\geq \frac{1}{K} \norm*{\int_a^b \dot \gamma(t) \dif t}^k_p = \frac{1}{K} \norm{\gamma(b) - \gamma(a)}^k_p.
	\end{split}\end{equation}
	Since fundamental systems of seminorms split points, there exists a $k$ such that $\norm{\gamma(b) - \gamma(a)}^k_p$ is non-zero and thus the corresponding $l^k$ does not vanish. Therefore $d(p,q) > 0$ whenever $p\neq q$ and $d$ is a metric.

	Finally, it is left to show that the topology of $d$ coincides with the manifold topology. Charts on $M$ are homeomorphisms and hence test the topology of $M$. In particular, $q_n \to p$ in $M$ if and only if $q_n \to p$ in some chart, if and only if $\norm{q_n - p}^k_p \to 0$ for all $k$. The later equivalence holds as $\normdot^k_p$ is a compatible fundamental system of seminorms. On the other hand, $q_n$ converges to $p$ with respect to $d$ if and only if it converges relative to all $d^k$. Thus equivalence of both topologies can be checked for each $k$ separately. 
	\begin{itemize}
		\item Assume $d^k(p, q_n) \to 0$. Then for some $\varepsilon > 0$ and sufficiently large $n$ one has $d^k(p, q_n) < \varepsilon$. The points $q_n$ eventually lie in some chart $U$ around $p$, since otherwise all connecting curves between $p$ and $q_n$ have lengths greater or equal to $\varepsilon$ (compare to the argumentation at the beginning of the proof). Equation \eqref{eq::riemannianGeometry:lengthMetricEstimateLengthOfCurve} implies $\norm{q_n - p}^k_p \leq K d^k(p, q_n)$, which converges to $0$.

		\item Conversely, assume $q_n \to p$ in $M$. It implies $\norm{q_n - p}^k_p \to 0$ in some chart. The path-length can be estimated by considering the linear curve $\sigma(t) = p + t (q_n - p)$. Then by the local equivalence of seminorms
		\begin{equation}
			d^k(p, q_n) \leq l^k(\sigma) = \int_0^1 \norm{\dot \sigma(t)}^k_{\sigma(t)} \dif t = \int_0^1 \norm{q_n - p}^k_{\sigma(t)} \dif t \leq K \norm{q_n - p}^k_p. 
		\end{equation}
	\end{itemize}
\end{proof}

\begin{defn}
	An \emphDef{$r$-exponential map} for a graded Riemannian metric $g^k$ is a smooth map $\exp: TM \subseteq U \to M$ defined on an open subset $U$ of the zero section in $TM$ such that for every $X \in T_mM$ the associated curve $[0,1] \ni t \mapsto \lambda_X(t) = \exp(t X)$ fulfils $\diffAt{}{t}{t=0} \lambda_X(t) = X$ and is the $r$-shortest conjunction between its endpoints, that is
	\begin{equation}
		l^r(\lambda_X) = d^r(m, \exp(X)). 
	\end{equation}
\end{defn}
Since differential equations in Fréchet spaces do not necessarily have a unique solution, there could exist none or many exponential maps conforming to the above definition. Furthermore, $\exp$ is in general not a local diffeomorphism and thus, in particular, the image of $\exp_m$ fails to be an open neighbourhood of $m$. It is subject to further investigations how the exponential maps corresponding to different values of $r$ are related to each other. Examples suggest that the existence of an $0$-exponential map is an indicator for the existence of all higher exponential maps. This phenomenon is closely related to the `regularity of geodesics' \parencite{EbinMarsden1970}, that is, geodesics relative to one Riemannian metric $g^r$ transform under suitable extra conditions to geodesics with respect to another $g^{r'}$. %
\section{Example: Spaces of mappings} \label{sec::sectionSpace}
From a mathematical point of view, function spaces are of particular interest, since they often provide (counter-) examples for deep, global analytic questions. Therefore, progress in the general theory is closely interwoven with concrete problems in function spaces. Moreover, maps between finite-dimensional manifolds constitute by definition the objects of interest in classical field theory.

This section is devoted to the manifold structure of the section space of a finite-dimensional fibre bundle. Mostly, the presentation is limited to compact base manifolds, but the general case will also be commented on. The manifold structure on the section space provides the basis to investigate important maps and their derivatives, for example the evaluation map or the pushforward. The work of \textcites[Example II.1.4]{Neeb2005}[Example 1.1.4 and 1.1.5]{Hamilton1982}[section 2.3]{Wang2012}[section 2.2]{BrunettiFredenhagenEtAl2012} constitute the literature underlying this section.

As a motivating example, first consider the space $C^\infty(U, \R^l)$ of smooth maps from an open subset $U \subseteq \R^r$ to $\R^l$. Slightly generalizing\cref{ex::locallyConvexSpaces:SmoothFunctionOnOpenSubset,ex::lcs:smoothFunctionOnOpenSubsetFrechet} the seminorms
\begin{equation} \label{eq::functionSpace:seminormSmoothFunctionOnOpenSubset}
	\norm{\phi}_{K, k} \defeq \sup_{\substack{x \in K \\ \abs{I} \leq k}}{\norm*{\difff{\phi}{x}{I}(x)}_{\R^l}} 	\qquad \text{for } \phi \in C^\infty(U, \R^l), k \in \N, K \subseteq U \text{ compact}
\end{equation}   
define a Fréchet structure on $C^\infty(U, \R^l)$. The resulting topology is called \emphDef{$C^\infty$ compact-open topology}. 

Now let $\pi: E \to M$ be a finite-dimensional vector bundle over a compact manifold $M$. The section space of $E$, denoted by $\secspace{E}$, carries a natural linear structure defined by fibrewise addition and scalar multiplication. To endow it with a Fréchet topology one proceeds in a similar way as in the above example but replaces the derivative in \eqref{eq::functionSpace:seminormSmoothFunctionOnOpenSubset} with an appropriate generalization. In particular, the following alternatives all achieve this and provide the same topology:
\begin{itemize}
 	\item Chart representation,
 	\item Covariant derivative,
 	\item Jet techniques.
\end{itemize} 

Going back to the chart representation is straightforward since the local model was already discussed above. In fact, the topology on $\secspace{E}$ is the initial topology with respect to the maps
\begin{equation}
	\secspace{E} \to \prod_{j} C^\infty(U_j, \R^l)
\end{equation}
which send a section to its local representative. Here $i$ indexes a countable and trivializing atlas $(U_j, \kappa_j)$ of $M$. Explicitly seminorms on $\secspace{E}$ are given by the same formula as in \eqref{eq::functionSpace:seminormSmoothFunctionOnOpenSubset} if all objects are replaced by their local expressions.
A more geometric approach uses higher covariant derivatives. Let $\nabla^E: \secspaceEx{E} \to \secspaceEx{T^*M \otimes E}$ and $\nabla^M: \secspaceEx{TM} \to \secspaceEx{T^*M \otimes TM}$ be torsion-free connections on $E$ and $TM$, respectively. They combine via the Leibniz identity to a torsion-free connection $\nabla: \secspaceEx{\otimes_k T^*M \otimes E} \to \secspaceEx{\otimes_{k+1} T^*M \otimes E}$. Exemplary for $k=1$, one has
\begin{equation}
	\nabla_X (\alpha \otimes \phi) = \nabla^{M, *}_X \alpha \otimes \phi + \alpha \otimes \nabla^E_X \phi,
\end{equation}
where $\nabla^{M,*}$ denotes the dual connection to $\nabla^M$. Now recursively define the higher covariant derivatives:
\begin{align}
	\nabla^0 \phi &= \phi\\
	\nabla^1_X \phi &= \nabla^E_X \phi\\
	\nabla^2_{X, Y} \phi &= \nabla_X (\nabla^1 \phi)(Y) = \nabla^1_X (\nabla^1_Y \phi) - \nabla^1_{\nabla^M_X Y} \phi\\[-1.5ex]
	&\;\; \vdots\\[-2ex] %
	\nabla^i_{X_1, \dotsc, X_i} \phi &= \nabla_{X_1} (\nabla^{i-1}_{X_2, \dotsc, X_i} \phi) - \sum_{r=2}^i \nabla^{i-1}_{X_2, \dotsc, \nabla^M_{X_1} X_r ,\dotsc, X_i} \phi.
\end{align} 
A Riemannian metric on $M$ and a fibre metric on $E$ are the last, necessary ingredients. Abbreviate
\begin{equation}
	\norm{\nabla^i \phi(m)} \defeq \sup_{\substack{X_r \in T_mM \\ \norm{X_r} \leq 1}} \norm{\nabla^i_{X_1, \dotsc, X_i} \phi (m)}
\end{equation}
and define seminorms
\begin{equation}
	\norm{\phi}_{K, k} \defeq \sup_{\substack{m \in K \\ i \leq k}}{\norm{\nabla^i \phi(m)}}
\end{equation}
for $\phi \in \secspace{E}$, some compact subset $K \subseteq M$ and $k \in \N$. This endows $\secspace{E}$ with a Fréchet topology which is independent from the previously used connections and metrics.

Finally, the just described topology is identified with the initial topology with respect to the infinite jet extension
\begin{equation}
	j^\infty: \secspace{E} \ni \phi \mapsto j^\infty\phi \in C(M, J^\infty E),
\end{equation} 
where the space of continuous maps $C(M, J^\infty E)$ carries the compact-open topology\footnote{The sets $\secspace{U}_{K} = \setc{g: M \to J^\infty E}{g(K) \subseteq U}$ for compact $K \subseteq M$ and open $U \subseteq J^\infty E$ constitute a sub-basis for the $C^0$-compact-open topology on $C(M, J^\infty E)$.} and the family $J^k  E$ endows $J^\infty  E$ with the inverse limit topology.  

\begin{remarks}
	\item The topology discussed above is tame. For a proof of this fact, the reader in referred to the literature \parencite[Corollary 1.3.9.]{Hamilton1982}.

	\item The Sobolev inequalities show that $\norm{\phi}_{K, k}$ could equivalently be replaced by the local Sobolev seminorms
	\begin{equation}
	 	\norm{\phi}^S_{K, k} \defeq \sup_{i \leq k} \int_K \norm{\nabla^i \phi(m)} \dif \mu(m),
	\end{equation} 
	where $K$ ranges over compact subsets with non-empty interior and $\dif \mu$ denotes the volume element induced from a Riemannian metric on $M$. 

	\item %
		Often a direct description of an open set around a given $\phi \in \secspace{E}$ is convenient. For the $C^0$ compact-open topology, a sub-basis for the neighbourhood around $\phi$ consists of sets of the form
		\begin{equation}
			\secspace{U}_{K} \defeq \setc{\psi \in \secspace{E}}{\psi(K) \subseteq U}
		\end{equation}
		for compact $K \subseteq M$ and open subsets $U \subseteq E$, which contain the image of $\phi$. Since the $C^\infty$ topology is finer, $\secspace{U}_{K}$ remains open. Because sets of this type occur often, the calligraphic notation also applies to other, arbitrary subsets of $E$. If no compact subset is given in the subscript, then the whole manifold $M$ is implied. That is, $\secspace{O} \equiv \setc{\psi \in \secspace{E}}{\psi(M) \subseteq O}$ for some subset $O \subseteq E$.

	\item The results easily extend to arbitrary fibre bundles $E$ (over compact $M$) by reducing the problem to a fibre-preserving tubular neighbourhood. This will render the section space of a fibre bundle an infinite-dimensional tame Fréchet manifold which locally resembles the section space of a vector bundle. 

	First, fix a section $\phi \in \secspace{E}$ and construct a tubular neighbourhood, that is, a diffeomorphism $\exp_\phi: V_\phi \to U_\phi$ from an open neighbourhood $V_\phi$ of the zero section in $\phi^* VE$ to an open neighbourhood $U_\phi$ of the image of $\phi$ in $E$. Now define
	\begin{align}
		\secspace{U}_\phi &\defeq \setc{\psi \in \secspace{E}}{\psi(M) \subseteq U_\phi}\\
		\secspace{V}_\phi &\defeq \setc{\varphi \in \secspaceEx{\phi^* VE}}{\varphi(M) \subseteq V_\phi}
	\end{align}
	and notice that $\secspace{V}_\phi$ is an open subset of the tame Fréchet space $\secspaceEx{\phi^* VE}$. The topology as well as the manifold structure results from the charts
	\begin{equation}
		\secspace{U}_\phi \to \secspace{V}_\phi, \quad \psi \mapsto \exp^{-1} \circ\, \psi.
	\end{equation}
	Since $\exp$ can be chosen to preserve fibres \parencite[Lemma 3.2.1]{Wang2012} the chart transitions originate from fibre-preserving maps and thus are tame smooth by \autoref{prop::sectionSpace:pushforwardTameSmooth} below.

	\item The generalisation to non-compact base manifolds $M$ is not straightforward. The problems prominently arise if the topology is derived from the jet extension $\secspace{E} \to C(M, J^\infty E)$. If $M$ is not compact, then the class of suitable and in some sense natural topologies of $C(M, J^\infty E)$ widens to a whole list and each of them comes with its own advantages and perils.

	The following sets will constitute the prototypes of neighbourhoods for the most important topologies on $C(M,N)$. Let $d$ be a metric on $M$, $S \subseteq M$ a subset and $\epsilon$ a positive-valued, continuous function on $M$. For $f \in C(M,N)$ define the set
	\begin{equation}
		\secspace{N}^d(f, S, \epsilon) = \setc{g \in C(M,N)}{d(f(x), g(x)) \leq \epsilon(x) \text{ for all } x \in S},
	\end{equation}
	which contains all functions $g$ that on $S$ are near to $f$. The most significant topologies on $C(M,N)$ differ by the specific choice of parameters.
	\begin{itemize}
		\item Whitney topology: The neighbourhood system comprises $\secspace{N}^d(f, M, \epsilon)$.
		\item Uniform topology: The neighbourhood system comprises $\secspace{N}^d(f, M, \epsilon)$, where $\epsilon$ is a constant function.   
		\item Compact-open topology: The neighbourhood system comprises $\secspace{N}^d(f, K, \epsilon)$, where $\epsilon$ is a constant function and $K \subseteq M$ a compact subset. 
	\end{itemize}
	In applications, one discovers that the Whitney topology is often too strong and the compact-open topology too coarse. The uniform topology has the disadvantage to depend on the chosen metric $d$. They all coincide for compact $M$.%

	Moreover, one cannot hope to end in the category of tame Fréchet spaces in general and thus interesting applications requiring the inverse function theorem are precluded. Even such a common Fréchet space as $C^\infty(\R, \R)$ fails to be tame.

	In their totality these problems and inconveniences make it necessary to retreat to the compact case. \qedhere
\end{remarks}

The rest of this section is concerned with maps between section spaces which will play a major role in later chapters. The \emphDef{evaluation map} 
\begin{equation}
	\ev: \secspace{E} \times M \to M, \quad (\phi, m) \mapsto \phi(m)
\end{equation}
distinguishes function spaces from other infinite-dimensional spaces. Its smoothness and derivative are captured in the following proposition.

\begin{proposition}[{\parencite[Proposition I.2]{NeebWagemann2007}}] \label{prop::sectionSpace:smoothEval}
	Let $E \to M$ be a finite-dimensional fibre bundle over a compact manifold $M$ and denote its section space by $\secspace{E}$. The evaluation map $\ev: \secspace{E} \times M \to M$ is smooth and its derivative is given by
	\begin{equation}
		(\ev)'_{\phi, m} (\Xi, X) = (\ev_m)'_\phi \, \Xi + \phi'_m X, \qquad \text{for } \Xi \in T_\phi \secspace{E}, X \in T_mM,
	\end{equation}
	where $\ev_m: \secspace{E} \to M$ is defined in the obvious way. %
\end{proposition}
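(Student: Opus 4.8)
The plan is to verify smoothness in charts, reduce to the ordinary evaluation map on a space of $\R^l$-valued maps, compute its first derivative, and iterate. Since smoothness of a map between manifolds is a property of its chart representations, fix $(\phi_0, m_0) \in \secspace{E} \times M$ and choose a chart $\kappa$ on $M$ around $m_0$, a chart on the total space $E$ around $\phi_0(m_0)$, and the section-space chart $\psi \mapsto \exp_{\phi_0}^{-1} \circ \psi$ from \cref{sec::sectionSpace} onto an open subset of the tame Fréchet space $\secspaceEx{\phi_0^* VE}$ (for a vector bundle $E$ this chart is unnecessary, as $\secspace{E}$ is then globally Fréchet). Because $M$ is compact, $\secspaceEx{\phi_0^* VE}$ carries the initial topology with respect to the local-representative maps into a finite product $\prod_j C^\infty(\kappa_j(U_j), \R^l)$, through which both evaluation at points near $m_0$ and differentiation factor. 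It therefore suffices to show that, for every finite-dimensional space $W$ and every open $V \subseteq \R^r$, the model evaluation map $\widehat{\ev}_W : C^\infty(V, W) \times V \to W$, $(\phi, x) \mapsto \phi(x)$, is smooth, and to keep track of its derivative.

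First, $\widehat{\ev}_W$ is jointly continuous: for a compact neighbourhood $K$ of $x_0$ one has $\abs{\phi(x) - \phi_0(x_0)} \leq \norm{\phi - \phi_0}_{K, 0} + \abs{\phi_0(x) - \phi_0(x_0)}$, and both terms are small for $(\phi, x)$ near $(\phi_0, x_0)$ in the $C^\infty$ compact-open topology. Next I compute the two partial derivatives. The map $\widehat{\ev}_W(\cdot, x)$ is linear and bounded by the seminorm $\norm{\cdot}_{\{x\}, 0}$, hence continuous by \cref{prop::lcs:linearMapContiniousBySeminorms}, so its derivative sends $\Xi$ to $\Xi(x)$. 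The map $\widehat{\ev}_W(\phi, \cdot) = \phi$ is a smooth $W$-valued function on $V$, so its derivative sends $h$ to $(\dif\phi)_x(h)$. Both partials are continuous on the relevant product: $(\phi, x, \Xi) \mapsto \Xi(x)$ is again a model evaluation map, and $(\phi, x, h) \mapsto (\dif\phi)_x(h)$ is the composition of the continuous linear differentiation operator $C^\infty(V, W) \to C^\infty(V, L(\R^r, W))$, the model evaluation map $\widehat{\ev}_{L(\R^r, W)}$, and the bilinear continuous evaluation $L(\R^r, W) \times \R^r \to W$. By the criterion for a map on a product of locally convex spaces to be $C^1$ (the proposition on partial derivatives), $\widehat{\ev}_W$ is $C^1$, with $(\dif\widehat{\ev}_W)_{\phi, x}(\Xi, h) = \Xi(x) + (\dif\phi)_x(h)$.

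Now iterate. The expression for $\dif \widehat{\ev}_W$ is assembled only from model evaluation maps (for the targets $W$ and $L(\R^r, W)$) and from continuous linear and bilinear maps. An induction on $n$, whose inductive step repeats the computation of the previous paragraph verbatim, shows that $\widehat{\ev}_W$ is of class $C^n$ for every $n$ and every finite-dimensional $W$, hence smooth; consequently $\ev$ is smooth. Transporting the local formula back through the three charts by the chain rule, and using the canonical identification of $T_\phi \secspace{E}$ with the space of vertical vector fields along $\phi$, the two local summands become $(\ev_m)'_\phi \Xi$ and $\phi'_m X$, giving $(\ev)'_{\phi, m}(\Xi, X) = (\ev_m)'_\phi \Xi + \phi'_m X$.

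I expect the main obstacle to be the continuity of the derivative — and, inductively, of every higher derivative — with respect to the $C^\infty$ compact-open topology: one must know that $(\phi, x, h) \mapsto (\dif\phi)_x(h)$ is jointly continuous, which rests on differentiation being a continuous operator for that topology together with the joint continuity of the model evaluation map itself. This is the point at which the specific topology on $\secspace{E}$ introduced in \cref{sec::sectionSpace} genuinely enters. The remaining work is the bookkeeping of the reduction step: checking that the tubular-neighbourhood charts on $\secspace{E}$ are mutually compatible (which uses \cref{prop::sectionSpace:pushforwardTameSmooth}) and that, over the compact base $M$, evaluation and differentiation on $\secspaceEx{\phi_0^* VE}$ factor through the finitely many local models $C^\infty(\kappa_j(U_j), \R^l)$ — routine, but it is where compactness of $M$ is used.
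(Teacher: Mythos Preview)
Your proof is correct and follows essentially the same strategy as the paper's: reduce to a local model evaluation map, establish continuity, compute the derivative, and observe that the derivative is again built from evaluation-type maps so that smoothness follows by induction. The only presentational differences are that the paper proves continuity via the sub-basis of the compact-open topology rather than a seminorm estimate, and for the induction it interprets the term $\phi'_m X$ directly as ``evaluation of $\phi': TM \to TE$ on $X_m$'' instead of factoring through the differentiation operator $C^\infty(V,W) \to C^\infty(V, L(\R^r,W))$ as you do; both routes amount to the same observation.
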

\begin{proof}
	Since the statement is local in nature, it is enough to consider the case of a vector bundle $E$.%

	By compactness of $M$, local uniform convergence and uniform convergence on compacta coincide. Thus, to address the question of continuity, one has to show that $\ev: C^\infty(U, V) \times U \to V$ is a continuous map, where $U \subseteq M$ is open and $V$ a finite-dimensional vector space. For this purpose, it is sufficient to consider only the $C^0$-compact-open topology on $C^\infty(U, V)$ since the $C^\infty$-topology is finer. Let $O \subseteq V$ be an open subset. Openness of the inverse image $\ev^{-1}(O)$ will be shown by constructing an open neighbourhood around every point $(f, x) \in \ev^{-1}(O)$. As $f$ is continuous, $f^{-1}(O)$ is an open neighbourhood around $x$. $M$ is locally compact and thus there exists an open neighbourhood $W$ of $x$ such that its closure $K \equiv \bar{W}$ is compact and lies in $f^{-1}(O)$. Hence $\secspace{O}_K = \setc{g: M \to N}{g(K) \subseteq O}$ is open and $\secspace{O}_K \times W$ is an open neighbourhood of $(f, x)$ contained in $\ev^{-1}(O)$.

	Once the tangent space $T_\phi \secspace{E}$ is identified with $\secspace{E}$ itself, the proposed formula for the derivative follows from the chain rule. In fact, one has
	\begin{equation}
		(\ev)': \secspace{E} \times \secspace{E} \times TM \to TM, \quad (\phi, \Xi, X_m) \mapsto (\ev_m)'_\phi \, \Xi + \phi'_m X_m,
	\end{equation}
	which is continuous by the same argumentation as above if the second term $\phi' X_m$ is interpreted as the evaluation of $\phi': TM \to TE$ on $X_m$. By induction $\ev$ is a smooth map.
\end{proof}

\begin{corollary} \label{prop::sectionSpace:tangentSpace}
	Let $E \to M$ be a finite-dimensional fibre bundle over a compact manifold $M$. The assignment of $\Xi$ to the map $m \mapsto (\ev_m)'_\phi \, \Xi$ identifies $T_\phi \secspace{E}$ with $\secspaceEx{\phi^* VE}$. Under this isomorphism the derivative of the evaluation map simplifies to 
	\begin{equation}
		(\ev)'_{\phi, m} (\Xi, X) = \Xi(m) + \phi'_m X, \qquad \text{for } \Xi \in \secspaceEx{\phi^* VE}, X \in T_mM. 
	\end{equation}
\end{corollary}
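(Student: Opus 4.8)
The plan is to read off the claimed identification directly from the chart structure on $\secspace{E}$ and then substitute it into the derivative formula of \autoref{prop::sectionSpace:smoothEval}. First I would fix $\phi \in \secspace{E}$ and recall from the preceding remarks that a chart around $\phi$ is furnished by the fibre-preserving tubular neighbourhood $\exp_\phi$, namely $\secspace{U}_\phi \to \secspace{V}_\phi$, $\psi \mapsto \exp_\phi^{-1} \circ \psi$, where $\secspace{V}_\phi$ is an open neighbourhood of the zero section in the tame Fréchet space $\secspaceEx{\phi^* VE}$ and $\phi$ itself corresponds to the zero section $0$. Since the tangent space of a locally convex space at any point is canonically the space itself, this chart yields a linear isomorphism $T_\phi\secspace{E} \cong \secspaceEx{\phi^* VE}$, which I denote $\Xi \mapsto \hat\Xi$. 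The whole content of the corollary is that this isomorphism coincides with $\Xi \mapsto \bigl( m \mapsto (\ev_m)'_\phi\,\Xi \bigr)$; granting this, \autoref{prop::sectionSpace:smoothEval} rewrites immediately to the asserted formula.

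To establish the coincidence, I would represent $\Xi \in T_\phi\secspace{E}$ by a smooth curve $c : (-\varepsilon, \varepsilon) \to \secspace{E}$ with $c(0) = \phi$ and $\dot c(0) = \Xi$, shrinking $\varepsilon$ so that $c$ takes values in $\secspace{U}_\phi$. Its chart representative $\tilde c(t) \defeq \exp_\phi^{-1} \circ c(t) \in \secspaceEx{\phi^* VE}$ then satisfies $\tilde c(0) = 0$ and $\dot{\tilde c}(0) = \hat\Xi$. Next I would note that $\ev_m : \secspace{E} \to E$, $\psi \mapsto \psi(m)$, is smooth --- being the composite of the map $\ev$ of \autoref{prop::sectionSpace:smoothEval} with $\psi \mapsto (\psi, m)$ --- and takes values in the fibre over $m$, so that $(\ev_m)'_\phi\,\Xi$ lands in the vertical subspace $(\phi^* VE)_m \subseteq T_{\phi(m)}E$. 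Writing $(\exp_\phi)_m$ for the fibrewise restriction of $\exp_\phi$ and invoking the defining property of a tubular neighbourhood --- $(\exp_\phi)_m(0) = \phi(m)$ and the derivative of $(\exp_\phi)_m$ at $0$ is the identity of $(\phi^* VE)_m$ --- the chain rule then gives
\begin{equation}
  (\ev_m)'_\phi\,\Xi = \diffAt{}{t}{t=0} \bigl( c(t)(m) \bigr) = \diffAt{}{t}{t=0} (\exp_\phi)_m \bigl( \tilde c(t)(m) \bigr) = \dot{\tilde c}(0)(m) = \hat\Xi(m).
\end{equation}
Hence $\Xi \mapsto \bigl( m \mapsto (\ev_m)'_\phi\,\Xi \bigr)$ is exactly the chart isomorphism $\Xi \mapsto \hat\Xi$, in particular a linear homeomorphism onto $\secspaceEx{\phi^* VE}$.

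Finally, substituting into \autoref{prop::sectionSpace:smoothEval} gives $(\ev)'_{\phi, m}(\Xi, X) = (\ev_m)'_\phi\,\Xi + \phi'_m X = \Xi(m) + \phi'_m X$, as claimed. The step I expect to be the main obstacle is justifying that the fibrewise tubular map has derivative the identity at the zero vector; the cleanest way to handle it is to reduce --- as in the proof of \autoref{prop::sectionSpace:smoothEval} --- to a trivialising bundle chart, in which $E = U \times F$ with $F$ finite-dimensional, $\secspaceEx{\phi^* VE}$ becomes $C^\infty(U, F)$, and $(\exp_\phi)_m$ is literally the translation $v \mapsto v + \phi(m)$, so its derivative at $0$ is manifestly the identity. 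Everything else --- producing the curve representative, interchanging differentiation in $t$ with evaluation at $m$ (licensed by the smoothness of $\ev$), and the closing substitution --- is routine.
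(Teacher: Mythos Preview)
Your proposal is correct. The paper states this result as a bare corollary without proof, relying on the reader to combine the chart description of $\secspace{E}$ via fibre-preserving tubular neighbourhoods (given in the remarks preceding \autoref{prop::sectionSpace:smoothEval}) with the derivative formula of that proposition; your argument spells out precisely this combination, and your reduction to the vector-bundle case---where the tubular map is fibrewise translation---is the same localisation the paper invokes in the proof of \autoref{prop::sectionSpace:smoothEval}.
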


\begin{proposition}[{\parencite[Theorem 2.2.6]{Hamilton1982}}] \label{prop::sectionSpace:pushforwardTameSmooth} %
	Let $E \to M$ and $F \to M$ be finite-dimensional fibre bundles over a compact manifold $M$ and $U \subseteq E$ an open subset. Denote the section spaces by $\secspace{E}$ and $\secspace{F}$, respectively. A smooth fibre-preserving map $f: E \supseteq U \to F$ yields a $0$-tame smooth map 
	\begin{equation}
		\secmap{f}_*: \secspace{E} \supseteq \secspace{U} \to \secspace{F}, \quad \phi \mapsto f \circ \phi,
	\end{equation}
	defined on the open subset $\secspace{U} = \setc{\phi \in \secspace{E}}{\phi(M) \subseteq U}$. Denote the fibre tangent map of $f$ by $Vf: VE \supseteq VU \to VF$. The derivative of $\secmap{f}_*$ at $\phi$ equals the pushforward of $Vf$ composed with $\phi$, that is
	\begin{equation}
		(\secmap{f}_*)'_\phi = \secmap{Vf}_* \circ \phi:\; \secspaceEx{\phi^* VE} \to \secspaceEx{(f \circ \phi)^* VF}, \qquad \Xi \mapsto f'_{\phi(\cdot)} \Xi. 
	\end{equation}
\end{proposition}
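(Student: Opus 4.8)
The plan is to reduce the claim to a statement about maps between section spaces of \emph{vector} bundles over pieces of a trivialising atlas of $M$, then to prove continuity together with the tame estimate by the higher chain rule, to identify the derivative through the difference-quotient characterisation of $C^1$-maps, and finally to obtain smoothness and tameness of all derivatives by induction applied to the fibre tangent map $Vf$.

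\textbf{Reduction to a local model.} First, $\secmap{f}_*$ is well defined: $f\circ\phi$ is a smooth section of $F$ as a composition of smooth maps, fibre-preservation of $f$ making the result a section. By the remarks of this section, the section space of a fibre bundle over the compact $M$ is locally modelled, through fibre-preserving tubular neighbourhoods, on the section space of a vector bundle, and in such charts $\secmap{f}_*$ is again a pushforward by a fibre-preserving map; hence it suffices to treat vector bundles $E,F$. Fixing a finite atlas of $M$ trivialising both, $\secmap{f}_*$ becomes in the induced charts a finite family of maps $\hat{f}_*\colon C^\infty(O,\R^l)\supseteq\secspace{O}\to C^\infty(O,\R^p)$, $\phi\mapsto\hat{f}(\cdot,\phi(\cdot))$, with $O\subseteq\R^r$ open, $\overline{O}$ compact, and $\hat{f}\colon O\times\R^l\supseteq U\to\R^p$ smooth. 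Continuity, the $0$-tame estimate and the derivative formula may all be verified chart by chart.

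\textbf{Continuity and the $0$-tame estimate.} Fix $\phi_0\in\secspace{O}$ and a compact $K\subseteq O$; since $\phi_0(K)$ is compact in the open set $U$, on a small $C^0$-neighbourhood $V$ of $\phi_0$ the points $(x,\phi(x))$, $x\in K$, lie in a fixed compact $K'\subseteq U$, so all partial derivatives of $\hat{f}$ are bounded on $K'$. The higher chain rule (Faà di Bruno) expresses $\partial^\alpha\bigl(\hat{f}\circ(\id,\phi)\bigr)$, $|\alpha|\leq k$, as a universal polynomial in the partial derivatives of $\hat{f}$ at $(x,\phi(x))$ and the derivatives $\partial^\beta\phi$, $|\beta|\leq k$; the only monomial containing a derivative of order $k$ is $\partial_v\hat{f}(x,\phi(x))\cdot\partial^\alpha\phi(x)$, every other one involving only $|\beta|<k$. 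Continuity for the $C^\infty$ compact-open topology is immediate. For the tame estimate one bounds every lower-order monomial by the interpolation inequalities $\norm{\partial^\beta\phi}_{K,0}\leq C\,\norm{\phi}_{K,0}^{\,1-|\beta|/k}\norm{\phi}_{K,k}^{\,|\beta|/k}$ valid on the compact $M$: since in each Faà di Bruno monomial of top weight $k$ the orders of the factors $\partial^\beta\phi$ add up to $k$, the product is dominated by $\norm{\phi}_{K,0}^{\,N-1}\norm{\phi}_{K,k}$, which on $V$ (where $\norm{\phi}_{K,0}$ is bounded) is linear in $\norm{\phi}_{K,k}$. Hence $\norm{\hat{f}_*(\phi)}_{K,k}\leq C_{K,k,V}\bigl(1+\norm{\phi}_{K,k}\bigr)$, the $0$-tame estimate of \cref{defn::FrechetTame:TameNonLinearMap}.

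\textbf{The derivative.} Writing the difference quotient pointwise and applying the fundamental theorem of calculus,
\begin{equation}
	\frac{\hat{f}(x,\phi(x)+t\Xi(x))-\hat{f}(x,\phi(x))}{t}=\int_0^1\partial_v\hat{f}\bigl(x,\phi(x)+st\Xi(x)\bigr)\,\Xi(x)\,\dif s,
\end{equation}
whose right-hand side extends continuously --- in fact smoothly --- to $t=0$ and, by the chain rule and \cref{prop::locallyconvexspaces:continuityParameterIntegral} on parameter-dependent integrals, depends continuously on $(\phi,\Xi,t)$ in the relevant $C^\infty$-topologies. Thus $\hat{f}_*$ satisfies the hypothesis of \cref{prop::locallyConvexSpace:charaterisationDifferentialByDifferentialQuotient} and is $C^1$ with $(\hat{f}_*)'_\phi(\Xi)=\partial_v\hat{f}(\cdot,\phi(\cdot))\,\Xi$. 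Passing back to the global picture and using the identification $T_\phi\secspace{E}=\secspaceEx{\phi^* VE}$ of \cref{prop::sectionSpace:tangentSpace}, this is precisely $(\secmap{f}_*)'_\phi=\secmap{Vf}_*\circ\phi$ with $Vf\colon VE\supseteq VU\to VF$ the fibre tangent map.

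\textbf{Higher derivatives and the main obstacle.} Since $Vf$ is again a smooth fibre-preserving map between the finite-dimensional bundles $VE\to M$ and $VF\to M$, the derivative $(\secmap{f}_*)'$ is, up to the canonical identifications, the pushforward $\secmap{Vf}_*$; applying the continuity and $0$-tame estimate just proved to $Vf$ and iterating shows that all derivatives $\dif^{n}\secmap{f}_*$ are continuous and $0$-tame, so $\secmap{f}_*$ is $0$-tame smooth. The principal difficulty is the second step --- squeezing the Faà di Bruno expansion into a bound that is \emph{linear} in the top seminorm, i.e.\ obtaining loss $0$ rather than a mere polynomial estimate --- which rests squarely on the interpolation (Gagliardo--Nirenberg) inequalities for the $C^k$-seminorms on the compact $M$. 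A minor but recurring nuisance is that the tangent bundle of $\secspace{E}$ is itself a section space whose fibre $\phi^* VE$ varies with $\phi$, so some bookkeeping is needed to phrase $(\secmap{f}_*)'$ as a pushforward and to run the induction.
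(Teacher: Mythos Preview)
Your proof is correct and overall parallels the paper's, with one genuine methodological difference in how continuity is established. The paper does \emph{not} extract continuity from the Faà di Bruno expansion; instead it observes that the $C^\infty$-topology on $\secspace{E}$ is initial for the jet prolongations $j^k$, so $\secmap{f}_*$ is continuous iff each $j^kf$ is, which reduces the question to the $C^0$ compact-open level where continuity of the pushforward is a direct consequence of the definition. This is cleaner and separates the topological statement from the analytic estimates. Your route --- doing continuity and the $0$-tame estimate in one stroke via the higher chain rule plus interpolation --- is more computational but perfectly valid, and has the advantage that the machinery set up for continuity is exactly what is reused for tameness (the paper eventually invokes the same Faà di Bruno/interpolation argument anyway for the tame estimate). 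For the derivative, both proofs use the difference-quotient characterisation of \cref{prop::locallyConvexSpace:charaterisationDifferentialByDifferentialQuotient}; the paper checks the limit pointwise, whereas you give the explicit integral representation and appeal to \cref{prop::locallyconvexspaces:continuityParameterIntegral}, which amounts to the same thing. The inductive step via $Vf$ is identical in both.
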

\begin{proof}
	A similar discussion as in the beginning of the proof of \autoref{prop::sectionSpace:smoothEval} reduces the whole problem to the case of a vector bundle. %

	For continuity, it is enough to consider the local representation in trivializing charts on $M$. Thus one identifies $f$ and $\secmap{f}_*$ with maps
	\begin{equation} \label{eq::sectionSpace:pushforwardLocalExpression} \begin{split}
		&f: V \times O \to V \times \R^r, \quad (x, v) \mapsto (x, \tilde{f}(x, v)),\\
		&\secmap{f}_*: C^\infty(V, \R^l) \supseteq \secspace{O} \to C^\infty(V, \R^r), \quad \phi \mapsto \tilde{f} \circ (\id_V \times \phi),
	\end{split}\end{equation}
	where $V \subseteq \R^n$ and $O \subseteq \R^l$ are open subsets (and $l$ and $r$ denote the fibre dimensions of $E$ and $F$, respectively). The induced map $\tilde{f}: V \times O \to \R^r$ is smooth. Since the $C^\infty$ topology is initial with respect to the infinite jet extension and the following diagram commutes,
	\begin{equation}
		\begin{tikzcd}
				C^\infty(V, \R^l) \arrow{r}{\secmap{f}_*} \arrow{d}{j^k} 	& 	C^\infty(V, \R^r) \arrow{d}{j^k}\\
				C(V, J^k (V, \R^l)) \arrow{r}{j^k f}					&	C(V, J^k (V, \R^r)) 
		\end{tikzcd}
	\end{equation}
	$\secmap{f}_*$ is continuous if and only if the $k$-jet prolongation of $f$ is continuous for all $k$. This argument reduces the problem to the continuous case $\secmap{f}_*: C(V, \R^l) \supseteq \secspace{O} \to C(V, \R^r)$. There the claim is a direct consequence of the decomposition of $\secmap{f}_*$ in $\phi \mapsto \id_V \times \phi$ and $\psi \mapsto \tilde{f} \circ \psi$, since both maps are continuous as apparent from the definition of the compact-open topology.  

	Consider now the asserted identity $(\secmap{f}_*)' = \secmap{Vf}_*$. The proof consists in showing that the difference quotient\footnote{ $(\secmap{f}_*)^{[1]}$ is well-defined since $\phi(M)$ and $\Xi(M)$ are compact and thus for small $t$ the image of $\phi + t \Xi$ lies in $U$.}
	\begin{equation}
		(\secmap{f}_*)^{[1]} (\phi, \Xi, t) \defeq \frac{\secmap{f}_* (\phi + t \Xi) - \secmap{f}_* (\phi)}{t}
	\end{equation}
	equals $(\secmap{Vf}_* \circ \phi)(\Xi)$ for $t=0$, since the identity $(\secmap{f}_*)'_\phi \Xi = (\secmap{f}_*)^{[1]} (\phi, \Xi, 0)$ holds by \autoref{prop::locallyConvexSpace:charaterisationDifferentialByDifferentialQuotient}. Here the tangent vector $\Xi$ is seen as an element of either $\secspace{E}$ or $\secspaceEx{\phi^* VE}$, as appropriate. It is enough to check $(\secmap{f}_*)^{[1]} (\phi, \Xi, 0) = (\secmap{Vf}_* \circ \phi)(\Xi)$ pointwise since both sides are elements of $\secspaceEx{(f \circ \phi)^* VF}$. After evaluation at a point, the claim follows directly from the same \autoref{prop::locallyConvexSpace:charaterisationDifferentialByDifferentialQuotient} but applied to the map $f$.
	
	Hence the derivative is again a pushforward and therefore is continuous due to the above argumentation. By induction, $\secmap{f}_*$ is smooth. Finally, the pushforward is tame smooth if it is a tame map by the same reasoning. To see the latter, the tame estimate 
	\begin{equation}
		\norm{\secmap{f}_* (\psi)}_{K, k} \leq C (1 + \norm{\psi}_{K, k})\quad \text{for all } k\in \N 
	\end{equation}
	will be shown on a neighbourhood of a fixed section $\phi$ and for some compact subset $K \subseteq M$. Let $C \subseteq E$ be a compact neighbourhood of the image $\phi(M)$ and let $\psi \in \secspace{C}$. By compactness of $M$, the bundle $E$ and $F$ trivialize over a finite open cover $V_s$ of $M$. On every $V_s$, the local expression \eqref{eq::sectionSpace:pushforwardLocalExpression} yields
	\begin{equation}
		\left(\diff{}{x}\right)^k (\secmap{f}_* \psi)(x) = \sum_{j + \abs{I} = k} \partial_1^j \partial_2^{\abs{I}} \tilde{f} (x, \psi(x))\, \difff{\psi}{x}{I_1} \dotsm \difff{\psi}{x}{I_n}. 
	\end{equation}
	Due to compactness of $C$, the map $\tilde{f}(x, \psi(x))$ and its derivatives are bounded for all $\psi \in \secspace{C}$. Hence 
	\begin{equation}
		\norm{\secmap{f}_* (\psi)}_{K, k} \leq \norm*{\difff{}{x}{k} (\secmap{f}_* \psi)}_{K, 0} \leq D \sum_{\abs{I} \leq k} \norm{\psi}_{K, I_1} \dotsm \norm{\psi}_{K, I_n}.
	\end{equation}
	The interpolation formula $\norm{\psi}_i^{k} \leq D \norm{\psi}_k^{i} \norm{\psi}_0^{k-i}$ for some $D > 0$\footnote{Contrary to good behaviour, all occurring constants are denoted by the same letter $D$ although they need not coincide.} from \parencite[p. 143f]{Hamilton1982} is needed to derive the estimate $\norm{\psi}_i \leq D \norm{\psi}_k^{i/k}$. Note that the superscripts are actually powers of seminorms and are not to be confused with the notation $\norm{\cdot}^k$ of \autoref{cha::lcm:gradedRiemannianGeometry}. Combining these inequalities leads, for $k > 0$, to
	\begin{equation}
	 	\norm{\secmap{f}_* (\psi)}_{K, k} \leq D \sum_{\abs{I} \leq k} \norm{\psi}_{K, I_1} \dotsm \norm{\psi}_{K, I_n} \leq D \norm{\psi}_{K, k}.
	\end{equation} 
	Since $\norm{\secmap{f}_* (\psi)}_{K, 0}$ equals some constant, this formula yields the desired estimate on $V_s$ with a constant $D_s$, which possible depends on $s$. The covering $V_s$ is finite and hence $D_s$ can be replaced by a global constant.  
\end{proof}
In particular, the previous proposition yields a functor from the category of fibre bundles over a fixed manifold with smooth fibre-preserving maps as morphisms to the category of tame manifolds and tame smooth maps.

\chapter{Locally convex Lie groups} \label{sec::lieGroups}
This chapter introduces the concept of infinite-dimensional Lie groups. While many elements of the usual, finite-dimensional theory carry over, there is one big obstacle: The exponential map need not exist or be a diffeomorphism around $0$. Hence, in general, there is no connection between the local description in terms of Lie algebra and global features of the theory. To avoid the partly pathological complications that come with such a disconnection of Lie algebraic and Lie group theoretic aspects, this work mostly restricts itself to the discussion of Lie groups with a locally diffeomorphic exponential map (these are called locally exponential Lie groups). After studying the basic properties of Lie groups, the focus shifts to different types of Lie subgroups. The last section is devoted to actions of Lie groups on manifolds and, in particular, a general slice theorem is proven.

The beginnings of infinite-dimensional Lie theory can be found for example in \parencite{Hamilton1982}, where the attention is restricted to Fréchet Lie groups. In their forthcoming book, \textcite{GlocknerNeeb2010} give an elaborate discussion of these aspects for arbitrary locally convex manifolds (see also the review paper \parencite{Neeb2006}). In some ways, the first two sections of this chapter can be seen as an embedding of the Fréchet results in this more general setting (and is therefore based upon the just given references). But to the best knowledge of the author, there is no general study of Lie group actions beyond the Banach category. Some specific examples are discussed in literature (e.g. \parencite{AbbatiCirelliEtAl1989,EbinMarsden1970}). The most systematic and integrated discussion known to the author is the PhD thesis of \textcite{Subramaniam1984}. However, even in this work the slice theorem is only stated for elliptic actions between section spaces. \Autoref{sec::lieGroupAction} collects and combines these ideas into a more general treatment.

\section{Lie groups and the exponential map}
A Lie group combines the algebraic structure of a group and the smooth structure of a manifold in a compatible way. 
\begin{defn} %
	A \emphDef{(locally convex) Lie group $G$} is a group carrying an additional smooth locally convex manifold structure such that multiplication and inversion 
	\begin{alignat}{2}
		&\mult: G \times G \to G,& \qquad &(g, h) \mapsto g \cdot h \\
		&\inv: G \to G,& &g \mapsto g^{-1} 
	\end{alignat}
	are smooth maps. A morphism of Lie groups is a smooth group homomorphism.

	If the manifold is a tame Fréchet manifold and the above maps are tame smooth, $G$ is called a \emphDef{tame Fréchet Lie group}.
\end{defn}
Keeping one factor in the multiplication map fixed induces diffeomorphisms 
\begin{equation}
	\La_g (h) \defeq g \cdot h, \qquad \Ra_g (h) \defeq h \cdot g,
\end{equation}
which are called left and right translation.

As usual, the tangent space $\liea{g} \equiv T_e G$ at the identity element is identified with the space of left invariant vector fields on $G$. More explicitly, this isomorphism assigns to every $A \in \liea{g}$ the vector field $A_g \defeq L'_g A$ and conversely every left invariant vector field gives an element of $\liea{g}$ upon evaluation at the identity. Note that $L'_g$ is actually a shorthand notation for $(L_g)'_e$. In the following, the evaluation of a tangent map at the identity is implicitly understood in order to streamline the notation. The commutator of vector fields induces a bracket operation $\commutator{\cdot}{\cdot}$ on $\liea{g}$, which in a chart is determined by the (Taylor expansion up to second order of) multiplication and thus is continuous (see \parencite[Remark II.1.8]{Neeb2006} for details). The concept of a Lie algebra formalises the preceding remark. 
\begin{defn} %
	A \emphDef{(locally convex) Lie algebra} is a locally convex space $\liea{g}$ endowed with a continuous, bilinear, alternating map $\commutator{\cdot}{\cdot}: \liea{g} \times \liea{g} \to \liea{g}$ such that the Jacobi identity
	\begin{equation}
		\commutator{\commutator{A}{B}}{C} + \commutator{\commutator{B}{C}}{A} + \commutator{\commutator{C}{A}}{B} = 0
	\end{equation}
	holds for all $A,B,C \in \liea{g}$.	A morphism of Lie algebras is a continuous, linear map which preserves the commutator. %
\end{defn} 
\begin{example}[Diffeomorphism group] \label{ex::diffGroup}
	Let $M$ be a finite-dimensional, compact manifold. This examples illustrates the tame Fréchet Lie group structure on the group of diffeomorphisms $G = \DiffGr(M)$, which will be modelled on the space of vector fields $\liea{g} = \vectorf{M}$.

	The subset of diffeomorphisms is open in $C^\infty(M,M)$ \parencite[Theorem 1.7]{Hirsch1997}, thus it is a submanifold. By \autoref{prop::sectionSpace:tangentSpace}, the tangent space at the identity is given as $T_{\id_M} \DiffGr(M) = \secspaceEx{\id_M^*TM} = \vectorf{M}$, which is a tame Fréchet space. The tame smoothness of the natural group operations, composition and inversion, takes more effort and the reader is referred to the literature \parencite[Theorem 2.3.5]{Hamilton1982}. %
	For later use, the tangent of the composition map $\comp: C^\infty(M,M) \times C^\infty(M,M) \to C^\infty(M,M), (f,g) \mapsto f \circ g$ is stated
	\begin{equation} \label{eq::diffGroup:tangentComposition}
		(\comp)'_{f,g} (X,Y) = f'Y + X \circ g
	\end{equation}
	and it is remarked that the first (second) term equals the derivative of the left (right) translation on $\DiffGr(M)$.

	In order to clarify the specific meaning of a vector field $X \in \vectorf{M}$ it will be denoted by $\secmap{X}$ if regarded as a left invariant vector field on $\DiffGr(M)$. The flow of $\secmap{X}$ is given by $\flow^\secmap{X}_t: f \mapsto f_t \defeq f \circ \flow^X_{t}$, because $\diff{}{t} f_t = f'X$ equals the left transport of $\secmap{X}$ to $f \in \DiffGr(M)$. With the aid of this information the Lie bracket on the Lie algebra of $\DiffGr(M)$ computes to
	\begin{align}
		\commutator{\secmap{X}}{\secmap{Y}}_{\id_M}	& \stackrel{\hphantom{\eqref{eq::diffGroup:tangentComposition}}}{=}
								\diffAt{}{t}{t=0} (\flow^{\secmap{X}}_{-t})' {\secmap{Y}}_{\flow^{\secmap{X}}_{t}(\id_M)}\\
													& \stackrel{\hphantom{\eqref{eq::diffGroup:tangentComposition}}}{=} 
								\diffAt{}{t}{t=0} (\flow^{\secmap{X}}_{-t})' {\secmap{Y}}_{\flow^X_{t}} 	\\
													& \stackrel{\hphantom{\eqref{eq::diffGroup:tangentComposition}}}{=} 
								\diffAt{}{t}{t=0} (\flow^{\secmap{X}}_{-t})' ( (\flow^X_t)' Y ) \\
													& \stackrel{\eqref{eq::diffGroup:tangentComposition}}{=} 
								\diffAt{}{t}{t=0} (\flow^{\secmap{X}}_{-t} \circ \La_{\flow^{X}_{t}})' Y 	\\
													& \stackrel{\hphantom{\eqref{eq::diffGroup:tangentComposition}}}{=}
								\diffAt{}{t}{t=0} (\flow^{X}_{t} \circ \Ra_{\flow^{X}_{-t}})' Y 	\\
													& \stackrel{\eqref{eq::diffGroup:tangentComposition}}{=} 
								\diffAt{}{t}{t=0} (\flow^{X}_{t})' Y_{\flow^{X}_{-t}}  = - \commutator{X}{Y},	
	\end{align}
	where $(\flow^{\secmap{X}}_{-t} \circ \La_{\flow^{X}_{t}})(f) = \flow^X_t \circ f \circ \flow^X_{-t} = (\flow^{X}_{t} \circ \Ra_{\flow^{X}_{-t}})(f)$ was used. That is, the Lie algebra structure is the negative of the usual one. 
	Summarizing, $\DiffGr(M)$ is a tame Fréchet Lie group. This result for the smooth diffeomorphisms is in sharp contrast to their finite-differentiable counterparts $\DiffGr^k(M)$, which are Banach manifolds but for which the composition is not smooth (this is another example where the loss of derivative manifests itself, see \eqref{eq::diffGroup:tangentComposition}). Thus $\DiffGr^k(M)$ is no Lie group. This observation implies that the theory of Banach Lie groups and their actions on Banach manifolds cannot be applied and forces one to consider `pseudo' Lie groups with non-smooth actions if one wants to stay in the Banach category (see for example \parencite{Gay-BalmazRatiu2006}).    
\end{example}
The exponential map connects information from the Lie algebra to properties of the Lie group. Thereby, it is a valuable tool in the study of Lie groups. However, in the locally convex setting, an exponential map does not need to exist or, even if it exists, it is not necessary a local diffeomorphism at $0$. To see this, recall that the exponential map is defined by the flow of the associated left invariant vector field and that the flow of a vector field is not ensured to exist beyond Banach manifolds. Likewise, the local diffeomorphism property follows from the inverse mapping theorem applied to $\exp'_0 = \id_{\liea{g}}$. Nevertheless, in applications one can often prove the existence of a locally diffeomorphic exponential map and thus it is worthwhile to study this subclass of Lie groups.
\begin{defn} \label{defn::lieGroup:expMap} %
	Let $G$ be a Lie group with Lie algebra $\liea{g}$. An \emphDef{exponential map} is a smooth map $\exp: \liea{g} \to G$ such that for every $A \in \liea{g}$ the associated curve $\gamma_A(t) \defeq \exp(tA)$ is a one-parameter group with $\diffAt{}{t}{t=0} \gamma_A(t) = A$. If there exists an exponential map, then $G$ is called a \emphDef{Lie group with exponential map}. If $\exp$ additionally is a local diffeomorphism around $0$, then $G$ is called a \emphDef{locally exponential Lie group}.
\end{defn} 
Since the solution of a differential equation in locally convex spaces need not be uniquely determined by its initial value, a priori there could exist many exponential maps for a given Lie group. But with the aid of the Lie group structure, one can prove uniqueness of the exponential map (see \parencite[remarks after Definition II.5.1]{Neeb2006}).   

\begin{example}[Exponential map of the diffeomorphism group] \label{ex::diffGroup:expMap}
	Let $G = \DiffGr(M)$ be the diffeomorphism group of a finite-dimensional, compact manifold $M$. Its Fréchet Lie group structure was examined in \autoref{ex::diffGroup}. The Lie algebra is identified with the space of smooth vector fields on $M$. Recall that every vector field on a compact manifold has a complete flow \parencite[Corollary IV.2.4]{Lang1998}. Thus the exponential map
	\begin{equation}
		\exp: \vectorf{M} \to \DiffGr(M), \qquad X \mapsto \flow^X_1 (\cdot)
	\end{equation}
	is well defined\footnote{This also implies that the exponential map of $\DiffGr(M)$ for a non-compact manifold $M$ is only defined on the subset of complete vector fields and thus is no `true' exponential map in the sense of \autoref{defn::lieGroup:expMap}.}. It is indeed the exponential map of $\DiffGr(M)$, since the scaling property of flows implies $\exp(t \secmap{X}) = \flow^X_t$ and thus the required one-parameter property. One can show that $\exp$ is smooth \parencite[p. 455]{Kriegl1997}.

	Nevertheless, $\exp$ is not locally surjective as was first observed by \parencite{Freifeld1968, Kopell1971}. For concreteness this result is shown here only for the case $M = S^1$. The idea is to state properties of diffeomorphisms in the image of $\exp$ and then construct a specific diffeomorphism not possessing these properties but lying arbitrarily close to the identity map.

	\emph{Claim: Every fixed point free diffeomorphism of the form $f = \exp(X)$ for $X \in \vectorf{S^1}$ is conjugate to some rotation $R_\vartheta$ about a fixed angle $\vartheta$.}\\
	To prove $f = g^{-1} \circ R_\vartheta\, \circ g$, it is enough to show that every nowhere vanishing vector field $X$ is $g$-conjugate to the constant vector field $Y_\vartheta: S^1 \ni \phi \mapsto \vartheta$. In fact, $g'X = Y_\vartheta \circ g$ implies $g \circ \flow^X_t = \flow^{Y_\vartheta}_t \circ \,g$, but $\flow^{Y_\vartheta}_t(\phi) = \phi + t \vartheta$ and so $\exp(Y_\vartheta) = R_\vartheta$. To construct such a diffeomorphism $g$, write the vector field $X$ as $X_\phi = X(\phi) \difp{}{\phi}$ for a non-zero function $X: S^1 \to \R$ and define
	\begin{equation}
		g: S^1 \to S^1, \qquad \phi \mapsto \vartheta \int_0^\phi \frac{1}{X(\tilde\phi)} \dif \tilde\phi.
	\end{equation}
	Now, $g'(\phi)X(\phi) = \vartheta$ is a constant vector field as desired. 

	In particular, if a diffeomorphism of the form $f = g^{-1} \circ R_\vartheta\, \circ g$ has \emph{one} periodic orbit $\set{x_k}_{k=1}^n$ of length $n$, than \emph{all} points are periodic with the same length. To see this, construct the sequence $y_{k+1} = f(y_k), y_1 = y$ for an arbitrary starting point $y \in S^1$. Since $f$ decomposes into $g^{-1} \circ R_\vartheta \circ g$ the recursion relation translates into $g(y_{k+1}) = g(y_k) + \vartheta$. Hence, $g(y_n) - g(y_1)= n \vartheta = g(x_n) - g(x_1) = 0$ and thus $y_n = y$.  

	As outlined above, the existence of a fixed point free diffeomorphism not conjugate to a rotation but lying close to the identity implies that $\exp$ is not locally surjective. For this purpose consider the map 
	\begin{equation}
		f: S^1 \to S^1, \qquad \theta \mapsto \theta + \frac{2 \pi}{n} + \varepsilon \sin^2(nx).  	
	\end{equation}
	For every neighbourhood of $\id_{S^1}$ one can find $n$ and $\varepsilon > 0$ such that $f$ belongs to that neighbourhood. Note that all potential fixed points are complex valued and thus $f$ is fixed point free. Furthermore, the only periodic orbit of length $n$ consists of the points $\theta = \sfrac{2 k \pi}{n}$ for $k=1, \dotsc, n$. Since these are isolated points, $f$ cannot be conjugate to a rigid rotation by the above remark.  

	The inverse function theorem cannot be applied in the present case since the derivative of $\exp$ is not injective in an open neighbourhood of $0$. In particular, $GL( \vectorf{S^1})$ is not open in $L( \vectorf{S^1}, \vectorf{S^1})$.
\end{example}
\newpage
\begin{example}[Current groups, \parencite{CirelliMania1985}] \label{ex::lieGroup:currentGroup}
	Let $\pi: F \to M$ be a fibre bundle over a compact, finite-dimensional manifold $M$ whose typical fibre is a fixed Lie group $G$ (not necessary compact as often assumed in similar circumstances). Denote by $\secspace{F}$ the space of smooth sections of $F$ endowed with the compact-open tame Fréchet manifold structure discussed in \autoref{sec::sectionSpace}.

	The group operations $\secmap{mult}$ and $\secmap{inv}$ are induced from the fibre preserving smooth maps
	\begin{alignat}{2}
		&\mult: F \times_M F \to F,& \qquad 	&(m, a,b) \mapsto (m, a \cdot b)\\
		&\inv: F \to F, 		   &			&(m, a) \mapsto (m, a^{-1}).
	\end{alignat}
	Here $F \times_M F$ %
	denotes the fibre product over $M$ of $F$ with itself and the canonical tame identification $\secspaceEx{F} \times \secspaceEx{F} = \secspaceEx{F\times_M F}$ is used in the definition of $\secmap{mult}$. By \autoref{prop::sectionSpace:pushforwardTameSmooth} the group operations are smooth tame and therefore $\secspace{F}$ is a tame Fréchet Lie group, which is called a $\emphDef{current group}$.

	Denote the fibre of $F$ over $m$ by $G_m$, its Lie algebra by $\lieaE{G_m}$ and the induced Lie algebra bundle by $\lieaE{F}$. With this notation and \autoref{prop::sectionSpace:tangentSpace}, the tangent space of $\secspace{F}$ at the identity $\secmap{e} \in \secspace{F}$ consists of sections of the bundle $\lieaE{F}$, since 
	\begin{equation}
		(\secmap{e}^*(VF))_m = V_{\secmap{e}(m)}F = T_{\secmap{e}(m)} G_m = \lieaE{G_m}.
	\end{equation}
	The Lie bracket on the Lie algebra $\lieaE{\secspace{F}}$ of $\secspace{F}$ is continuous as it is induced from the fibre-wise commutator. Verifying that the so defined bracket agrees with the canonical one on $T_{\secmap{e}} \secspace{F}$ is a matter of a straightforward calculation, which is left to the reader.

	The exponential map on the fibre $\exp_m: \lieaE{G_m} \to G_m$ gives rise to a smooth fibre-preserving bundle map $\exp: \lieaE{F} \to F$ which in turn induces a tame smooth exponential map $\secmap{exp}: \lieaE{\secspace{F}} \to \secspace{F}$. Being maps between finite-dimensional spaces, every $\exp_m$ is a local diffeomorphism at $0$. Hence there exists an open $0$-neighbourhood $V_m \subseteq \lieaE{G_m}$ on which $\exp_m$ is a diffeomorphism onto an open neighbourhood $U_m \subseteq G_m$ of the identity element $e_m$. Denote the inverse map by $\log_m$ and combine the $U_m$ to an open subset $U = \bigsqcup_{m \in M} U_m$ around the image of the identity section. Again by \autoref{prop::sectionSpace:pushforwardTameSmooth}, the induced map $\secmap{log}: \secspace{F} \supseteq \secspace{U} \to \lieaE{\secspace{F}}$ is a smooth tame map defined on the open subset $\secspace{U} = \setc{\phi \in \secspace{F}}{\phi(M) \subseteq U}$ and is clearly inverse to $\secmap{exp}$. In conclusion, all current groups are locally exponential Lie groups. 
\end{example}

\section{Lie subgroups}

\begin{defn} %
	Let $G$ be a Lie group with Lie algebra $\liea{g}$. A \emphDef{Lie subgroup} of $G$ is a subgroup $H \subseteq G$ which is also a  submanifold. That is, there exist a closed subspace $\liea{h} \subseteq \liea{g}$ and a chart $\kappa: U \to \liea{g}$ around the identity such that $\kappa(U \cap H) = \kappa(U) \cap \liea{h}$. If $H$ additionally is a splitting submanifold, then $\liea{h}$ is topologically complemented and $H$ is called a \emphDef{splitting Lie subgroup}. 
\end{defn}
\begin{remarks}
	\item Since $H$ is an initial submanifold (see \cref{prop::lcm:SubmanifoldIsInitial}), the restrictions $\mult_H: H \times H \to H$ and $\inv_H: H \to H$ of the group operations to $H$ are smooth maps. Thus $H$ is a Lie group itself. A simple calculation shows that the model space $\liea{h}$ is actually a closed Lie subalgebra and therefore it is also the Lie algebra of $H$ as expected. %

	\item Every Lie subgroup is closed in the ambient space. The proof is adopted from the Banach case. 
	Choose a chart $\kappa: U \to \liea{g}$ around $e$ with the submanifold-property as in the above definition. $\liea{h}$ is by definition a closed subspace and therefore renders $\kappa(U) \cap \liea{h}$ locally closed. By continuity, $U \cap H$ is also locally closed and thus it can be assumed to be closed in $U$ after shrinking $U$. Now let $h_\alpha$ be a net converging to $h \in G$. Since the elements $h^{-1} h_\alpha$ converge to $e$, they lie in $U$ for sufficiently large $\alpha$. By a possible adaptation of $U$, the double net $h_\beta^{-1} h_\alpha = (h^{-1} h_\beta)^{-1} (h^{-1} h_\alpha)$ lies in $U$ and thus also in $U \cap H$. The latter set is closed in $U$ and therefore performing the limit with respect to $\beta$ one obtains $h^{-1} h_\alpha$ as an element of $H$. Hence $h \in H$. 

	In finite dimensions every closed subgroup of a Lie group is automatically a Lie subgroup. This is not true in the infinite-dimensional setting, although one can prove that every finite-dimensional, locally compact subgroup of a locally exponential Lie group is a Lie subgroup, see \parencite[Theorem 7.3.14]{GlocknerNeeb2010} or \autoref{prop::lieGroup:finiteDimSubgroupIsStrongSplittingLieSubgroup} for a stronger result in the context of Fréchet manifolds.
\end{remarks}

In finite dimensions, for every Lie subgroup $H \subseteq G$, the set of left cosets $G/H \defeq \setc{gH}{g \in G}$ carries a manifold structure turning the natural projection $\pi: G \to G/H$ into a principal $H$-bundle. Recall that the local diffeomorphism 
\begin{equation}
	\mu: \liea{g} =  \liea{k} \oplus \liea{h} \to G, \qquad (X,Y) \mapsto \exp(X) \exp(Y)
\end{equation}
plays an important role in the construction of charts on $G/H$. This concept needs some refinement for locally convex manifolds since in that case $\mu$ fails to be a local diffeomorphism (and the exponential map does not need to exist in the first place).

\begin{proposition}[{\parencite[Proposition 7.1.21]{GlocknerNeeb2010}}] \label{prop::subliegroup:strongSplitting}
	Let $G$ be a Lie group and $H \subseteq G$ a subgroup. For a Lie group structure on $H$, the following are equivalent:
	\begin{thmenumerate}
		\item $H$ is a splitting Lie subgroup and there exists a unique smooth structure on the left cosets $G/H$ such that the natural projection $\pi: G \to G/H$ defines a smooth principal $H$-bundle structure. In this case, a map $f: G/H \to M$ to some manifold $M$ is smooth if and only if the induced map $\hat f = f \circ \pi: G \to M$ is smooth. \label{prop::subliegroup:strongSplitting_leftCosetPrincipalBundleAndUniversalProperty} %
		\item The inclusion $\iota: H \to G$ is a morphism of Lie groups and there exist an open subset $V$ around $0$ in some locally convex space $\liea{k}$ and a smooth map $\sigma:V \to G$ with $\sigma(0)=e$ such that
		\begin{equation}
		 	\mu: V \times H \to G, \qquad (X, h) \mapsto \sigma(X) h
		\end{equation} 
		is a diffeomorphism onto an open subset of G. In this case, $\mu( V \times H)$ is a tube around $H$ in $G$.
	\end{thmenumerate}
	The result also holds restricted to the category of tame Fréchet manifolds. If one (and thus both) of these conditions is satisfied, then $H$ is called a \emphDef{principal Lie subgroup}\footnote{In \parencite{GlocknerNeeb2010} such subgroups are merely named split Lie subgroups, but as the requirement of being a splitting submanifold is not enough to ensure the above properties a special name is chosen in this thesis.}.   
\end{proposition}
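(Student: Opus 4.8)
The plan is to prove the two implications separately. The conceptual point — and the reason this is the \emph{right} Fréchet substitute for the finite-dimensional statement — is that neither direction inverts a differential: the local inverse of the ``tube chart'' $\mu$ is provided by hypothesis in one direction and by a bundle trivialization in the other, so the Nash--Moser theorem is never invoked.

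\medskip
\noindent\textbf{$(ii)\Rightarrow(i)$.} First I would record that $H$ is a splitting submanifold. Set $W \defeq \mu(V\times H)$, an open subset of $G$. Injectivity of $\mu$ together with $\sigma(0)=e$ forces $\set{X\in V : \sigma(X)\in H} = \set{0}$: indeed $\sigma(X_0)\in H$ gives $\mu(X_0,e)=\sigma(X_0)=\mu(0,\sigma(X_0))$, hence $X_0=0$. Consequently $H=\sigma(0)\cdot H\subseteq W$ and $\mu^{-1}$ carries $H$ onto $\set{0}\times H$; composing $\mu^{-1}$ with $\id_V$ times a chart of the Lie group $H$ centred at $e$ yields a submanifold chart of $G$ modelled on $\liea{k}\times\liea{h}$, in which $H$ corresponds to the closed, complemented subspace $\set{0}\times\liea{h}$. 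That $\liea{h}\cong\iota'_e(\liea{h})$ is moreover closed and complemented inside $\liea{g}$ itself follows by differentiating $\mu$ at $(0,e)$: $\mu'_{(0,e)}(X,Y)=\sigma'_0(X)+\iota'_e(Y)$ is a topological isomorphism $\liea{k}\times\liea{h}\to\liea{g}$, so $\liea{g}=\sigma'_0(\liea{k})\oplus\iota'_e(\liea{h})$. Hence $H$ is a splitting Lie subgroup. Next I would build the atlas on $G/H$: for $g\in G$ the set $\La_g W$ is open and right-$H$-invariant, so $\bar U_g \defeq \pi(\La_g W)$ is open in $G/H$, and $\bar U_g\to V$, $\pi(g\,\sigma(X)\,h)\mapsto X$, is a well-defined bijection by injectivity of $\mu$. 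The transition maps have the explicit form $X\mapsto \pr_1\circ\mu^{-1}\circ\La_{g_2^{-1}g_1}\circ\sigma$, so they are smooth, and the resulting manifold is Hausdorff because $H$ is closed in $G$ (the remark following the definition of a Lie subgroup). The maps $\La_g W\to\bar U_g\times H$, $g\,\sigma(X)\,h\mapsto(\pi(g\,\sigma(X)\,h),h)$, are right-$H$-equivariant local trivializations, so $\pi$ is a smooth principal $H$-bundle; in particular it has the smooth local sections $X\mapsto g\,\sigma(X)$ (composed with the $\bar U_g$-chart). The universal property is then immediate: $f\circ\pi$ smooth when $f$ is, and conversely $f$ is locally $(f\circ\pi)$ composed with such a section. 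Uniqueness of the smooth structure is the standard argument that requiring $\pi$ to be a submersion admitting local sections forces precisely these charts.

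\medskip
\noindent\textbf{$(i)\Rightarrow(ii)$.} Since $H$ is a Lie subgroup it is an initial submanifold (\cref{prop::lcm:SubmanifoldIsInitial}) and the restricted group operations are smooth, so $\iota\colon H\to G$ is a morphism of Lie groups. Choose a local trivialization $\Phi\colon\pi^{-1}(\bar U)\to\bar U\times H$ of the principal bundle over an open neighbourhood $\bar U$ of the base point $o=\pi(e)$, and, shrinking $\bar U$, a chart $\chi\colon\bar U\to V\subseteq\liea{k}$ with $\chi(o)=0$. After right-translating the trivialization we may assume $\Phi^{-1}(o,e)=e$; put $\sigma(X)\defeq\Phi^{-1}(\chi^{-1}(X),e)$ for $X\in V$, a smooth map with $\sigma(0)=e$ and $\pi\circ\sigma=\chi^{-1}$. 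Then $\mu\colon V\times H\to G$, $(X,h)\mapsto\sigma(X)h$, is smooth with image the open set $\pi^{-1}(\bar U)$; its inverse $g\mapsto\bigl(\chi(\pi(g)),\,\sigma(\chi(\pi(g)))^{-1}g\bigr)$ is well defined — the second component lies in $H$ because $\sigma(\chi(\pi(g)))$ and $g$ lie in the same $\pi$-fibre — and is smooth, being $\bigl(\chi\circ\pi,\,\pr_2\circ\Phi\bigr)$ in disguise. Hence $\mu$ is a diffeomorphism onto an open subset of $G$ and $\mu(V\times H)$ is a tube around $H$, as required.

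\medskip
For the tame Fréchet statement one checks that every map produced above — $\sigma$, $\mu$, $\mu^{-1}$, the transition maps, the trivializations, the local sections — is a composite of tame smooth maps (charts, translations, group multiplication and inversion), so all constructions remain in the tame category; once more, no inverse function theorem is needed. The main obstacle, such as it is, lies in $(ii)\Rightarrow(i)$: checking that the $\bar U_g$-charts genuinely assemble into a \emph{Hausdorff} smooth manifold and that the trivializations satisfy the principal-bundle cocycle condition, together with the uniqueness clause — all of which rest on the closedness of $H$ in $G$ and on the \emph{explicit} formula for $\mu^{-1}$ rather than on any hard analysis.
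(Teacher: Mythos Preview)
Your proof is correct and follows essentially the same route as the paper's: in $(i)\Rightarrow(ii)$ you compose a local section of the principal bundle with a chart of $G/H$ to produce $\sigma$, and recognize the inverse of $\mu$ as the local trivialization; in $(ii)\Rightarrow(i)$ you use $\mu$ to exhibit $H$ as a splitting submanifold, then build the atlas on $G/H$ by left-translating the chart whose inverse is $\pi\circ\sigma$, with transition maps of the form $\pr_V\circ\La_{g_2^{-1}g_1}\circ\sigma$. Your write-up is slightly more explicit in places (the Hausdorff check via closedness of $H$, the formula for $\mu^{-1}$), but the architecture is the same.
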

\begin{proof}
	The proof will be given only for the $C^\infty$-setting, but also holds true if all occurrences of `smooth' are replaced by `tame smooth'. 

	Assume that (i) is satisfied. There exists a chart $\rho: U \to \rho(U) \equiv V \subseteq \liea{k}$ on $G/H$ centred at $\equivClass{e}$ such that $\pi$ admits a smooth section $\phi: U \to G$. For the smooth chart representation $\sigma \defeq \phi \circ \rho^{-1}: V \to G$, the map $\mu(X,h) \defeq \sigma(X) h$ is a diffeomorphism onto its image. Indeed, smoothness follows from writing $\mu$ as the composition of $\sigma$ and the group multiplication. The inverse corresponds to the local trivialisation $\pi^{-1}(U) \to U \times H$ over $U$ (composed with the chart $\rho$).    

	Consider now the reverse direction. The following comments give an outline of the proof. With the aid of the diffeomorphism $\mu$, a chart on $G/H$ around the identity coset is constructed which identify $\mu(V \times H)/H$ homeomorphically with $V$. This chart is now transported to the whole space $G/H$ by left translation, rendering the quotient space as a manifold. With respect to the induced atlas, $\pi$ is smooth and possesses $\sigma$ as a local section. This conceptual framework is now described in more detail. 

	Let $\sigma: \liea{k} \supseteq V \to G$ and $\mu: V \times H \to G$ as stated in the proposition. Since $\mu$ is a diffeomorphism onto its image, the subgroup $H \subseteq G$ is also a submanifold. The tangent $\mu'_{X, h}$ maps $\liea{k} \times \liea{h}$ linearly and topologically isomorphic to $\liea{g}$, thus $H$ is a splitting Lie subgroup. The charts on $G/H$ are constructed by considering the smooth map $\pi \circ \sigma: V \to \pi(W)$, where $W \subseteq G$ is the open image of $\mu$. By definition, $\pi \circ \sigma$ is surjective and the following short argument shows it is injective as well. Let $X$ and $\tilde{X}$ be two elements mapped to the same equivalence class $gH$. Then there exists $h \in H$ such that $\sigma(X) = \sigma(\tilde{X}) h$ holds. Since $\mu$ is a diffeomorphism, this relation implies $X = \tilde{X}$. Finally, $W / H = \pi(W)$ is open in $G/H$ and by $\pi \circ \sigma$ homeomorphic to $V$ (as $\mu$ is a homeomorphism) and thus its inverse $\rho: W/H \to V$ defines a chart at the identity coset. Left translation induces a homeomorphism $\check{\La}_{\equivClass{a}}$ on $G/H$ through which the charts are translated to the whole quotient space, $\rho_{\equivClass{a}} = \rho \circ \check{\La}_{\equivClass{a^{-1}}}: \check{\La}_{\equivClass{a}}(W/H) \to V$. In order to verify smoothness of chart transitions introduce the smooth map $\pr_V: W \to V$ by making the following diagram commutative.
	\begin{equation}
		\begin{tikzcd}
			V \times H \arrow{r}{\mu} \arrow[swap]{dd}{\pr_1}		&[5ex] W \arrow[swap]{ddl}{\pr_V} \arrow{rd}{\pi} 		&[4ex] \\
	    															&														& W/H \\
			V \arrow[swap]{r}{\sigma}								& G \arrow[swap]{uu}{\Ra_h} \arrow[swap]{ru}{\pi} 		&
		\end{tikzcd}
	\end{equation}
	Now the chart transition from $\rho_{\equivClass{a}}$ to $\rho_{\equivClass{b}}$ is expressed as
	\begin{equation}
		\rho_{\equivClass{b}} \circ \rho_{\equivClass{a}}^{-1} = \rho \circ \check{\La}_{\equivClass{b^{-1}}} \circ \check{\La}_{\equivClass{a}} \circ \pi \circ \sigma = \rho \circ \pi \circ \La_{b^{-1} a} \circ\, \sigma = \pr_V \circ \La_{b^{-1} a} \circ\, \sigma,
	\end{equation}
	where the necessary restrictions to appropriated domains is understood. Thus, these charts yield a smooth atlas modelling $G/H$ on the space $\liea{k}$. %

	Moreover, note that the chart representation $\rho_{\equivClass{a}} \circ \pi$ of $\pi$ equals $\pr_V \circ \La_{a^{-1}}$ and thus $\pi$ is a smooth submersion. The bundle $G \to G/H$ is locally trivial because it allows smooth sections of the form $W/H \ni \sigma(X)H \mapsto \sigma(X) \in G$ (and translations thereof).

	Finally, every map $f: G/H \to M$ factors locally through a section $\phi$ of $\pi$ and the induced map $\hat{f}: G \to M$. Hence, the smoothness of $f$ and $\hat{f}$ are equivalent. This also implies that the smooth structure on the quotient $G/H$ is the unique one fulfilling the required properties. 
\end{proof}

In the case of Lie groups with exponential maps a local product structure around the identity is already sufficient since it can be transported to the whole subgroup.
\begin{proposition} \label{prop::lcliegroup:strongSplittingLieSubgroupByLocalDiffeomorphism}
	Let $G$ be a Lie group with smooth exponential map and $H \subseteq G$ a splitting Lie subgroup. Denote the complement of $\liea{h}$ in $\liea{g}$ by $\liea{k}$ and let $V \subseteq \liea{k}$ be an open $0$-neighbourhood. If the map 
	\begin{equation}
	 	\mu: V \times H \to G, \qquad (X, h) \mapsto \exp(X) h
	\end{equation} 
	is a local diffeomorphism at $(0, e)$, then $H$ is a principal Lie subgroup.
\end{proposition}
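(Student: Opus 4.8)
The plan is to verify the criterion of \cref{prop::subliegroup:strongSplitting}: it suffices to exhibit an open $0$-neighbourhood $V' \subseteq \liea{k}$ and a smooth map $\sigma \colon V' \to G$ with $\sigma(0)=e$ such that $(X,h) \mapsto \sigma(X)h$ is a diffeomorphism from $V' \times H$ onto an open subset of $G$ (the other part of the second criterion in \cref{prop::subliegroup:strongSplitting}, that $\iota\colon H \to G$ is a morphism of Lie groups, holds automatically for a Lie subgroup, as noted in the remarks after its definition). The obvious candidate is $\sigma \defeq \exp|_{V'}$, which is smooth with $\exp(0)=e$, so that $\mu$ itself is the map in question; the entire work consists in upgrading the given statement ``$\mu$ is a local diffeomorphism at $(0,e)$'' to ``$\mu|_{V'\times H}$ is a global diffeomorphism onto an open subset of $G$''.

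First I would exploit the right $H$-equivariance $\mu(X,hh_0) = \mu(X,h)\,h_0 = \Ra_{h_0}\!\bigl(\mu(X,h)\bigr)$. Shrinking the neighbourhood provided by the hypothesis to a box $V_0 \times W_0$ with $0 \in V_0 \subseteq V$ open and $e \in W_0 \subseteq H$ open, the map $\mu$ restricts to a diffeomorphism of $V_0 \times W_0$ onto an open set $\Omega_0 \ni e$ in $G$. For each $h_0 \in H$ the displayed identity then shows that $\mu$ restricts to a diffeomorphism of $V_0 \times W_0 h_0$ onto the open set $\Ra_{h_0}(\Omega_0)$. Since the translates $W_0 h_0$ cover $H$, this already proves that $\mu$ is a local diffeomorphism at every point of $V_0 \times H$, hence in particular an open map there.

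The main obstacle --- and the only genuinely new point --- is global injectivity, which I obtain after one further shrinking of the $\liea{k}$-factor. Suppose $\exp(X_1)h_1 = \exp(X_2)h_2$ with $X_1,X_2 \in V'$ and $h_1,h_2 \in H$; then $k \defeq \exp(X_2)^{-1}\exp(X_1) = h_2 h_1^{-1}$ lies in $H$. Using a submanifold chart for $H$, fix an open $\widetilde\Omega \ni e$ in $G$ with $\widetilde\Omega \cap H \subseteq W_0$; by continuity of $(X_1,X_2) \mapsto \exp(X_2)^{-1}\exp(X_1)$ at the origin, choose $V'$ open with $0 \in V' \subseteq V_0$ so that $X_1,X_2 \in V'$ forces $\exp(X_2)^{-1}\exp(X_1) \in \widetilde\Omega$. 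Then $k \in \widetilde\Omega \cap H \subseteq W_0$, so $(X_1,e)$ and $(X_2,k)$ both lie in $V_0 \times W_0$ and satisfy $\mu(X_1,e) = \exp(X_1) = \exp(X_2)k = \mu(X_2,k)$; injectivity of $\mu$ on $V_0 \times W_0$ yields $X_1=X_2$ and $k=e$, whence $h_1 = h_2$. Thus $\mu|_{V'\times H}$ is an injective local diffeomorphism, hence a diffeomorphism onto its image, which is open because local diffeomorphisms are open maps. Applying \cref{prop::subliegroup:strongSplitting} with this $V'$ and $\sigma = \exp|_{V'}$ shows that $H$ is a principal Lie subgroup; since every map involved is tame, the same argument goes through verbatim in the tame Fréchet category.
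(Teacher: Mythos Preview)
Your proof is correct and follows essentially the same route as the paper's: use the right $H$-equivariance $\mu(X,hh_0)=\mu(X,h)h_0$ to spread the local diffeomorphism property from $(0,e)$ to all of $V_0\times H$, then shrink the $\liea{k}$-factor so that $\exp(X_2)^{-1}\exp(X_1)$ lands in a small $G$-neighbourhood whose intersection with $H$ lies inside the box $W_0$, and conclude injectivity from the known bijectivity of $\mu$ on $V_0\times W_0$. The paper writes $\exp(-Y)\exp(X)$ instead of $\exp(X_2)^{-1}\exp(X_1)$ (the same thing, since $\exp$ restricts to a one-parameter group on each line) and phrases the choice of the ambient neighbourhood via the subspace topology rather than an explicit submanifold chart, but the argument is otherwise identical.
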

\begin{proof}
	Since $\mu$ is a local diffeomorphism at $(0, e)$ there exists a neighbourhood $U_H \subseteq H$ around $0$ such that the restriction $\mu_{\restriction V \times U_H}$ is a diffeomorphism onto an open $e$-neighbourhood $U_G \subseteq G$ (after potentially shrinking $V$). Due to the identity $\mu(X, a h) = \mu(X, a) h$ the map $\mu$ is a local diffeomorphism (at every point $(X,h)$). 

	Thus, in order to apply \cref{prop::subliegroup:strongSplitting}, it is enough to show injectivity of $\mu$. Because $U_H$ is open in $H$, there exists an open $e$-neighbourhood $W_G \subseteq G$ such that $U_H = H \cap W_G$. By  shrinking $V$ further, the set $\exp(-V) \exp(V)$ can be assumed to lie completely in $W_G$. Now let $(X,a)$ and $(Y,b)$ be two points with the same image under $\mu$. Then 
	\begin{equation}
		\exp(-Y) \exp(X) = b a^{-1}.
	\end{equation}
	The left side lies in $W_G$ by assumption and the right side is an element of $H$, thus both expressions are contained in $U_H$. On the other hand, $\mu$ is bijective on $V \times U_H$ and hence the calculation $\mu(Y, b a^{-1}) = \mu(Y,b) a^{-1} = \mu(X,a) a^{-1} = \mu(X,e)$ shows the desired result $X=Y$ and $a=b$. 
\end{proof}

Since the derivative $\mu'_{0,e}: \liea{k} \times \liea{h} \to \liea{g}$ is just the direct sum isomorphism $\liea{k} \oplus \liea{h} = \liea{g}$, the inverse function theorem provides the necessary local diffeomorphism in order to apply \autoref{prop::lcliegroup:strongSplittingLieSubgroupByLocalDiffeomorphism} to Banach Lie groups (thus all splitting Lie subgroups of a Banach Lie group are principal). An application of \autoref{prop::tameFrechet:extendFiniteDimSplitting} yields the following important consequences for tame Fréchet Lie groups. 
\begin{corollary} \label{prop::lieGroup:finiteDimSubgroupIsStrongSplittingLieSubgroup} 
	Let $G$ be a tame Fréchet Lie group with tame smooth exponential map which is a local diffeomorphism around $0$. Then every finite-dimensional closed subgroup $H \subseteq G$ is a principal Lie subgroup. 
\end{corollary}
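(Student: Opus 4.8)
The strategy is to reduce the statement to \autoref{prop::lcliegroup:strongSplittingLieSubgroupByLocalDiffeomorphism} and \autoref{prop::tameFrechet:extendFiniteDimSplitting}. First I would observe that, being a finite-dimensional closed subgroup of the locally exponential Lie group $G$, the subgroup $H$ is a Lie subgroup by \parencite[Theorem 7.3.14]{GlocknerNeeb2010}. Its Lie algebra $\liea h = T_e H$ is a finite-dimensional subspace of $\liea g$, hence topologically complemented by \cref{prop::lcs:FiniteDimSubspaceTopComplemented}; fixing a complement $\liea k$ with $\liea g = \liea k \oplus \liea h$ exhibits $H$ as a splitting Lie subgroup. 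Moreover $\exp_{\restriction \liea h}$ coincides with the exponential map of the finite-dimensional Lie group $H$ and is therefore a diffeomorphism of some $0$-neighbourhood in $\liea h$ onto an $e$-neighbourhood in $H$. In view of \autoref{prop::lcliegroup:strongSplittingLieSubgroupByLocalDiffeomorphism} it then remains to show that
\[
	\mu\colon \liea k \supseteq V \times H \to G, \qquad (X,h) \mapsto \exp(X)\, h
\]
is a local diffeomorphism at $(0,e)$; using $\exp_{\restriction \liea h}$ as a chart on $H$ near $e$, this is equivalent to $\kappa\colon \liea k \times \liea h \supseteq V \times W \to G$, $(X,Y) \mapsto \exp(X)\exp(Y)$, being a tame local diffeomorphism at $(0,0)$.

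For the latter I would invoke the Nash--Moser inverse function theorem for tame Fréchet manifolds, so that everything reduces to producing a \emph{tame smooth family} of inverses of $\kappa'$ on a neighbourhood of $(0,0)$ --- which is the only genuinely non-formal step, since in the Fréchet setting invertibility of $\kappa'_{(0,0)}$ by itself is useless. Writing $\kappa = \mult \circ (\exp \times \exp)$ and transporting $\kappa'_{(X,Y)}$ back to $\liea g$ by the (tame smooth families of) left and right translations associated with $\exp(-X)$ and $\exp(-Y)$, the standard formula for the derivative of $\exp$ recasts $\kappa'_{(X,Y)}$ as a family of the shape
\[
	\Xi_{(X,Y)}(\delta X, \delta Y) = \Lambda(X)_{\restriction \liea k}(\delta X) + \mathrm{R}_{\liea h}(Y)(\delta Y),
\]
where $\Lambda(X)\colon \liea g \to \liea g$ and $\mathrm{R}_{\liea h}(Y)\colon \liea h \to \liea h$ (the latter because $\liea h$ is a subalgebra) depend tame smoothly on $X$ resp.\ $Y$ and equal the identity at the origin.

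Since $\exp$ is a local diffeomorphism around $0$, its inverse is tame smooth near $e$, so $\Lambda(X) = (\La_{\exp(-X)})'_{\exp X} \circ \exp'_X$ is invertible for $X$ near $0$ with tame smooth inverse $\Lambda(X)^{-1}$. Post-composing $\Xi$ with this family brings it into the normal form
\[
	\Lambda(X)^{-1} \circ \Xi_{(X,Y)}(\delta X, \delta Y) = \delta X + \psi(X,Y)(\delta Y), \qquad \psi(X,Y) \defeq \Lambda(X)^{-1} \circ \mathrm{R}_{\liea h}(Y)\colon \liea h \to \liea g,
\]
with $\psi$ a tame smooth family of linear maps out of the finite-dimensional space $\liea h$, injective at $(0,0)$ and hence --- injectivity of a finite-rank family being an open condition --- on a whole neighbourhood. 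This is exactly the situation of \autoref{prop::tameFrechet:extendFiniteDimSplitting} with $E = \liea h$, the closed subspace $\liea k \subseteq \liea g$ in the role of $Y$, target $\liea g$, and base point $a_0 = (0,0)$, for which the hypothesis $\liea g = \img \psi(0,0) \oplus \liea k = \liea h \oplus \liea k$ is precisely our splitting. (That proposition is phrased over a full tame Fréchet parameter space, but its conclusion is local in the parameter, so it applies verbatim to the neighbourhood at hand.) It yields a neighbourhood of $(0,0)$ on which the normal-form family, hence $\Xi$ and ultimately $\kappa'$, is bijective with a tame smooth family of inverses.

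By the Nash--Moser theorem $\kappa$ is therefore a tame local diffeomorphism at $(0,0)$, whence $\mu$ is a local diffeomorphism at $(0,e)$, and \autoref{prop::lcliegroup:strongSplittingLieSubgroupByLocalDiffeomorphism} delivers the claim. The main obstacle I expect is the manipulation in the third paragraph: coaxing $\kappa'$ into the precise form demanded by \autoref{prop::tameFrechet:extendFiniteDimSplitting}, which forces one to absorb the infinite-dimensional ($\liea k$-)variation of $\kappa'$ into the already-inverted family $\Lambda(X)^{-1}$ supplied by the hypothesis that $\exp$ is a local diffeomorphism; the finite-dimensionality of $\liea h$ is exactly what makes the remaining, genuinely Fréchet, part tractable.
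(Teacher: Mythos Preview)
Your proof is correct and follows exactly the route the paper intends: the corollary is stated immediately after the remark that $\mu'_{0,e}$ is the direct sum isomorphism and that ``an application of \autoref{prop::tameFrechet:extendFiniteDimSplitting}'' then yields the result via \autoref{prop::lcliegroup:strongSplittingLieSubgroupByLocalDiffeomorphism}. You have simply unpacked what this application means, and your normalisation step (post-composing with $\Lambda(X)^{-1}$ to force the $\liea k$-component of the linearised family to the identity) is precisely the manipulation needed to match the hypotheses of \autoref{prop::tameFrechet:extendFiniteDimSplitting}.
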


\section{Lie group actions} \label{sec::lieGroupAction} 
This section discusses the important concept of a Lie group action. It will be shown that, as in the finite-dimensional case, proper actions possess orbits which are (embedded) submanifolds. The core statement consists in a general slice theorem.

The main focus will be on tame Fréchet Lie groups since the desired results rely heavily on the Nash-Moser inverse function theorem. While the general theory of infinite-dimensional Lie theory gained much attention recently, the same thing cannot be said about Lie group actions. Clearly, there are some discussions of concrete actions of special Lie groups (especially the diffeomorphism group is to be named), but the principal textbooks and references \parencite{GlocknerNeeb2010,Kriegl1997,Hamilton1982} for infinite-dimensional Lie theory do not mention this topic beyond the very basic definitions. The only publication known to the author that comes close to the scope of this section is the PhD thesis of \textcite{Subramaniam1984}. Yet, even there, the main results are solely obtained for section spaces and elliptic actions. Nevertheless, many statements in this section are straightforward generalisations of \parencite{Subramaniam1984} making it the main foundation besides some classical, finite-dimensional literature \parencite{Abraham1980,Lang1998}.

\begin{defn} %
	Let $G$ be a Lie group and $M$ a manifold. A smooth map $\Upsilon: G \times M \to M$ is called a smooth \emphDef{left action} if 
	\begin{equation}
		\Upsilon_e = \id_M \quad \text{and} \quad \Upsilon_g \circ \Upsilon_h = \Upsilon_{gh}
	\end{equation}
	hold for the induced maps $\Upsilon_g: M \to M, m \mapsto \Upsilon(g,m)$. In other words, the assignment $g \mapsto \Upsilon_g$ is an abstract group homomorphism of $G$ to $\DiffGr(M)$. If it is instead an anti-homomorphism, then $\Upsilon$ is called a right action.   

	Furthermore, define $\Upsilon_m: G \to M$ by $ g \mapsto \Upsilon(g,m)$. Its image $G \cdot m \defeq \Upsilon_m(G) \subseteq M$ is the \emphDef{orbit} of the action at $m$. The inverse image $G_m \defeq \Upsilon^{-1}_m(m) = \setc{g \in G}{\Upsilon(g, m) = m}$ is called the \emphDef{stabilizer subgroup} of $m$.

	For every $A \in \liea{g}$ the \emphDef{Killing vector field} $A^*$ on $M$ is pointwise defined by
	\begin{equation}
		A^*_m \defeq \Upsilon'(0_m, A_e) = \Upsilon'_m A. 
	\end{equation}
\end{defn}
If $G$ has a smooth exponential map, then every Killing vector field $A^*$ has a flow, namely $(m, t) \mapsto \Upsilon(\exp(tA), m)$.

The special class of proper actions is convenient since their orbits are splitting submanifolds, as will be shown later. The definition of proper maps and equivalent characterisations for compactly generated Hausdorff spaces are discussed in \autoref{cha::topology:properMapCompactlyGeneratedSpace}. 
\begin{defn}
	A smooth left action $\Upsilon$ of a Lie group $G$ on a manifold $M$ is called \emphDef{proper} if its extension 
	\begin{equation}
		\Upsilon_{\text{diag}}: G \times M \to M \times M, \qquad (g,m) \mapsto (\Upsilon_g(m), m) 	
	\end{equation} 
	is a proper map.
\end{defn}
Proper actions possess useful properties but checking properness directly based on the definition can be intricate. The following proposition gives practical alternatives, which directly follow from \autoref{prop::topo:properMapEquivCharacterization}. %
\begin{proposition} \label{prop:lieGroupAction:properEquivalent}
	Let $\Upsilon: G \times M \to M$ be a smooth left action. The following statements are equivalent to properness of $\Upsilon$ for adequate conditions on the involved spaces:

	If $M$ is a compactly generated Hausdorff space:
	\begin{enumerate} \label{prop:lieGroupProperAction:closedOribtMapCompactStabilizer}
		\item $\Upsilon_{\text{diag}}$ is a closed map and every stabilizer subgroup is compact. 
	\end{enumerate}
	If $G$ and $M$ are even metric spaces:
	\begin{enumerate}[resume]
		\item Let $g_i$ and $m_i$ be arbitrary sequences in $G$ and $M$, respectively. If $m_i$ and $\Upsilon(g_i, m_i)$ converge, then the sequence $g_i$ has a convergent subsequence. \qedhere
	\end{enumerate}
\end{proposition}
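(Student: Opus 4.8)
The plan is to specialise the general characterisations of proper maps collected in \autoref{prop::topo:properMapEquivCharacterization} to the single map $\Upsilon_{\text{diag}}: G \times M \to M \times M$. Since properness of $\Upsilon$ is \emph{by definition} properness of $\Upsilon_{\text{diag}}$, the whole proof reduces to rewriting those abstract conditions in terms of the data of the action, and no new analytic input is required.

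For the first equivalence I would invoke the fact that a continuous map between spaces lying in the class of \autoref{prop::topo:properMapEquivCharacterization} is proper precisely when it is closed and has compact fibres. Closedness of $\Upsilon_{\text{diag}}$ is already one half of condition~(i), so the only thing left is to show that the fibres of $\Upsilon_{\text{diag}}$ are compact exactly when all stabilisers are. I would compute the fibre over a point $(m', m) \in M \times M$: it equals $\setc{(g, m)}{\Upsilon_g(m) = m'}$, which is empty unless $m'$ lies in the orbit $G \cdot m$; and in the latter case, fixing $g_0$ with $\Upsilon_{g_0}(m) = m'$, the relation $\Upsilon_g(m) = m'$ is equivalent to $g_0^{-1} g \in G_m$, so $h \mapsto (g_0 h, m)$ is a homeomorphism from $G_m$ onto the fibre (its inverse being left translation by $g_0^{-1}$, which is a homeomorphism of $G$). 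Hence every fibre is either empty or homeomorphic to a stabiliser subgroup, and `compact fibres' and `all stabilisers compact' coincide, giving~(i).

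For the second equivalence I would use the sequential description of proper maps available for metric spaces: $\Upsilon_{\text{diag}}$ is proper if and only if every sequence $(g_i, m_i)$ in $G \times M$ for which $\Upsilon_{\text{diag}}(g_i, m_i)$ converges admits a convergent subsequence. Now $\Upsilon_{\text{diag}}(g_i, m_i) = (\Upsilon(g_i, m_i), m_i)$ converges in $M \times M$ if and only if both $\Upsilon(g_i, m_i)$ and $m_i$ converge; and, since $m_i$ already converges, a subsequence of $(g_i, m_i)$ converges in $G \times M$ if and only if the corresponding subsequence of $g_i$ converges in $G$. Feeding these two observations into the sequential criterion produces exactly condition~(ii). (Here $G \times M$ and $M \times M$ are metrisable as finite products of metric spaces, so the criterion indeed applies.)

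The argument is bookkeeping, so I do not expect a genuine obstacle; the one point deserving care is the clause \emph{`for adequate conditions on the involved spaces'}. The characterisation of properness by closedness together with compact fibres is only guaranteed when source and target lie in the class to which \autoref{prop::topo:properMapEquivCharacterization} applies, and this is not automatically inherited by the products $G \times M$ and $M \times M$ — for compact generation one needs an additional hypothesis such as local compactness or first countability of the factors, which is precisely why the statement is phrased for $M$ compactly generated (and, implicitly, for $G$ of the appropriate type) rather than merely Hausdorff. I would therefore make the precise hypotheses explicit by pointing back to the appendix, and remark that in the metric case this subtlety disappears because metrisability passes to finite products.
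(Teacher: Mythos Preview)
Your proposal is correct and follows exactly the route the paper takes: the paper does not give a separate proof but simply states that the proposition ``directly follows from \autoref{prop::topo:properMapEquivCharacterization}'', and your argument is precisely the spelling-out of that reference --- applying the appendix criterion to $\Phi = \Upsilon_{\text{diag}}$, identifying the fibres with (translates of) stabilizer subgroups, and unpacking the sequential condition. Your remark about the product $M \times M$ needing to be compactly generated is a point the paper leaves implicit under the phrase ``for adequate conditions on the involved spaces''; you are right to flag it, and in the applications of the thesis the spaces involved are metrizable, so the issue does not arise.
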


The metric on $M$ and $G$ is often induced from a locally equivalent, graded Riemannian metric, see \cref{prop::riemannianGeometry:locEquivalentMetricInducesLengthMetric}. In fact, one can prove metrizability of a wide class of Lie groups.

\begin{corollary}[{\parencite[p. 53]{Subramaniam1984}}] \label{prop::lcLieGroup:metrisableZeroTame}
	Every $0$-tame Fréchet Lie group has a complete left (or right) invariant metric.
\end{corollary}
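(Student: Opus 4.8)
The plan is to build the desired metric out of a graded Riemannian metric on $G$ that is left-invariant, and then invoke \cref{prop::riemannianGeometry:locEquivalentMetricInducesLengthMetric} to obtain a compatible metric $d$ on $G$. Concretely, since $G$ is a $0$-tame Fréchet Lie group, its Lie algebra $\liea{g} = T_eG$ is a tame Fréchet space, hence comes with an increasing fundamental system of seminorms $\normdot_k$. I would first replace these, if necessary, by an equivalent directed system so that each $\normdot_k$ is induced by a semi-inner product $\langle\cdot,\cdot\rangle_k$ on $\liea{g}$ (a seminorm coming from a countable fundamental system can always be dominated by, and dominates, such a Hilbertian seminorm, e.g. by passing to the local Banach spaces and using that separable-type quotients can be renormed; in any case one only needs an equivalent directed family). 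Then define $g^k$ on $G$ by left-translating: for $X, Y \in T_gG$ set $g^k_g(X,Y) \defeq \langle L'_{g^{-1}} X, L'_{g^{-1}} Y\rangle_k$. By construction this family is left-invariant, and the induced seminorms $\normdot^k_g$ on $T_gG$ are, at every $g$, exactly the pullback under the (topological) isomorphism $L'_{g^{-1}}: T_gG \to \liea{g}$ of the fixed system $\normdot_k$, hence generate the correct topology on $T_gG$. This gives property (i) and (ii) in the definition of a graded Riemannian metric.

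The next step is to verify smoothness, property (iii): in a chart $\kappa$ around some point, $g^k$ must be a smooth map $\kappa(U)\times E\times E \to \R$. This follows because $g \mapsto L'_{g^{-1}}$ is, in charts, built from the derivative of the smooth multiplication map $\mult: G\times G\to G$, and composing a smooth locally-convex map with the fixed continuous bilinear form $\langle\cdot,\cdot\rangle_k$ preserves smoothness; here the $0$-tameness is not even needed for smoothness, only for the identification of $\liea{g}$ as a tame space so that the metric qualifies as \emph{graded}. The more delicate point is \emph{local equivalence} in the sense of \cref{defn::Riemann:LocallyEquivalentMetric}: for every $g_0\in G$ and $K>1$ one needs a chart around $g_0$ on which $\tfrac1K \normdot^k_y \le \normdot^k_x \le K\normdot^k_y$ uniformly in $k$ and in $x,y$ in the chart. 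By left-invariance it suffices to check this at $g_0 = e$. There one uses that the transition maps of $G$ are \emph{tame} (even just continuous suffices here) together with the fact that $g\mapsto L'_{g^{-1}}$ equals the identity at $g=e$: choosing a sufficiently small chart $U$ around $e$, the operator norm (with respect to each $\normdot_k$) of $L'_{g^{-1}}\circ (L'_{e^{-1}})^{-1} - \id$ can be made $<K-1$ simultaneously for all $k$, because the $0$-tame estimate on $\mult$ controls all seminorms of its derivative by the same seminorm (loss $0$), so the dependence on $k$ is uniform. This uniform-in-$k$ control is exactly where $0$-tameness, as opposed to mere tameness, enters, and I expect this to be the main obstacle — one has to argue carefully that the tame estimate for the derivative of multiplication translates into a bound on $\sup_k \norm{(L'_{g^{-1}} - \id)A}_k / \norm{A}_k$ that is small near $e$.

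Once the graded Riemannian metric $g^k$ is shown to be locally equivalent, \cref{prop::riemannianGeometry:locEquivalentMetricInducesLengthMetric} produces a metric $d$ on the connected component of $e$ (which is an open subgroup, hence a Lie group in its own right; one may either work on this component or remark that $d$ extends to all of $G$ by setting the distance between different components to be, say, $1$) that is compatible with the manifold topology. It remains to check left-invariance of $d$: each $d^k(p,q)$ is defined as an infimum of lengths $l^k(\gamma) = \int \norm{\dot\gamma(t)}^k_{\gamma(t)}\,\dif t$ over curves from $p$ to $q$, and since $L_a$ is a diffeomorphism with $L'_a$ an isometry of $\normdot^k$ by construction, $l^k(L_a\circ\gamma) = l^k(\gamma)$; taking the infimum gives $d^k(ap, aq) = d^k(p,q)$, and hence $d(ap,aq)=d(p,q)$ for the weighted sum. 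Thus $d$ is a left-invariant metric inducing the topology, and since $G$ is a complete Fréchet manifold the metric space $(G,d)$ is complete — or, if one wants completeness handed directly by the construction, one notes that $d$ and the canonical translation-invariant metric from \cref{thm::locallyconvexspaces:countableSeminormsEqualMetrizable} have the same Cauchy sequences. The right-invariant case is obtained by the symmetric construction using right translations $R'_g$. This completes the proof.
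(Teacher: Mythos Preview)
Your overall strategy---left-translate a fundamental system of seminorms from $\liea{g}$ to all of $G$, use $0$-tameness of $\La_g$ to get local equivalence, and then run the length-metric construction of \cref{prop::riemannianGeometry:locEquivalentMetricInducesLengthMetric}---is exactly the paper's approach. The paper's proof is essentially your outline with the technical verifications suppressed.

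There is, however, one unnecessary and shaky detour in your proposal: you try to replace the given seminorms on $\liea{g}$ by equivalent ones coming from semi-inner products, so that you obtain a genuine graded Riemannian metric and can invoke \cref{prop::riemannianGeometry:locEquivalentMetricInducesLengthMetric} verbatim. Your justification for this step (``a seminorm coming from a countable fundamental system can always be dominated by, and dominates, such a Hilbertian seminorm'') is not correct in general---not every Fr\'echet space admits a fundamental system of Hilbertian seminorms, and tameness does not obviously help. The paper sidesteps this issue entirely by observing that the \emph{proof} of \cref{prop::riemannianGeometry:locEquivalentMetricInducesLengthMetric} nowhere uses the inner-product structure of $g^k$; only the induced, locally equivalent seminorms $\normdot^k_m$ enter. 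Hence one may simply transport the given seminorms $\normdot_k$ on $\liea{g}$ by $L'_{g^{-1}}$ and rerun that argument directly, without ever producing semi-inner products. Dropping your Hilbertization step and making this observation instead gives a cleaner and fully rigorous proof.
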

\begin{proof}
	Note that the proof of \cref{prop::riemannianGeometry:locEquivalentMetricInducesLengthMetric} does not rely on the inner product structure, instead, only the induced locally equivalent seminorms are required. Hence, the same arguments prove this corollary once such a system of seminorms is constructed.

	The idea consists in translating a given directed fundamental system of seminorms $\normdot_k$ on the Lie algebra to the whole group. Now $0$-tameness of $\La_g$ (or $\Ra_g$) implies that the so-constructed seminorms are locally equivalent. 
\end{proof}

\begin{proposition} \label{prop:lieGroupProperAction:properties} %
	Let the Lie group $G$ act properly on a manifold $M$ via $\Upsilon: G \times M \to M$. Assume that the topology of $M$ is compactly generated. Then the following holds for every point $m \in M$:
	\begin{enumerate}
		\item The orbit map $\Upsilon_m: G \to M$ is proper and the orbit $G \cdot m$ is closed in $M$.
		\item $G / G_m$ and $M / G$ with the quotient topology are Hausdorff.
		\item The orbit map descends to a map $\check\Upsilon_m: G/G_m \to M$, which makes the following diagram commutative and is a homeomorphism onto $G \cdot m$. 	
		\begin{equation}
			\begin{tikzcd}
				G \arrow{r}{\Upsilon_m} \arrow[swap]{d}{\pi_{G_m}} & M \\
				G/G_m \arrow[swap]{ru}{\check\Upsilon_m} & 
			\end{tikzcd}
		\end{equation}
	\end{enumerate}
\end{proposition}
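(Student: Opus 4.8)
The plan is to deduce all three claims from the properness of $\Upsilon_{\mathrm{diag}}$, using the characterisation in \cref{prop:lieGroupAction:properEquivalent}: since $M$ is compactly generated Hausdorff, properness of $\Upsilon$ is equivalent to $\Upsilon_{\mathrm{diag}}$ being a closed map with all stabilizers compact. As preliminary observations, the stabilizer $G_m = \Upsilon_m^{-1}(m)$ is closed in $G$, being the preimage of a point under the continuous map $\Upsilon_m$ into the Hausdorff space $M$; and the fibre of $\Upsilon_m$ over a point $\Upsilon(g,m) \in G\cdot m$ is the coset $g\,G_m$, hence compact because $G_m$ is compact and left translation is a homeomorphism.

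First I would show that $\Upsilon_m\colon G \to M$ is a closed map. For closed $A \subseteq G$, the set $A \times \{m\}$ is closed in $G \times M$, so $\Upsilon_{\mathrm{diag}}(A \times \{m\}) = \Upsilon_m(A)\times\{m\}$ is closed in $M \times M$. As this set lies in the subspace $M \times \{m\}$, which the first projection maps homeomorphically onto $M$, it follows that $\Upsilon_m(A)$ is closed in $M$. Being a closed map with compact fibres into a compactly generated Hausdorff space, $\Upsilon_m$ is proper (\cref{prop::topo:properMapEquivCharacterization}). Taking $A = G$ in the closedness property shows that the orbit $G\cdot m = \Upsilon_m(G)$ is closed in $M$, which proves (i).

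For (iii), corestrict $\Upsilon_m$ to a map $G \to G\cdot m$, where $G\cdot m$ carries the subspace topology of $M$; this is a closed surjection, hence a quotient map, and it is constant on the cosets of $G_m$, so it induces a continuous bijection $\check\Upsilon_m\colon G/G_m \to G\cdot m$ with $\check\Upsilon_m\circ\pi_{G_m} = \Upsilon_m$. Since the canonical projection $\pi_{G_m}\colon G \to G/G_m$ is also a quotient map and $\check\Upsilon_m$ intertwines the two quotient maps, $\check\Upsilon_m$ is a homeomorphism, and the triangle commutes by construction. For (ii), the orbit projection $\pi\colon M \to M/G$ is open because $\pi^{-1}(\pi(U)) = \bigcup_{g\in G}\Upsilon_g(U)$ is open for every open $U\subseteq M$; the orbit equivalence relation on $M$ is the image $\Upsilon_{\mathrm{diag}}(G\times M)$, which is closed in $M\times M$ since $\Upsilon_{\mathrm{diag}}$ is a closed map, and an open quotient map modulo a closed equivalence relation has Hausdorff quotient, so $M/G$ is Hausdorff. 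Finally, $G/G_m$ is Hausdorff because $G_m$ is a closed subgroup of the topological group $G$ (alternatively, by (iii) it is homeomorphic to the subspace $G\cdot m$ of the Hausdorff space $M$).

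I expect the only real obstacle to be bookkeeping with properness outside the locally compact setting: one must be careful to invoke only the facts available here — "closed map with compact fibres is proper", "the corestriction of a proper (closed) surjection to its image is a quotient map", and the closedness of $\Upsilon_{\mathrm{diag}}$ from \cref{prop:lieGroupAction:properEquivalent} — rather than the familiar shortcut that "proper" means preimages of compacta are compact, which need not behave well in this generality. Once this is granted, the proposition is essentially formal and parallels the finite-dimensional proof.
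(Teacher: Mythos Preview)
Your proof is correct and follows essentially the same route as the paper: both arguments hinge on the equivalence ``proper $\Leftrightarrow$ closed with compact fibres'' (\cref{prop::topo:properMapEquivCharacterization}) and the criterion that an open quotient map modulo a closed equivalence relation has Hausdorff image. The only notable variation is in part~(iii): the paper shows directly that $\check\Upsilon_m$ is proper by writing $\check\Upsilon_m^{-1}(K) = \pi_{G_m}(\Upsilon_m^{-1}(K))$ and invoking compactness, whereas you argue that the corestriction $\Upsilon_m\colon G \to G\cdot m$ is a closed surjection, hence a quotient map, and therefore $\check\Upsilon_m$ intertwines two quotient maps for the same equivalence relation. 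Both are clean; your version has the mild advantage of not touching compact sets again. One small remark on your closing paragraph: in the paper's convention (\cref{prop::topo:properMapEquivCharacterization}), ``proper'' \emph{is} defined as ``preimages of compacta are compact'', and the equivalence with ``closed with compact fibres'' is exactly what makes it usable here---so the caution you voice is already built into the cited proposition.
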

\begin{proof}
	\begin{thmenumerate}*
		\item Properness of $\Upsilon_m$ follows by considering $\Upsilon_{\text{diag}}(\cdot, m) = \Upsilon_m(\cdot) \times \set{m}$ for fixed $m$. Now by \cref{prop::topo:properMapEquivCharacterization_closedMapAndInverseImageOfPointIsCompact} the image $\Upsilon_m(G) = G \cdot m$ is a closed subset of $M$. 
		
		\item For the Hausdorff property of the quotient spaces recall that the codomain $Y$ of a surjective, continuous, open map $f:X \to Y$ is Hausdorff if and only if $R_f = \setc{(x_1, x_2) \in X \times X}{f(x_1) = f(x_2)}$ is closed in $X \times X$, e.g. \parencite[Lemma 3.2]{Dieck1987}. In the present case the maps 
		\begin{equation}
			\pi_{G_m}: G \to G/G_m \quad \text{and} \quad \pi_M: M \to M / G
		\end{equation}
		are surjective and continuous by definition, as well as open since they are quotient maps with respect to (continuous) group actions. %
		Furthermore, 
		\begin{align}
			R_{\pi_{G_m}} &= \setc{(g,h) \in G \times G}{g G_m = h G_m} = \left((g,h) \mapsto g^{-1}h \right)^{-1}(G_m), \\
			R_{\pi_M} &= \setc{(m, \tilde{m}) \in M \times M}{G \cdot m = G \cdot \tilde{m}} = \Upsilon_{\text{diag}}(G,M)
		\end{align}
		are closed as the inverse image of the closed subset $G_m$ under a continuous map and as the image under a closed map, respectively.
		
		\item By the universal property of quotient spaces there exists a unique continuous and injective map $\check\Upsilon_m$ such that $\check\Upsilon_m \circ \pi_{G_m} = \Upsilon_m$. Moreover, the so-defined map is proper since for a compact subset $K \subseteq M$ the inverse image $\check\Upsilon_m^{-1}(K) = \pi_{G_m} (\Upsilon_m^{-1}(K))$ is compact as the projection of a compact subset. But every closed, continuous, injective map is a homeomorphism onto its image. \qedhere
	\end{thmenumerate}
\end{proof}

The previous \lcnamecref{prop:lieGroupProperAction:properties} collects the most important topological consequences of proper actions. Now, interactions with the smooth structure are investigated.
\begin{proposition} \label{prop::lieGroup:orbitClosedSubmanifold}
	Let $G$ be a tame Fréchet Lie group, $M$ a tame Fréchet manifold and $\Upsilon$ a tame smooth, proper action. Suppose the following conditions are fulfilled for some point $m \in M$:
	\begin{enumerate}
		\item The stabilizer $G_m$ is a principal tame Fréchet Lie subgroup of $G$.
		\item There exists a closed subspace $F_m \subseteq T_mM$ such that $T_mM = \img (\Upsilon_m)'_e \oplus F_m$ is a tame isomorphism.
		\item Locally and for $g \in G$ near $e$, the derivative $(\Upsilon_m)'_g$ followed by the projection on $\img (\Upsilon_m)'_e$ has a tame smooth family of inverses.
	\end{enumerate}
	Then the orbit $G \cdot m$ is a closed splitting submanifold of $M$.  
\end{proposition}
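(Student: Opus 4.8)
The plan is to build a submanifold chart around $m$ by exploiting the principal bundle structure of $\pi : G \to G/G_m$ from hypothesis (i) together with the Nash--Moser machinery. First I would recall from \cref{prop:lieGroupProperAction:properties} that, under properness and compact generation (which holds since $M$ is metrizable, cf. \cref{prop::riemannianGeometry:locEquivalentMetricInducesLengthMetric}), the orbit $G \cdot m$ is closed in $M$ and the induced map $\check\Upsilon_m : G/G_m \to M$ is a homeomorphism onto $G \cdot m$. So the topological side is already settled; what remains is to produce a smooth structure on $G \cdot m$ making it a splitting submanifold, and it suffices to exhibit a single submanifold chart near $m$, because translating by the group action (which is smooth) carries such a chart to any other orbit point.

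The key step is to factor $\Upsilon_m$ through the principal bundle. Using \cref{prop::subliegroup:strongSplitting}, hypothesis (i) gives an open $0$-neighbourhood $V \subseteq \liea{k}$ (a complement used in the bundle chart) and a smooth section $\sigma : V \to G$ with $\sigma(0) = e$ such that $V \ni X \mapsto \sigma(X)\,G_m$ is a chart of $G/G_m$ at $\equivClass{e}$. Then I would consider the composite $\phi : V \to M$, $\phi(X) \defeq \Upsilon_m(\sigma(X)) = \Upsilon(\sigma(X), m)$, which is tame smooth. Its derivative at $0$ is $(\Upsilon_m)'_e \circ \sigma'_0$; since $\sigma'_0 : \liea{k} \to \liea{g}$ is a tame topological isomorphism onto a complement of $T_e G_m = \ker (\Upsilon_m)'_e$, the map $\phi'_0$ is a tame topological isomorphism onto $\img (\Upsilon_m)'_e$. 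To promote this to a local diffeomorphism onto a submanifold I would compose with the projection $p : T_mM \to \img (\Upsilon_m)'_e$ along $F_m$ from hypothesis (ii): hypothesis (iii) says precisely that $g \mapsto p \circ (\Upsilon_m)'_g$ has a tame smooth family of inverses for $g$ near $e$, so $p \circ \phi : V \to \img (\Upsilon_m)'_e$ has a tame smooth family of inverses of its derivative on a neighbourhood of $0$. By the Nash--Moser inverse function theorem \cref{prop::locallyConvexSpace:NashMoserInverseTheorem}, $p \circ \phi$ is a tame local diffeomorphism at $0$, hence (after shrinking $V$) $\phi$ is a tame diffeomorphism onto a closed splitting submanifold of a chart of $M$ — concretely, in the chart $\kappa$ of $M$ at $m$ adapted to the decomposition $T_mM = \img (\Upsilon_m)'_e \oplus F_m$, the image $\phi(V)$ is the graph of a tame smooth map over an open subset of $\img (\Upsilon_m)'_e$, and straightening the graph yields a submanifold chart.

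Finally I would check that this local picture is compatible with the global orbit: since $\check\Upsilon_m$ is a homeomorphism onto $G\cdot m$ and $\sigma(V)G_m = \pi(\sigma(V))$ is an open neighbourhood of $\equivClass e$ in $G/G_m$, the set $\phi(V) = \Upsilon_m(\sigma(V)G_m) = (G\cdot m) \cap U$ for a suitable open $U \subseteq M$ — this is where closedness and the relative topology of the orbit enter — so the submanifold chart at $m$ genuinely cuts out the orbit, and translating by $\Upsilon_g$ for $g \in G$ produces compatible charts at every point of $G\cdot m$. That the chart transitions of the resulting atlas on $G\cdot m$ are tame smooth follows because they are restrictions of the tame smooth transition maps of $M$. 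I expect the main obstacle to be the careful bookkeeping in the middle step: verifying that hypothesis (iii) delivers exactly the tame smooth family of inverses in the form required by Nash--Moser (including tameness of all the higher derivatives of $p \circ \phi$, which needs tameness of $\sigma$ and of $\Upsilon$), and ensuring that shrinking $V$ at various stages does not destroy the identification $\phi(V) = (G\cdot m) \cap U$.
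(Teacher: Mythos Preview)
Your approach is correct and amounts to the same argument as the paper, but organised differently. The paper does not build the submanifold chart directly; instead it shows that $\check\Upsilon_m : G/G_m \to M$ is a splitting tame injective immersion which is also a topological embedding, and then invokes \cref{prop::submanifold:tameImmersionImpliesSplitSubmanifold} to conclude that the image is a splitting submanifold. The Nash--Moser step you carry out by hand on $p \circ \phi$ is precisely what is packaged inside the proof of \cref{prop::submanifold:tameImmersionImpliesSplitSubmanifold} (the extension $I(x,y) = (g(x), y + f(x))$ there plays the role of your graph-straightening). Your route is more self-contained; the paper's route has the advantage that the equivariance $\Upsilon_{g^{-1}} \circ \Upsilon_m = \Upsilon_m \circ \La_{g^{-1}}$ is used explicitly to trivialise $\check\Upsilon_m^* TM$ globally along the orbit, which makes the subbundle structure of $T\check\Upsilon_m$ transparent without having to translate a single chart around by the action.

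One point you pass over that the paper treats carefully: the identity $\ker (\Upsilon_m)'_e = T_e G_m$, equivalently the injectivity of $(\check\Upsilon_m)'$. You assert that $\sigma'_0$ maps $\liea{k}$ isomorphically onto a complement of $\ker (\Upsilon_m)'_e$, but a priori you only know it complements $T_e G_m$. The paper verifies the inclusion $\ker (\Upsilon_m)'_e \subseteq \ker (\pi_{G_m})'_e$ by writing an arbitrary curve through $e$ in the product coordinates $\mu(X,h) = \sigma(X) h$ furnished by the principal subgroup structure and observing that $\Upsilon_m(\sigma(\gamma_V(t)) \gamma_{G_m}(t)) = \Upsilon_m(\sigma(\gamma_V(t)))$, so the derivative vanishes only if $\dot\gamma_V(0) = 0$. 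You should insert this short argument before claiming $\phi'_0$ is an isomorphism onto $\img(\Upsilon_m)'_e$.
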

\begin{proof}
	Since G acts properly, \autoref{prop:lieGroupProperAction:properties} implies that the action descends to a homeomorphism $\check\Upsilon_m: G/G_m \to G \cdot m$. In particular, the orbit $G \cdot m$ is closed in $M$. Due to \autoref{prop::submanifold:tameImmersionImpliesSplitSubmanifold} it is enough to show that $\check\Upsilon_m$ is a splitting tame, injective immersion:
	\begin{itemize}
		\item Tame smoothness: Since $G_m$ is a principal Lie subgroup the universal property of \cref{prop::subliegroup:strongSplitting_leftCosetPrincipalBundleAndUniversalProperty} yields smoothness of $\check\Upsilon_m$. Moreover, $\check\Upsilon_m$ is tame smooth since it locally factors through $\Upsilon_m$ and a tame smooth section of $\pi_{G_m}$.

		\item Injectivity of $\check\Upsilon_m$: clear.

		\item Injectivity of $\check\Upsilon'_m$: Due to the identity $(\Upsilon_m)'_g = (\check\Upsilon_m)'_{g G_m} \circ (\pi_{G_m})'_g$ it is enough to show $\ker (\Upsilon_m)'_g \subseteq \ker (\pi_{G_m})'_g$ for all $g \in G$. By equivariance of $\Upsilon_m$ and $\pi_{G_m}$ it actually suffices to consider the case $g=e$. Since $G_m$ is a principal Lie subgroup, there exists a smooth map $\sigma: V \to G$ from an open subset $V \subseteq \liea{k}$ of some locally convex vector space $\liea{k}$ such that $\mu: V \times G_m \to G, \mu(X, h) = \sigma(X)h$ is a diffeomorphism onto an open subset of the identity in $G$. Thus a smooth curve $\gamma: [0,1] \to G$ with $\gamma(0) = e$ induces smooth curves $\gamma_V$ and $\gamma_{G_m}$ in $V$ and $G_m$, respectively, for sufficiently small times $t$. That is, $\gamma(t) = \mu(\gamma_V(t), \gamma_{G_m}(t)) = \sigma(\gamma_V(t)) \gamma_{G_m}(t)$. Now 
		\begin{align}
			(\Upsilon_m)'_e \equivClass{\gamma} &= \diffAt{}{t}{t=0} (\Upsilon_m \circ \gamma)(t)\\
												&= \diffAt{}{t}{t=0} \Upsilon\Bigl(\sigma(\gamma_V(t)) \gamma_{G_m}(t), m\Bigr) \\
												&= \diffAt{}{t}{t=0} \Upsilon\Bigl(\sigma(\gamma_V(t)), m\Bigr).
		\end{align}
		By definition $\sigma(\gamma_V(t))$ does not lie in the stabilizer of $m$ except for the case where it is the identity element and hence $(\Upsilon_m)'_e \equivClass{\gamma}$ vanishes only if $\gamma_V(t)$ is constant for small $t$. On the other hand, $\pi_{G_m}$ corresponds in these product coordinates to the projection on the $V$-component and thus $(\pi_{G_m})'_e \equivClass{\gamma} = \dot\gamma_V$. This shows the desired relation $\ker (\Upsilon_m)'_e \subseteq \ker (\pi_{G_m})'_e$ (actually equality holds). 

		\item Bundle structure of $T \check\Upsilon_m$ and tame inverse: In contrast to the general case, the splitting $T_mM = \img (\Upsilon_m)'_e \oplus F_m$ at the point $m$ integrates to a global splitting subbundle with the aid of the group structure. As for injectivity above, locally one can work with $\Upsilon$ instead of $\check\Upsilon$. Furthermore, by equivariance it is enough to construct the subbundle-chart only for some identity neighbourhood $U \subseteq G$. If the pullback-bundle $\Upsilon_m^* TM$ is trivialized by $\tau: (\Upsilon_m^* TM)_{\restriction_U} \to U \times T_mM$ in such a way that $\tau(T \Upsilon_m) = U \times \img (\Upsilon_m)'_e$ holds\footnote{Recall the notation $T \iota$ for the image-bundle $\bigsqcup_{s \in S} \img (\iota'_s) \subseteq TM$ of an immersion $\iota: S \to M$.}, then $T \check\Upsilon_m$ is a splitting subbundle of $\check\Upsilon_m^* TM$ by hypothesis (ii). Define
		\begin{equation}
			\tau(g, X) \defeq (g, (\Upsilon_{g^{-1}})'_{\Upsilon(g,m)} X) \qquad \text{for } g \in U, X \in T_{\Upsilon(g,m)}M.
		\end{equation}
		Due to the equivariance relation $\Upsilon_{g^{-1}} \circ \Upsilon_m = \Upsilon_m \circ \La_{g^{-1}}$, vectors of the form $X = (\Upsilon_m)'_g Y$ for $Y \in T_gG$ are mapped to $(\Upsilon_m)'_e (\La_{g^{-1}})'Y$ under $\tau$. Therefore, $\tau(T \Upsilon_m) = U \times \img (\Upsilon_m)'_e$ as desired and $T \check\Upsilon_m$ is a splitting subbundle modelled on the typical fibre $\img (\Upsilon_m)'_e$. Relative to this trivialization the bundle map $\Psi^{\check\Upsilon_m}: T \check\Upsilon_m \to T (G/G_m)$ is just the inverse map to
		\begin{equation}
			\begin{tikzcd}
				T_gG \arrow{r}{ (\Upsilon_m)'_g} & T_{\Upsilon(g,m)}M \cong \img (\Upsilon_m)'_e \oplus F_m \arrow{r}{\pr_1} & \img (\Upsilon_m)'_e.
			\end{tikzcd}
			\vspace{-2ex}
		\end{equation}
		Hence by assumption (iii) the family $\Psi^{\check\Upsilon_m}$ is smooth tame. \qedhere
	\end{itemize}
\end{proof}

Under the same assumptions as in the previous proposition the bundle
\begin{equation}
	N O \defeq \bigcup_{g \in G} \set{\Upsilon(g,m)} \times (\Upsilon_g)'_m F_m
\end{equation}
above the orbit $O = G\cdot m$ is the smooth normal bundle. That is, $N O$ is a smooth tame subbundle of $TM_{\restriction O}$ fulfilling $TM_{\restriction O} = TO \oplus NO$. This claim is now substantiated. Denote the projection $NO \to O$ by $\pi_N$ and notice that it is a tame smooth submersion as the restriction of the natural projection $TM \to M$ to $NO$. By the previous proposition, the orbit $O$ is a splitting submanifold of $M$ and can be identified with $G/G_m$ via the tame diffeomorphism $\check\Upsilon_m$. Under this identification, there exists a local section $\sigma: U \to G$ of $\Upsilon_m$, defined on an open subset $U \subseteq O$. Then, every point $\tilde{m} \in U$ can be written as $\tilde{m} = \Upsilon(\sigma(\tilde{m}), m)$ and thereby the map
\begin{equation}
	U \times F_m \ni (\tilde{m}, X) \mapsto (\tilde{m}, (\Upsilon_{\sigma(\tilde{m})})'_m X) \in N_{\tilde{m}}O
\end{equation}
yields a trivialisation of $NO$. Varying $U$ over the whole orbit endows $NO$ with bundle charts. The above expression for the charts, together with the fact that $NO \to TM_{\restriction_O}$ is fibrewise a splitting inclusion, renders $NO$ a subbundle. Finally, $N_{\Upsilon(g,m)}O = (\Upsilon_g)'_m N_mO$ and $T_{\Upsilon(g,m)}O = \img (\Upsilon_{\Upsilon(g,m)})'_e = (\Upsilon_g)'_m T_mO$ imply 
\begin{equation}
	T_{\Upsilon(g,m)}M = (\Upsilon_g)'_m (T_mO \oplus N_mO) = T_{\Upsilon(g,m)}O \oplus N_{\Upsilon(g,m)}O.
\end{equation}
\section{Slice theorem}
Slices provide a valuable tool to investigate group actions, since they reduce a $G$-action on a manifold $M$ to an action of the stabilizer subgroup on some invariant submanifold. The following definition stems from \parencite[Definition 1.1]{IsenbergMarsden1982} and is a slight variation of the standard finite-dimensional version. The latter is too rigid in infinite dimensions.
\begin{defn}
	Let $\Upsilon: G \times M \to M$ be a smooth action of a Lie group $G$ on a manifold $M$. A \emphDef{slice} at $m \in M$ is a submanifold $S \subseteq M$ containing $m$ such that
	\begin{enumerate}
		\item $S$ is invariant under the induced action of $G_m$. 
		\item $(\Upsilon_g S) \cap S \neq \emptyset$ implies $g \in G_m$.
		\item There exists a local section $\chi: G/G_m \supseteq U \to G$ defined in a neighbourhood $U$ of the identity coset such that the map
		\begin{equation}
			\chi^S: U \times S \to M, \qquad (\equivClass{g}, s) \mapsto \Upsilon (\chi(\equivClass{g}), s)
		\end{equation}
		is a diffeomorphism onto a neighbourhood $V \subseteq M$ of $m$. \qedhere
	\end{enumerate}
\end{defn}
A slice need not exist in general and if it exists it is not necessarily unique. They provide a trade-off between being local in $M$ and testing global aspects of the group action. The former is displayed by choosing the submanifold $S$ rather small. In point (ii) one runs through the whole group to ensure that the orbit $G \cdot m$ intersects $S$ only at $m$. Furthermore, condition (iii) reveals that a `larger' stabilizer subgroup results in a more expanded slice.

The important result of \textcite{Palais1961} proves the existence of slices for proper actions in the finite-dimensional realm. In the present Fréchet setting the statement needs some refinement and additional conditions. The following theorem is inspired by \parencite[Chapter~3]{Subramaniam1984} but in this generality it represents original work\footnote{The slice theorem of \textcite{Subramaniam1984} is restricted to section spaces and only applies to elliptic actions.}.
\begin{theorem}[Slice theorem] \label{prop::liegroup:sliceTheorem}
	Let $G$ be a tame Fréchet Lie group and $M$ a tame Fréchet manifold. A tame smooth proper $G$-action $\Upsilon$ has a slice at the point $m_0$ if the following conditions are fulfilled (the first three are the same as in \cref{prop::lieGroup:orbitClosedSubmanifold}):
	\begin{enumerate}
		\item The stabilizer $G_{m_0}$ is a principal tame Fréchet Lie subgroup of $G$.
		\item There exists a closed subspace $F_{m_0} \subseteq T_{m_0}M$ such that $T_{m_0}M = \img (\Upsilon_{m_0})'_e \oplus F_{m_0}$ is a tame isomorphism.
		\item Locally and for $g \in G$ near $e$, the derivative $(\Upsilon_{m_0})'_g$ followed by the projection on $\img (\Upsilon_{m_0})'_e$ has a tame smooth family of inverses.
		\item $M$ carries a $G$-invariant, locally equivalent, graded Riemannian metric $g^k$ such that the $l$-exponential map exists for some $l$. Furthermore, assume that the restriction to the normal bundle, $\exp: NO \to M$, is an equivariant local diffeomorphism at every point of the zero section. Here the orbit through $m_0$ is denoted by $O$. \qedhere
	\end{enumerate}
\end{theorem}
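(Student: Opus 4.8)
The plan is to mimic the classical Palais argument via the exponential map, using the normal bundle $NO$ as the model for the slice and transporting structure around the orbit by equivariance. First I would invoke \cref{prop::lieGroup:orbitClosedSubmanifold}: conditions (i)--(iii) guarantee that the orbit $O = G\cdot m_0$ is a closed splitting submanifold of $M$, that $\check\Upsilon_{m_0}\colon G/G_{m_0}\to O$ is a tame diffeomorphism, and that the normal bundle $NO = \bigcup_{g} \{\Upsilon(g,m_0)\}\times(\Upsilon_g)'_{m_0}F_{m_0}$ is a tame smooth $G$-equivariant subbundle of $TM_{\restriction O}$ with $TM_{\restriction O}=TO\oplus NO$. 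Condition (iv) then says $\exp\colon NO\supseteq W\to M$, defined on a $G$-invariant open neighbourhood $W$ of the zero section, is an equivariant local diffeomorphism at every point of the zero section. By equivariance it suffices to arrange the local-diffeomorphism property on the fibre $N_{m_0}O$, and then translate: for $g$ near $e$ one has $\exp\circ(\Upsilon_g)'_{m_0} = \Upsilon_g\circ\exp$ on $N_{m_0}O$, so the image covers a full neighbourhood of $O$. Concretely I would shrink $W$ to a $G$-invariant tube on which $\exp$ restricts to a diffeomorphism onto an open $G$-invariant neighbourhood $V_O\subseteq M$ of $O$ — here the key technical point is that the inverse, assembled fibrewise, is tame smooth, which follows from the inverse function theorem \cref{prop::locallyConvexSpace:NashMoserInverseTheorem} applied in charts together with the equivariant gluing, exactly as in the construction of bundle charts on $NO$ in \cref{prop::lieGroup:orbitClosedSubmanifold}.

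Next I would define the candidate slice as $S \defeq \exp(W\cap N_{m_0}O)$, the image of the fibre of $NO$ over $m_0$. Since $W\cap N_{m_0}O$ is an open neighbourhood of $0$ in the closed subspace $F_{m_0}\subseteq T_{m_0}M$, and $\exp$ is a diffeomorphism on $W$, the set $S$ is a splitting submanifold of $M$ containing $m_0$ with $T_{m_0}S=F_{m_0}$. Property (i), invariance of $S$ under $G_{m_0}$, follows from the $G$-invariance of the metric: for $h\in G_{m_0}$ one has $\Upsilon_h(m_0)=m_0$, hence $(\Upsilon_h)'_{m_0}$ preserves $F_{m_0}$ (using that $F_{m_0}$ is the $g^k$-orthogonal complement of $\img(\Upsilon_{m_0})'_e$ in $T_{m_0}M$, which one should set up at the start so that the complement in condition (ii) is chosen to be the metric-orthogonal one, making it $G_{m_0}$-invariant), and equivariance of $\exp$ gives $\Upsilon_h(S)=S$. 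For property (iii), the tube-chart: the local section $\chi\colon G/G_{m_0}\supseteq U\to G$ exists because $G_{m_0}$ is a principal Lie subgroup (\cref{prop::subliegroup:strongSplitting}), and the map $\chi^S(\equivClass{g},s)=\Upsilon(\chi(\equivClass{g}),s)$ is, after identifying $G/G_{m_0}$ with a neighbourhood of $m_0$ in $O$ via $\check\Upsilon_{m_0}$, precisely the composition of $(\mathrm{id}\times\exp)$ with the equivariant sweep of $S$ around $O$ — i.e. it factors through the trivialisation $U\times F_{m_0}\to NO$ followed by $\exp$. Both factors are tame diffeomorphisms onto open sets, so $\chi^S$ is a tame diffeomorphism onto an open neighbourhood $V\subseteq M$ of $m_0$.

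The main obstacle, and the step requiring the most care, is property (ii) of a slice: $(\Upsilon_g S)\cap S\neq\emptyset \Rightarrow g\in G_{m_0}$. In finite dimensions Palais derives this from properness by a compactness/contradiction argument, and the same scheme should work here using \cref{prop:lieGroupAction:properEquivalent} (the sequential characterisation of properness on metric spaces, available since $M$ and $G$ are metrizable by \cref{prop::riemannianGeometry:locEquivalentMetricInducesLengthMetric} and \cref{prop::lcLieGroup:metrisableZeroTame}). The argument: if no such $S$ works, take a sequence $g_i\to$ (after passing to a subsequence using properness) together with points $s_i, s_i'\in S$ with $\Upsilon(g_i,s_i)=s_i'$ and $s_i, s_i'\to m_0$, with $g_i\notin G_{m_0}$; properness forces $g_i$ to subconverge to some $g_\infty$ with $\Upsilon(g_\infty,m_0)=m_0$, i.e. $g_\infty\in G_{m_0}$, and one must then derive a contradiction with the local structure near $m_0$ by shrinking $S$ and $U$. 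The delicate point is that this shrinking must be done consistently: one needs a neighbourhood basis of slices $S_\varepsilon$ and to extract the contradiction by a diagonal argument, using the diffeomorphism $\chi^{S}$ and the fact that $\Upsilon(g_\infty^{-1}g_i, \cdot)$ is eventually close to the identity on a fixed neighbourhood. I would handle this by first establishing the weaker statement that slices exist with property (ii) replaced by ``$(\Upsilon_g S_\varepsilon)\cap S_\varepsilon\neq\emptyset$ and $g$ near $e$ implies $g\in G_{m_0}$'' (immediate from the tube chart $\chi^S$ being injective), and then upgrading to all $g\in G$ via the properness/compactness argument exactly as in \parencite{Palais1961}, the only new ingredient being the metrizability results cited above. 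Everything else — smoothness, the bundle computations, the chart for (iii) — is a bookkeeping exercise once the equivariant tubular neighbourhood from condition (iv) is in hand.
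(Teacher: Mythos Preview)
Your overall architecture is correct and matches the paper: use \cref{prop::lieGroup:orbitClosedSubmanifold} to get the orbit and normal bundle, exponentiate a fibre of $NO$ to produce the slice, and verify the three defining properties via equivariance. But you have misallocated the difficulty, and the step you wave away is precisely the one that carries the proof.

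The real work is the passage from ``$\exp$ is a local diffeomorphism at every point of the zero section'' (hypothesis (iv)) to ``$\exp$ is injective on a uniform $G$-invariant $\varepsilon$-tube $U_\varepsilon=\{X_p\in NO:\rho_p(0,X_p)<\varepsilon\}$ around the zero section''. You attribute this to Nash--Moser plus equivariant gluing, but Nash--Moser contributes nothing here beyond what is already assumed in (iv), and equivariant gluing does not by itself give a uniform radius: the orbit is not compact, so the usual Lebesgue-number argument is unavailable. The paper's argument is a genuine contradiction argument using the $G$-invariant length metric $d$ from \cref{prop::riemannianGeometry:locEquivalentMetricInducesLengthMetric}: if no uniform $\varepsilon$ works, take $(p_i,X_i)\neq(q_i,Y_i)$ in $NO$ with $\rho_{p_i}(0,X_i),\rho_{q_i}(0,Y_i)\to 0$ and $\exp(p_i,X_i)=\exp(q_i,Y_i)$; pick $g_i$ with $\Upsilon(g_i,p_i)=m_0$ and use $G$-invariance of $d$ and equivariance of $\exp$ to transport everything to $m_0$, showing $d(p_i,\exp(p_i,X_i))\to 0$ and hence $d(p_i,q_i)\to 0$; then eventually both points lie in a single $U_\delta(p_i)$ on which $\exp$ is already known to be injective, a contradiction. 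This is where the graded Riemannian metric of hypothesis (iv) is actually used, and your proposal does not supply any substitute.

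Conversely, you overestimate property (ii). Once $\exp$ is injective on the $G$-invariant tube $U_\varepsilon$, property (ii) is a one-line consequence of equivariance: if $\Upsilon_g(\exp X_{m_0})=\exp Y_{m_0}$ with $X_{m_0},Y_{m_0}\in N_{m_0}O\cap U_\varepsilon$, then $\exp((\Upsilon_g)'X_{m_0})=\exp Y_{m_0}$ with both arguments in $U_\varepsilon$, so injectivity forces the base points to agree, i.e.\ $\Upsilon(g,m_0)=m_0$. Your Palais-style properness argument for (ii) is therefore unnecessary; in fact properness plays no role in the paper's proof beyond its use inside \cref{prop::lieGroup:orbitClosedSubmanifold} to make the orbit closed.
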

\begin{proof}
	First, set up the stage by recalling some previous results. By \autoref{prop::lieGroup:orbitClosedSubmanifold} the orbit $O \equiv G \cdot m_0$ is a closed splitting submanifold and 
	\begin{equation}
		N O = \bigcup_{g \in G} \set{\Upsilon(g,m)} \times (\Upsilon_g)'_m F_m
	\end{equation}
	constitutes the normal bundle over $O$. Furthermore, the graded metric $g^k$ induces pointwise seminorms which combine to a $G$-invariant, compatible metric $\rho_m$ on $T_mM$. \Cref{prop::riemannianGeometry:locEquivalentMetricInducesLengthMetric} yields a length-metric $d$ on $M$ which is compatible with the manifold topology. $d$ inherits the $G$-invariance of $g^k$ as an inspection of the proof of \cref{prop::riemannianGeometry:locEquivalentMetricInducesLengthMetric} reveals.

	The desired slice will be constructed as the image under the exponential map. To make this work, the local diffeomorphism $\exp: NO \to M$ actually has to be a diffeomorphism on a neighbourhood of the zero-section onto an open subset of $M$ containing the orbit. It is sufficient to show injectivity of $\exp$ on some open subset of $NO$ around the zero-section. 
	Since the exponential map is a local diffeomorphism, one can without loss of generality assume that it is injective on sets of the form
	\begin{equation}
		U_\delta(m) \defeq \setc{X_p \in NO}{d(m,p) < \delta, \rho_m(0, X_p) < \delta}, 	
	\end{equation} 
	where $\delta > 0$ and $m \in O$. It is now claimed that $\exp$ is injective on the open subset $U_\varepsilon \defeq \setc{X_p \in NO}{\rho_p(0, X_p) < \varepsilon}$ for some $\varepsilon > 0$. Suppose the contrary. Then there exist two sequences $(p_i, X_i) \neq (q_i, Y_i)$ in $NO$ such that $\rho_{p_i}(0, X_i)$ and $\rho_{q_i}(0, Y_i)$ converge to zero and such that $\exp(p_i, X_i) = \exp(q_i, Y_i)$ for all $i$. A contradiction is derived by noticing that the points $p_i$ and $q_i$ eventually lie close to each other. Hence $(p_i, X_i)$ and $(q_i, Y_i)$ are finally contained in some $U_\delta$ on which $\exp$ is injective. 
	In order to show $d(p_i, \exp(p_i, X_i)) \to 0$, choose a sequence $g_i \in G$ such that $\Upsilon(g_i, p_i) = m_0$ for all $i$ (such group elements clearly exist since $p_i$ was assumed to lie on the orbit). By equivariance, $\rho_{m_0}(0, (\Upsilon_{g_i})'X_i) = \rho_{p_i}(0, X_i) \to 0$ holds and implies that $(\Upsilon_{g_i})'X_i$ converges to $0$. Thus,
	\begin{equation}\begin{split}
		0 = d(m_0, m_0) \longleftarrow \, &d(m_0, \exp(m_0, (\Upsilon_{g_i})'X_i)) \\
					&= d( \Upsilon_{g_i} p_i, \Upsilon_{g_i} \exp(p_i, X_i)) = d(p_i, \exp(p_i, X_i)).
	\end{split}\end{equation}
	Now, $d(p_i, q_i) \leq d(p_i, \exp(p_i, X_i)) + d(\exp(q_i, Y_i), q_i)$ converges to $0$ and thus the points $p_i$ and $q_i$ are close to each other for sufficiently large $i$. Since $X_i$ and $Y_i$ are nearly zero, $(p_i, X_i)$ and $(q_i, Y_i)$ are eventually contained in $U_\delta (p_i)$ for some $\delta > 0$. But $\exp$ is injective on $U_\delta (p_i)$ for sufficiently small $\delta$ and hence $(p_i, X_i) = (q_i, Y_i)$ in contradiction with the original assumption.

	Finally, the splitting submanifold $\exp(T_{m_0}M \cap U_\varepsilon)$ is the desired slice $S$ at $m_0$. Indeed all defining properties hold true:
	\begin{enumerate}
		\item The exponential map as well as the normal bundle are equivariant and $U_\varepsilon$ is invariant under the action of $G$. Hence $S$ is invariant under the action of the stabilizer $G_{m_0}$.
		\item By equivariance, $(\Upsilon_g S) \cap S \neq \emptyset$ implies $g \in G_{m_0}$.
		\item Let $\chi: G/G_{m_0} \supseteq V \to G$ be a local section around the identity coset. Every $s \in S$ is of the form $\exp(X_{m_0})$ for some $X_{m_0} \in T_{m_0}M$. Therefore, the map $\chi^S: V \times S \to M, \chi^S(\equivClass{g}, s) \defeq \Upsilon (\chi(\equivClass{g}), s)$ can be rewritten as
		\begin{equation}
			\chi^S(\equivClass{g}, s) = \Upsilon_{\chi(\equivClass{g})} s = \Upsilon_{\chi(\equivClass{g})} \exp(X_{m_0}) = \exp((\Upsilon_{\chi(\equivClass{g})})' X_{m_0}),
		\end{equation}
		which, as a composition of diffeomorphisms, is itself a diffeomorphism. \qedhere
	\end{enumerate}
\end{proof}
\chapter{Classical field theory} \label{sec:cft}

This chapter develops a covariant formulation of classical field theory, which shares many features with finite-dimensional symplectic geometry. In particular, the covariant field-theoretic analogues of the symplectic form and of the momentum map play a prominent role. The approach followed in this work is inspired by the variational bicomplex (see \parencite{Anderson1992,Zuckerman1987} and particularly \parencite{DeligneFreed1999}). However, no reference to jet bundle techniques is needed in the new formulation presented in this chapter and only the useful concept of the bicomplex remains. The mathematical rigorousness is provided by the differential geometric framework which was presented in the previous chapters.

\section{Lagrangian dynamics}
The variational principle represents the natural starting point for a Lagrangian description of classical field theory and thus is shortly reviewed now. At this motivational stage, no mathematical rigour is implied and indeed the first task constitutes in providing a mathematically sound framework for these manipulations. Let $M$ be a compact, finite-dimensional manifold representing spacetime and $L$ a first-order Lagrangian functional density on some function space $\secspace{F}$ over $M$. The equation of motion arises by varying the action
\begin{align}
 	0 = \diF \int_M L(\phi) 	&= \int_M \left(\diFF{L}{\phi} \diF \phi + \diFF{L}{(\dif \phi)} \diF (\dif \phi) \right) \\
 								&= \int_M \left[ \left(\diFF{L}{\phi} - \dif \diFF{L}{(\dif \phi)}\right) \diF \phi  + \dif\left( \diFF{L}{(\dif \phi)} \diF \phi \right) \right].
\end{align}
The integral over the total derivative vanishes by specifying suitable boundary conditions and hence the Euler-Lagrange equation are encoded in the first term. Note that the integral only converges for compact $M$ or by supplying additional conditions on the behaviour of the fields $\phi$ at infinity (or at the boundary). Since both possibilities influence the physical predictions of the model,
a framework which works without such additional assumptions is aspired. Therefore it stands to reason to completely omit integration from the derivation of the Euler-Lagrange equation. In fact, a closer inspection of the above calculation shows that the value of the integral is nowhere explicitly required and only the technique of integration by parts is essential. The latter can be emulated by subtracting the total derivative of $\theta \defeq \diFF{L}{(\dif \phi)} \diF \phi$. Indeed
\begin{equation}\begin{split}
 	\diF L - \dif \theta &= \diFF{L}{\phi} \diF \phi + \diFF{L}{(\dif \phi)} \diF (\dif \phi) - \dif\left(\diFF{L}{(\dif \phi)}\right) \wedge \diF \phi - \diFF{L}{(\dif \phi)} \dif (\diF \phi) \\ &= \left( \diFF{L}{\phi} - \dif\left(\diFF{L}{(\dif \phi)}\right)\right) \wedge \diF \phi
\end{split}\end{equation} 
is the equation of motion. This observation is fundamental for the following exposition and is the starting point for the rigorous formulation.

As a last ingredient, the structure of the function spaces involved has to be clarified. This was already prepared in \autoref{sec::sectionSpace}. In particular, let $M$ be a $n$-dimensional, not necessarily compact manifold and $F \to M$ a fibre bundle over $M$. Endow the space of sections $\secspace{F}$ with the already discussed locally convex manifold structure, which is tame Fréchet if the base manifold $M$ is compact. The product $\secspace{F} \times M$ will be of profound importance for the following discussion and thus the reader is encouraged to recall the properties of vector fields and differential forms on product manifolds, see \autoref{ex::locallyconvexmanifolds:tangentProductManifold} and \autoref{ex::locallyconvexmanifolds:diffFormsProductManifold}. To conform with the above notation, the partial exterior differential with respect to the function space will we denoted by $\diF$ whereas $\dif$ indicates differentiation in the direction of $M$\footnote{Recall that the partial exterior differential with respect to the second factor of the product manifold $\secspace{F} \times M$ can be identified with the normal de Rham differential on $M$ after evaluating all $\secspace{F}$-components. This fact justifies the notation $\dif$ for the partial exterior with respect to the $M$-factor.}. Note that these two differentials anti-commute as a consequence of \cref{lem::lcm:AntiCommutativPartialExteriorDifferential}, in contrast to the above assumed commutativity. Finally, denote the total derivative by $\Dif = \dif + \diF$. Now all the necessary concepts and notations are assembled for the following abstraction of the variational principle.

\begin{defn}
	Let $M$ be a $n$-dimensional manifold and $\secspace{F}$ the section space of a fibre bundle over $M$. Let $(\diF\,, \dif\,)$ denote the partial differentials in the bicomplex of forms $\diffformBi{p}{q}{\secspace{F} \times M}$. A \emphDef{total Lagrangian} $\mathcal{L} = L + \theta$ is a $n$-form on $\secspace{F} \times M$ consisting of a $(0,n)$-form $L$ called \emphDef{Lagrangian density}, and a $(1,n-1)$-form $\theta$ called \emphDef{variational form}. The \emphDef{Euler-Lagrange form} is defined as the $(1,n)$-form\footnote{The different sign convention for $\dif \theta$ is a consequence of the anti-commutativity of $\dif$ and $\diF$.} 
	\begin{equation}
		E_{\mathcal{L}} \defeq (\Dif \mathcal{L})^{1,n} = \diF L + \dif \theta.
	\end{equation}
	The space of solutions $\secspace{S}_{\mathcal{L}}$ consists of those $\phi \in \secspace{F}$ on which $E_{\mathcal{L}}$ vanishes, that is, $(E_{\mathcal{L}})_{\phi, m} = 0$ for all $m \in M$. If only such solutions are considered, one speaks of an on-shell analysis. In contrast, off-shell accounts also for non-solutions.
\end{defn}

In this context, the manifold $M$ plays the role of spacetime (or solely time) and the infinite-dimensional manifold $\secspace{F}$ represents the kinematically realisable states. Thus $F$ is often called the configuration bundle. The solution space $\secspace{S}_{\mathcal{L}} \subseteq \secspace{F}$ corresponds to dynamically admissible states. To simplify notation, the space of $(p,q)$-forms on $\secspace{F} \times M$ is often shortly denoted by $\diffformBi{p}{q}$. 

Before this calculus is demonstrated on concrete examples, a remark about the origins of and history behind this definition is in order.
\begin{remark}
	In the eighties, the so-called inverse problem of variational calculus gained wide attention. The aim of this research program was to clarify which differential equations originate from a variational description. It was mainly expedited by \textcite{Tulczyjew1980, Vinogradov1978, Takens1979, Anderson1992}. Their attention focused primarily on the infinite jet bundle, which is well-founded by a theorem of \textcite{Peetre1959} stating that every local differential operator factors through some jet bundle. One recognizes that $J^\infty F$ possesses a natural connection and a closely related differential bicomplex. \textcite{Zuckerman1987} discovered that the pullback of this bicomplex under the infinite jet prolongation
	\begin{equation}
		j^\infty: \secspace{F} \times M \to J^\infty F, \qquad (\phi, m) \mapsto j^\infty_m \phi  	
	\end{equation}
	equals the bicomplex structure induced by the product $\secspace{F} \times M$. This observation served as the starting point of the analysis of classical field theory by \textcite{DeligneFreed1999}. The above definition generalizes these ideas to the case where locality and jet bundles are not part of the underlying assumptions. 
\end{remark}

\begin{example}[Classical mechanics]
	Classical mechanics is described by choosing $M = \R$ as the time axis and the trivial bundle $F= Q \times \R \to \R$. As usually, the manifold $Q$ represents the configuration space and sections of $F$ are identified with smooth curves $t \mapsto q(t)$ in $Q$. In order to conform with the usual notation, one has to consider the infinite jet bundle $J^\infty (Q \times \R)$.~\footnote{In the present setting, the jet bundle $J^\infty (Q \times \R)$ decomposes into $T^\infty Q \times \R$, where the infinite tangent manifold $T^\infty Q$ is defined as the inverse limit of the sequence of finite-order tangent bundles.} Choosing fibred coordinates $(q^i, t)$ on $Q \times R$ induces coordinates $(q^i_{(k)}, t)$ on the jet bundle. They are explicitly defined by
	\begin{equation}
		q^i_{(0)}(j^\infty_t q) \defeq q^i(t) \qquad \text{and} \qquad q^i_{(k)}(j^\infty_t q) \defeq \difpp{q}{t}{k}(t) \text{ for } k\geq 1,
	\end{equation}
	where $q \in \secspace{F}$ is a smooth curve in $Q$. The pullback of these coordinate functions by the infinite jet prolongation yields a coordinate frame for local differential forms on $\secspace{F} \times M$. By local forms one understands forms which are pulled back from $J^\infty F$. If the total derivative is defined as %
	\begin{equation}
		\diff{}{t} = \difp{}{t} + \sum_{k=0}^\infty q^i_{(k+1)} \difp{}{q^i_{(k)}},
	\end{equation}
	then for a local function $f = f(q^i_{(k)}, t)$ the differential evaluates to
	\begin{equation}
		\dif f = \diff{f}{t} \dif t \quad \text{and} \quad \diF f = \sum_{k=0}^\infty \difp{f}{q^i_{(k)}} \diF q^i_{(k)}.
	\end{equation}
	In particular, the following identities hold
	\begin{align}
		&\dif t = \Dif t								& &\diF t = 0 \\
		&\dif q^i_{(k)} = q^i_{(k+1)} \dif t 	& &\diF q^i_{(k)} = (\Dif - \dif\,) q^i_{(k)} = \Dif q^i_{(k)} - q^i_{(k+1)} \dif t.
	\end{align} 
	If applicable, the alternative notation $q^i, \dot{q}^i, \ddot{q}^i, \dotsc$ is used in favour of $q^i_{(k)}$.

	Now let $L$ be a smooth time dependent function on the tangent bundle $TQ$ and promote it to a Lagrangian density $L \in \diffformBi{0}{1}$ by identifying $L(q^i, v^i, t) \rightleftharpoons L(q^i, \dot{q}^i, t) \dif t$. Furthermore, the variational form is in local coordinates defined by 
	\begin{equation}
		\theta = \difp{L}{\dot{q}^i} \diF q^i.
	\end{equation}
	Now the Euler-Lagrange form evaluates to
	\begin{equation} \begin{split}
		E_{\mathcal{L}} &= \diF L + \dif \theta \\
						&= \left(\difp{L}{q^i} \diF q^i + \difp{L}{\dot{q}^i} \diF \dot{q}^i\right) \wedge \dif t + \diff{}{t}\left(\difp{L}{\dot{q}^i} \right) \dif t \wedge \diF q^i + \difp{L}{\dot{q}^i} \dif \diF q^i \\
						&= \left(\difp{L}{q^i} \diF q^i + \difp{L}{\dot{q}^i} \diF \dot{q}^i\right) \wedge \dif t + \diff{}{t}\left(\difp{L}{\dot{q}^i} \right) \dif t \wedge \diF q^i - \difp{L}{\dot{q}^i} \diF (\dot{q}^i \dif t)\\
						&= \left( \diff{}{t}\left(\difp{L}{\dot{q}^i} \right)  - \difp{L}{q^i} \right) \dif t \wedge \diF q^i.
	\end{split}\end{equation}
	Note that the variational form $\theta$ is chosen in such a way that the term proportional to $\diF \dot{q}^i$ is compensated and thus indeed plays the role of integration by parts. Furthermore, it has the same coordinate representation as the usual canonical (or tautological) $1$-form on $T^*Q$ pulled back to the tangent bundle by the Legendre-transformation. This observation allows for the conclusion that the differential of $\theta$ should correspond to the symplectic form, and indeed
	\begin{equation}
		\omega = \diF \theta = \difpm{L}{q^j}{\dot{q}^i} \diF q^j \wedge \diF q^i + \difpm{L}{\dot{q}^j}{\dot{q}^i} \diF \dot{q}^j \wedge \diF q^i  	
	\end{equation}  
	`equals' the symplectic form on the tangent bundle. This relation will serve as the guideline for defining a counterpart to the symplectic structure in the field theoretic context. However, note that $\omega$ is far from being nondegenerate since it is only sensitive to first-order variations.
\end{example}
The previous example straightforwardly generalizes to first-order field theories by considering a general spacetime manifold $M$ instead of only the time axis $\R$ as the base manifold and replacing the index $k$ by an appropriate multi-index. However, this viewpoint will not be further investigated. Instead, the relativistic aspects of the theory are emphasized by giving a fully covariant example.  
\begin{example}[Complex Klein-Gordon field] \label{ex::classicalFieldTheory:KleinGordonEulerLagrange}
	Let $(M, \eta)$ be $4$-dimensional space time with its Minkowski metric and $F = \C \times M$ the trivial bundle over $M$. Its sections $\phi \in \secspace{F}$ are identified with complex-valued functions on $M$. Moreover, extend the Hodge star operator from $M$ to the product manifold $\secspace{F} \times M$ by ignoring the $\secspace{F}$-component as described in \autoref{sec::differentialgeometry:hodgeStarProductManifold}. Note that the commutative law $\alpha \wedge \star \beta = \beta \wedge \star \alpha$ is modified to
	\begin{equation} \label{eq::interal_cft:wedgeHodgeCommutative}
		\alpha \wedge \star \beta = (-1)^{\#\alpha \, \cdot \, \#\beta \,-\, \#_M \alpha \, \cdot \, \#_M \beta} \beta \wedge \star \alpha,
	\end{equation}
	where the sharp denotes the total degree of the differential form and the subindexed version refers to the $M$-degree (see \autoref{prop::differentialgeometry:hodgeStarProductManifoldWedgeCommutative} for the derivation).

	The Lagrangian system for the complex Klein-Gordon field is specified by
	\begin{align}
		L 		&= \frac{1}{2} \dif \bar\phi \wedge \star \dif \phi - \star \frac{\mu^2}{2} \phi \bar\phi,\\
		\theta 	&= \frac{1}{2} (\diF \phi \wedge \star \dif \bar \phi + \diF \bar \phi \wedge \star \dif \phi).
	\end{align}
	It should be remarked that, here and in the following, evaluation at some appropriate points of $\secspace{F} \times M$ is understood and thus $\phi$ actually stands for the evaluation map $(\phi, m) \mapsto \phi(m)$. The exterior differentials are given by:
	\begin{align}
		\diF L 		&= \frac{1}{2} (\diF \dif \bar\phi \wedge \star \dif \phi - \dif \bar\phi \wedge \star \diF \dif \phi) - \frac{\mu^2}{2} (\diF\bar\phi \wedge \star \phi + \diF\phi \wedge \star \bar\phi),\\
		\dif \theta	&= \frac{1}{2} (\dif \diF \phi \wedge \star \dif \bar \phi - \diF \phi \wedge \dif \, \star \dif \bar \phi + \dif \diF \bar \phi \wedge \star \dif \phi - \diF \bar \phi \wedge \dif \, \star \dif \phi).
	\end{align}
	Thereby, the terms containing mixed derivatives cancel each other upon summation (using the above mentioned property \eqref{eq::interal_cft:wedgeHodgeCommutative} of the Hodge dual) and thus the Euler-Lagrange form results in the Klein-Gordon equation:
	\begin{align}
	 	E_{\mathcal{L}} &= \diF L + \dif \theta \\
	 					&= - \frac{1}{2} \left( \diF \phi \wedge (\dif \, \star \dif \bar \phi + \star \mu^2 \bar\phi) + \diF \bar \phi \wedge (\dif \, \star \dif \phi + \star \mu^2 \phi) \right).
	\end{align} 
	Moreover, the $\diF\,$-derivative of the variational form yields the usual expression for the symplectic form,
	\begin{equation}
		\omega = \diF \theta = - \frac{1}{2} (\diF \phi \wedge \star \diF \dif \bar \phi + \diF \bar \phi \wedge \star \diF \dif \phi). 
	\end{equation} 
\end{example}

\section{Symplectic systems}
The examples in the previous chapter suggest strongly that the functional derivative of the variational form results in the symplectic density. But instead of solely considering $\omega = \diF \theta$, it will be convenient to focus attention on the total differential of the Lagrangian, 
\begin{equation}
	\Dif \mathcal{L} = \omega + E_{\mathcal{L}}.
\end{equation} 
In more general terms, the equation of motion need not arise from a Lagrangian description and hence Euler-Lagrange operator is not necessarily of the form $E_{\mathcal{L}} = \diF L + \dif \theta$.
\begin{defn}
	A \emphDef{dynamical symplectic structure} is a $(n+1)$-form $\Omega$ on $\secspace{F} \times M$ such that it is $\Dif\,$-closed, i.e. $\Dif\Omega=0$, and decomposes into 
	\begin{equation}
		\Omega = \omega + E.
	\end{equation}  
	Here $\omega$ is a $(2,n-1)$-form called \emphDef{subsymplectic density} and $E$ is a $(1,n)$-form encoding the \emphDef{equation of motion}. The solution locus $\secspace{S}_\Omega \subseteq \secspace{F}$ is defined by the condition $E_{\restriction \secspace{S}_\Omega \times M} = 0$. 
\end{defn}
\newpage
\begin{remarks}
	\item Note that no nondegeneracy conditions are required and thus, strictly speaking, the notion `symplectic' is not legitimate. Nevertheless, many important constructions of classical symplectic geometry carry over and this close relationship is underlined by using the same `language'. Moreover, weakly nondegenerate forms are the only reasonable concept beyond Banach manifolds. However, such a weak condition yields only a small advantage and is even too restrictive in concrete examples. The notion `subsymplectic' spells out that $\omega$ is solely a part of the more fundamental symplectic form $\Omega$.

	\item The requirement of being closed is equivalent to the following conditions,
		\begin{equation}
		 	0 = \begin{cases}
		 		\diF \omega,\\
		 		\dif \omega + \diF E.
		 	\end{cases}
		\end{equation}
	The first line expresses the usual closedness of the subsymplectic density and the second condition can be regarded as the conservation of symplecticity on solutions. 

	\item In particular, every Lagrangian $\mathcal{L}$ defines a dynamical symplectic structure by $\Omega = \Dif \mathcal{L}$ and in this sense is an `exact' system. Note that the total Lagrangian can be replaced by $\mathcal{L} \mapsto \mathcal{L} + \Dif \lambda$ for $\lambda \in \diffformBi{0}{n-1}$ without changing the symplectic form $\Omega$. Under such a transformation the components change as follows:
	\begin{align}
		L &\mapsto L + \dif \lambda,\\
		\theta &\mapsto \theta + \diF \lambda.
	\end{align}
	In other words, the usual indeterminacy of $L$ is absorbed in $\theta$.

	\item In the standard approach to classical mechanics, the solution operator $E$ does not occur because the tangent bundle can be interpreted as the parameterization of the solution space by initial conditions and hence amounts to a purely on-shell analysis. For field theories one cannot hope to get an analogous description of the solution locus since in general $\secspace{S}$ fails to be a submanifold of $\secspace{F}$.%
	Moreover, an off-shell analysis promises another advantage when considering its application to quantization. The Feynman path integral approach vividly illustrates that quantum corrections arise by including paths which do not fulfil the equations of motion. As the framework presented in this work handles solutions and non-solutions on an equal footing it provides a promising starting point for quantization. 

	Furthermore, the previous ideas suggest a different ansatz to handle the path integral, namely, to completely circumvent the integral in a similar way as the action was avoided at the beginning of this chapter. In fact, the value of the Feynman integral often plays an inferior role and the main interest lies in its derivatives, that is, in the correlation functions. Hence, the concept of the bicomplex $\diffformBi{\bcdot}{\bcdot}$ might avoid the highly nontrivial construction of a measure on the function space $\secspace{F}$. It is subject to further work to pursue these ideas. \label{rem::cft:offShellAsStartingPointOfQuantization}
\end{remarks}

The rest of this section is concerned with the transition of the basic symplectic theory to this setting.
\begin{defn}
	Let $\alpha \in \diffform{n-1}{\secspace{F} \times M}$. A vector field $X_\alpha$ on $\secspace{F} \times M$ is called a \emphDef{Hamiltonian vector field} for $\alpha$ if
	\begin{equation} \label{eq::cft:hamiltonianVectorField}
		X_\alpha \contr \Omega + \Dif \alpha = 0. 
	\end{equation}
\end{defn}
Without a non-degeneracy condition on the symplectic form $\Omega$ the Hamiltonian vector field might not exists or is not uniquely defined. Thus the Poisson brackets are not available. Note that in the present case the $(n-1)$-forms play the role of observables. %
Writing out \cref{eq::cft:hamiltonianVectorField} in components yields the following system: 
\begin{subequations} \label{eq::cft:HamiltionianVFComponentWise}
\begin{align+} 
  	0 &= (X_\alpha \contr E)^{0,n} + \dif \alpha^{0,n-1},\\
	0 &= (X_\alpha \contr E)^{1,n-1} + (X_\alpha \contr \omega)^{1,n-1} + \dif \alpha^{1,n-2} + \diF \alpha^{0,n-1},\\
	0 &= (X_\alpha \contr \omega)^{2,n-2} + \dif \alpha^{2,n-3} + \diF \alpha^{1,n-2},\\
	0 &= \diF \alpha^{2,n-3}.
\end{align+}
\end{subequations}
In the case of classical mechanics ($n=1$) all but the second equation are trivial and the latter reduces to $0 = (X_\alpha \contr \omega)^{1,0} + \diF \alpha^{0,0}$ on-shell. As the examples in \cref{sec::cft:symmetriesMomentumMap} will show, the components of $\alpha$ besides $\alpha^{0,n-1}$ have to be interpreted as a relict stemming from integration by parts. 

Hamiltonian vector fields possess the important property that they preserve $\Omega$, that is,
\begin{equation}
	\difLie_{X_\alpha} \Omega = \Dif (X_\alpha \contr \Omega) + X_\alpha \contr \Dif \Omega = 0.
\end{equation}
Thereby, they form a special subclass of objects which leave the symplectic structure invariant.
\begin{defn}
	A \emphDef{symplectic map} is a map $\psi: (A, \Omega) \to (B, \Xi)$ between dynamical symplectic systems such that $\Omega = \psi^* \Xi$. An action $\Upsilon: G \times (\secspace{F} \times M) \to \secspace{F} \times M$ of a Lie group is called \emphDef{symplectic} if $\Upsilon_g$ is a symplectic map for all $g \in G$.

	A vector field $X$ on $\secspace{F} \times M$ is called a \emphDef{symplectic vector field} if $\difLie_X \Omega = 0$. If $X$ has a flow, then this flow is a symplectic map.
\end{defn}

\section{Symmetries and momentum maps} \label{sec::cft:symmetriesMomentumMap}
The definition of a momentum map needs more care primarily because its direct generalization as a smooth map $J: \secspace{F} \times M \to \liea{g}^*$ is not possible. First of all, the components $J_A$ are only functions on $\secspace{F} \times M$ in this case, and thus do not induce a Hamiltonian vector field by Equation \eqref{eq::cft:hamiltonianVectorField}. This is circumvented by defining $J_A$ to be a $(n-1)$-form. The other problem is of functional analytic nature. Defining the momentum map to have a dual space as codomain is prone to run in complications similar to the ones encountered in the discussion of the cotangent bundle. In particular, the natural pairing $\liea{g}^* \times \liea{g} \to \R$ cannot be smooth. Even if these continuity problems are overcome, one would necessarily leave the category of Fréchet manifolds. Consequently the Nash-Moser theorem, which at the moment is the only promising tool to generalize symplectic reduction to field theories, would no longer be available. To remedy this situation, the following modification of the usual definition is proposed.
\newpage 
\begin{defn}
	Let $\Upsilon$ be a smooth action of a Lie group $G$ on the symplectic manifold $(\secspace{F} \times M, \Omega)$. A map $J: \liea{g} \to \diffform{n-1}{\secspace{F} \times M}$ is called a \emphDef{momentum map} if 
	\begin{equation}
		A^* \contr \Omega + \Dif J_A = 0
	\end{equation}   	
	holds for all $A \in \liea{g}$; with the notation $J_A \defeq J(A) \in \diffform{n-1}$. If additionally the identity
	\begin{equation}
	 	\Upsilon_g^* J_A = J_{\AdjAction(g^{-1}) A}
	\end{equation} 
	is fulfilled for all $A \in \liea{g}$, then $J$ is called an \emphDef{equivariant momentum map}.
\end{defn}    
The Killing vector field $A^*$ is by construction the Hamiltonian vector field of the component $J_A$ of the momentum map. In particular, the component-wise discussion of \cref{eq::cft:HamiltionianVFComponentWise} also applies to the present situation.

If the symplectic system originates from a Lagrangian description which is invariant under the action, then there is a canonical way to construct an equivariant momentum map. The reader is urged to compare the following proposition with the standard result of an action on the invariant Lagrangian system $(TQ, L)$.
\begin{proposition} \label{prop::classicalFieldTheory:momentumMapForInvariantLagrangian}
	Let $G$ be a Lie group with smooth exponential map $\exp$ and $\Upsilon$ a $G$-action. Let $\mathcal{L}$ denote the total Lagrangian and assume $\Upsilon^*_g \mathcal{L} = \mathcal{L}$. Then 
	\begin{equation} \label{eq::cft:definitionMomentumMap}
		J_A \defeq A^* \contr \mathcal{L}, \quad A \in \liea{g}
	\end{equation}
	defines an equivariant momentum map for $\Upsilon$ with respect to the symplectic form $\Omega = \Dif \mathcal{L}$. 
\end{proposition}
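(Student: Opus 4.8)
The plan is to verify the two defining properties of an (equivariant) momentum map directly from the hypotheses, exploiting the interplay between the Killing vector field, the invariance of $\mathcal{L}$, and Cartan's formula for the total differential.

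First I would establish the momentum map condition $A^* \contr \Omega + \Dif J_A = 0$. Since $\Omega = \Dif \mathcal{L}$ and $J_A = A^* \contr \mathcal{L}$, Cartan's formula (see \eqref{eq::locallyConvexManifolds:exteriorDiffProperties}) for the Lie derivative with respect to $A^*$ on the product manifold $\secspace{F} \times M$ gives
\begin{equation}
	\difLie_{A^*} \mathcal{L} = \Dif (A^* \contr \mathcal{L}) + A^* \contr \Dif \mathcal{L} = \Dif J_A + A^* \contr \Omega.
\end{equation}
Hence it suffices to show that $\difLie_{A^*} \mathcal{L} = 0$. This is where the invariance hypothesis enters: since $\Upsilon_g^* \mathcal{L} = \mathcal{L}$ for all $g$ and $G$ has a smooth exponential map, the flow of the Killing vector field $A^*$ is $(m,t) \mapsto \Upsilon(\exp(tA), m)$ (as noted just after the definition of the Killing vector field), so that
\begin{equation}
	\difLie_{A^*} \mathcal{L} = \diffAt{}{t}{t=0} \bigl(\flow^{A^*}_{-t}\bigr)^* \mathcal{L} = \diffAt{}{t}{t=0} \Upsilon_{\exp(-tA)}^* \mathcal{L} = \diffAt{}{t}{t=0} \mathcal{L} = 0.
\end{equation}
Here one uses that $A^*$ does possess a flow, which legitimizes the identification of the Lie derivative of the form with the $t$-derivative of the pulled-back form.

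Next I would verify equivariance, $\Upsilon_g^* J_A = J_{\AdjAction(g^{-1})A}$. The key is the naturality of the contraction under pullback: for a diffeomorphism $\psi$ and a vector field $X$ one has $\psi^*(X \contr \alpha) = (\psi^{-1})'_* X \contr \psi^* \alpha$, applied with $\psi = \Upsilon_g$ and $\alpha = \mathcal{L}$. Combined with $\Upsilon_g^* \mathcal{L} = \mathcal{L}$, this reduces the claim to the identity $(\Upsilon_{g^{-1}})'_* A^* = (\AdjAction(g^{-1})A)^*$, i.e.\ that the pushforward of a Killing vector field by $\Upsilon_{g^{-1}}$ is again the Killing vector field associated to the adjoint-transformed Lie algebra element. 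This in turn follows by differentiating the relation $\Upsilon_{g^{-1}} \circ \Upsilon_{\exp(tA)} \circ \Upsilon_g = \Upsilon_{g^{-1}\exp(tA)g} = \Upsilon_{\exp(t\AdjAction(g^{-1})A)}$ at $t = 0$, using that $\frac{d}{dt}\big|_0 g^{-1}\exp(tA)g = \AdjAction(g^{-1})A$ by definition of the adjoint action.

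The main obstacle I anticipate is purely the bookkeeping of the bigrading and sign conventions on the product manifold $\secspace{F} \times M$: the total differential $\Dif$ decomposes as $\dif + \diF$ with $\dif$ and $\diF$ anti-commuting, so one must be careful that Cartan's formula and the naturality of $\contr$ are the versions valid for the total exterior derivative $\Dif$ (which they are, since $\Dif$ is an honest exterior derivative on the product manifold as developed in \autoref{ex::locallyconvexmanifolds:diffFormsProductManifold}). A second, more conceptual subtlety is justifying the exchange of $\difLie_{A^*}$ with the infinitesimal flow computation in infinite dimensions; this is exactly where the assumed existence of a smooth exponential map for $G$ — hence of the flow of $A^*$ — is indispensable, and the Lie derivative of a differential form then agrees with its usual expression by the remark following the definition of $\difLie_X$ in the differential-forms section. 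Everything else is a routine, if notation-heavy, computation.
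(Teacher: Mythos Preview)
Your proposal is correct and follows essentially the same route as the paper: both derive the momentum map condition from Cartan's formula $\difLie_{A^*}\mathcal{L} = A^*\contr\Dif\mathcal{L} + \Dif(A^*\contr\mathcal{L})$ together with $\difLie_{A^*}\mathcal{L}=0$ (justified via the flow $t\mapsto\Upsilon_{\exp(tA)}$), and both obtain equivariance from the naturality of the interior product under pullback combined with the identity $(\Upsilon_g)'_* A^* = (\AdjAction(g)A)^*$. Your additional remarks on the bigrading and the role of the exponential map in legitimizing the flow-based Lie derivative are apt and make explicit what the paper leaves implicit.
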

\begin{proof}
	Since $G$ has an exponential map, every Killing vector field $A^*$ possesses a flow given by $(m, t) \mapsto \Upsilon(\exp(tA), m)$. Thus invariance of $\mathcal{L}$ under $\Upsilon$ implies $\difLie_{A^*} \mathcal{L} = 0$. But 
	\begin{equation}
	 	0 = \difLie_{A^*} \mathcal{L} = A^* \contr \Dif \mathcal{L} + \Dif(A^* \contr \mathcal{L}) = A^* \contr \Omega + \Dif J_A
	\end{equation} 
	is just the defining equation for the momentum map. Equivariance follows from the following calculation using the equivariance $\Upsilon'_g A^* = (\AdjAction(g) A)^*$: %
	\begin{equation}
		J_A = A^* \contr \mathcal{L} = A^* \contr (\Upsilon^*_g \mathcal{L}) = \Upsilon^*_g \Bigl((\Upsilon'_g A^*) \contr \mathcal{L} \Bigr) = \Upsilon^*_g \Bigl((\AdjAction(g) A)^* \contr \mathcal{L} \Bigr) = \Upsilon^*_g J_{\AdjAction(g) A}. 
	\end{equation}
\end{proof}

\begin{remark}
	To study initial value problems and the time evolution of fields, covariance of the description has to be broken and a time axis has to be introduced. There are two major ways to achieve this:
	\begin{thmenumerate}
		\item In the traditional instantaneous formalism, spacetime is split into $\Sigma \times T$, where $\Sigma$ is a spatial Cauchy surface and $T$ represents a time interval. Next, all differential forms of bidegree $(p, n-1)$ are integrated over $\Sigma$ to result in a $p$-form on $\secspace{F}$. In particular, $\omega$ yields a $2$-form and all observables of bidegree $(0, n-1)$ become functions. Thus one resides in the usual symplectic landscape with its time evolution picture. The only difference lies in the fact that the state space is infinite-dimensional. 

		But the just described procedure comes with functional analytic and also conceptual problems. On the mathematical side, the Cauchy surface $\Sigma$ is required to be compact in order to guarantee the existence of the integrals over $\Sigma$. One can try to lift this condition by restricting the behaviour of the fields at infinity instead. But in that case, the independence of the integral from the slice delicately depends on the imposed asymptotic conditions \parencite{HollandsMarolf2007}. Furthermore, the whole method only makes sense for on-shell variations, that is, for variations fulfilling the linearised equations of motion. Therefore, one is forced to consider the locus $\secspace{S}$, which is not a manifold in general. Moreover, the possible advantages of an off-shell formulation for quantization described in \cref{rem::cft:offShellAsStartingPointOfQuantization} are nullified. From a conceptual viewpoint, the manifest breaking of covariance at such an early point is a severe shortcoming. In particular, symmetries which move the Cauchy surface cause difficulties in the definition of an appropriate conserved current. In fact, the diffeomorphism symmetry of general relativity in contrast to the gauge invariance of Yang-Mills theory owes much of its complexity to this pitfall \parencite{LeeWald1990}.

		\item The present framework facilitates a `minimal-invasive' and algebraic approach. Consider a symplectic manifold $(\secspace{F} \times M, \Omega)$ and an action $\Upsilon$ of some symmetry group $G$ admitting a momentum map $J$. Note that $G$ combines symmetries of the field degrees of freedom and spacetime symmetries. Under the assumption that the Lie algebra $\liea{g}$ contains an element $A_H$ which can be seen as the generator of time evolution, one arrives at the following Hamiltonian picture. The $(n-1)$-form $\mathcal{H} \equiv J_{A_H}$ plays the role of the Hamiltonian density. 

		Note that Equation \eqref{eq::cft:definitionMomentumMap} obviously can be rewritten as $(A^*_H \contr \omega + \Dif \mathcal{H}) + A^*_H \contr E = 0$. Therefore, the equation
		\begin{equation}
			A^*_H \contr \omega + \Dif \mathcal{H} = 0
		\end{equation}
		is equivalent to the vanishing of $A^*_H \contr E$ and hence it is the field theoretic counterpart to the \emphDef{Hamilton's equations}. 

		This procedure clearly has the advantage to break covariance where it has to be broken while leaving the remaining covariant description intact. In particular, the other components of the momentum map are unaffected. Therefore, spacetime-symmetries should be handled more gracefully. The acid test of general relativity is subject to further work. \qedhere
	\end{thmenumerate}
\end{remark}

Noether's theorem would be nicely captured in $\difLie_{A^*} J_B = J_{\commutator{B}{A}}$, since in that case commutation of $B \in \liea{g}$ with $A_H$ implies that the momentum $J_B$ is conserved along time evolution. Due to continuity issues of the assignment $A \mapsto J_A$, only the following weaker form can be proven.
\begin{proposition}[Noether's theorem]
	Let $G$ be a Lie group with smooth exponential map and $\Upsilon$ a $G$-action with equivariant momentum map $J$. If
	\begin{equation}
		\AdjAction(\exp(tA)) B = B
	\end{equation}
	holds for small $t$ and some $A, B \in \liea{g}$, then $J_B$ is conserved with respect to the evolution of $A$, that is, $\difLie_{A^*} J_B = 0$.
\end{proposition}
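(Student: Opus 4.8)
The plan is to deduce the statement directly from equivariance of $J$ together with the flow representation of the Lie derivative, so that essentially no computation is required. Since $G$ carries a smooth exponential map, the Killing vector field $A^*$ has a flow, given by $\flow^{A^*}_t = \Upsilon_{\exp(tA)}$, as recorded right after the definition of Killing vector fields. Hence the algebraic Lie derivative of the $(n-1)$-form $J_B$ coincides with the dynamical one, and
\begin{equation*}
	\difLie_{A^*} J_B = \diffAt{}{t}{t=0} (\flow^{A^*}_{-t})^* J_B = \diffAt{}{t}{t=0} (\Upsilon_{\exp(-tA)})^* J_B .
\end{equation*}

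Next I would apply the equivariance identity $\Upsilon_g^* J_C = J_{\AdjAction(g^{-1}) C}$ with $g = \exp(-tA)$ and $C = B$. Because $(\exp(-tA))^{-1} = \exp(tA)$, this gives $(\Upsilon_{\exp(-tA)})^* J_B = J_{\AdjAction(\exp(tA)) B}$ for all $t$ in a neighbourhood of $0$. The hypothesis $\AdjAction(\exp(tA)) B = B$ for small $t$ then shows that this curve is constant equal to $J_B$ near $t=0$, whence differentiation at $t=0$ yields $\difLie_{A^*} J_B = 0$. One can even bypass any discussion of differentiating a curve of forms, since by the same equivariance argument the curve $t \mapsto (\Upsilon_{\exp(-tA)})^* J_B$ is pointwise constant on $\secspace{F} \times M$.

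The argument is formal and I do not anticipate a genuine obstacle; the only step deserving a word of justification is the identification $\difLie_X \alpha = \diffAt{}{t}{t=0}(\flow^X_{-t})^* \alpha$ invoked above, which was stated without proof in the discussion of the Lie derivative of differential forms. Here it is harmless, because the relevant flow $\Upsilon_{\exp(tA)}$ is a genuinely smooth family of diffeomorphisms of $\secspace{F} \times M$ and $J_B$ is a fixed form, so the flow formula can be quoted (or, if one prefers to be self-contained, checked by differentiating the defining expression for $\difLie$ along $\Upsilon_{\exp(tA)}$). I would also note that neither a nondegeneracy condition on $\Omega$ nor an underlying Lagrangian structure enters: the conclusion holds for every equivariant momentum map in the sense of the preceding definition, and it reduces to the physical Noether statement upon taking $A = A_H$ to be the evolution generator, in which case the hypothesis expresses that the symmetry generated by $B$ commutes with time evolution and the conclusion $\difLie_{A_H^*} J_B = 0$ says that the momentum $J_B$ is conserved.
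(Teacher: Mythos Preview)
Your proof is correct and follows essentially the same approach as the paper: compute $\difLie_{A^*} J_B$ via the flow formula $\diffAt{}{t}{t=0}(\Upsilon_{\exp(-tA)})^* J_B$, apply equivariance to rewrite this as $\diffAt{}{t}{t=0} J_{\AdjAction(\exp(tA)) B}$, and use the hypothesis to conclude the curve is constant. The paper's proof is a single displayed line with exactly these steps; your version simply supplies more commentary on the flow identification and the avoidance of differentiating a curve of forms.
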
   
\newpage
\begin{proof}
	\begin{equation}
		\difLie_{A^*} J_B = \diffAt{}{t}{t=0} \Upsilon^*_{\exp(-tA)} J_B = \diffAt{}{t}{t=0} J_{\AdjAction(\exp(tA)) B} = \diffAt{}{t}{t=0} J_{B} = 0. 
	\end{equation}
\end{proof}

In preparation for concrete examples, symmetries induced from spacetime transformations are considered now.
\begin{remark}[Lifted action] \label{rem::classicalFieldTheory:liftedAction}
	Every left action $\psi:G \times M \to M$ of a (possible infinite-dimensional) Lie group $G$ on the base manifold $M$ lifts to the left action
	\begin{equation}
		\Upsilon: G \times (\secspace{F} \times M) \to \secspace{F} \times M, \quad (g, \phi, m) \mapsto (\phi \circ \psi_{g^{-1}}, \psi_g (m)).
	\end{equation}
	One can prove that this map is smooth \parencite[Lemma 2.2.25]{Wockel2006}. This class of symmetries represents the generalisation of point transformations.
\end{remark}

The following examples are used to illustrate the previous concepts.
\begin{example}[Klein-Gordon translation symmetry] \label{ex::classicalFieldTheory:KleinGordonTranslationSymmetry}
  	This example continues the discussion of the complex Klein-Gordon field by analysing the translational symmetry. Recall from \autoref{ex::classicalFieldTheory:KleinGordonEulerLagrange} that the Lagrangian and the symplectic system are specified by
  	\begin{align}
  		\mathcal{L}	&= \frac{1}{2} (\Dif \bar\phi \wedge \star \dif \phi - \star \mu^2 \phi \bar\phi + \diF \phi \wedge \star \dif \bar \phi),\\
	 	\omega 		&= - \frac{1}{2} (\diF \phi \wedge \star \diF \dif \bar \phi + \diF \bar \phi \wedge \star \diF \dif \phi),\\
  		E &= - \frac{1}{2} \left( \diF \phi \wedge (\dif \, \star \dif \bar \phi + \star \mu^2 \bar\phi) + \diF \bar \phi \wedge (\dif \, \star \dif \phi + \star \mu^2 \phi) \right).
  	\end{align}
  	Now consider the action of $(\R^n, +)$ on $n$-dimensional Minkowski-spacetime $M$ by translation:
  	\begin{equation}
  		\psi: \R^n \times M \to M, \quad (x, m) \mapsto m+x.
  	\end{equation}
  	According to \autoref{rem::classicalFieldTheory:liftedAction} this lifts to an action
  	\begin{equation}
  		\Upsilon: \R^n \times (\secspace{F} \times M) \to \secspace{F} \times M, \quad (x, \phi, m) \mapsto (\phi \circ \psi_{-x}, \psi_{x}(m)).
  	\end{equation}
  	Apparently, this action leaves the evaluation map invariant, and therefore also the total Lagrangian and the symplectic structure\footnote{Recall that in the above formulae $\phi$ is just a place holder for the evaluation map.}. To calculate the momentum map associated with $\Upsilon$ the effect of the Killing vector fields on coordinate functions has to be known. After identifying the Lie algebra $\liea{g}$ with $\R^n$ one gets
  	\begin{align}
  		(A^* \contr \dif x^\mu)_{\phi, m} &= A^\mu,\\
  		(A^* \contr \diF (\ev))_{\phi, m} &= - (A^* \contr \dif (\ev))_{\phi, m}.
  	\end{align} %
  	The latter equation is a consequence of $0 = \difLie_{A^*} \ev = A^* \contr \Dif (\ev)$. 

	The calculation of the momentum map is made particularly clear by restricting to the case $n=2$ and using coordinates $(t,x)$ on $M$. Then the Hodge operator acts as follows:
	\begin{align}
		\star 1 = - \dif t \wedge \dif x && \star \dif t = \dif x \\
		\star (\dif t \wedge \dif x) = 1 && \star \dif x = \dif t.
	\end{align}
	Since the total Lagrangian $\mathcal{L}$ is invariant under $\Upsilon$, \autoref{prop::classicalFieldTheory:momentumMapForInvariantLagrangian} yields the momentum map
	\begin{equation}\begin{split}
		2 \cdot J_A &= 2 \cdot (A^* \contr \mathcal{L}) \\
			&= - \Dif \bar\phi \wedge (A^* \contr \star \dif \phi) - A^* \contr \star \mu^2 \phi \bar\phi + (A^* \contr \diF \phi) \star \dif \bar \phi - (A^* \contr \star \dif \bar \phi) \diF \phi \\
			&= \left( (\phi_t \bar\phi_t + \phi_x \bar\phi_x + \mu^2 \phi \bar\phi) \dif x + (\phi_x \bar\phi_t + \phi_t \bar\phi_x) \dif t + (\phi_x \diF \bar\phi + \bar\phi_x \diF \phi)\right) A^t \\
			&\quad + \left( (\phi_t \bar\phi_t + \phi_x \bar\phi_x - \mu^2 \phi \bar\phi) \dif t + (\phi_x \bar\phi_t + \phi_t \bar\phi_x) \dif x + (\phi_t \diF \bar\phi + \bar\phi_t \diF \phi)\right) A^x,
	\end{split}\end{equation}
	where the subindex on $\phi$ represents partial derivation with respect to the specified variable. The bidegree $(0, 1)$-component of $J_A$ equals the usual stress-energy tensor (up to a factor $\sfrac{1}{2}$). %
	Thus one is left with discussing the terms of the form $\phi_x \diF \bar\phi$. For this purpose, recall that the standard Hamiltonian density $\mathcal{H} = \phi_t \bar\phi_t + \phi_x \bar\phi_x + \mu^2 \phi \bar\phi$ does not yield the Klein-Gordon equation as Hamilton's equation for the canonical conjugate coordinates $(\bar\phi, \phi_t)$. However, the modified density
	\begin{equation}
		\mathcal{\tilde{H}} = \phi_t \bar\phi_t - \phi_{xx} \bar\phi + \mu^2 \phi \bar\phi,
	\end{equation}
	gives the correct Hamilton's equations. One transforms $\mathcal{H}$ to $\mathcal{\tilde{H}}$ by integrating over it and then using integration by parts. The occurring boundary term is $\phi_x \bar\phi$, which closely resembles $\phi_x \diF \bar\phi$. The latter term indeed cancels the distracting terms in Hamilton's equation (here $A_H=(1,0)$)
	\begin{equation}
		A^*_H \contr \omega + \Dif J_{A_H} = 0.
	\end{equation}
	A straightforward but lengthy calculation shows that this formula is in all its components equivalent to the Klein-Gordon equation. In this sense, the $(1,0)$-component of $J_A$ plays the role of partial integration similar to the variational form $\theta$.
\end{example}  

\begin{example}[$U(1)$-symmetry of Klein-Gordon]
	The treatment of the $U(1)$-symmetry of the Klein-Gordon proceeds analogous to \autoref{ex::classicalFieldTheory:KleinGordonTranslationSymmetry} but exposes the case of a symmetry not induced from an action on the base manifold. Parametrize elements of $U(1)$ as $e^{\I \alpha}$ and consider the action
	\begin{equation}
		\Upsilon: U(1) \times (\secspace{F} \times M) \to \secspace{F} \times M, \quad (\alpha, \phi, m) \mapsto (e^{\I \alpha} \phi, m).
	\end{equation}
	Since the Lagrangian $\mathcal{L}$ contains only combinations of $\phi$ and its complex conjugate, it is invariant under $\Upsilon$. Clearly the Killing vector fields fulfil
	\begin{align}
  		(A_\alpha^* \contr \dif x^\mu)_{\phi, m} &= 0,\\
  		(A_\alpha^* \contr \diF (\ev))_{\phi, m} &= \I \alpha \phi,
  	\end{align}
  	where $\alpha \in \R$ parametrizes $\liea{u(1)} = \I \R$. \Autoref{prop::classicalFieldTheory:momentumMapForInvariantLagrangian} yields the momentum map
  	\begin{equation}
  		J_\alpha = A_\alpha^* \contr \mathcal{L} = A_\alpha^* \contr \theta = \frac{1}{2} A_\alpha^* \contr (\diF \phi \wedge \star \dif \bar \phi + \diF \bar \phi \wedge \star \dif \phi) = \frac{\I \alpha}{2} (\phi \star \dif \bar \phi - \bar \phi \star \dif \phi). 
  	\end{equation}
\end{example}

\chapter{Yang-Mills theory} \label{sec::gaugeTheory}
A gauge theory is a special type of field theory in which the Lagrangian is invariant under the action of the group of gauge transformations. Since they naturally provide all the basic inputs - an infinite-dimensional state space endowed with a Lagrangian invariant under a group action - such physical models serve as prime examples for applying the general theorems from the preceding chapters. In particular, the following discussion will present positive answers to the questions concerning the existence of slices and the momentum map.

A fixed principal bundle $P \to M$ sets out the scenery. The pseudo-Riemannian manifold $(M, g)$ models spacetime and the physical state space is represented by the space of all connections in $P$. In this context, the structure group $G$ of $P$ is called gauge group and encodes the gauge symmetry of the theory. Now, \autoref{prop::liegroup:sliceTheorem} identifies the following steps in order to construct a slice for the gauge transformation action:
\begin{enumerate}[label=\arabic*.] 
	\item Endow the space $\secspace{Conn}$ of connections in $P$ with a tame Fréchet manifold structure.
	\item Represent the group of gauge transformations $\secspace{Gau}$ as a locally exponential and tame Fréchet Lie group. Prove that the action $\secmap{transf}: \secspace{Gau} \times \secspace{Conn} \to \secspace{Conn}$ is tame smooth. 
	\item Show properness of $\secmap{transf}$.
	\item Identify the stabilizer $\secspace{Gau}_\lambda$ of $\lambda \in \secspace{Conn}$ as a tame Fréchet principal Lie subgroup of $\secspace{Gau}$.
	\item There exists a Kodaira-like decomposition of the tangent space $T_\lambda \secspace{Conn} = \img (\secmap{transf}_\lambda)'_e \oplus F_\lambda$ for some closed subspace $F_\lambda \subseteq T_\lambda \secspace{Conn}$.
	\item Locally and near the identity of $\Gau$, the derivative of $\transf_\lambda$ admits a family of tame smooth inverses.
	\item $\Conn$ carries a $\Gau$-invariant locally equivalent, graded Riemannian metric $g^k$ such that the $0$-exponential map restricts to a local diffeomorphism around every point of the zero section of the normal bundle.
\end{enumerate}

This series of tasks will be tackled in sequential order in the present chapter, with a discussion of the momentum map appended at the end. As the next sections will show, the program outlined above can be successfully completed in the case of compact base manifold $M$ and some assumptions on the gauge group $G$. This leads to the main result:
\begin{theorem}
	Let $P \to M$ be a principal $G$-bundle. If $M$ is compact and $G$ of the form $G = G_K \times \R^k$ for some compact Lie group $G_K$ and $k \in \N$, then at every point $\lambda \in \secspace{Conn}$ there exists a slice for the action $\secmap{transf}: \secspace{Gau} \times \secspace{Conn} \to \secspace{Conn}$.
\end{theorem}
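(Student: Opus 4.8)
The proof verifies, for the gauge action $\secmap{transf}\colon\secspace{Gau}\times\secspace{Conn}\to\secspace{Conn}$, the seven hypotheses of the slice theorem \cref{prop::liegroup:sliceTheorem}; these coincide with the seven tasks stated above, and I would discharge them in that order. To begin, $\secspace{Conn}$ is an affine space modelled on the tame Fréchet space $\secspaceEx{T^*M\otimes\AdjBundle P}$ of $\AdjBundle P$-valued one-forms (\cref{sec::sectionSpace}), hence a tame Fréchet manifold with $T_\lambda\secspace{Conn}\cong\secspaceEx{T^*M\otimes\AdjBundle P}$ at every $\lambda$. The gauge group $\secspace{Gau}$ is the section space of the group bundle $P\times_G G$ with fibrewise conjugation, i.e.\ a current group in the sense of \cref{ex::lieGroup:currentGroup}; in particular it is a locally exponential tame Fréchet Lie group. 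Since $\secmap{transf}$ is assembled from fibre-preserving bundle maps together with first-jet prolongation of sections, it is tame smooth by \cref{prop::sectionSpace:pushforwardTameSmooth}.

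For the stabilizer and the linear conditions, note that $\secspace{Gau}_\lambda$ consists of the gauge transformations parallel with respect to $\lambda$; such a section is determined by its value at one point of each connected component of $M$, so $\secspace{Gau}_\lambda$ is a finite-dimensional closed subgroup of $\secspace{Gau}$ and therefore, by \cref{prop::lieGroup:finiteDimSubgroupIsStrongSplittingLieSubgroup}, a principal tame Fréchet Lie subgroup. The image of $(\secmap{transf}_\lambda)'_e$ equals $\img\mathrm{d}_\lambda\subseteq\Omega^1(M,\AdjBundle P)$, where $\mathrm{d}_\lambda$ denotes the exterior covariant derivative on $\Omega^0$; the Hodge--Kodaira decomposition attached to the elliptic operator $\mathrm{d}_\lambda^*\mathrm{d}_\lambda$ gives the tame topological splitting $\Omega^1(M,\AdjBundle P)=\img\mathrm{d}_\lambda\oplus\ker\mathrm{d}_\lambda^*$, so one sets $F_\lambda\defeq\ker\mathrm{d}_\lambda^*$. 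For condition (iii) one trivialises $\secmap{transf}_\lambda$ near $e$ by left translation and restricts to a topological complement of the Lie algebra of $\secspace{Gau}_\lambda$ inside $\secspaceEx{\AdjBundle P}$; there the composition of $(\secmap{transf}_\lambda)'_g$ with the projection onto $\img\mathrm{d}_\lambda$ becomes a Laplace-type elliptic operator with trivial kernel, whose family of Green operators is tame smooth, with the required tame estimates, by Hamilton's theory of elliptic operators \parencite{Hamilton1982}. This is precisely where the splitting of the stabilizer is used.

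The Riemannian condition is the easiest, because $\secspace{Conn}$ is affine. Fixing an $\AdjAction$-invariant inner product on $\liea{g}$ and a metric on $M$, the graded Sobolev inner products make $\secspace{Conn}$ a $\secspace{Gau}$-invariant, locally equivalent graded Riemannian manifold (invariance since the gauge action is affine and preserves the fibre inner product), and each $g^k$ is flat. Hence the straight-line map $\exp_\lambda(a)\defeq\lambda+a$ simultaneously realises the geodesics of all $g^k$, so an $l$-exponential map exists for every $l$; restricted to the normal bundle it is the affine map $(\lambda,a)\mapsto\lambda+a$, which is an equivariant local diffeomorphism along the zero section by the transversality $T_\lambda O\oplus N_\lambda O=T_\lambda\secspace{Conn}$ established after \cref{prop::lieGroup:orbitClosedSubmanifold}.

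The main obstacle is properness (step 3). By \cref{prop:lieGroupAction:properEquivalent} it suffices to show that if $\lambda_i\to\lambda$ and $\secmap{transf}(g_i,\lambda_i)\to\mu$ in $\secspace{Conn}$, then $(g_i)$ admits a convergent subsequence. The relation $\secmap{transf}(g_i,\lambda_i)-\lambda_i=-(\mathrm{d}_{\lambda_i}g_i)g_i^{-1}$ controls the covariant derivatives of $g_i$; compactness of the factor $G_K$ makes the $G_K$-components of $g_i$ uniformly bounded pointwise, and an elliptic bootstrap together with the Sobolev embedding upgrades this to precompactness of $(g_i)$ in every seminorm. The abelian factor $\R^k$ is the delicate point, since $\R^k$-valued gauge transformations are not pointwise controlled; however the constant $\R^k$-valued transformations act trivially on $\secspace{Conn}$, so $\secmap{transf}$ descends to the quotient $\overline{\secspace{Gau}}\defeq\secspace{Gau}/\R^k$, which is again a tame Fréchet Lie group with the same orbits and with now-compact stabilizers $\overline{\secspace{Gau}}_\lambda$, and the bootstrap goes through for $\overline{\secspace{Gau}}$. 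A slice for the $\overline{\secspace{Gau}}$-action is automatically a slice for $\secspace{Gau}$, because the defining conditions involve only orbits and stabilizer cosets, which are unchanged. Applying \cref{prop::liegroup:sliceTheorem} to the proper action of $\overline{\secspace{Gau}}$ then yields a slice at $\lambda$, and the argument makes transparent that compactness of $G_K$ is the essential hypothesis.
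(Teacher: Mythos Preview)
Your proof follows essentially the same route as the paper: verify the seven hypotheses of \cref{prop::liegroup:sliceTheorem} for the gauge action, with each step (tame Fréchet structures, current-group description of $\Gau$, finite-dimensionality of the stabilizer via parallelism/holonomy, the Hodge--Kodaira splitting $\img\nabla_\lambda\oplus\ker\nabla_\lambda^*$, Green operators for the tame inverse family, and the affine exponential map) matching the paper's sections almost point for point.

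There are two places where your treatment diverges. First, for the exponential map restricted to $NO$ you argue only via the transversality $T_\lambda O\oplus N_\lambda O=T_\lambda\Conn$, i.e.\ invertibility of the derivative at the zero section. In the tame Fréchet setting that is not enough; the paper explicitly invokes the Nash--Moser theorem together with the elliptic estimates to obtain the local diffeomorphism, and this is where the tame inverse family built earlier is actually used again. Your statement is correct but underjustified.

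Second, and more interestingly, the handling of the $\R^k$ factor is genuinely different. The paper proves properness only for compact (matrix) $G$ and then cites the literature (\textcite{Subramaniam1984}, \textcite{AbbatiCirelliEtAl1989}) for $G=G_K\times\R^k$, remarking that a detour via the diffeomorphism group seems necessary. You instead quotient $\Gau$ by the constant $\R^k$-transformations, which lie in the kernel of the action, and run the slice theorem for $\overline\Gau$. This is a clean alternative: since $\R^k$ is contained in every stabilizer, $G/G_m\cong\overline G/\overline G_m$ and the three slice conditions transfer verbatim, as you note. What you should make explicit is why the $\overline\Gau$-action is proper: the $G_K$-component is handled by the bootstrap you sketch, while the abelian component acts by $A\mapsto A-\dif f$ and properness on $C^\infty(M,\R^k)/\text{(locally constant functions)}$ follows from the fact that $\dif$ is a closed embedding there (Hodge theory / open mapping theorem). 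With that sentence added, your quotient argument is self-contained and arguably more transparent than the reference to external sources.
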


Since the eighties, gauge theory and its functional analytic questions attracted the attention of mathematical physicists. In particular, the works of \textcite{Subramaniam1984} and of \citeauthor{AbbatiCirelliEtAl1986} \parencite{AbbatiCirelliEtAl1986,AbbatiCirelliEtAl1989} are noteworthy. Both groups independently proved a slice theorem for the action of the group of gauge transformations in the setting of Fréchet spaces. A similar result was obtained by \textcite{KondrackiRogulski1986} using Sobolev space techniques, see also the recent review \parencite{RudolphSchmidtEtAl2002}. Accordingly, the results presented in this chapter are already known to experts, although the exposition differs from the aforementioned literature at some points. 

\section{Space of connections: The connection bundle}
In this section, the space of connections in a fixed principal $G$-bundle $\pi: P \to M$ will be endowed with a smooth manifold structure by representing connections as sections of an appropriate bundle. For this purpose, the viewpoint of a connection as a horizontal lift mechanism will be advantageous. 

The following exact sequence of vector bundles over $P$ marks the starting point for the further discussion,
\begin{equation}
	\begin{tikzcd}
			0 \arrow{r} & VP \arrow[hook]{r}{\iota} & TP \arrow{r}{\pi'} & \pi^* TM \arrow{r} & 0,
	\end{tikzcd}
\end{equation}
where $\iota: VP \hookrightarrow TP$ denotes the natural inclusion of the vertical tangent bundle. The right action $\Psi$ of $G$ on $P$ lifts to the tangent action $\Psi'_g: TP \to TP$, which leaves the above exact sequence invariant. Thus, factoring yields the $G$-invariant version
\begin{equation} \label{eq::gaugeTheory:connectionAtiyahSequence}
	\begin{tikzcd}
			0 \arrow{r} & \AdjBundle{P} \arrow[hook]{r}{\iota} & \AtiyahBundle{P} \arrow{r}{\pi'} & TM \arrow{r} & 0.
	\end{tikzcd}
\end{equation}
Here the bundle $TP/G$ is denoted by $\AtiyahBundle{P}$ in honour of M. Atiyah, who had introduced the above sequence while discussing complex analytic connections in \parencite{Atiyah1957}. As usual, the adjoint bundle $\AdjBundle{P}$ is the associated bundle $P \times_G \liea{g}$, where $G$ acts via the adjoint representation. It is identified with $VP/G$ via the $G$-equivariant vector bundle isomorphism
\begin{equation}
	P \times \liea{g} \to VP, \quad (p,A) \mapsto A^*_p.	
\end{equation} 

Now a connection is simply defined as a splitting of \eqref{eq::gaugeTheory:connectionAtiyahSequence}, 
\begin{equation} \label{eq::gaugeTheory:connectionAtiyahSequenceSplit}
	\begin{tikzcd}[row sep=tiny]
			0 \arrow{r} & \AdjBundle{P} \arrow[hook]{r}{\iota} & \AtiyahBundle{P} \arrow[white]{d}[black, description]{\verteq}\arrow{r}{\pi'} \arrow[swap, bend right=50]{l}{\alpha} & TM \arrow{r} \arrow[swap, bend right=50]{l}{\lambda} & 0.\\
			& & \AdjBundle{P} \oplus TM & &
	\end{tikzcd}
\end{equation}
Thereby, the Atiyah sequence unifies the different incarnations of connections in the three discriminate ways to split an exact sequence. First, the direct sum representation $\AtiyahBundle{P} =  \AdjBundle{P} \oplus TM$ corresponds to choosing a $G$-invariant horizontal subbundle. Second, a splitting in form of a vertical vector bundle isomorphism $\alpha: \AtiyahBundle{P} \to \AdjBundle{P}$ with $\alpha \circ \iota = \id_{TP}$ is just a transcription of a connection form. But it is the third way that allows a representation as sections of a bundle over $M$. In fact, vertical vector bundle morphisms $\lambda: TM \to \AtiyahBundle{P}$ with $\pi' \circ \lambda = \id_{TM}$ are in a bijective correspondence with sections of the affine \emphDef{connection bundle $CP$}, whose fibres are
\begin{equation}
	C_m P \defeq \setc{\lambda_m: T_mM \to \AtiyahBundle{P}_m }{\text{linear}, \pi'\circ \lambda_m = \id_{T_mM}}.
\end{equation}
The affine space $C_m P$ is modelled on the vector space of linear maps $\sigma_m: T_mM \to \AtiyahBundle{P}_m$ with $\pi'\circ \sigma_m = 0$, that is on $L(T_mM, \AdjBundleOp_m P)$. A map $\lambda_m \in C_m P$ assigns to every tangent vector $Y_m \in T_mM$ its horizontal lift $\lambda_m (Y_m)$. Finally, the space of all connections $\secspace{Conn} \defeq \secspaceEx{CP}$ carries the natural locally convex manifold structure discussed in \autoref{sec::sectionSpace}. In particular, the space $\secspace{Conn}$ is a tame Fréchet manifold for compact base $M$.

\section{Action of the group of gauge transformation}
The present section discusses the group of gauge transformation and its action on the space of connections. Starting with this section, only compact base manifolds $M$ are considered.

It is well known that the group of gauge transformation $\secspace{Gau}$ can be represented as the section space of the associated group bundle $P \times_G G$, where $G$ acts on itself via conjugation. The case of sections of a bundle with groups as fibres was already studied in \autoref{ex::lieGroup:currentGroup}. There, it was shown that the section space carries a natural tame Fréchet Lie group structure with a locally diffeomorphic exponential map, which stems from the pushforward of the exponential map on the fibres. Moreover, the Lie algebra was identified with sections of the bundle created by applying the Lie algebra functor fibrewise; in the present case, this yields $\liea{gau} = \secspaceEx{\AdjBundle{P}}$. That is, the $G$-invariant, vertical vector fields on $P$ constitute the Lie algebra of $\secspace{Gau}$.
\newpage
\begin{remark}
	There exists well-known algebraic isomorphisms of $\secspace{Gau}$ with %
	\begin{align}
		\secspace{Gau}^\# &\defeq \setc{\Phi^\# \in C^\infty(P,P)}{G\text{-equivariant and fibre-preserving}}, \\
		\secspace{Gau}\,\hat{}\;\; &\defeq \setc{\hat\Phi \in C^\infty(P,G)}{G\text{-equivariant}}.
	\end{align}
	But for a non-compact gauge group $G$ it is not clear how to topologize these spaces in an appropriate way. For the compact-open topology, the diffeomorphism group carries no Lie group structure. Endowed with the fine Whitney-topology the ambient groups $\DiffGr(P)$ and $C^\infty(P,G)$ are now Lie groups, however $\secspace{Gau}^\#$ and $\secspace{Gau}\,\hat{}\,$ are only discrete subgroups \parencite{AbbatiCirelliEtAl1986}. If $G$ is compact, then the above groups are tame Fréchet Lie groups diffeomorphic to $\secspace{Gau}$ \parencite{AbbatiCirelliEtAl1986}. 
\end{remark}

In view of the previous remark, in order to study smoothness properties, the usual left action
\begin{equation} \label{eq::gaugeTheory:ActionGauSharpOnConnForm}
	\secspace{Gau}^\# \times \secspace{ConnForm} \to \secspace{ConnForm}, \quad (\Phi^\#, \alpha) \mapsto \tilde{\alpha} \defeq (\Phi^{\# \, -1})^* \alpha
\end{equation}
has to be replaced by an action of $\secspace{Gau}$ on the space of connections, where connections are viewed as sections of $CP$ and not as connection forms. This is now achieved, starting from the well known formula
\begin{equation} \label{eq::gaugeTheory:ActionGauHatOnConnForm}
	\tilde{\alpha}(X_p) = \AdjAction(\hat\Phi(p))\alpha(X_p) - \Theta^R(\hat\Phi' X_p) \qquad X_p \in T_pP,		
\end{equation}
where $\Theta^R$ denotes the right Maurer-Cartan form. Due to \eqref{eq::gaugeTheory:connectionAtiyahSequenceSplit} the horizontal lift of $Y_m$ to a point $p \in P$ in the fibre over $m$ is given by $(\lambda_m (Y_m))_p = X_p - (\alpha(X_p))^*_p$ for $X_p \in T_pP$ projecting on $Y_m$. This equation is independent from the chosen $X_p$ as a direct calculation reveals. Thus, the difference of the horizontal lifts with respect to a transformed connection evaluates to
\begin{equation}
 	\Bigl[\tilde{\lambda}_m (Y_m) - \lambda_m (Y_m)\Bigr]_p = - \Bigl[ \tilde{\alpha}(X_p) - \alpha(X_p)\Bigr]^*_p = - \Bigl[ (\AdjAction(\hat\Phi(p)) - 1) \alpha(X_p) - \Theta^R(\hat\Phi' X_p)\Bigr]^*_p.
\end{equation} 
The aim is now to rewrite this expression as an object living on $TG$. The tangent bundle $TG$ is itself a Lie group. In fact, the group operations arise as the derivative of the corresponding operations on $G$:
\begin{equation}\begin{split}
	\mult_{TG} = \mult'_G&: TG \times TG \to TG, \\
	\inv_{TG} = \inv'_G&: TG \to TG. 
\end{split}\end{equation}
It is well known that $TG$ can be identified with the semi-direct product $G \ltimes \liea{g}$ via the right trivialization $\triv^R(V_g) = (g, \Theta^R(V_g))$. On $G \ltimes \liea{g}$ the group operations correspond to 
\begin{equation}\begin{split}
	(g, A) \cdot (h, B) = (g \cdot h, A + \AdjAction(g) B),\\
	(g, A)^{-1} = (g^{-1}, - \AdjAction(g^{-1}) A).
\end{split}\end{equation}
Furthermore, left translation induces the map $L'_g: TG \to TG$ which acts relative to $\triv^R$ as $L'_g(h, B) = (g \cdot h, \AdjAction(g)B)$. Now the identity
\begin{equation}
	\La'_g (e, A) \cdot (V_g^{-1}) \cdot (e, A)^{-1} = (e, (\AdjAction(g) - 1) A - \Theta^R(V_g)) \qquad \text{for } V_g \in T_gG
\end{equation}
is easily verified. This observation leads to the consideration of the following map
\begin{equation}
	\text{transf}: (TP \times TG) \times_{TM} TP \to TP, \quad (Z_p + A^*_p, V_g, Z_p) \mapsto Z_p - [\La'_g \conjAction_{e, A} (V_g^{-1})]^*_p,
\end{equation}
which descends to the quotient $(TP \times_{TG} TG) \times_{TM} TP/G \to TP/G$, as a lengthy but straightforward calculation verifies. Finally, the desired left action of $\secspace{Gau}$ is expressed as
\begin{equation} \label{eq::gaugeTheory:gaugeTransformation}
	\secmap{transf}: \secspace{Gau} \times \secspace{Conn} \to \secspace{Conn}, \quad (\Phi, \lambda) \mapsto \tilde{\lambda}(Y_m) \defeq \text{transf}(\Phi'Y_m, \lambda(Y_m)).
\end{equation}
It is a tame smooth map since it is the composition of the tangent functor with the pushforward of the map $\text{transf}$. The equivalence with \eqref{eq::gaugeTheory:ActionGauHatOnConnForm} follows from the above mentioned identities and the identification $\Phi'Y_m = \equivClass{X_p, \hat\Phi' X_p}_{TG}$, which results in $A = \alpha(X_p)$.

For later use, the infinitesimal gauge transformation is recorded here,
\begin{equation} \label{eq::gaugeTheory:gaugeTransformationTangent}
 	(\secmap{transf})'_{\Phi, \lambda} (\Xi, \sigma) = \secmap{transf}(\Phi, \sigma) - \nabla_\lambda \Xi,
\end{equation} 
where $\nabla_\lambda: \secspaceEx{\AdjBundle{P}} \to \secspaceEx{T^*M \otimes \AdjBundle{P}}$ is the covariant derivative with respect to $\lambda$ in the associated bundle $\AdjBundle{P}$. Correctness of the first term is easily verified by acknowledging that the map $\text{transf}$ is linear in $Z_p$ for vertical vectors. The covariant derivative $\nabla_\lambda$ manifests itself by replacing $\Phi = \exp(t \Xi)$ in \eqref{eq::gaugeTheory:gaugeTransformation} and subsequently differentiating with respect to $t$.

\section{Properness and stabilizer of the gauge transformation action}
\begin{proposition} \label{prop::gaugeTheory:properActionForCompactGaugeGroup}
	If $G \subseteq GL(r, \R)$ is a compact matrix Lie group, then the gauge transformation action $\secmap{transf}$ on the manifold of connections is proper. 
\end{proposition}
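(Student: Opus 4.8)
The plan is to verify properness via the sequential criterion of \cref{prop:lieGroupAction:properEquivalent}, using the metric structure on both $\secspace{Gau}$ and $\secspace{Conn}$. So suppose $(\lambda_i)$ is a sequence in $\secspace{Conn}$ and $(\Phi_i)$ a sequence in $\secspace{Gau}$ such that $\lambda_i \to \lambda$ and $\tilde\lambda_i \defeq \secmap{transf}(\Phi_i, \lambda_i) \to \mu$ in $\secspace{Conn}$; I must extract a subsequence of $(\Phi_i)$ converging in $\secspace{Gau}$. First I would recall from \cref{prop::lcLieGroup:metrisableZeroTame} that both $\secspace{Gau}$ and $\secspace{Conn}$ carry complete invariant metrics compatible with their topologies, so it suffices to produce a Cauchy subsequence, and by completeness we need only bound the $\Phi_i$ into a relatively compact set and then argue convergence.

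The key point exploiting compactness of $G$ is that the associated group bundle $P \times_G G$ has \emph{compact} typical fibre $G$, hence for a compact base $M$ the section space $\secspace{Gau} = \secspaceEx{P \times_G G}$ has the property that the image sets $\secmap{K}$ are automatically pointwise-precompact; more precisely, $\secspace{Gau}$ is a closed subset of $C^\infty(M, P\times_G G)$ and the fibrewise compactness together with Arzel\`a--Ascoli-type arguments in each $C^k$-norm yields that any sequence of gauge transformations whose derivatives of all orders are uniformly bounded has a $C^\infty$-convergent subsequence. Thus the task reduces to deriving, from the convergence of $\lambda_i$ and $\tilde\lambda_i$, uniform bounds on all derivatives of $\Phi_i$. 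Here I would use the explicit infinitesimal/finite transformation formula: in a local trivialisation, writing $\Phi_i$ via the equivariant map $\hat\Phi_i \colon P \to G$, the transformation law \eqref{eq::gaugeTheory:ActionGauHatOnConnForm} reads schematically $\tilde\alpha_i = \AdjAction(\hat\Phi_i)\alpha_i - \Theta^R(\hat\Phi_i{}' \,\cdot\,)$, i.e. $\hat\Phi_i^{-1} \dif\hat\Phi_i = \AdjAction(\hat\Phi_i^{-1})\alpha_i - \tilde\alpha_i$ (the Maurer--Cartan equation solved for the logarithmic derivative). Since $G$ is compact, $\hat\Phi_i$ itself takes values in a fixed compact set, so it is $C^0$-bounded; the displayed identity then bounds $\dif\hat\Phi_i$ in terms of the $C^0$-norms of $\alpha_i$ and $\tilde\alpha_i$, which converge and hence are bounded. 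Differentiating this identity repeatedly and inducting on the order, each $\dif^{k}\hat\Phi_i$ is bounded by a polynomial expression in the lower-order derivatives of $\hat\Phi_i$ (already controlled) and the $C^{k}$-norms of $\alpha_i, \tilde\alpha_i$ (bounded, since convergent sequences are bounded in every seminorm $\norm{\cdot}_{K,k}$ of \cref{sec::sectionSpace}). Covering $M$ by finitely many trivialising charts and taking the maximum of the finitely many local constants gives global bounds $\norm{\Phi_i}_{K,k} \leq C_k$ for all $i$ and all $k$.

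Having established these uniform $C^k$-bounds, the Arzel\`a--Ascoli theorem applied successively in $C^0, C^1, C^2, \dotsc$ together with a diagonal argument produces a subsequence $\Phi_{i_j}$ converging in every $C^k$-norm, hence converging in $\secspace{Gau}$ to some $\Phi$; the limit lies in $\secspace{Gau}$ because $G$-equivariance and fibre-preservation are closed conditions under $C^\infty$-convergence (and because $\secspace{Gau}$ is a closed submanifold of the relevant mapping space, as noted in \cref{sec::sectionSpace}). Finally, continuity of $\secmap{transf}$ gives $\secmap{transf}(\Phi_{i_j}, \lambda_{i_j}) \to \secmap{transf}(\Phi, \lambda)$, so $\secmap{transf}(\Phi,\lambda) = \mu$, confirming consistency but not otherwise needed. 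By \cref{prop:lieGroupAction:properEquivalent} the action $\secmap{transf}$ is proper.

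The main obstacle is the inductive derivative estimate: one must carefully organise the bookkeeping so that $\norm{\dif^{k}\hat\Phi_i}$ is genuinely controlled only by quantities already known to be bounded, which requires that the highest-order term on the right-hand side of the differentiated Maurer--Cartan identity involves $\dif^{k}\alpha_i$ and $\dif^{k}\tilde\alpha_i$ linearly (no $\dif^{k+1}\hat\Phi_i$ appears) — this is exactly the structure that makes the gauge transformation action have \emph{no loss of derivatives}, and it is ultimately why compactness of $G$, which bounds $\hat\Phi_i$ in $C^0$ for free and thereby seeds the induction, is essential. A secondary subtlety is the patching of local constants across charts and making sure the local trivialisations of $P$ can be chosen so that the transition functions do not spoil the uniformity; compactness of $M$ handles this by reducing to a finite cover.
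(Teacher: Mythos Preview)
Your argument is correct and follows the same overall scheme as the paper---sequential criterion plus the transformation law---but the analytic core is organised differently. The paper, after fixing a trivialising chart and a base point $p$, normalises $\underline{\Phi}_{i_k}(p) = e$ and reads the transformation identity as a linear ODE for $h(t) = \underline{\Phi}_{i_k}(tp)$ along rays, with coefficients built from $\lambda_{i_k}$ and $\tilde\lambda_{i_k}$; continuous dependence of ODE solutions on parameters then gives uniform $C^0$-convergence of the $\underline{\Phi}_{i_k}$ directly, and the same equation is then differentiated to upgrade to $C^\infty$-convergence. Compactness of $G$ enters only to extract, at the single point $p$, a convergent subsequence of $\Phi_i(p)$ which seeds the ODE argument.

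Your route instead extracts uniform $C^k$-bounds inductively from the differentiated Maurer--Cartan identity and then invokes Arzel\`a--Ascoli plus a diagonal argument. This is more elementary (no ODE theory) and arguably more robust, but it trades the paper's one-shot convergence for a two-step bounds-then-compactness argument. The paper's ODE viewpoint has the minor advantage that convergence on each chart follows immediately from convergence at a point, making the patching across overlapping charts transparent (choose $p$ in the overlap); in your approach the global limit is obtained as a limit of global sections, so no patching is needed, but one should note that the Arzel\`a--Ascoli step in $C^k$ requires equicontinuity of the $k$-th derivatives, which is supplied by your $C^{k+1}$-bound.
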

\begin{proof}
	$\Gau$ is metrizable as a $0$-tame Fréchet Lie group by \cref{prop::lcLieGroup:metrisableZeroTame} and in \cref{sec::gaugeTheory:InvariantRiemannianMetric} a path-metric on $\Conn$ will be constructed (independently of the present claim). Hence, by \autoref{prop:lieGroupAction:properEquivalent}, it suffices to show the following: Let $(\Phi_i)$ be a sequence in $\secspace{Gau}$ and $\lambda_i \to \lambda$ a convergent sequence in $\secspace{Conn}$ such that $\secmap{transf}(\Phi_i, \lambda_i)$ converges to $\tilde \lambda$. Then $(\Phi_i)$ contains a convergent subsequence. The proof of this fact consists of the subsequent steps which will be addressed individually in separate lemmas.
	\begin{itemize}
		\item On an open subset, convergence of a subsequence $\Phi_{i_k}$ at one point is enough to ensure uniform convergence on compacta.
		\item Using the first point and compactness of $G$, for an open cover $U_\alpha$ one obtains a family of subsequences $\Phi_{i_k, \alpha}$ converging to maps $\Phi_\alpha: U_\alpha \to P \times_G G$. These combine to a continuous section $\Phi: M \to P \times_G G$, which is the compactly uniform limit of a subsequence $\Phi_{i_k}$.
		\item A local analysis of $\secmap{transf}(\Phi_i, \lambda_i) \to \tilde \lambda$ shows that $\Phi$ is actually smooth and $\Phi_{i_k}$ converges to it in the topology of uniform convergence of all partial derivatives on compacta.   
	\end{itemize}
	This approach is adapted from \parencite[Proposition 2.4]{NarasimhanRamadas1979}, where the special case $M=S^3$ and $G=SU(2)$ is studied.
\end{proof}

The following lemmas build upon the notation introduced in \autoref{prop::gaugeTheory:properActionForCompactGaugeGroup}.
\begin{lemma}
	Let $U \subseteq M$ be an open subset. If there exist a point $p \in U$ and a subsequence $(\Phi_{i_k})$ such that $(\Phi_{i_k}(p))$ converges, then $(\Phi_{i_k})$ converges uniformly on compacta to a set-theoretic section $\Phi: U \to P \times_G G$.  
\end{lemma}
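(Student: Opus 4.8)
The plan is to pass to a local trivialisation, where the gauge transformation law becomes a linear first-order differential equation for matrix-valued maps, and then invoke elementary ODE comparison together with Arzel\`a--Ascoli. First I would reduce to the case that $U$ is connected -- this is what the conclusion actually requires, and it is what is available in the application, where $U$ runs over a connected cover -- and work over a trivialising chart. There $\lambda_i$, $\Phi_i$ and $\tilde\lambda_i \defeq \transf(\Phi_i,\lambda_i)$ are represented by a $\liea{g}$-valued one-form $A_i$, a smooth map $g_i \colon U \to G \subseteq GL(r,\R)$, and a one-form $\tilde A_i$; unwinding \eqref{eq::gaugeTheory:gaugeTransformation} (equivalently \eqref{eq::gaugeTheory:ActionGauHatOnConnForm}) rewrites the transformation rule as the identity of matrix-valued one-forms
\begin{equation*}
	\dif g_i = g_i A_i - \tilde A_i g_i .
\end{equation*}

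Next I would collect a priori bounds. Since $\lambda_i \to \lambda$ and $\transf(\Phi_i,\lambda_i) \to \tilde\lambda$ in $\Conn$, a space carrying the $C^\infty$ compact-open topology, the forms $A_i$ and $\tilde A_i$ converge uniformly with all derivatives on compacta and hence are bounded there; and compactness of $G$ confines all $g_i$ to a fixed bounded set of matrices. Feeding this into the displayed identity shows $\dif g_i$ is uniformly bounded on each compact subset of $U$, so the subsequence $(g_{i_k})$ is bounded and equicontinuous on compacta.

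The crux is to pin down the limit from the single datum $\lim_k \Phi_{i_k}(p)$. Given $q \in U$, I would pick a smooth path $\gamma$ from $p$ to $q$ inside $U$; along it $u_k(t) \defeq g_{i_k}(\gamma(t))$ solves the linear matrix ODE $\dot u_k = u_k A_{i_k}(\dot\gamma) - \tilde A_{i_k}(\dot\gamma) u_k$ with $u_k(0) = \Phi_{i_k}(p)$. The coefficients converge uniformly on $[0,1]$ and the initial values converge, with limit in $G$ because $G$ is closed; continuous dependence of solutions of linear equations on coefficients and initial data (a Gr\"onwall estimate) then gives $u_k \to u$ uniformly, $u$ being the solution of the limiting equation. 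Evaluating at $t = 1$ makes $(g_{i_k}(q))_k$ convergent, so $(g_{i_k})$ converges pointwise on $U$ to a map $\Phi$, valued in $G$ by closedness. A bounded equicontinuous family converging pointwise converges uniformly on compacta, which upgrades this to the asserted mode of convergence; propagating along chains of trivialising charts (consistent by uniqueness of limits, using connectedness of $U$) yields the set-theoretic section $\Phi \colon U \to P \times_G G$.

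I expect the main nuisance to be purely organisational: deriving the local law $\dif g_i = g_i A_i - \tilde A_i g_i$ cleanly from the coordinate-free \eqref{eq::gaugeTheory:gaugeTransformation}, and being careful that the reduction is genuinely local and that connectedness is used where it must be. The analytic content is deliberately soft -- compactness of $G$ supplies the a priori control on $g_i$ for nothing, which is precisely where a Sobolev-space argument such as the one in \parencite{NarasimhanRamadas1979} has to do real work, so no elliptic regularity or bootstrapping enters here.
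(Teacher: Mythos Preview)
Your proposal is correct and follows essentially the same approach as the paper: both reduce to a chart, rewrite the gauge transformation law as the linear matrix equation $\dif g = gA - \tilde A g$, and invoke continuous dependence of solutions of the resulting ODE on coefficients and initial data to obtain convergence. The only cosmetic differences are that the paper first normalizes $\underline{\Phi}_{i_k} \defeq \Phi_{i_k}(p)^{-1}\Phi_{i_k}$ so that the initial condition becomes the identity (independent of $k$) and then reads off uniform convergence directly from continuous dependence on the endpoint as a parameter along rays in the chart, whereas you keep the varying initial data and pass through an equicontinuity/Arzel\`a--Ascoli step; either organisation works.
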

\begin{proof}
	Without loss of generality, the subset $U$ can be assumed to be a chart domain, $\kappa: U \to \R^n$. Relative to this chart (and the induced trivialization) a connection is identified with a $\liea{g}$-valued one-form on $U$ and gauge transformations are smooth maps $U \to G$. The local representation of the action
	\begin{equation} \label{eq::gaugeTheory:properActionForCompactGaugeGroupProof_localRepresentationAction}
		\tilde{\lambda}_m = \AdjAction(\Phi(m)) \lambda_m - (\Phi^* \Theta^R)_m = \Phi(m) \cdot \lambda_m \cdot (\Phi(m))^{-1} - (\dif \Phi)_m \cdot (\Phi(m))^{-1}
	\end{equation}
	can be solved for the differential $\dif \Phi$. After normalizing $\underline{\Phi}_{i_k} \defeq (\Phi_{i_k}(p))^{-1} \cdot \Phi_{i_k}$ and $\underline{\tilde{\lambda}} = (\Phi_{i_k}(p))^{-1} \cdot \tilde{\lambda} \cdot \Phi_{i_k}(p)$, it reads
	\begin{equation} \label{eq::gaugeTheory:properActionForCompactGaugeGroupProof_localRepresentationDerivativeGaugeTrafo}
		(\dif \underline{\Phi}_{i_k})_m = \underline{\Phi}_{i_k}(m) \cdot (\lambda_{i_k})_m - (\underline{\tilde{\lambda}}_{i_k})_m \cdot \underline{\Phi}_{i_k}(m).
	\end{equation}
	Consider now the ordinary linear differential equation
	\begin{align}
		\diff{}{t} h(t) &= h(t) p^j \lambda_j (p t)  -  p^j \underline{\tilde{\lambda}}_j (p t) h(t),\\
		h(0) &= e,
	\end{align}
	where $h$ is a real $r \times r$-matrix, $t \in \R$ and $j$ denotes the components of $p \in \R^n$. The usual existence and uniqueness theory
	yields a unique solution $h_{p, \lambda, \underline{\tilde\lambda}}(t)$, which is differentiable in $t$ and continuous with respect to the parameters $p, \lambda$ and $\underline{\tilde\lambda}$. But $h(t) = \underline{\Phi}_{i_k}(p t)$ is a solution for $\lambda = \lambda_{i_k}$ and $\underline{\tilde\lambda} = \underline{\tilde\lambda}_{i_k}$, and thus depends continuously on them. This continuity, together with the fact that $\lambda_{i_k}$ and $\tilde{\lambda}_{i_k} = \secmap{transf}(\Phi_{i_k}, \lambda_{i_k})$ both converge, yields a uniform limit $\Phi: U \to \mathfrak{M}(r \times r, \R)$, where $\mathfrak{M}(r \times r, \R)$ denotes the space of real-valued $r \times r$-matrices. Since $G$ is closed in $\mathfrak{M}(r \times r, \R)$, this map takes values in $G$.
\end{proof}

\begin{lemma}
	There exists a subsequence $(\Phi_{i_k})$ converging to a smooth map $\Phi: M \to P \times_G G$ with respect to the $C^\infty$-compact-open topology.
\end{lemma}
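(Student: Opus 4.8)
The plan is to promote the compact-uniform convergence $\Phi_{i_k} \to \Phi$ obtained in the previous lemma to convergence in the $C^\infty$-compact-open topology by means of a bootstrap argument on the local differential equation \eqref{eq::gaugeTheory:properActionForCompactGaugeGroupProof_localRepresentationDerivativeGaugeTrafo}. First I would work in a fixed chart domain $U$ with trivialization, where $\Phi_{i_k}$ is represented by a sequence of smooth maps $U \to G \subseteq \mathfrak{M}(r\times r, \R)$ and the action has the local form \eqref{eq::gaugeTheory:properActionForCompactGaugeGroupProof_localRepresentationAction}. Since $G$ is compact, the sequence $(\Phi_{i_k})$ is uniformly bounded; the convergence of $\lambda_{i_k}$ in $\secspace{Conn}$ and of $\tilde\lambda_{i_k} = \secmap{transf}(\Phi_{i_k}, \lambda_{i_k})$ to $\tilde\lambda$ means both families are bounded in every $C^k$-seminorm on compacta. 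Passing to a countable exhaustion of $M$ by compact sets and a diagonal subsequence argument, it suffices to establish, for each fixed compact $K \subseteq U$ and each $k$, that the seminorms $\norm{\Phi_{i_k}}_{K,k}$ are bounded and that $(\Phi_{i_k})$ is Cauchy in $\norm{\cdot}_{K,k}$.

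The key step is the induction. The base case $k=0$ (uniform convergence) is the content of the preceding lemma. For the inductive step, rewrite \eqref{eq::gaugeTheory:properActionForCompactGaugeGroupProof_localRepresentationDerivativeGaugeTrafo} as $(\dif \Phi_{i_k})_m = \Phi_{i_k}(m) \cdot (\lambda_{i_k})_m - (\tilde\lambda_{i_k})_m \cdot \Phi_{i_k}(m)$; the right-hand side is a bilinear expression in $\Phi_{i_k}$, $\lambda_{i_k}$ and $\tilde\lambda_{i_k}$ (the normalization by $\Phi_{i_k}(p)$ only conjugates $\tilde\lambda$ and does not affect the structure). Differentiating this identity $k$ times via the Leibniz rule expresses $\dif^{k+1}\Phi_{i_k}$ polynomially in terms of $\dif^j \Phi_{i_k}$ for $j \leq k$ and the derivatives $\dif^j \lambda_{i_k}$, $\dif^j \tilde\lambda_{i_k}$. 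Assuming by induction that $\Phi_{i_k} \to \Phi$ in $C^k$ on compacta, and using that $\lambda_{i_k} \to \lambda$ and $\tilde\lambda_{i_k} \to \tilde\lambda$ in every $C^j$-seminorm, this formula shows first that $\norm{\dif^{k+1}\Phi_{i_k}}_{K,0}$ is uniformly bounded, hence $\norm{\Phi_{i_k}}_{K,k+1}$ is bounded, and second that $(\dif^{k+1}\Phi_{i_k})$ is Cauchy in $\norm{\cdot}_{K,0}$ since it is a continuous (polynomial) function of quantities that converge. Combining with the inductive hypothesis, $(\Phi_{i_k})$ is Cauchy in $\norm{\cdot}_{K,k+1}$; its $C^0$-limit is $\Phi$, so $\Phi_{i_k} \to \Phi$ in $C^{k+1}$ on $K$. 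Taking the union over a trivializing atlas $(U_\alpha)$ and a further diagonal subsequence, $(\Phi_{i_k})$ converges to $\Phi$ in the $C^\infty$-compact-open topology; smoothness of the limit $\Phi: M \to P \times_G G$ follows because each chart representation is a $C^\infty$-limit and the identity \eqref{eq::gaugeTheory:properActionForCompactGaugeGroupProof_localRepresentationAction} is preserved in the limit, exhibiting $\Phi$ as a genuine (smooth) section with $\secmap{transf}(\Phi,\lambda) = \tilde\lambda$.

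The main obstacle I anticipate is purely bookkeeping rather than conceptual: organizing the simultaneous induction cleanly so that the diagonal-subsequence extraction over the countably many compact sets, the finitely many charts, and the countably many derivative orders all interact correctly, and keeping track of the fact that the normalization $\underline{\Phi}_{i_k} = (\Phi_{i_k}(p))^{-1}\cdot\Phi_{i_k}$ must be undone (by multiplying back the convergent — after passing to a subsequence, using compactness of $G$ — constants $\Phi_{i_k}(p)$) to recover convergence of the original sequence. One should also note that a priori the limit lies in $\mathfrak{M}(r\times r,\R)$, but closedness of $G$ in $\mathfrak{M}(r\times r,\R)$ forces it into $G$, and fibre-wise this yields a section of the associated bundle $P \times_G G$; since $M$ is compact this section is automatically continuous, and the $C^\infty$ estimates just derived upgrade it to smooth. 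No new analytic input beyond the elementary ODE theory already invoked and the Leibniz rule is required.
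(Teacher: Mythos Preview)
Your proposal is correct and follows essentially the same route as the paper: obtain local $C^0$-convergence from the preceding lemma (using compactness of $G$ to extract a convergent value at a base point), then bootstrap via the differential identity \eqref{eq::gaugeTheory:properActionForCompactGaugeGroupProof_localRepresentationDerivativeGaugeTrafo} and the Leibniz rule to upgrade to $C^\infty$-convergence, and finally patch across charts. The paper compresses the entire bootstrap into a single sentence (``equation \eqref{eq::gaugeTheory:properActionForCompactGaugeGroupProof_localRepresentationDerivativeGaugeTrafo} implies by induction that $\Phi$ is actually smooth and that all partial derivatives converge uniformly on compact subsets''), whereas you spell out the induction explicitly; conversely the paper is more explicit about the patching step, taking a \emph{finite} trivializing cover (since $M$ is compact) and choosing base points in the overlaps so that the same subsequence works on all charts simultaneously. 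Your countable exhaustion and diagonal-subsequence machinery is therefore unnecessary overhead---a finite iteration over the charts suffices---but it does no harm.
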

\begin{proof}
	By compactness of $M$, one can choose a finite cover by open subsets $U_\alpha \subseteq M$. Fix a point $p_\alpha \in U_\alpha$. Since $G$ is compact, the sequence $\Phi_i(p_\alpha) \in G$ contains a convergent subsequence. Thus, the previous lemma ensures a convergent subsequence $\Phi_{i_k}$ on every $U_\alpha$ with limit $\Phi^\alpha: U_\alpha \to P \times_G G$. The subsequence can be chosen to be indexed by the same $i_k$ on overlapping sets (just choose the point $p$ from the intersection $U_\alpha \cap U_\beta$). Hence, the maps $\Phi^\alpha$ automatically agree on overlaps and so define a global section $\Phi: M \to P \times_G G$. As a uniform limit of continuous maps, it is continuous itself. Now, equation \eqref{eq::gaugeTheory:properActionForCompactGaugeGroupProof_localRepresentationDerivativeGaugeTrafo} implies by induction that $\Phi$ is actually smooth and that all partial derivatives converge uniformly on compact subsets. 
\end{proof}

\begin{remark}
	In \autoref{prop::gaugeTheory:properActionForCompactGaugeGroup} the condition of $G$ being a linear group  is not essential for the properness of the action. In fact, \parencite[pp. 75f]{Subramaniam1984} proves the same result for arbitrary compact groups $G$. See also \parencite[Theorem 2.4.9]{KondrackiRogulski1986} in the context of Sobolev spaces, which then carries over to Fréchet spaces as shown in \parencite[Theorem 4.1]{AbbatiCirelliEtAl1989}. In the latter reference, one also finds the slightly more general case of Lie groups of the form form $G = G_K \times \R^k$ for some compact Lie group $G_K$ and $k \in \N$. Although the used techniques are similar to the proof presented here, a detour via the diffeomorphism group seems to be necessary. 
\end{remark}

Properness of the action directly implies that every stabilizer subgroup is compact (see \autoref{prop:lieGroupProperAction:closedOribtMapCompactStabilizer}). In the present case, one can go a step further and characterize $\secspace{Gau}_\lambda$ as the centralizer of the holonomy group of $\lambda$. To see this, consider for a point $p \in P$ the smooth group homomorphism $\Ev_p: \secspace{Gau} \to G$  defined by 
\begin{equation}
	\Phi( \pi(p) ) = \equivClass{(p, \Ev_p(\Phi))}, \qquad \text{for } \Phi \in \secspace{Gau}.
\end{equation}
It is not difficult to see that the restriction of $\Ev_p$ to the stabilizer $\secspace{Gau}_\lambda$ is injective and hence a group isomorphism onto its image, which is identified as the centralizer of the holonomy group of $\lambda$ based at $p$. %
In particular, $\secspace{Gau}_\lambda$ is a closed finite-dimensional subgroup of $\secspace{Gau}$ and hence \autoref{prop::lieGroup:finiteDimSubgroupIsStrongSplittingLieSubgroup} can be applied to identify the stabilizer as a principal Lie subgroup.

\section{Hodge-Kodaira-like decomposition} 
The aim of this section consists in proving the Hodge-Kodaira-like decomposition
\begin{equation} \label{eq::gaugeTheory:HodgeKodairaDecomposition}
	T_\lambda \secspace{Conn} = \img (\secmap{transf}_\lambda)'_e \oplus F_\lambda
\end{equation}
for some closed subspace $F_\lambda \subseteq T_\lambda \secspace{Conn}$. Furthermore, a tame smooth inverse family of $(\secmap{transf}_\lambda)'_e$ is constructed. Both results heavily rely on the theory of elliptic differential operators which therefore plays a fundamental role in the following. The connection between the decomposition \eqref{eq::gaugeTheory:HodgeKodairaDecomposition} and elliptic partial differential equations stems from \eqref{eq::gaugeTheory:gaugeTransformationTangent} and the implied identity $(\secmap{transf}_\lambda)'_e \Xi = - \nabla_\lambda \Xi$. The covariant derivative is a first-order overdetermined elliptic operator. An extensive account of elliptic theory including full proofs is beyond the scope of this thesis and thus the reader is referred to the literature \parencite{Hamilton1982} for details. See also \parencite{Palais1965} for elliptic complexes in the usual setting of Sobolev spaces.

\begin{defn}[{\parencite[p. 155]{Hamilton1982}}]
	Let $M$ be a compact manifold and $E, F$ be vector bundles over $M$. Denote the corresponding section spaces by $\secspace{E}$ and $\secspace{F}$, respectively. Consider a tame smooth family of linear maps $L: \secspaceEx{ L(J^rE, F) } \times \secspace{E} \to \secspace{F}$. The induced map $L_f: \secspace{E} \to \secspace{F}$ is called a \emphDef{linear partial differential operator} of degree $r$ with coefficients $f \in \secspaceEx{ L(J^rE, F) }$. Locally, its action is represented by
	\begin{equation}
		L_f \phi = \sum_{\abs{I} \leq r} f_I \cdot \partial^I \phi.  		
	\end{equation}   	

	Every differential operator $L_f$ induces a homogeneous polynomial $\sigma_f$ of degree $r$ on the cotangent bundle $T^*M$ with values in $L(E,F)$, which is locally given by $\sigma_f \xi = \sum_{\abs{I} = r} f_I \cdot \xi^I$. It is called the \emphDef{principal symbol} of $L_f$. Based on the properties of $\sigma_f$ one distinguishes between:
	\begin{itemize}
	 	\item \emphDef{overdetermined elliptic} operator: injective principal symbol $\sigma_f \xi$.
	 	\item \emphDef{underdetermined elliptic} operator: surjective principal symbol $\sigma_f \xi$.
	 	\item \emphDef{elliptic} operator: bijective principal symbol $\sigma_f \xi$.
	\end{itemize} 
	Here, $\xi \in T^*_mM$ is non-zero but otherwise arbitrary.

	Additionally assume that $M$ carries a Riemannian metric $g$ and $E, F$ are endowed with fibre metrics. The $L^2$-scalar products are given by $\scalarprodr{\phi}{\varphi}_E = \int_M \scalarprod{\phi(m)}{\varphi(m)}_E \dif \vol_g(m)$ and analogously $\scalarprodr{\cdot}{\cdot}_F$. Now, the formal adjoint of $L_f: \secspace{E} \to \secspace{F}$ is the operator $L^*_f: \secspace{F} \to \secspace{E}$ fulfilling
	\begin{equation}
		\scalarprodr{L_f \phi}{\psi}_F = \scalarprodr{\phi}{L^*_f \psi}_E \qquad \text{for every } \phi \in \secspace{E}, \psi \in \secspace{F}. 
	\end{equation}
\end{defn}  
An intrinsic characterisation of the principal symbol proceeds along the following lines. Upon evaluation\footnote{For the moment, assume that $E$ is a complex vector bundle. The argumentation is valid also in the real case with obvious modifications.} of $L_f$ on strongly localised `plain waves' $a e^{\I \lambda \vartheta}$ with $a \in \secspace{E}, \vartheta \in C^\infty(M)$, one gets a polynomial in $\lambda$ (modulus a common factor $e^{\I \lambda \vartheta}$). The monomial of degree $r$ is just the principal symbol, or more precisely
\begin{equation}
	\lim_{\lambda \to \infty} \lambda^{-r} e^{- \I \lambda \vartheta} L_f (a e^{\I \lambda \vartheta}){\bigg|_{\raisebox{1.8pt}{{$\scriptstyle m$}}} } = \I^r a(m) \sigma_f(m, (\dif \vartheta)_m). 
\end{equation}
\begin{example}[Covariant derivative]
	Consider the operator of covariant derivation $\nabla: \secspaceEx{E} \to \secspaceEx{T^*M \otimes E}$ induced by a connection on the vector bundle $E$. The principal symbol of $\nabla$ is calculated based on
	\begin{equation}
		e^{- \I \lambda \vartheta} \nabla (a e^{\I \lambda \vartheta}) = \I \lambda \dif \vartheta \otimes a + \nabla a
	\end{equation}
	to $\sigma_\nabla(m, \xi) = \xi \otimes \bcdot$. Therefore, $\nabla$ is an overdetermined elliptic, first order differential operator. If $\nabla$ is compatible with the bundle metric, then an explicit formula for the adjoint $\nabla^*$ can be derived. Stokes theorem in conjunction with Cartan's formula imply $\int_M \difLie_X(\scalarprodr{\phi}{\varphi}_E \dif \vol_g(m)) = 0$. But on the other hand,
	\begin{equation}
		\int_M \difLie_X(\scalarprodr{\phi}{\varphi}_E \dif \vol_g(m)) = \int_M (\scalarprodr{\nabla_X \phi}{\varphi}_E + \scalarprodr{\phi}{\nabla_X \varphi}_E + \scalarprodr{\phi}{\varphi}_E \divergence_g(X)) \dif \vol_g(m).
	\end{equation}
	Hence, $(\nabla_X)^* = - \nabla_X - \divergence_g(X)$. From this expression one deduces
	\begin{equation} \label{eq::gaugeTheory:CovariantDerivativeFormalAdjoint}
		\nabla^*: \secspaceEx{T^*M \otimes E} \to \secspaceEx{E}, \qquad X^\flat \otimes \phi \mapsto - \nabla_X - \divergence_g(X) \phi = - \trace_g (\nabla (X^\flat \otimes \phi)),
	\end{equation}
	where $^\flat: \vectorf{M} \to \diffform{1}{M}$ denotes the natural isomorphism induced by $g$. The Bochner Laplacian is defined as the composition $\nabla^* \nabla$. Equation \eqref{eq::gaugeTheory:CovariantDerivativeFormalAdjoint} now implies the expression
	\begin{equation}
		\nabla^* \nabla: \secspace{E} \to \secspace{E}, \qquad \phi \mapsto - \trace_g(\nabla^2_{\cdot, \cdot} \phi),
	\end{equation}
	from which one infers the principal symbol $\sigma_{\nabla^* \nabla}(m, \xi) = - \norm{\xi}^2 \id_E$. Hence, $\nabla^* \nabla$ has the same symbol as the usual Laplacian\footnotemark and obviously is an elliptic operator.
	\footnotetext{Differential operators with the same principal symbol as the Laplacian are often called generalized Laplacians.}
\end{example}

The following properties of elliptic operators are well known in the Sobolev setting and carry over to smooth section spaces by projective limit techniques.
\begin{proposition}[{\parencite[pp. 155-159]{Hamilton1982}}] \label{prop::tameFrechet:ellipticDiffOpProperties}
	Let $L_f: \secspace{E} \to \secspace{F}$ be an elliptic differential operator. The following statements are true:
	\begin{enumerate}
		\item The kernel of $L_f$ is finite-dimensional.
		\item $L_f$ has closed range with finite codimension. 
		\item Denote by $V \subseteq \secspaceEx{ L(J^rE, F) }$ the open subset of coefficients on which $L_f$ is invertible. The solution operator
		\begin{equation}
			S: V \times \secspace{F} \to \secspace{E}, \quad S(f, \psi) = \phi \text{ such that } L_f \phi = \psi
		\end{equation}
		is a smooth tame family. \qedhere
	\end{enumerate}
\end{proposition}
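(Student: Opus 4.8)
The plan is to prove the three assertions of \cref{prop::tameFrechet:ellipticDiffOpProperties} by reducing each to the corresponding well-known Sobolev-space statement and then passing back to the smooth category along the projective limit $\secspace{E} = \varprojlim H^k(E)$, $\secspace{F} = \varprojlim H^k(F)$. The decisive inputs are the classical elliptic estimate (Gårding's inequality) $\norm{\phi}_{H^{k+r}} \leq C(\norm{L_f \phi}_{H^k} + \norm{\phi}_{H^0})$, elliptic regularity ($L_f \phi \in H^k \Rightarrow \phi \in H^{k+r}$ on a closed manifold), and the Fredholm property of elliptic operators on Sobolev completions; all of these are standard for compact $M$ and I would simply cite \parencite[pp. 155--159]{Hamilton1982} and, for the underlying Sobolev theory, \parencite{Palais1965}.

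For \textbf{(i)}, since $\ker L_f \subseteq \secspace{E}$ is contained in the kernel of the Sobolev extension $L_f^{(k)}: H^{k+r}(E) \to H^k(F)$ for each $k$, and by elliptic regularity every such Sobolev kernel consists of smooth sections, the smooth kernel coincides with the Sobolev kernel, which is finite-dimensional by the Fredholm property. For \textbf{(ii)}, closedness of the range is again a consequence of Gårding's inequality restricted to a complement of the (finite-dimensional) kernel: on that complement $L_f$ is bounded below in the appropriate norms, hence has closed range in each $H^k$; intersecting over all $k$ and using that $\img L_f = \bigcap_k (\img L_f^{(k)} \cap \secspace{F})$ together with elliptic regularity of the equation $L_f \phi = \psi$ shows $\img L_f$ is closed in $\secspace{F}$. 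Finite codimension follows because a topological complement can be taken inside the finite-dimensional cokernel of $L_f^{(0)}$, which by regularity already consists of smooth sections (one uses the formal adjoint $L_f^*$ here, noting $\mathrm{coker}\, L_f \cong \ker L_f^*$ which is finite-dimensional by part (i) applied to the elliptic operator $L_f^*$).

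For \textbf{(iii)}, the point where $L_f$ is invertible is an open condition: invertibility of $L_f^{(k)}: H^{k+r}(E) \to H^k(F)$ is stable under small perturbation of $f$ (the set of invertible bounded operators is open), and regularity transfers invertibility at one Sobolev level to all of them, so $V \subseteq \secspace{L(J^r E, F)}$ is open. The solution operator $S(f, \psi) = (L_f)^{-1}\psi$ is then obtained at each Sobolev level from the Banach-space implicit function theorem applied to the smooth map $(f, \phi, \psi) \mapsto L_f\phi - \psi$; this gives smoothness and, crucially, tame estimates because the Gårding inequality is exactly a tame estimate with loss $r$ (the norm $\norm{S(f,\psi)}_{k+r}$ is controlled by $\norm{\psi}_k$ and lower-order data), so $S$ is a tame smooth family in the sense of \cref{defn::FrechetTame:TameNonLinearMap}.

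The main obstacle is the third part: one must verify that the Sobolev-level solution operators actually glue to a \emph{tame} smooth map on the Fréchet spaces rather than merely a smooth one, which requires tracking the derivative loss uniformly in $k$ and checking that all derivatives $\dif^n S$ (given by differentiating $(L_f)^{-1}$ repeatedly, which produces polynomial expressions in $(L_f)^{-1}$ and $\dif^j L_f$) inherit the same tame estimate with the same degree $r$ and base. This bookkeeping — together with confirming that the family $f \mapsto L_f$ and its differentiation in the coefficient direction are themselves tame (which follows from \cref{prop::sectionSpace:pushforwardTameSmooth} since differential operators are built pointwise from jets) — is the technical heart of the argument, and it is precisely what \parencite{Hamilton1982} carries out; I would reference that treatment rather than reproduce it.
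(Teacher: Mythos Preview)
Your proposal is correct and mirrors the paper's own treatment: the paper does not give an independent proof but simply cites \parencite[pp.~155--159]{Hamilton1982} and remarks that the statements ``are well known in the Sobolev setting and carry over to smooth section spaces by projective limit techniques,'' which is exactly the reduction you outline. If anything, your sketch is more detailed than what the paper provides.
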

The last point essentially encompasses the regularity of elliptic systems. 
\begin{remark}[Green's operator]
If the elliptic operator $L_f$ is not invertible, then nonetheless one can define an inverse up to some finite-dimensional spaces. Indeed, \cref{prop::tameFrechet:ellipticDiffOpProperties} together with the fact that all finite-dimensional subspaces are tamely complemented ensures the existence of closed subspaces $V_f \subseteq \secspace{E}, W_f \subseteq \secspace{F}$ such that 
\begin{equation}
	\secspace{E} = \ker L_f \oplus V_f \quad \text{and} \quad \secspace{F} = \img L_f \oplus W_f.
\end{equation}
Denote the projection along $V_f$ by $P_f: \secspace{E} \to \ker L_f$ and the natural injection by $\iota_f: W_f \to \secspace{F}$. Both maps are tame. The extended operator
\begin{equation}
	\tilde{L}_f: \secspace{E} \times W_f \to \secspace{F} \times \ker L_f, \qquad (\phi, w) \mapsto (L_f \phi + \iota_f (w), P_f(\phi))
\end{equation}
is bijective and thus gives rise to an inverse operator $\tilde{S}_f: \secspace{F} \times \ker L_f \to \secspace{E} \times W_f$. Now the Green's operator is the induced map $G_f: \secspace{F} \to \secspace{E}$. It is the subject of \parencite[Theorem 3.3.3]{Hamilton1982} to show that there exists an open neighbourhood $U$ of every $f_0 \in \secspaceEx{ L(J^rE, F) }$ such that the Green's operator is a smooth tame family $G: U \times \secspace{F} \to \secspace{E}$.
\end{remark}

Consider now the case of a covariant derivative and the associated Bochner Laplacian $\nabla^* \nabla$. The usual Hodge theorem for elliptic operators implies the decomposition $\secspace{E} = \img \nabla^* \nabla \oplus \ker \nabla^* \nabla$ on the level of vector spaces without topology. Equivalently,
\begin{equation}
	\secspaceEx{T^*M \otimes E} = \img \nabla \oplus \ker \nabla^*
\end{equation}
holds as basic linear algebra considerations show. The aim is now to proof tameness of this decomposition. First, the sum is topological by the following lemma in conjunction with the fact that every finite-dimensional subspace of a Fréchet space is closed.%
\begin{lemma}[{\parencite[p. 55]{Subramaniam1984}}]
	Let $T: X \to Y$ be a continuous, linear map between Fréchet spaces such that $\img T$ is algebraically complemented by a closed subspace. Then the image of $T$ is closed in $Y$ and the sum is topological. 
\end{lemma}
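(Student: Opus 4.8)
The plan is to reduce the whole statement to the open mapping theorem for Fréchet spaces, in the same spirit as the footnote earlier in this chapter where $\varphi_a$ is recognised as a homeomorphism onto its image. Write $\img T$ for the image of $T$ and let $W \subseteq Y$ be the given closed algebraic complement, so that $Y = \img T \oplus W$ holds as vector spaces and one has to upgrade this to $\img T$ being closed and the decomposition being topological.

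First I would check that all the spaces occurring in the argument are Fréchet spaces. The kernel $\ker T = T^{-1}(\set{0})$ is closed by continuity of $T$, so the quotient $X/\ker T$ is again a Fréchet space (it is clearly metrizable, and a quotient of a complete metrizable locally convex space by a closed subspace is complete — a standard fact). The complement $W$ is a closed subspace of the Fréchet space $Y$, hence itself a Fréchet space, and therefore the product $Z \defeq (X/\ker T) \times W$ is Fréchet as well. By the universal property of the quotient topology, $T$ factors as $T = \tilde{T} \circ q$, where $q \colon X \to X/\ker T$ is the canonical projection and $\tilde{T} \colon X/\ker T \to Y$ is continuous, linear and \emph{injective}, with $\img \tilde{T} = \img T$.

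Next I would introduce the continuous linear map
\begin{equation}
	S \colon Z \to Y, \qquad (\bar{x}, w) \mapsto \tilde{T}(\bar{x}) + w.
\end{equation}
It is injective: if $\tilde{T}(\bar{x}) + w = 0$, then $\tilde{T}(\bar{x}) = -w$ lies in $\img T \cap W = \set{0}$, so $w = 0$ and, by injectivity of $\tilde{T}$, also $\bar{x} = 0$. It is surjective precisely because $Y = \img T + W$. Thus $S$ is a continuous linear bijection between Fréchet spaces, and the open mapping theorem yields that $S$ is open, hence a topological isomorphism.

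Finally I would read off both assertions from this isomorphism. Under $S$ the subspace $\img T$ corresponds to $(X/\ker T) \times \set{0}$, which is closed in $Z$; hence $\img T = S\big((X/\ker T) \times \set{0}\big)$ is closed in $Y$, and likewise $W = S(\set{0} \times W)$. Moreover, transporting the coordinate projections of $Z$ along $S$ shows that the algebraic projections $Y \to \img T$ and $Y \to W$ are continuous, i.e. the addition map $\img T \oplus W \to Y$ is a homeomorphism, so the sum is topological. There is no real obstacle in this proof; the only point requiring a moment of care is the verification that $X/\ker T$ is complete, which is exactly what makes the open mapping theorem applicable, and the remainder is bookkeeping with the universal property of quotients.
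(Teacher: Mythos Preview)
Your proof is correct and follows essentially the same approach as the paper: both form the continuous linear bijection $(X/\ker T)\times W \to Y$, $([x],w)\mapsto Tx+w$, and appeal to the open mapping theorem for Fr\'echet spaces. Your version is simply more explicit about verifying injectivity, surjectivity, and how the closedness of $\img T$ and the topological splitting are read off from the resulting isomorphism.
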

\begin{proof}
	Denote the complement of $\img T$ by $W$. The spaces $X / \ker T$ and $W$ are Fréchet as a quotient and a closed subspace, respectively. By construction the induced map
	\begin{equation}
	 	\tilde{T}: \sfrac{X}{\ker T} \times W \to Y, \qquad (\equivClass{x},w) \mapsto Tx + w
	 \end{equation} 
	 is a continuous linear bijection. Now, the open mapping theorem \parencite[Theorem~2.11]{Rudin1973} implies that $\img T$ is closed and $Y = \img T \oplus W$ topologically.
\end{proof}

In order to deduce tameness of the decomposition one has to show tameness of the map $\tilde{T}$ and its inverse. Directly consider the covariant derivative $T = \nabla$, which is a $r$-tame map. Denote $W = \ker \nabla^*$. Now, $\tilde{\nabla}$ is tame as the following calculation shows:
\begin{equation}\begin{split}
	\norm{\tilde{\nabla}(\equivClass{\phi}, w)}_k &= \norm{\nabla\phi + w}_k \leq \norm{\nabla\phi}_k + \norm{w}_k \\ &\leq C \norm{\phi}_{k+r} + \norm{w}_k \leq C (\norm{\phi}_{k+r} + \norm{w}_{k+r}).
\end{split}\end{equation}
The inverse is given by
\begin{equation}
	\secspaceEx{T^*M \otimes E} \to \sfrac{\secspace{E}}{\ker \nabla} \times \ker \nabla^*, \qquad \psi \mapsto (\equivClass{G_\nabla \psi}, \psi - \nabla \circ G_\nabla \psi),
\end{equation}
which is a tame map if the Green's operator $G_\nabla$ of $\nabla$ is tame. But $G_\nabla = G_{\nabla^* \nabla} \circ \nabla^*$ is the composition of the tame Green's operator $G_{\nabla^* \nabla}$ and the tame differential operator $\nabla^*$. Hence, $G_\nabla$ is itself tame. The previous discussion is summarized in:
\begin{proposition}
	Let $E$ be a vector bundle over some compact, Riemannian manifold. Endow $E$ with a fibre metric and let $\nabla: \secspaceEx{E} \to \secspaceEx{T^*M \otimes E}$ be a covariant derivative in $E$. Let $\nabla^*$ denote the formal adjoint relative to the $L^2$-scalar product. Then
	\begin{equation}
		\secspaceEx{T^*M \otimes E} = \img \nabla \oplus \ker \nabla^*
	\end{equation}
	is a tame direct sum.
\end{proposition}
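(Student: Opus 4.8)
The plan is to assemble the three ingredients prepared in this section: an algebraic splitting coming from Hodge theory, its upgrade to a topological direct sum via the open mapping theorem, and finally tameness via the tame Green's operator of $\nabla^*\nabla$.

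First I would establish the decomposition as vector spaces. Since $\nabla^*\nabla$ is a generalized Laplacian it is elliptic, so \cref{prop::tameFrechet:ellipticDiffOpProperties} together with the Hodge theorem gives $\secspaceEx{E} = \ker(\nabla^*\nabla) \oplus \img(\nabla^*\nabla)$, the image consisting of those smooth sections that are $L^2$-orthogonal to the finite-dimensional kernel. The identity $\scalarprodr{\nabla^*\nabla\phi}{\phi}_E = \scalarprodr{\nabla\phi}{\nabla\phi}$ yields $\ker(\nabla^*\nabla) = \ker\nabla$, and a short computation then produces $\secspaceEx{T^*M\otimes E} = \img\nabla \oplus \ker\nabla^*$: the sum is direct because $\nabla\phi \in \ker\nabla^*$ forces $\norm{\nabla\phi}^2 = \scalarprodr{\nabla^*\nabla\phi}{\phi} = 0$, and it exhausts the space because for any $\psi$ the section $\nabla^*\psi$ is $L^2$-orthogonal to $\ker\nabla = \ker(\nabla^*\nabla)$, hence equals $\nabla^*\nabla\phi$ for some $\phi$, so that $\psi - \nabla\phi \in \ker\nabla^*$.

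Next I would argue that the sum is topological and tame. The subspace $\ker\nabla^*$ is closed, being the kernel of the continuous operator $\nabla^*$, and it is an algebraic complement of $\img\nabla$; so the lemma on continuous linear maps with algebraically complemented image (a consequence of the open mapping theorem \parencite[Theorem~2.11]{Rudin1973}, recorded above) applied to $T=\nabla$ shows that $\img\nabla$ is closed and the direct sum is topological. For tameness it suffices to check that the continuous linear bijection $\tilde\nabla\colon \sfrac{\secspaceEx{E}}{\ker\nabla} \times \ker\nabla^* \to \secspaceEx{T^*M\otimes E}$, $(\equivClass{\phi},w)\mapsto \nabla\phi+w$, and its inverse are tame. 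Tameness of $\tilde\nabla$ is immediate from the tame estimate $\norm{\nabla\phi}_k \leq C\norm{\phi}_{k+1}$ for the first-order operator $\nabla$ and the triangle inequality. The inverse is $\psi \mapsto (\equivClass{G_\nabla\psi},\, \psi - \nabla G_\nabla\psi)$ with $G_\nabla$ the Green's operator of $\nabla$, and here I would use the factorization $G_\nabla = G_{\nabla^*\nabla}\circ\nabla^*$ — valid because $\phi\defeq G_{\nabla^*\nabla}(\nabla^*\psi)$ lies in the complement of $\ker\nabla$ and satisfies $\nabla^*\nabla\phi = \nabla^*\psi$, forcing $\psi-\nabla\phi\in\ker\nabla^*$ — so that $G_\nabla$ is the composite of the tame differential operator $\nabla^*$ with the tame smooth family $G_{\nabla^*\nabla}$ supplied by \parencite[Theorem~3.3.3]{Hamilton1982}; hence $G_\nabla$, and thus the inverse of $\tilde\nabla$, is tame.

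The one genuinely non-elementary point is the tameness of the Green's operator $G_{\nabla^*\nabla}$, i.e.\ the regularity theory for elliptic operators in the tame Fréchet category packaged in \parencite[Theorem~3.3.3]{Hamilton1982}; the remaining steps are linear algebra, one application of the open mapping theorem, and careful bookkeeping of tame estimates.
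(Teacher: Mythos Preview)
Your proof is correct and follows essentially the same approach as the paper: algebraic Hodge decomposition for the elliptic operator $\nabla^*\nabla$, upgrade to a topological splitting via the open-mapping lemma applied to $T=\nabla$, and tameness of the inverse through the factorization $G_\nabla = G_{\nabla^*\nabla}\circ\nabla^*$ with \parencite[Theorem~3.3.3]{Hamilton1982}. If anything, you are more explicit than the paper in justifying the linear-algebra step and in noting that $\ker\nabla^*$ is closed as the kernel of a continuous operator.
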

Applied to the present case of gauge theory the previous proposition establishes the decomposition
\begin{equation} \label{eq::gaugeTheory:HodeDecomposition}
 	T_\lambda \secspace{Conn} = \img \nabla_\lambda \oplus \ker \nabla^*_\lambda.
\end{equation} 

Finally, the desired tame smooth family of inverses to 
\begin{equation} \label{eq::gaugeTheory:covariantDerivativeComposedWithProjection}
	\begin{tikzcd}
		\secspaceEx{\AdjBundle{P}} \arrow{r}{\nabla_\lambda} & \secspaceEx{T^*M \otimes \AdjBundle{P}} = \img \nabla_{\lambda_0} \oplus \ker \nabla^*_{\lambda_0} \arrow{r} & \img \nabla_{\lambda_0}
	\end{tikzcd}
\end{equation}
is constructed for $\lambda \in \Conn$ close to some fixed connection $\lambda_0$. By continuity of the action, this claim is equivalent to the corresponding condition in the slice theorem. For $\Xi \in \secspaceEx{\AdjBundle{P}}$, \cref{eq::gaugeTheory:HodeDecomposition} implies the existence of $\varphi \in \secspaceEx{\AdjBundle{P}}$ and $w \in \ker \nabla^*_{\lambda_0}$ such that $\nabla_\lambda \Xi = \nabla_{\lambda_0}\varphi + w$. Hence, $\nabla^*_{\lambda_0}\nabla_\lambda \Xi = \nabla^*_{\lambda_0} (\nabla_{\lambda_0}\varphi)$ holds and shows that a tame inverse to $\nabla^*_{\lambda_0}\nabla_\lambda$ is required. Since $\nabla^*_{\lambda_0}\nabla_\lambda$ has an invertible symbol at $\lambda = \lambda_0$ and the symbol map is continuous with respect to its coefficient $\lambda$, the operator $\nabla^*_{\lambda_0}\nabla_\lambda$ is elliptic in some open neighbourhood $U$ of $\lambda_0$ and possesses a smooth tame Green's operator $G_{\nabla^*_{\lambda_0}\nabla_\lambda}$. Now,
\begin{equation}
	U \times \img \nabla_{\lambda_0} \to \secspaceEx{\AdjBundle{P}}, \qquad (\lambda, \nabla_{\lambda_0}\varphi) \mapsto \Xi = (G_{\nabla^*_{\lambda_0}\nabla_\lambda} \nabla^*_{\lambda_0})(\nabla_{\lambda_0}\varphi)
\end{equation}
is a tame smooth family inverse to \eqref{eq::gaugeTheory:covariantDerivativeComposedWithProjection}.

\section{Invariant Riemannian metric} \label{sec::gaugeTheory:InvariantRiemannianMetric} 
The construction of a $\Gau$-invariant graded Riemannian metric on $\Conn$ proceeds in a way similar to the definition of the seminorms on section spaces in \autoref{sec::sectionSpace}. Fix a Riemannian metric $g$ on $M$ and denote the associated Levi-Civita connection by $\nabla^M$. A connection $\lambda$ gives rise to a covariant derivative $\nabla_\lambda: \secspaceEx{\AdjBundle{P}} \to \secspaceEx{T^*M \otimes \AdjBundle{P}}$ in the adjoint bundle. These two connections combine via the Leibniz identity and finally define by recursion 
\begin{equation}
	\tilde{\nabla}_\lambda^k: \secspaceEx{T^*M \otimes \AdjBundle{P}} \to \secspaceEx{\otimes_{k+1} T^*M \otimes \AdjBundle{P}}.
\end{equation}
Note that a gauge transformation $\lambda \xrightarrow{\Phi} \tilde{\lambda}$ is reflected in a $\AdjAction(\Phi)$-equivariance of the induced covariant derivatives $\nabla_\lambda$ and $\nabla_{\tilde{\lambda}}$. Hence, the assignment $\Conn \ni \lambda \to g^k_\lambda$ with
\begin{equation}
	g^k_\lambda: T_\lambda \Conn \times T_\lambda \Conn \to \R, \quad (\sigma, \varsigma) \mapsto \sup_{i \leq k} \int_M \scalarprod{\tilde{\nabla}_\lambda^i \sigma(m)}{\tilde{\nabla}_\lambda^i \varsigma(m)} \dif \vol_g(m)
\end{equation} 
defines a $\Gau$-invariant Riemannian metric on $\Conn$ if the bundle $\AdjBundle{P}$ carries an $\AdjAction$-invariant metric. Suppose that the structure group is of the form $G = G_K \times \R^l$ for some compact Lie group $G_K$ and $l \in \N$. Then its Lie algebra $\liea{g}$ admits a $\AdjAction$-invariant scalar product and thus yields the desired $\AdjAction$-invariant Riemannian metric. The resulting $\Gau$-invariant graded Riemannian metric is equivalent to the Sobolev metrics \parencite{EbinMarsden1970}and hence locally equivalent in the notation of \cref{defn::Riemann:LocallyEquivalentMetric}. %

Observe that for $k=0$ the exponential map of $g^k$ is just the addition
\begin{equation}
	\exp^0_\lambda: T_\lambda \Conn \to \Conn, \qquad (\lambda, \sigma) \mapsto \lambda + \sigma.
\end{equation}
Restricted to the normal bundle, it is a local diffeomorphism around every point of the orbit as the application of the Nash-Moser theorem and the above discussed properties of elliptic operators show (see \parencite[pp. 60-64]{Subramaniam1984}).

\section{Yang-Mills Lagrangian and its gauge symmetry}
In this section, the Yang-Mills Lagrangian will be discussed. The momentum map corresponding to the symmetry under gauge transformation is determined. For the time being, the base manifold $M$ is not required to be compact but, in order to simplify the calculation, attention is restricted to matrix Lie groups $G$ (that is, subgroups of some general linear group).

The Yang-Mills equation derives from a variational principle applied to the local representation of a connection. Let $U \subseteq M$ be an open subset on which the principal bundle $P$ (and thus also $CP$) trivializes. 
Via a local section, the connection $\lambda$ can be identified with an element $A \in \diffform{1}{U}{\liea{g}}$.
The assignment $\lambda \mapsto A$ is smooth since it decomposes into
\begin{equation}
	\begin{tikzcd}[column sep=large]
			\secspaceEx{CP} \arrow{r}{\restriction_U} & \secspaceEx{CP_{\restriction_U}} \arrow{r}{\comp_{\text{chart}}} & \diffform{1}{U}{\liea{g}},
	\end{tikzcd}
\end{equation}  
where the restriction and composition map are smooth, see \parencite[Lemma 2.2.6.]{Wockel2006} and \parencite[Theorem 2.3.5]{Hamilton1982}, respectively. As usual, the local representation of the curvature is defined by $F_A \defeq \dif A + \sfrac{1}{2} \, \wedgeLie{A}{A}$, where $\wedgeLie{\cdot}{\cdot}$ denotes the wedge product of differential forms relative to the commutator on $\liea{g}$. In the present case of a matrix Lie group $G$, the expression for the field strength simplifies to $F_A = \dif A + A \wedge A$. 

Now the Lagrangian system is given by
\begin{align}
	L 		&= - \frac{1}{2} \trace(F_A \wedge \star F_A), \\
	\theta	&= - \trace (\diF A \wedge \star F_A).
\end{align}

Using the Hodge-star identity \eqref{eq::differentialgeometry:hodgeStarProductManifoldWedgeCommutative}, the Euler-Lagrange form is computed as
\begin{equation}\begin{split}
	E_{\mathcal{L}} &= \diF L + \dif \theta\\
					&= - \frac{1}{2} \trace( \diF F_A \wedge \star F_A + F_A \wedge \star \diF F_A) + \dif \theta\\
					&= - \trace( \diF F_A \wedge \star F_A) + \dif \theta.
\intertext{Since the variation of the curvature is given as $\diF F_A = \diF \dif A + \diF A \wedge A - A \wedge \diF A$, this expands to}
					&= - \trace( \diF \dif A \wedge \star F_A + \diF A \wedge A \wedge \star F_A - A \wedge \diF A \wedge \star F_A  + \dif \diF A \wedge \star F_A + \diF A \wedge \dif \star F_A) \\
					&= - \trace( \diF A \wedge (A \wedge \star F_A - (-1)^n \star F_A \wedge A + \dif \star F_A)),
\end{split}\end{equation} 
where the last equality follows from the $\AdjAction$-invariance of the trace. %
The expression between the brackets is identified as the covariant derivative of $\star F_A$ relative to $A$, where
\begin{equation}
	\Dif_A \beta = \dif \beta + \wedgeLie{A}{\beta} = \dif \beta + A \wedge \beta + (-1)^{\# \beta + 1} \beta \wedge A.
\end{equation}
Hence the Euler-Lagrange equation yields the known Yang-Mills equation $\Dif_A \star F_A = 0$. The 
subsymplectic form is given by $\omega = \diF \theta = - \trace (\diF A \wedge \star \diF F_A)$.
\begin{remark}
	On a purely formal level, the previous calculation also applies in the global setting where the local form $A \in \diffform{1}{U}{\liea{g}}$ is replaced by the connection form $\alpha \in \diffform{1}{P}{\liea{g}}$. However, the assignment $\lambda \mapsto \alpha$ of a connection $\lambda$ to the associated connection form $\alpha$ cannot be expected to be smooth in general. In particular, the above defined differential forms would no longer be smooth objects on $\Conn$.  
\end{remark}

The momentum map associated with the gauge symmetry is considered now. Due to \eqref{eq::gaugeTheory:ActionGauHatOnConnForm}, it is not hard to see that $L$ as well as $\theta$ are invariant under gauge transformations. Thus, \autoref{prop::classicalFieldTheory:momentumMapForInvariantLagrangian} yields the momentum map
\begin{equation}
	J_\Xi = \Xi^* \contr \mathcal{L}.
\end{equation}
The Killing vector field $\Xi^*$ (on $\secspace{Conn}$) associated to $\Xi \in \secspace{gau}$ is easily determined with the aid of \eqref{eq::gaugeTheory:gaugeTransformationTangent},
\begin{equation}
	\Xi^*_\lambda = (\secmap{transf}_\lambda)'_e (\Xi) = - \nabla_\lambda \Xi.
\end{equation}
It can be viewed as a $M$-independent vector field $(- \nabla_\lambda \Xi, 0)$ on the product $\secspace{Conn} \times M$, where it is denoted by $\Xi^*$ as well. The identity $(\Xi^* \contr \Dif(\ev))_{\lambda, m} = - (\nabla_\lambda \Xi)(m) = (\Xi^* \contr \diF (\ev))_{\lambda, m}$
holds. Hence the contraction $\Xi^* \contr L$ yields zero and thus finally
\begin{equation}
	J_\Xi = \Xi^* \contr \mathcal{L} = \Xi^* \contr \theta = \trace (\nabla_A \Xi \wedge \star F_A).
\end{equation}

The results of this section are summarized in the following theorem.
\begin{theorem}
	The Yang-Mills Lagrangian 
	\begin{equation}
		\mathcal{L} = - \frac{1}{2} \trace(F_A \wedge \star F_A) - \trace (\diF A \wedge \star F_A) \in \diffform{n}{\secspace{Conn} \times M}
	\end{equation}
	is invariant under the action $\secmap{transf}$ of the group of gauge transformation and the momentum map associated to $\Xi \in \secspace{gau}$ is given by
	\begin{equation}
		J_\Xi = \Xi^* \contr \mathcal{L} = \Xi^* \contr \theta = \trace (\nabla_A \Xi \wedge \star F_A).
	\end{equation}
\end{theorem}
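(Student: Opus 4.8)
The plan is to deduce the statement from \autoref{prop::classicalFieldTheory:momentumMapForInvariantLagrangian}: since $\secspace{Gau}$ is a locally exponential tame Fréchet Lie group with smooth exponential map (see \autoref{ex::lieGroup:currentGroup}), once the total Lagrangian $\mathcal{L} = L + \theta$ has been shown to be $\secmap{transf}$-invariant, that proposition immediately gives that $J_\Xi \defeq \Xi^* \contr \mathcal{L}$ is an equivariant momentum map for $\Omega = \Dif \mathcal{L}$. It then only remains to rewrite this expression in the announced form $\trace(\nabla_A \Xi \wedge \star F_A)$.

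First I would establish the invariance. Working in a local trivialisation over $U \subseteq M$ --- the assignment $\lambda \mapsto A$ being smooth, as recorded in the preceding section --- a gauge transformation $\lambda \xrightarrow{\Phi} \tilde\lambda$ corresponds via \eqref{eq::gaugeTheory:ActionGauHatOnConnForm} to $\tilde A = \AdjAction(\hat\Phi) A - \hat\Phi^* \Theta^R$. From this, the standard computation for $F_A = \dif A + A \wedge A$ yields the curvature transformation law $F_{\tilde A} = \AdjAction(\hat\Phi) F_A$, and differentiating the same relation in the $\secspace{Conn}$-direction (the Maurer--Cartan correction term does not depend on $\lambda$) gives $\diF \tilde A = \AdjAction(\hat\Phi) \diF A$. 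Because the Hodge star acts only in the $M$-directions and leaves the $\liea{g}$-values untouched (cf.\ \autoref{sec::differentialgeometry:hodgeStarProductManifold}), it commutes with the pointwise action of $\AdjAction(\hat\Phi)$, so that
\[
	\trace(F_{\tilde A} \wedge \star F_{\tilde A}) = \trace\bigl(\AdjAction(\hat\Phi)(F_A \wedge \star F_A)\bigr) = \trace(F_A \wedge \star F_A),
\]
and likewise $\trace(\diF \tilde A \wedge \star F_{\tilde A}) = \trace(\diF A \wedge \star F_A)$, using the $\AdjAction$-invariance of the trace. Hence $\Upsilon_\Phi^* \mathcal{L} = \mathcal{L}$ locally, and by uniqueness of the global objects this patches to $\Upsilon_\Phi^* \mathcal{L} = \mathcal{L}$ on all of $\secspace{Conn} \times M$.

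Then I would compute $J_\Xi$. By \eqref{eq::gaugeTheory:gaugeTransformationTangent} the Killing vector field is $\Xi^*_\lambda = (\secmap{transf}_\lambda)'_e \Xi = -\nabla_\lambda \Xi$, which is $M$-independent with vanishing $M$-component. Since $L$ has bidegree $(0,n)$, it annihilates any tangent vector possessing a $\secspace{Conn}$-component, so $\Xi^* \contr L = 0$ and therefore $J_\Xi = \Xi^* \contr \mathcal{L} = \Xi^* \contr \theta$. Contracting $\theta = -\trace(\diF A \wedge \star F_A)$ and using that $\star F_A$ carries no functional degree, $\Xi^* \contr (\diF A \wedge \star F_A) = (\Xi^* \contr \diF A) \wedge \star F_A$; and $\Xi^* \contr \diF A$ is simply the variation of $A$ in the direction $\Xi^*_\lambda$, that is $-\nabla_A \Xi$ in the local representation. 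This produces $J_\Xi = \trace(\nabla_A \Xi \wedge \star F_A)$, with equivariance inherited from \autoref{prop::classicalFieldTheory:momentumMapForInvariantLagrangian}.

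The main obstacle is the genuine invariance of the \emph{full} off-shell Lagrangian $\mathcal{L} = L + \theta$, not merely of the density $L$: one must verify that the variational form $\theta$ --- the ingredient carrying the ``integration-by-parts'' information, usually left implicit in the gauge-theory literature --- transforms correctly, which ultimately reduces to the observation that the Maurer--Cartan correction in \eqref{eq::gaugeTheory:ActionGauHatOnConnForm} is $\diF$-closed. A secondary technical point is that $\mathcal{L}$ is defined through local trivialisations of $P$, so one has to check that the local expressions glue; this is precisely the smoothness and patching discussed in this section and presents no real difficulty. Everything else is either the standard curvature transformation law or routine bookkeeping of bidegrees in the bicomplex $\diffformBi{\bcdot}{\bcdot}{\secspace{Conn} \times M}$.
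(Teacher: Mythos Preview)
Your proposal is correct and follows essentially the same route as the paper: establish $\Upsilon_\Phi^* \mathcal{L} = \mathcal{L}$ from the transformation law \eqref{eq::gaugeTheory:ActionGauHatOnConnForm} and $\AdjAction$-invariance of the trace, invoke \autoref{prop::classicalFieldTheory:momentumMapForInvariantLagrangian}, identify $\Xi^*_\lambda = -\nabla_\lambda \Xi$ via \eqref{eq::gaugeTheory:gaugeTransformationTangent}, and then observe that $\Xi^* \contr L = 0$ so that $J_\Xi = \Xi^* \contr \theta = \trace(\nabla_A \Xi \wedge \star F_A)$. One small slip of phrasing: a $(0,n)$-form does not annihilate every vector \emph{possessing} a $\secspace{Conn}$-component, only those with \emph{vanishing} $M$-component --- but since you have already noted that $\Xi^*$ has no $M$-component, your conclusion stands.
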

\chapter{Summary and outlook}
\sectionmark{Summary and outlook}

\begin{description}[wide]
	\item[Slice Theorem] Based on the differential calculus of locally convex spaces, a general slice theorem was established. Thereby, the finite-dimensional result of Palais is extended to a wide class of infinite-dimensional Lie group actions. Instead of employing Sobolev techniques, the present investigation completely proceeded in the subcategory of tame Fréchet spaces. Functional analytic obstacles were overcome by employing the Nash-Moser theorem. Finally, the power of the presented theorem was illustrated by its application to gauge theories: The action of the gauge transformation group admits smooth slices at every point.
	
	\item[Symplectic formalism for classical field theory] A covariant and symplectic formulation of classical field theory was proposed and extensively discussed. At the root of this novel framework is the incorporation of field degrees of freedom $\secspace{F}$ and spacetime $M$ into the product manifold $\secspace{F} \times M$. Using the induced bigrading of differential forms, the details of this symplectic theory were worked out for this setting. It was possible to carry over all basic notions, such as the Hamiltonian vector field and the momentum maps, which finally culminated in a covariant, symplectic Noether theorem. The examples of the Klein-Gordon field and general Yang-Mills theory illustrated that the presented approach conveniently handles the occurring symmetries.

	\newpage
	\item[Outlook] The main motivation for this work is the generalization of singular symplectic reduction to Fréchet manifolds. Although the above slice theorem and the studied symplectic structure are an important step in this direction, further work is needed to accomplish the final goal. In particular, these two concepts have to be merged into a symplectic slice theorem. However, in order to accomplish this in the finite-dimensional setting, the present approach heavily relies on automatic topological complementation of subspaces and non-degeneracy of the symplectic form. Hence, a fundamental rethinking of the usual techniques is necessary. Nonetheless, it will be worth the effort, since singular symplectic reduction would give rewarding insights into the stratified structure of the orbit space.

	Furthermore, from a physicist's perspective, the symplectic structure of field theories opens doors to new and enriching possibilities, which will result from transferring the finite-dimensional techniques to this setting. In order to gain new insights into quantum field theory, one procedure appears especially promising: the study of geometric quantization of field theories. %
\end{description}

\vfill
\setlength\epigraphwidth{0.75\textwidth}
\epigraph{%
[...] symmetry is not an easy thing to achieve. [...] That is why only the fittest and healthiest individual plants have enough energy to spare to create a shape with balance. The superiority of the symmetrical flower is reflected in a greater production of nectar, and that nectar has a higher sugar content. \emph{Symmetry tastes sweet.}
}{\textcite[p. 12]{Sautoy2009}}
\vspace{-1cm}
%

{
\appendix
\chapter{Proper maps} \label{cha::topology:properMapCompactlyGeneratedSpace}
\sectionmark{Proper maps}
The purpose of this appendix is to show that properties usually stated for locally compact spaces extend to the more general setting of compactly generated spaces. Although the following statements have elementary character, no explicit reference for them could be found.

\begin{defn}
 	A topological vector space $X$ is called \emphDef{compactly generated} if every subset $A$ is closed in $X$ if and only $A \cap K$ is closed in $K$ for all compact subsets $K \subseteq X$. Compactly generated spaces fulfilling the Hausdorff separation axiom are often shortly named $\emphDef{CGH-space}$ (for compactly generated Hausdorff). 
 \end{defn} 
In the definition, `closed' can be replaced by `open'. Thus, in compactly generated spaces the compact sets can be used to test openness of subsets and therefore the topology is coherent with its collection of compact subsets. In particular, first countable or locally compact spaces are compactly generated.

Compactly generated spaces provide the general stage for the following equivalent characterisations of proper maps.
\begin{proposition} \label{prop::topo:properMapEquivCharacterization}
	Let $X$ be a topological space, $Y$ a CGH-space and $\Phi: X \to Y$ a continuous map. Then the following are equivalent:
	\begin{thmenumerate}
		\item $\Phi$ is a proper map, that is, the inverse images of every compact subset is compact again.
		\item $\Phi$ is a closed map and the inverse image of every single point set is compact. \label{prop::topo:properMapEquivCharacterization_closedMapAndInverseImageOfPointIsCompact}
	\end{thmenumerate}
	If $X$ and $Y$ are additionally metric spaces, then the following is also equivalent:
	\begin{enumerate}[resume]
		\item Every sequence $(x_i)$ such that $\Phi(x_i)$ converge has a convergent subsequence. \qedhere
	\end{enumerate}
\end{proposition}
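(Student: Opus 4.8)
The plan is to prove the three-way equivalence by establishing the implications in a cycle, treating the metric-space statement (iii) as bridging between (i) and (ii). I would organize it as $\text{(i)} \Rightarrow \text{(ii)}$, $\text{(ii)} \Rightarrow \text{(i)}$ (valid in the general CGH setting), and then $\text{(i)} \Rightarrow \text{(iii)} \Rightarrow \text{(ii)}$ under the extra metrizability hypothesis.

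For $\text{(i)} \Rightarrow \text{(ii)}$: the inverse image of a point is the inverse image of the compact set $\{y\}$, hence compact. For closedness, let $A \subseteq X$ be closed; I want $\Phi(A)$ closed in $Y$. Since $Y$ is compactly generated, it suffices to show $\Phi(A) \cap K$ is closed in $K$ for every compact $K \subseteq Y$. Now $\Phi(A) \cap K = \Phi\bigl(A \cap \Phi^{-1}(K)\bigr)$: the set $\Phi^{-1}(K)$ is compact by properness, so $A \cap \Phi^{-1}(K)$ is a closed subset of a compact set, hence compact; its continuous image is compact, and a compact subset of the Hausdorff space $K$ is closed. This is where compact generation of $Y$ is doing the real work, replacing the usual local-compactness argument.

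For $\text{(ii)} \Rightarrow \text{(i)}$: let $K \subseteq Y$ be compact; I must show $\Phi^{-1}(K)$ is compact. The standard argument is to take an open cover $\{U_\alpha\}$ of $\Phi^{-1}(K)$ and, for each $y \in K$, use that $\Phi^{-1}(y)$ is compact to find a finite subcollection covering it; then the complement of the union of that finite subcollection is closed, so its image under the closed map $\Phi$ is closed, and $y$ has a neighbourhood in $K$ disjoint from that image, i.e.\ whose preimage is covered by the chosen finite subcollection. Covering $K$ by finitely many such neighbourhoods yields a finite subcover of $\Phi^{-1}(K)$. This is purely point-set and does not need metrizability.

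For the metric equivalences: $\text{(i)} \Rightarrow \text{(iii)}$ is easy — if $\Phi(x_i) \to y$, then $\{y\} \cup \{\Phi(x_i) : i \in \N\}$ is compact, so its preimage is compact, hence (in a metric space) sequentially compact, giving a convergent subsequence of $(x_i)$. The main obstacle, and the only step requiring genuine care, is $\text{(iii)} \Rightarrow \text{(ii)}$: I would show $\Phi$ is closed by a sequential argument (legitimate since metric spaces are sequential), taking $A$ closed and $y \in \overline{\Phi(A)}$, choosing $x_i \in A$ with $\Phi(x_i) \to y$, extracting a subsequence $x_{i_k} \to x \in A$ (using closedness of $A$), and concluding $\Phi(x) = y \in \Phi(A)$ by continuity; compactness of $\Phi^{-1}(y)$ follows from (iii) applied to the constant sequence's perturbations, or more directly since any sequence in $\Phi^{-1}(y)$ has its image trivially convergent, so a subsequence converges, and the limit lies in $\Phi^{-1}(y)$ by continuity — hence $\Phi^{-1}(y)$ is sequentially compact, thus compact in a metric space. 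Everything reduces to the interplay between sequential compactness and compactness in metric spaces, so no deep tool is needed beyond that.
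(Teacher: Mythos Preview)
Your argument is correct. The implications $\text{(i)} \Rightarrow \text{(ii)}$ and $\text{(ii)} \Rightarrow \text{(i)}$ match the paper essentially verbatim, including the key use of compact generation for closedness and the open-neighbourhood construction $V_y = Y \setminus \Phi(X \setminus \bigcup U_{i(y)})$ for the reverse direction.

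For the metric part your route differs from the paper in two respects. For $\text{(i)} \Rightarrow \text{(iii)}$ you argue directly: the convergent sequence together with its limit is compact, its preimage is compact by properness, and compactness equals sequential compactness in a metric space. The paper instead argues by contradiction, reusing the closed-map machinery from $\text{(i)} \Rightarrow \text{(ii)}$ to build an open neighbourhood of $y$ whose preimage misses the $x_i$. Your version is shorter and more transparent. For the return implication you close the cycle via $\text{(iii)} \Rightarrow \text{(ii)}$ using a sequential characterisation of closedness, whereas the paper proves $\text{(iii)} \Rightarrow \text{(i)}$ directly by taking a sequence in $\Phi^{-1}(K)$, extracting a convergent subsequence of its image in $K$, and applying the hypothesis. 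Both closures of the cycle are equally elementary; the paper's is marginally more economical since it avoids separately verifying closedness and point-fibre compactness.
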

\begin{proof}
	\begin{thmenumerate}*[wide]
		\item[(i) $\Rightarrow$ (ii)] Every singleton set is compact and hence by properness of $\Phi$ its inverse image is compact. Therefore it is left to show that the image $\Phi(A)$ of every closed subset $A \subseteq X$ is closed in $Y$. As $Y$ is compactly generated, the collection of compact subsets can be used to test closedness. For this purpose, let $K \subseteq Y$ be an arbitrary compact subset. Its inverse image $\Phi^{-1}(K)$ is compact by properness of $\Phi$. Since $\Phi^{-1}(K) \cap A$ is closed and contained in the compact subset $\Phi^{-1}(K)$, it is compact itself. Finally, its image $\Phi(\Phi^{-1}(K) \cap A) = K \cap \Phi(A)$ under the continuous map $\Phi$ is compact, implying the desired closedness of $\Phi(A)$. %
		
		\item[(ii) $\Rightarrow$ (i)] Let $K \subseteq Y$ be a compact subset and fix an open cover $U_\alpha$ of $\Phi^{-1}(K)$. For every $y \in K$ the inverse image $\Phi^{-1}(y)$ is compact by assumption and thus can be covered by a finite subcover $U_{i(y)}$, where the notation $i(y)$ displays the dependence of the index set on the chosen point $y$. Now, $X \backslash \bigcup U_{i(y)}$ as well as its image under $\Phi$ is closed and hence $Y \backslash \Phi(X \backslash \bigcup U_{i(y)}) \equiv V_y$ is an open subset of $Y$ containing $y$. Since $K$ is compact, there exist finitely many points $y_j$ such that the associated $V_{y_j}$ cover $K$. Then, there exists also a finite subcover of $\Phi^{-1}(K)$ indexed by $i(y_j)$. %
		
		\item[(i) $\Rightarrow$ (iii)] Let $(x_i)$ be a sequence in $X$ such that $\Phi(x_i) \rightarrow y$. The existence of the convergent subsequence of $(x_i)$ follows from a proof by contradiction. For this purpose, assume that $(x_i)$ has no cluster point and thus an open neighbourhood $U_p \subseteq X$ of every point $p \in \Phi^{-1}(y)$ can be chosen in such a way that it does not contain any (or only finitely many) $x_i$. Since $\Phi^{-1}(y)$ is compact, there exists a finite covering of open subsets of the form $U_{p_j}$. Now, following the same argument as above one can show that $V_y \equiv Y \backslash \Phi(X \backslash \bigcup U_{p_j)})$ is an open neighbourhood of $y$. But then, the limit $\Phi(x_i) \rightarrow y$ implies that all $\Phi(x_i)$ lie in $V_y$ for sufficiently large $i$, which is in contradiction with the construction of $U_{p_j}$.    

		\item[(iii) $\Rightarrow$ (i)] Let $K \subseteq Y$ be a compact subset and $(x_i)$ a sequence in $\Phi^{-1}(K)$. Since sequential compactness is equivalent to compactness for metric spaces, it is enough to show that $(x_i)$ has a convergent subsequence. As $\Phi(x_i)$ is by construction a sequence in $K$, it thus has a convergent subsequence. Now ,the claim follows from the starting hypothesis. \qedhere %
	\end{thmenumerate}
\end{proof}
\chapter{Hodge star operator on product manifolds} \label{sec::differentialgeometry:hodgeStarProductManifold}
\sectionmark{Hodge star operator on product manifolds}

This appendix shortly clarifies how the Hodge star operator can be defined for differential forms on product manifolds. 

Let $\secspace{F}$ be a possible infinite-dimensional manifold and $(M, g)$ be a finite-dimensional Riemannian manifold with associated Hodge dual operator $\star$. For differential forms of product type $\alpha = \alpha_{\secspace{F}} \wedge \alpha_M$ and $\beta = \beta_{\secspace{F}} \wedge \beta_M$ with $\alpha_{\secspace{F}}, \beta_{\secspace{F}} \in \diffformBi{\bcdot}{0}{\secspace{F} \times M}$ and $\alpha_M, \beta_M \in \diffformBi{0}{\bcdot}{\secspace{F} \times M}$ the Hodge star is simply defined by moving over the $\secspace{F}$-component, that is,
\begin{equation}
	\alpha \wedge \star \beta \defeq \alpha_{\secspace{F}} \wedge \alpha_M \wedge \beta_{\secspace{F}} \wedge \star \beta_M,
\end{equation}
where a $(0, q)$-form on $\secspace{F} \times M$ is identified with a $q$-form on $M$. 

The commutative law $\alpha \wedge \star \beta = \beta \wedge \star \alpha$ is modified due to the additional minus sign occurring when the $\secspace{F}$-component changes places with the $M$-component.  
\begin{proposition} \label{prop::differentialgeometry:hodgeStarProductManifoldWedgeCommutative}
	Let $\alpha = \alpha_{\secspace{F}} \wedge \alpha_M$ and $\beta = \beta_{\secspace{F}} \wedge \beta_M$ be differential forms of product type as above. If $\#$ denotes the total degree of the differential form and the subindexed version refers to the corresponding partial degree, then the following identity holds
	\begin{equation} \label{eq::differentialgeometry:hodgeStarProductManifoldWedgeCommutative}
		\alpha \wedge \star \beta = (-1)^{\#\alpha \, \cdot \, \#\beta \,-\, \#_M \alpha \, \cdot \, \#_M \beta} \beta \wedge \star \alpha. 
	\end{equation} 
\end{proposition}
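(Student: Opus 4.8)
The statement to be proven is the modified commutativity law \eqref{eq::differentialgeometry:hodgeStarProductManifoldWedgeCommutative} for the extended Hodge star on a product manifold $\secspace{F} \times M$. The plan is to reduce it to two ingredients: the classical commutativity $\eta \wedge \star_M \zeta = \zeta \wedge \star_M \eta$ for forms $\eta, \zeta$ on the Riemannian manifold $M$, and the graded-commutativity of the wedge product, which picks up a sign $(-1)^{pq}$ whenever a $p$-form is moved past a $q$-form. The only subtlety is bookkeeping of which partial degrees contribute which signs when the $\secspace{F}$-components are shuffled past the $M$-components.

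First I would unwind the definition: writing $\alpha = \alpha_{\secspace F} \wedge \alpha_M$ and $\beta = \beta_{\secspace F} \wedge \beta_M$ with $a \defeq \#_{\secspace F}\alpha$, $p \defeq \#_M \alpha$, $b \defeq \#_{\secspace F}\beta$, $q \defeq \#_M\beta$, the definition gives
\begin{equation}
  \alpha \wedge \star \beta = \alpha_{\secspace F} \wedge \alpha_M \wedge \beta_{\secspace F} \wedge \star_M \beta_M,
\end{equation}
and symmetrically for $\beta \wedge \star \alpha$. The goal is to transform the right-hand side of the first expression into the right-hand side of the second by graded commutations, and then compare signs.

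The core computation: in $\alpha_{\secspace F} \wedge \alpha_M \wedge \beta_{\secspace F} \wedge \star_M \beta_M$, move $\beta_{\secspace F}$ (degree $b$, which is purely $\secspace F$) to the left past $\alpha_M$ (degree $p$, purely $M$), producing a sign $(-1)^{pb}$ and giving $\alpha_{\secspace F} \wedge \beta_{\secspace F} \wedge \alpha_M \wedge \star_M \beta_M$. Now apply the classical identity $\alpha_M \wedge \star_M \beta_M = \beta_M \wedge \star_M \alpha_M$ (valid since these are honest forms on $M$; it requires $p = q$, otherwise both sides vanish, so one may freely assume $p=q$ throughout). This yields $\alpha_{\secspace F} \wedge \beta_{\secspace F} \wedge \beta_M \wedge \star_M \alpha_M$. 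Then commute $\alpha_{\secspace F}$ (degree $a$) past $\beta_{\secspace F}$ (degree $b$), sign $(-1)^{ab}$, to get $\beta_{\secspace F} \wedge \alpha_{\secspace F} \wedge \beta_M \wedge \star_M \alpha_M$, and finally move $\alpha_{\secspace F}$ back to the right past $\beta_M$ (degree $p$), sign $(-1)^{ap}$, obtaining $\beta_{\secspace F} \wedge \beta_M \wedge \alpha_{\secspace F} \wedge \star_M \alpha_M = \beta \wedge \star \alpha$. The accumulated sign is $(-1)^{pb + ab + ap}$. It remains to check this equals $(-1)^{\#\alpha \cdot \#\beta - \#_M\alpha \cdot \#_M\beta} = (-1)^{(a+p)(b+q) - pq}$; expanding with $p = q$ gives $(a+p)(b+p) - p^2 = ab + ap + pb + p^2 - p^2 = ab + ap + pb$, which matches exactly.

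The main obstacle — such as it is — is purely organizational: being careful that the sign exponents are tracked modulo $2$ and that the degenerate case $p \neq q$ (where both sides are zero) is disposed of before invoking the classical Hodge identity. I would also add one remark that the general case follows from the product-type case by bilinearity of $\wedge$ and of $\star$ over the ring of $(0,0)$-forms, since every differential form decomposes locally into a sum of product-type pieces of fixed bidegree, and the sign in \eqref{eq::differentialgeometry:hodgeStarProductManifoldWedgeCommutative} depends only on the bidegrees $(\#_{\secspace F}, \#_M)$ of $\alpha$ and $\beta$, not on the particular summands.
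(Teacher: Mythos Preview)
Your proof is correct and follows exactly the same sequence of graded commutations as the paper's own proof: unwind the definition, commute $\beta_{\secspace F}$ past $\alpha_M$, apply the classical Hodge identity on $M$, then shuffle $\alpha_{\secspace F}$ to the right past $\beta_{\secspace F}$ and $\beta_M$, and finally check the accumulated sign. You are in fact slightly more careful than the paper in explicitly disposing of the $p \neq q$ case and in noting the extension by bilinearity.
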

\begin{proof}
	\vspace*{-\baselineskip}
	\begin{align}
			\alpha \wedge \star \beta 	&= \alpha_{\secspace{F}} \wedge \alpha_M \wedge \beta_{\secspace{F}} \wedge \star \beta_M && \\
										&= \alpha_{\secspace{F}} \wedge \beta_{\secspace{F}} \wedge \alpha_M \wedge \star \beta_M && \cdot (-1)^{ \#_M \alpha \, \cdot \, \#_{\secspace{F}} \beta } \\
										&= \alpha_{\secspace{F}} \wedge \beta_{\secspace{F}} \wedge \beta_M \wedge \star \alpha_M && \cdot (-1)^{\#_M \alpha \, \cdot \, \#_{\secspace{F}} \beta} \\
										&= \beta_{\secspace{F}} \wedge \beta_M \wedge \alpha_{\secspace{F}} \wedge \star \alpha_M && \cdot (-1)^{\#_M \alpha \, \cdot \, \#_{\secspace{F}} \beta \,+\, \#_{\secspace{F}} \alpha \, \cdot \, \#_{\secspace{F}} \beta \,+\, \#_{\secspace{F}} \alpha \, \cdot \, \#_M \beta}\\ 
										&= \beta \wedge \star \alpha && \cdot (-1)^{\# \alpha \, \cdot \, \# \beta \, - \, \#_M \alpha \, \cdot \, \#_M \beta} 
	\end{align}
\end{proof}
}
\begin{spacing}{1} 		
\clearpage
\printbibliography
\end{spacing}

\clearpage
\end{document}